\definecolor{ggreen}{cmyk}{0.7,     0,      0.9,      0}
\definecolor{viol}{cmyk}{0.3,1,0,0}
\definecolor{myred}{cmyk}{0.1, 1, 0.5, 0}
\definecolor{bblue}{rgb}{0.2, 0.29996, 0.8 }
\DeclareMathOperator{\arctanh}{arctanh}
\theoremstyle{plain}
\newtheorem{theorem}{Theorem}
\newtheorem{lemma}{Lemma}
\newtheorem{remark}{Remark}
\newtheorem{proposition}{Proposition}
\newtheorem{definition}{Definition}
\newcounter{mnotecount}[section]
\renewcommand{\themnotecount}{\thesection.\arabic{mnotecount}}
\newcommand{\mnote}[1]%{}%
{\protect{\stepcounter{mnotecount}}$^{\mbox{\footnotesize
			$%\!\!\!\!\!\!\,
			\bullet$\themnotecount}}$ \marginpar{
		\raggedright\tiny\em
		$\!\!\!\!\!\!\,\bullet$\themnotecount: #1} }
\begin{document}
%%%%%%%%%%%%%%%%%
	\title{\bf \huge{Dynamics of interacting monomial scalar field potentials and perfect fluids}}
	
	\author{Artur Alho$^{(1)}$, Vitor Bessa$^{(1,2,3)}$ and Filipe C.
		Mena$^{(1,2)}$\\\\
		{\small $^{(1)}$Centro de An\'alise Matem\'atica, Geometria e Sistemas
			Din\^amicos,}
		\\
		{\small	Instituto Superior T\'ecnico, Universidade de Lisboa, Av. Rovisco Pais
			1, 1049-001 Lisboa, Portugal}\\
		{\small $^{(2)}$Centro de Matem\'atica, Universidade do Minho, 4710-057 Braga,
			Portugal}
		\\
		{\small $^{(3)}$Faculdade de Ci\^encias, Universidade do Porto, R. Campo Alegre,
			4169-007 Porto, Portugal}
	}	
%%%%%%%%%%%%%%%%%%%%%%%%%%%%%%%%%%%%%%%%%%%
\maketitle
%%%%%%%%%%%%%%%%%
\begin{abstract}
	Motivated by cosmological models of the early universe we analyse the dynamics of the Einstein equations with a minimally coupled scalar field with monomial potentials $V(\phi)=\frac{(\lambda\phi)^{2n}}{2n}$, $\lambda>0$, $n\in\mathbb{N}$, interacting with a perfect fluid with linear equation of state $p_\mathrm{pf}=(\gamma_\mathrm{pf}-1)\rho_\mathrm{pf}$, $\gamma_\mathrm{pf}\in(0,2)$, in flat Robertson-Walker spacetimes. The interaction is a friction-like term of the form $\Gamma(\phi)=\mu \phi^{2p}$, $\mu>0$, $p\in\mathbb{N}\cup\{0\}$. The analysis relies on the introduction of a new regular 3-dimensional dynamical systems' formulation of the Einstein equations on a compact state space, and the use of dynamical systems' tools such as quasi-homogeneous blow-ups and averaging methods involving a time-dependent perturbation parameter. 
	We find a bifurcation at $p=n/2$ due to the influence of the interaction term. In general, this term has more impact on the future (past) asymptotics for $p<n/2$ ($p>n/2$). For $p<n/2$ we find a complexity of possible future attractors, which depends on whether $p=(n-1)/2$ or $p<(n-1)/2$. In the first case the future dynamics is governed by Li\'enard systems. On the other hand when $p=(n-2)/2$ the generic future attractor consists of new solutions previously unknown in the literature which can drive future acceleration whereas the case $p<(n-2)/2$ has a generic future attractor de-Sitter solution. For $p=n/2$ the future asymptotics can be either fluid dominated or have an oscillatory behaviour where neither the fluid nor the scalar field dominates. For $p>n/2$ the future asymptotics is similar to the case with no interaction. Finally, we show that irrespective of the parameters, an inflationary quasi-de-Sitter solution always exists towards the past, and therefore the cases with $p\leq(n-2)/2$ may provide new cosmological models of quintessential inflation. 
	\\\\
Keywords: Cosmology; scalar fields; fluids; quasi-homogeneous blow-ups; Li\'enard systems; averaging 
\end{abstract}

\newpage
\tableofcontents
\newpage
%%%%%%%%%%%%%%%%%%%%%%%%%%%%%%%%%%%%%%%%%%%
\section{Introduction}
%%%%%%%%%%%%%%%%%%%%%%%%%%%%%%%%%%%%%%%%%%
The most successful theory of gravity is General Relativity which is based on the Einstein field equations. These equations form a non-linear system of partial differential equations for the dynamics of the metric of a four-dimensional Lorentzian manifold called spacetime. In the case of the standard cosmological models of very large scales, the spacetime metric is commonly assumed to be spatially homogeneous which reduces the Einstein equations to a system of ordinary differential equations. It is remarkable that many recent rigorous results about the dynamics of cosmological models result from the application of methods of dynamical systems,  see e.g. \cite{Bre17, HLU20} as well as \cite{Ringstrom-book} and references therein. Part of the challenge in applying those methods is the necessity to use conformal rescalings and appropriate normalizations that lead to dimensionless and regular autonomous finite-dimensional dynamical system on compact state spaces. %Furthermore even simple but physically interesting models contain a rich dynamics including non-hyperbolic attractors and highly non-trivial bifurcations which require novel approaches and further developments of dynamical system's methods. In part this explains why many applications of dynamical systems to cosmology are made in limited contexts and often rely on numerical tests. 
For a review on the theory of dynamical systems in cosmology see \cite{WainElis,Col03,Boehmer}. 
%Our purpose here is to give a rigorous and complete global analysis of the interaction between scalar fields and perfect fluids with cosmological origin.

According to the current cosmological paradigm, the universe went through a period of accelerated expansion soon after the initial (big bang) singularity, called the inflationary epoch~\cite{Guth,Linde}.
Cosmological Inflation, has the ability to solve some of the puzzles of the big bang model, such as the horizon and flatness problems, and also provides probably the best-known mechanism to predict the observable nearly scale-invariant primordial scalar and tensor power-spectrum~\cite{Alexei}. The most popular physical mechanism driving inflation in the early universe consists of a scalar field (called inflaton) with monomial type potentials~\cite{Muk05}. A drawback of the original (cold) inflation models is the separation that exists between the inflationary and the reheating periods. The process of reheating is crucial in inflationary cosmology and a natural answer to the problem is the so-called warm inflation first introduced by Berera \cite{Berera1}. In this scenario, the production of radiation occurs simultaneously with the inflationary expansion, through interactions between the inflaton and other fields on a thermal bath. The presence of radiation allows having a smooth transition between the inflationary phase and a radiation-dominated phase without a reheating separation phase. After the inflationary epoch, the standard early universe scenario then involves a period of radiation dominance until a time of decoupling between radiation and matter, after which the universe is dominated by matter, and then galaxies start to form as well as other cosmic structures. The radiation and matter-dominated eras are usually modeled by a perfect fluid with a linear equation of state.  
It is our purpose here to investigate this dynamical interaction between scalar fields and perfect fluids. 

We will consider the Einstein equations for a spatially homogeneous and isotropic metric (the so-called Robertson-Walker metric) having a scalar field with monomial potentials interacting with perfect fluids with linear equation of state. Our main goal is to obtain a global dynamical picture of the resulting system of non-linear ordinary differential equations (ODEs) and in particular of its past and future asymptotics. Our analysis relies on the introduction of a new set of dimensionless bounded variables which results in a regular dynamical system on a compact state-space consisting of a 3-dimensional cylinder. This allow us to describe the global evolution of these cosmological models identifying all possible past and future attractor sets which, as we will see, in many situations, can be non-hyperbolic fixed points, partially hyperbolic lines of fixed points, or even bands of periodic orbits. So, our analysis will require in one hand blow-up techniques and center manifold theory around the non-hyperbolic fixed points and, on the other hand, averaging methods involving a time dependent perturbation parameter. 

Our paper is mostly self-contained and is organized as follows: In Section \ref{sec2} we explain how the non-linear system of ODEs is obtained from physical principles.  In Section \ref{sec3} we find the appropriate dimensionless variables that transform the ODE system into a 3-dimensional regular dynamical system on a compact state space. This construction naturally shows that a significant bifurcation occurs for $p=n/2$. We therefore split the analysis into three cases according to the different exponents of the scalar field potential and the interaction term: Section~\ref{case1} treats the case $p<n/2$, where the future asymptotic analysis is further split in two distinct sub-cases corresponding to $p=(n-1)/2$ and $p<(n-1)/2$. The case $p=n/2$ is treated in Section~\ref{case2} and 
the case $p>n/2$ in Section~\ref{case3}. Interestingly, in some situations we encounter non-hyperbolic fixed points whose exceptional divisor of the blow-up space consists of generalised Li\'enard systems. We provide proofs as well as conjectures about the global dynamics complemented by numerical pictures of representative cases. We present our conclusions in Section~\ref{conclusion} where we also mention some physically interesting consequences of our results.

%%%%%%%%%%%%%%%%%%%%%%%%%%%%%%%%%%%%%%%%%%%
\section{Non-linear ODE system}
\label{sec2}
%%%%%%%%%%%%%%%%%%%%%%%%%%%%%%%%%%%%%%%%%%
A spacetime $(M,g)$ is a 4-dimensional Lorentzian manifold with metric $g$ whose evolution is given by the Einstein equations
\begin{equation}
\label{EFEs}
\mathrm{Ric}-\frac{1}{2}R g={\cal T},
\end{equation}
where $\mathrm{Ric}$ denotes the Ricci curvature tensor, $R$ the scalar curvature of the spacetime, while ${\cal T}$ is the energy-momentum tensor which encodes the spacetime physical content. When the spacetime is spatially homogeneous, the Einstein equations became ODEs and can be analysed using dynamical systems' methods. The set of solutions depends crucially on the right-hand-side of the equations, i.e. on the energy-momentum tensor. In turn, this depends on the physical scenario under consideration.

Motivated by the warm inflation scenario of the early universe, here we assume a minimally coupled scalar field $\phi$ with self-interaction potential $V(\phi)$, interacting with a perfect fluid. The evolution equations can be derived from an action principle and the most general action for this case is given by
\begin{equation}
S=\int_M \Big(\frac{R}{2}+\frac{1}{2}(\partial \phi)^2-V(\phi) + \mathcal{L}_{\mathrm{pf}} + \mathcal{L}_{int}\Big) \sqrt{-\det{(g)}} d^4x^\mu,
\end{equation} 
where as standard, we use greek indices $\mu,\nu,...=0,1,2,3$ for each coordinate in spacetime. Here  $(\partial \phi)^2:=(\partial_\mu\phi)( \partial^\mu \phi)$ with $\partial^{\mu}\phi=g^{\mu\nu}\partial_\nu\phi$ whereas $\mathcal{L}_{\mathrm{pf}}$ is the Lagrangian density of the perfect fluid and $\mathcal{L}_{int}$ describes the interaction between the scalar field and the thermal bath.
By varying the Lagrangian density with respect to the metric we obtain the Einstein equations \eqref{EFEs} with stress-energy tensor components
\begin{equation}
\label{Tmunu}
{\cal T}_{\mu\nu}={\cal T}_{\mu\nu}^{(\phi)}+{\cal T}_{\mu \nu}^{(\mathrm{pf})},
\end{equation}
where
\begin{subequations}
	\begin{align}
{\cal T}_{\mu \nu }^{({\mathrm{pf}})} &=(\rho_\mathrm{pf}+p_\mathrm{pf})u_\mu u_\nu + p_\mathrm{pf} g_{\mu \nu}+g_{\mu\nu}\mathcal{L}_{int}, \label{IndConsFluid}\\
	{\cal T}_{\mu\nu}^{(\phi)} &= \partial_\mu \phi \partial_\nu \phi -(\frac{1}{2}(\partial \phi)^2-V(\phi)) g_{\mu \nu}-g_{\mu\nu}\mathcal{L}_{int}, \label{IndConsScalar}
	\end{align}
\end{subequations}
and $u^\mu$ denotes the unit $4$-velocity vector field of the perfect fluid, with $\rho_\mathrm{pf}> 0$ and $p_{\mathrm{pf}}$ being the fluid energy density and pressure, respectively. 

The stress-energy tensor for the scalar field can be written in a perfect fluid form with the identifications $u^\mu_{(\phi)}=(\partial^\mu\phi)/\sqrt{-(\partial\phi)^2}$,
\begin{equation}
\rho_\phi = \frac{1}{2}(\partial\phi)^2+V(\phi),\qquad p_\phi=\frac{1}{2}(\partial\phi)^2 - V(\phi).
\end{equation}
The total stress-energy tensor ${\cal T}_{\mu\nu}$ obeys the conservation law $\nabla_\nu {\cal T}^{\nu}_{\mu}=0$. However each component of the total stress-energy tensor, ${\cal T}_{\mu\nu}^{(\phi)}$ and ${\cal T}_{\mu \nu}^{({\mathrm{pf}})}$, is not conserved, in contrast to the case when the scalar field does not interact with the thermal bath. In the presence of interactions
\begin{equation}
	\nabla^\nu  {\cal T}_{\mu\nu}^{(\phi)} = Q_{\mu}^{(\phi)},\qquad
\nabla^\nu {\cal T}_{\mu\nu}^{({\mathrm{pf}})} = Q_{\mu}^{({\mathrm{pf}})},
\end{equation}
where $Q_{\mu}^{(\phi)}$ and $Q_{\mu}^{({\mathrm{pf}})}$ describe the energy exchange between the scalar field and the perfect fluid. It follows from the energy-momentum conservation equation that
\begin{equation}
\nabla^\nu {\cal T}_{\mu\nu} = Q_{\mu}^{(\phi)}+Q_{\mu}^{({\mathrm{pf}})}=0.
%\qquad Q_{\mu}^{(\phi)}=-Q_{\mu}^{({\mathrm{pf}})}
\end{equation}
In this work we consider a phenomenological friction-like interaction term for which
\begin{equation}\label{exchange}
Q_{\mu}^{(\phi)}=-Q_{\mu}^{({\mathrm{pf}})}=-\Gamma(\phi) u^\nu \partial_\mu \phi \partial_\nu \phi,
\end{equation}
where we assume that $\Gamma=\Gamma(\phi)$ is a function of the scalar field $\phi$ only. In more general warm inflationary models, the function $\Gamma$ can also depend on the thermal bath temperature \cite{Moss,BC00,Ventura,Berg,Setare}, although recent studies suggest that temperature dependence is redundant~\cite{Visinelli}. Equation~\eqref{IndConsFluid} then gives the modified energy "conservation" equation and the Euler equation
\begin{subequations}
	\begin{align}
	-u^\mu\nabla_\mu \rho_{\mathrm{pf}} +(\rho_{\mathrm{pf}}+p_{\mathrm{pf}})\nabla_\mu u^\mu &=  \Gamma(\phi)(\partial\phi)^2, \label{consrho}\\	
	(\rho_{\mathrm{pf}}+p_{\mathrm{pf}})u^\mu\nabla_\mu u^\nu +u^\nu u^\mu\nabla_\mu p_{\mathrm{pf}}+\nabla^\nu p_{\mathrm{pf}} &= 0.
	\end{align}
\end{subequations}
The above system is closed once an equation of state relating the pressure and the energy density is given. Here we assume that the fluid obeys a linear equation of state
\begin{equation}
p_{\mathrm{pf}}=(\gamma_{\mathrm{pf}}-1)\rho_\mathrm{pf},\qquad\gamma_{\mathrm{pf}}\in(0,2),
\end{equation}
where for example, $\gamma_{\mathrm{pf}}=1$ corresponds to a dust fluid, and  $\gamma_{\mathrm{pf}}=4/3$ to a radiation fluid. The value $\gamma_{\mathrm{pf}}=0$ corresponds to the case of a positive cosmological constant and $\gamma_{\mathrm{pf}}=2$ to a stiff fluid, both yielding significant dynamical bifurcations. Equation~\eqref{IndConsScalar} yields the wave equation
\begin{equation}
\label{KG}
\square_g \phi=-\frac{dV}{d\phi}+\Gamma(\phi) u^\mu\partial_\mu \phi,
\end{equation}
where $\square_g$ is the usual D'Alembertian operator associated with the metric $g$. Motivated by the current cosmological models, as mentioned in the Introduction, we will use a flat spatially homogeneous and isotropic metric $g$, called Robertson-Walker (RW) metric, that in the Cartesian coordinates $(t,x,y,z)\in (t_-,t_+) \times \mathbb{R}^{3}$ takes the form
\begin{equation}
g=-dt^2+a(t)^2(dx^2+dy^2+dz^2),
\end{equation}
where $a:\,(t_-,t_+)\rightarrow \mathbb{R}^+$ is a $C^2$ positive function of time $t$ called scale-factor whose evolution and maximal existence interval will be determined by the Einstein equations. A solution is said to be global to the past (future) if $t_-=-\infty$ ($t_+=+\infty$). 

The Einstein equations coupled to the nonlinear scalar field equation \eqref{KG} and the energy conservation equation for the fluid component~\eqref{consrho}, form the following non-linear ODE system for the unknowns $\{a,H,\phi,\rho_\mathrm{pf}\}$:
\begin{subequations}\label{sistema1}
	\begin{align}
	\dot{a} &= aH, \\
	\dot{H} &=-\frac{1}{2}\gamma_{\mathrm{pf}} \rho_\mathrm{pf}-\frac{\dot{\phi}^2}{2},\\
	\ddot{\phi} &= -(3H+\Gamma(\phi))\dot{\phi}-\frac{dV}{d\phi}, \label{scalar} \\ 
	\dot{\rho}_\mathrm{pf} &= -3\gamma_{\mathrm{pf}} H\rho_\mathrm{pf} + \Gamma(\phi) \dot{\phi}^2, \label{rho_m}  
	\end{align}
\end{subequations}
together with the Gauss (Hamiltonian) constraint
\begin{equation}
H^{2} = \frac{\rho_{\mathrm{pf}}}{3}+\frac{\dot{\phi}^2}{6}+\frac{V(\phi)}{3}, 
\label{constraint}
\end{equation}
where
\begin{equation}
H:=\frac{\dot a }{a}
\end{equation}
is called Hubble function and a dot denotes differentiation with respect to time $t$. For expanding cosmologies $H>0$. Note also that the equation for the scale factor $a(t)$ decouples leaving a reduced system of equations for the unknowns $\{H,\phi,\rho_\mathrm{pf}\}$. The scale-factor can then be obtained by quadrature $a=a_0e^{\int H dt}$. 
The $\Gamma(\phi)$ term appearing in~\eqref{scalar} and~\eqref{rho_m} acts as a friction term  which describes the decay of the scalar field $\phi$ due to the interactions encoded in the Lagrangian $\mathcal{L}_{int}$. 
Here we assume monomial scalar field potentials which are popular examples of inflaton models
\begin{equation}\label{potential}
V(\phi)=\frac{(\lambda\phi)^{2n}}{2n}\quad,\quad \lambda>0 \quad,\quad n=1,2,3,...
\end{equation}
and a monomial scalar field interaction term
\begin{equation} \label{dissipation}
\Gamma(\phi)= \mu \phi^{2p} \quad,\quad \mu>0\quad,\quad p=0,1,2,3,...
\end{equation}
The exponent $2p$ reflects the parity invariance of the interaction term, and the condition $\mu>0$ ensures that the second law of thermodynamics is satisfied (see e.g.~\cite{Bastero,Oliveira,Bin}). 
%For example, interactions described by $\mathcal{L}_{int} \sim \phi \psi $ and $\mathcal{L}_{int} \sim g^2 \phi^2 \chi^2$, where $\psi$ is the particle of the thermal bath, leads in the first case to the simplest constant interaction term with $\Gamma(\phi)=\mu$, i.e. $p=0$. and in the other case to  $\Gamma(\phi)=\mu \phi^2$, i.e. $p=1$, see e.g.~\cite{Bastero}.  
%{\color{blue} One may also ask what is the nature of $\Gamma_0$ and the possibility of being negative if $\Gamma_0<0$ then instead of having a decay on the scalar field $\phi$ , this would redshift to zero more rapidly than in the absence of the interaction term this will violate the second law of thermodynamics that states that the heat could never pass from a colder body to a warmer body without some other exchanges. So in this case $\Gamma_0>0$.}

To summarise, we will analyse the system \eqref{sistema1} for the unknowns $\{H,\phi,\rho_\mathrm{pf}\}$ having the free parameters $\{n,p, \lambda,\mu,\gamma_\mathrm{pf}\}$ besides the initial conditions $\{\rho_0, \phi_0, \dot \phi_0,H_0\}$. Note that $n$, $p$ and $\gamma_\mathrm{pf}$ are dimensionless parameters while $\lambda$ and $\mu$ have dimensions. We shall see ahead that it is the dimensionless ratio (see~\eqref{NuDef} ahead) of these two quantities that plays an important role on the qualitative behaviour of solutions.

%%%%%%%%%%%%%%%%%%%%%%%%%%%%%%%%%%%%%%%%%%%%%
\section{Dynamical systems' formulation}
\label{sec3}
%%%%%%%%%%%%%%%%%%%%%%%%%%%%%%%%%%%%%%%%%%%%%%%%%%
In order to obtain a regular dynamical system on a compact  state-space, we start by introducing dimensionless variables normalized by the Hubble function $H$ (which is positive for ever expanding models)
\begin{equation} \label{variables}
\Omega_{\mathrm{pf}}:=\frac{\rho_\mathrm{pf}}{3H^2}>0,\qquad\Sigma_{\mathrm{\phi}}:=\frac{\dot{\phi}}{\sqrt{6}H},\qquad X:=\frac{\lambda\phi}{(6nH^2)^{\frac{1}{2n}}},\qquad\tilde{T}:=\frac{c}{H^{\frac{1}{n}}}>0,
\end{equation}
where $c=\left(\frac{6^{n-1}}{n}\right)^{\frac{1}{2n}}\lambda$ is a positive constant. We also introduce a new time variable $\tilde{N}$ defined by
\begin{equation}\label{tempo1}
\frac{d}{d\tilde{N}}:=\frac{\left(\frac{c}{H^{\frac{1}{n}}}\right)^{\delta}}{H}\frac{d}{dt},
\end{equation}
where
\[\delta=\left\{
\begin{aligned}
0 \quad&\quad\text{if}\quad  p\leq \frac{n}{2} ,\\
2p-n \quad&\quad\text{if}\quad p>\frac{n}{2}.
\end{aligned}
\right.\]
When $\delta=0$, i.e. $p\leq \frac{n}{2}$, then $\tilde{N}=N=\ln(a/a_0)$ is the number of $e$-folds $N$ from some reference epoch at which $a=a_0$, i.e. $N=0$. When written in terms of the new variables, the system \eqref{sistema1} reduces to a regular $local$ 3-dimensional dynamical system

\begin{subequations}\label{sistemalocal}
	\begin{align}
	\frac{dX}{d\tilde{N}} &= \frac{1}{n}(1+q)\tilde{T}^{\delta}X+\tilde{T}^{1+\delta}\Sigma_{\mathrm{\phi}} ,\label{local1}\\
	\frac{d\Sigma_{\mathrm{\phi}}}{d\tilde{N}} &= -\Bigg[(2-q)\tilde{T}^{\delta}+\nu \tilde{T}^{\delta+n-2p}X^{2p}\Bigg]\Sigma_{\mathrm{\phi}}-n \tilde{T}^{1+\delta}X^{2n-1} ,\label{local2} \\
	\frac{d\tilde{T}}{d\tilde{N}}&= \frac{1}{n}(1+q)\tilde{T}^{1+\delta},\label{local3}
	\end{align}
\end{subequations}
where the constraint equation
\begin{equation}\label{constraintOmega}
\Omega_{\mathrm{pf}}=1-\Sigma_{\mathrm{\phi}}^2-X^{2n},
\end{equation}
is used to globally solve for $\Omega_\mathrm{pf}$. Since $\Omega_\mathrm{pf}>0$, the above equation implies that $\Omega_\mathrm{pf}$ is bounded as $\Omega_\mathrm{pf}\in(0,1)$, while $\Sigma_\phi\in(-1,1)$ and $X\in(-1,1)$. The positive dimensionless constant $\nu>0$, is given explicitly by
\begin{equation}\label{NuDef}
\nu=6^p\mu c^{-n}=\sqrt{n 6^{2p-(n-1)}}\frac{\mu}{\lambda^{n}},
\end{equation}
and $q$ is the usual \emph{deceleration parameter} defined via $\dot{H}=-(1+q)H^2$, i.e.
\begin{eqnarray}\label{deceleration}
q=  -1+3\Sigma_{\mathrm{\phi}}^2+\frac{3}{2}\gamma_{\mathrm{pf}}\Omega_{\mathrm{pf}}
= -1+\frac{3}{2}\left(\gamma_{\mathrm{\phi}}\Omega_{\mathrm{\phi}}+\gamma_{\mathrm{pf}}\Omega_{\mathrm{pf}}\right),
\end{eqnarray}
where we introduced the \emph{Hubble normalized scalar-field energy density}
\begin{equation}
\Omega_\phi = \frac{\rho_\phi}{3H^2}= \Sigma^2_\phi+X^{2n}=1-\Omega_\mathrm{pf}, \quad \Omega_\phi\in(0,1),
\end{equation}
and the \emph{scalar-field effective equation of state} $\gamma_{\mathrm{\phi}}$, defined by
\begin{equation} \label{gamma}
\gamma_{\mathrm{\phi}}:=1+\frac{p_{\mathrm{\phi}}}{\rho_{\mathrm{\phi}}}=\frac{\dot{\phi}^2}{\frac{1}{2}\dot{\phi}^2+\frac{1}{2n}(\lambda\phi)^{2n}}=\frac{2\Sigma_{\mathrm{\phi}}^2}{\Omega_\mathrm{\phi}}.
\end{equation}
Moreover, since $\gamma_{\mathrm{pf}} \in (0,2)$, it follows from~\eqref{constraintOmega} and  \eqref{deceleration} that 
\begin{equation}
-1<q<2,
\end{equation}
with limits $q=-1$ when $\Sigma_\phi=0$ and $\Omega_\mathrm{pf}=0$; $q=2$ when $X=0$ and $\Omega_\mathrm{pf}=0$; and $q=\frac{1}{2}(3\gamma_\mathrm{pf}-2)$ when $X=0$ and $\Sigma_\phi=0$. These special constant values of $q$ correspond to well-known solutions: (quasi-)de-Sitter (dS) solution when $q=-1$, kinaton or massless scalar field self-similar solution when $q=2$ and whose scale factor is given by $a(t)=t^{1/3}$, and the flat Friedmann-Lema\^itre (FL) self-similar solution when $q=\frac{1}{2}(3\gamma_\mathrm{pf}-2)$ with scale factor given by  $a(t)=t^{\frac{2}{3\gamma_\mathrm{pf}}}$.

Although the constraint \eqref{constraintOmega} is used to solve for $\Omega_\mathrm{pf}$, it is nevertheless useful to consider the auxiliary equation for $\Omega_{\mathrm{pf}}$ (equivalently $\Omega_\phi=1-\Omega_\mathrm{pf}$) given by
\begin{eqnarray} \label{localomega}
\frac{d\Omega_{\mathrm{pf}}}{d\tilde{N}} &=& 2\left(1+q-\frac{3}{2}\gamma_{\mathrm{pf}}\right)\tilde{T}^{\delta}\Omega_{\mathrm{pf}}+2\nu\tilde{T}^{\delta+n-2p}X^{2p}\Sigma_{\mathrm{\phi}}^2 \nonumber \\
&=& 3(\gamma_{\mathrm{\phi}}-\gamma_{\mathrm{pf}})\Omega_{\mathrm{pf}}(1-\Omega_{\mathrm{pf}})\tilde{T}^{\delta}+2\nu \tilde{T}^{\delta+n-2p}X^{2p}\Sigma_{\mathrm{\phi}}^2.
\end{eqnarray}
While the variables $(X,\Sigma_\phi)$ are bounded, the variable $\tilde{T}$ becomes unbounded ($\tilde{T}\rightarrow +\infty$) when $H\rightarrow0$. In order to obtain a \textit{regular} and \textit{global} $3$-dimensional dynamical system, we introduce
\begin{equation}\label{T}
T=\frac{\tilde{T}}{1+\tilde{T}},
\end{equation}
so that $T\in(0,1)$ with $T\rightarrow 0$ as $\tilde{T} \rightarrow 0$, and $T\rightarrow 1$ as $\tilde{T} \rightarrow +\infty$, as well as a new independent time variable $\tau$ defined by
\begin{equation}\label{newtime}
\frac{d}{d\tau}=(1-T)^k\frac{d}{d\tilde{N}}=\frac{T^{\delta}(1-T)^{k-\delta}}{H}\frac{d}{dt},
\end{equation}
where
\[k=\left\{
\begin{aligned}
n-2p+\delta  \quad&\quad \text{if} \quad  p<\frac{n}{2},  \\
1+\delta \quad&\quad\text{if}\quad  p\geq \frac{n}{2} .
\end{aligned}
\right.\]
This leads to a \emph{regular} and \emph{global} $3$-dimensional dynamical system for the state-vector $(X,\Sigma_\phi,T)$
\begin{subequations}\label{globalsys}
	\begin{align}
	\frac{dX}{d\tau} &= \frac{1}{n}(1+q)T^{\delta}(1-T)^{k-\delta}X+T^{1+\delta}(1-T)^{k-(1+\delta)}\Sigma_{\mathrm{\phi}}, \\
	\frac{d\Sigma_{\mathrm{\phi}}}{d\tau} &= -\Bigg[(2-q)T^{\delta}(1-T)^{k-\delta}+\nu T^{\delta+n-2p}(1-T)^{k-(\delta+n-2p)}X^{2p}\Bigg]\Sigma_{\mathrm{\phi}}-nT^{1+\delta}(1-T)^{k-(1+\delta)}X^{2n-1},\\
	\frac{dT}{d\tau} &= \frac{1}{n}(1+q)T^{1+\delta}(1-T)^{1+k-\delta}.
	\end{align}
\end{subequations}
where $q$ is given by \eqref{deceleration}. The auxiliary equation \eqref{localomega} written in terms of the new time variable $\tau$ becomes
\begin{equation}\label{MatEq}
\frac{d\Omega_{\mathrm{pf}}}{d\tau}=2\left(1+q-\frac{3}{2}\gamma_{\mathrm{pf}}\right)T^{\delta}(1-T)^{k-\delta}\Omega_{\mathrm{pf}}+2\nu T^{\delta+n-2p}(1-T)^{k-(\delta+n-2p)}X^{2p}\Sigma_{\mathrm{\phi}}^2.
\end{equation}  
The state space $\mathbf{S}$ is thus a 3-dimensional space consisting of a (deformed when $n>1$) open and bounded solid cylinder without its axis
\begin{equation}
{\mathbf{S}}=\{(X,\Sigma_\phi,T)\in\mathbb{R}^3\,: 0< X^{2n}+\Sigma^2_\phi< 1 ,\quad 0<T<1\}.
\end{equation}
The state space $\mathbf{S}$ can be regularly extended to include the axis of the cylinder with $\Omega_\mathrm{pf}=1$ ($\Omega_\phi=X^{2n}+\Sigma^2_\phi=0$) which is an invariant boundary subset as follows from~\eqref{MatEq}, and the outer shell of the cylinder which consists of the pure {\em scalar field boundary subset}, $\Omega_{\mathrm{pf}}=0$ ($\Omega_\phi=X^{2n}+\Sigma^2_\phi=1$). Due to the interaction term when $\nu\neq0$, the $\Omega_\mathrm{pf}=0$ boundary subset is  not invariant for the flow. Furthermore at $\Omega_\mathrm{pf}=0$ it follows that
\begin{subequations}
	\begin{align}
	\left.\frac{d\Omega_{\mathrm{pf}}}{d\tau}\right|_{\Omega_{\mathrm{pf}}=0}&= 2\nu T^{\delta+n-2p}(1-T)^{k-(\delta+n-2p)}X^{2p}\Sigma_{\mathrm{\phi}}^2 \geq 0,\qquad
	\left.\frac{d^2\Omega_{\mathrm{pf}}}{d\tau^2}\right|_{\Omega_{\mathrm{pf}}=0}=
	\left.\frac{d^3\Omega_{\mathrm{pf}}}{d\tau^3}\right|_{\Omega_{\mathrm{pf}}=0}=0, \\
	\left.\frac{d^4\Omega_{\mathrm{pf}}}{d\tau^4}\right|_{\Omega_{\mathrm{pf}}=0}&=6n^2(1-T)^{k-\delta-1}(3(1-T)^{3k-3\delta-1}T^{1+3\delta}\gamma_{\mathrm{pf}}^2+ \nonumber\\
	&+(1-T)^{2k-2\delta-1}T^{1+2\delta}\gamma_{\mathrm{pf}}(3(1-T)^{k-\delta}T^{\delta}+(1-T)^{k-n+2p-\delta}T^{n-2p-\delta}\nu))>0.
	\end{align}
\end{subequations}
Since $\nu>0$, this shows that the surface $\Omega_\mathrm{pf}=0$ not being invariant, it is future-invariant, which motivates the following definition:
\begin{definition}
\label{definition1}
	The orbits in $\mathbf{S}$ with initial data $\Omega_\mathrm{pf}(\tau_0)>0$ are said to be of class $\mathrm{B}$ if there is a finite $\tau_*<\tau_0$ such that $\Omega_\mathrm{pf}(\tau_*)=0$. The complement of such orbits in $\mathbf{S}$ are said to be of class $\mathrm{A}$.
\end{definition}
Class $\mathrm{B}$ orbits enter the state-space $\mathbf{S}$ by crossing the outer cylindrical shell with $\Omega_\phi=\Sigma^2_\phi+X^{2n}=1$. Moreover $\mathbf{S}$ can be regularly extended to include the invariant boundaries $T=0$ and $T=1$, which is essential, since all possible past attractor sets for class $\mathrm{A}$ orbits are located at $T=0$ and all possible future attractors for both class $\mathrm{A}$ and $\mathrm{B}$ orbits are located at $T=1$ as follows from the following lemma:
\begin{lemma}
	\label{lemma1}
	The $\alpha$-limit set of class $\mathrm{A}$ interior orbits in $\mathbf{S}$ is located at $\{T=0\}$, while the $\omega$-limit set of all interior orbits in $\mathbf{S}$ is located at $\{T=1\}$.
\end{lemma}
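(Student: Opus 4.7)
The plan is to prove Lemma~\ref{lemma1} via a monotonicity argument applied directly to the coordinate $T$. The key observation is that from the $T$-equation in \eqref{globalsys},
\begin{equation*}
\frac{dT}{d\tau} = \frac{1}{n}(1+q)\,T^{1+\delta}(1-T)^{1+k-\delta},
\end{equation*}
the prefactor $T^{1+\delta}(1-T)^{1+k-\delta}$ is strictly positive on $\{0<T<1\}$, so the sign of $dT/d\tau$ coincides with that of $1+q=3\Sigma_\phi^2+\tfrac{3}{2}\gamma_\mathrm{pf}\Omega_\mathrm{pf}$. Since $\gamma_\mathrm{pf}\in(0,2)$ and $\Omega_\mathrm{pf}>0$ strictly in the interior of $\mathbf{S}$, one has $1+q\geq\tfrac{3}{2}\gamma_\mathrm{pf}\Omega_\mathrm{pf}>0$, hence $T$ is strictly increasing along every interior orbit. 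Being bounded in $(0,1)$, the limits $T_\pm:=\lim_{\tau\to\pm\infty} T(\tau)$ both exist.

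By the LaSalle--Krasovskii monotonicity principle, the $\omega$- and $\alpha$-limit sets of an interior orbit (relative to $\overline{\mathbf{S}}$) must lie in the largest invariant subset of $\{dT/d\tau=0\}$. From the formula above this zero set decomposes in $\overline{\mathbf{S}}$ as $\{T=0\}\cup\{T=1\}\cup\{q=-1\}$, and the condition $q=-1$ forces simultaneously $\Sigma_\phi=0$ and $\Omega_\mathrm{pf}=0$ (i.e. $X^{2n}=1$), reducing to the two curves $\mathcal{C}_\pm=\{(\pm 1,0,T):T\in[0,1]\}$ on the outer cylindrical shell. The faces $\{T=0\}$ and $\{T=1\}$ are invariant; however, for any $T_*\in(0,1)$ the point $(\pm 1,0,T_*)$ is not an equilibrium, since the $\Sigma_\phi$-equation of \eqref{globalsys} gives
\begin{equation*}
\left.\frac{d\Sigma_\phi}{d\tau}\right|_{(\pm 1,0,T_*)}=\mp\, n\, T_*^{1+\delta}(1-T_*)^{k-(1+\delta)}\neq 0,
\end{equation*}
so orbits immediately leave $\mathcal{C}_\pm$. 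Thus the largest invariant subset of $\{dT/d\tau=0\}$ is exactly $\{T=0\}\cup\{T=1\}$.

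To conclude, since $T(\tau)$ is strictly increasing, the forward limit $T_+$ cannot be $0$, so the $\omega$-limit set of any interior orbit lies in $\{T=1\}$; this applies in particular to class $\mathrm{B}$ orbits, which---after crossing the outer shell $\{\Omega_\mathrm{pf}=0\}$ at some finite $\tau_*$---remain in the interior for all $\tau>\tau_*$ because $\Omega_\mathrm{pf}=0$ is future-invariant. For a class $\mathrm{A}$ orbit, the defining property $\Omega_\mathrm{pf}>0$ throughout the past, together with invariance of the inner axis $\Omega_\phi=0$, ensures existence in the interior for all $\tau\leq\tau_0$, and the past limit $T_-$ cannot be $1$ since $T$ is increasing; hence the $\alpha$-limit set lies in $\{T=0\}$. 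The one non-routine step is eliminating $\mathcal{C}_\pm$ from the candidate limit sets; this is settled by the explicit non-equilibrium computation above, and no blow-up or centre-manifold analysis is needed for this lemma---only strict monotonicity of $T$ in the interior.
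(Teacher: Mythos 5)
Your proof is correct and follows essentially the same route as the paper: both rest on the strict monotonicity of $T$ along interior orbits combined with the monotonicity/LaSalle principle, the existence of the limits $T_\pm$, and the invariance of $\{T=0\}$ and $\{T=1\}$. The only difference is in handling the degenerate locus $q=-1$: the paper computes $d^3T/d\tau^3>0$ there and identifies those points as inflection points of $T(\tau)$, whereas you observe that $1+q\geq\tfrac{3}{2}\gamma_{\mathrm{pf}}\Omega_{\mathrm{pf}}>0$ strictly in the interior so that $q=-1$ can only occur on the outer shell, and you then exclude the curves $\mathcal{C}_\pm=\{(\pm1,0,T)\}$ from the candidate limit sets by the explicit non-equilibrium computation $d\Sigma_\phi/d\tau\neq0$ --- a slightly cleaner treatment of the same point.
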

\begin{proof} 
	Since $1+q\leq0$, then $T$ is strictly monotonically increasing in the interval $(0,1)$, except when $q=-1$ in which case
	\begin{eqnarray}	
&&\left.\frac{dT}{d\tau}\right|_{q=-1}= 0, \qquad \left.\frac{d^2T}{d\tau^2}\right|_{q=-1} = 0, \qquad \nonumber \\ 
&&\left.\frac{d^3T}{d\tau^3}\right|_{q=-1}
= \frac{nT^{2+3\delta}}{(1-T)^{3(1+\delta-k)}}\Big(3n(k-\delta)+3(1-T)^{2-k}T^{1-\delta}\gamma_{\mathrm{pf}}+\nu nk(1-T)^{1+\delta-k}T^{k-\delta-1} \Big)>0. \nonumber
\end{eqnarray}
Therefore the points in $\mathbf{S}$ with $q=-1$ are just inflection points in the graph of $T(\tau)$. By the \emph{monotonicity principle}, it follows that there are no fixed points, recurrent or periodic orbits in the interior of the state space ${\bf S}$, and the $\alpha$-limit sets of class $\mathrm{A}$ orbits are contained at $\{T=0\}$ and $\omega$-limit sets of all orbits in ${\bf S}$ are located at $\{T=1\}$. 
\end{proof}
Thus the global behavior of both classes of orbits can be inferred by a complete detailed description of the invariant subsets $T=0$ and $T=1$, which are associated with the past and future limits $H \rightarrow + \infty$ and $H \rightarrow 0$, respectively. Due to their distinct properties, we split our analysis into three cases: $p<n/2$, $p=n/2$ and $p>n/2$.

%%%%%%%%%%%%%%%%%%%%%%%%%%%%%%%%%%%%%%%%%%%%%%%%%
\section{Dynamical systems' analysis when $p<\frac{n}{2}$} 
\label{case1}
%%%%%%%%%%%%%%%%%%%%%%%%%%%%%%%%%%%%%%%%%%%%%%%%%%%
When $p<\frac{n}{2}$ the global dynamical system \eqref{globalsys} takes the form
\begin{subequations}\label{globalng2p}
	\begin{align}
	\frac{dX}{d\tau} &= \left[\frac{1}{n}(1+q)(1-T)X+T\Sigma_{\mathrm{\phi}}\right](1-T)^{n-2p-1} ,\label{gl1} \\
	\frac{d\Sigma_{\mathrm{\phi}}}{d\tau} &=-\nu T^{n-2p}X^{2p}\Sigma_{\mathrm{\phi}} -\Bigg[(2-q)(1-T)\Sigma_{\mathrm{\phi}}+n TX^{2n-1}\Bigg](1-T)^{n-2p-1}, \label{gl2}\\
	\frac{dT}{d\tau} &= \frac{1}{n}(1+q)T(1-T)^{n-2p+1}, \label{gl3}
	\end{align}
\end{subequations}
where we recall  $q=-1+3\Sigma_{\mathrm{\phi}}^2+\frac{3}{2}\gamma_\mathrm{pf}\Omega_{\mathrm{pf}}$ and the auxiliary equation~\eqref{MatEq} becomes
\begin{equation}\label{Aux1}
\frac{d\Omega_{\mathrm{pf}}}{d\tau} = 2(1-T)^{n-2p}(1+q-\frac{3}{2}\gamma_{\mathrm{pf}})\Omega_{\mathrm{pf}}+2\nu T^{n-2p}X^{2p}\Sigma_{\mathrm{\phi}}^2.
\end{equation}
%
%
%
%%%%%%%%%%%%%%%%%%%%%%%%%%%%%%%%%%%%%%%%%%%%%%
\subsection{Invariant boundary $T=0$}\label{T0,n-2p>1}
%%%%%%%%%%%%%%%%%%%%%%%%%%%%%%%%%%%%%%%%%%%%%%
The induced flow on the $T=0$ invariant boundary is given by
\begin{equation}
	\frac{dX}{d\tau} = \frac{1}{n}(1+q)X ,\qquad 
	\frac{d\Sigma_{\mathrm{\phi}}}{d\tau} = -(2-q)\Sigma_{\mathrm{\phi}},
\end{equation}
where $q=-1+3\Sigma_{\mathrm{\phi}}^2+\frac{3}{2}\gamma_\mathrm{pf}\Omega_{\mathrm{pf}}$ and $\Omega_{\mathrm{pf}}=1-\Omega_{\mathrm{\phi}}=1-X^{2n}-\Sigma_{\mathrm{\phi}}^2$ satisfies 
\begin{equation}
\frac{d\Omega_{\mathrm{pf}}}{d\tau} = 2\left(1+q-\frac{3}{2}\gamma_{\mathrm{pf}}\right)\Omega_{\mathrm{pf}}.
\end{equation}
Thus
\begin{equation}
\left.\frac{d\Omega_{\mathrm{pf}}}{d\tau}\right|_{\Omega_{\mathrm{pf}}=0}=0, \qquad \left.\frac{d\Omega_{\mathrm{pf}}}{d\tau}\right|_{\Omega_{\mathrm{pf}}=1}=0,
\end{equation}
and the intersection of the $T=0$ invariant boundary with the pure perfect fluid subset $\Omega_{\mathrm{pf}}=1$, i.e. the axis of the cylinder, consists of the fixed point
\begin{equation}
\mathrm{FL}_{\mathrm{0}}:\quad X=0,\quad\quad \Sigma_{\mathrm{\phi}}=0,\quad\quad T=0,
\end{equation}
where $q=\frac{1}{2}(3\gamma_{\mathrm{pf}}-2)$, corresponding to the flat FL self-similar solution. The linearisation around $\mathrm{FL}_0$ yields the eigenvalues $\frac{3}{2n}\gamma_{\mathrm{pf}}$, $-\frac{3}{2}(2-\gamma_{\mathrm{pf}})$ and $\frac{3}{2n}\gamma_{\mathrm{pf}}$ with eigenvectors the canonical basis of $\mathbb{R}^3$. Since $0<\gamma_\mathrm{pf}<2$, $\mathrm{FL}_{\mathrm{0}}$ has two positive real eigenvalues and a negative real eigenvalue, being a hyperbolic saddle, and the $\alpha$-limit point of a 1-parameter set of class A orbits in $\mathbf{S}$. 

On the intersection of the invariant boundary $T=0$ with the  subset $\Omega_\mathrm{pf}=0$ there are two equivalent fixed points
\begin{equation}
\mathrm{K}^{\mathrm{\pm}}: \quad X=0,\quad\quad \Sigma_{\mathrm{\phi}}=\pm 1,\quad\quad T=0,
\end{equation}
with $q=2$ corresponding to the self-similar massless scalar field or kinaton solution. The linearisation of the full system around these fixed points yields the eigenvalues $\frac{3}{n}$, $3(2-\gamma_{\mathrm{pf}})$, and $\frac{3}{n}$, with generalised eigenvectors $(1,0,0)$, $(0,1,0)$ and $(\mp1,0,1)$. 
Since $\gamma_{\mathrm{pf}} \in (0,2)$, then $\mathrm{K}^{\mathrm{\pm}}$ are hyperbolic sources and the $\alpha$-limit points of a 2-parameter set of class A orbits in $\mathbf{S}$. 

Finally there are other two equivalent fixed points given by
\begin{equation}
\mathrm{dS^{\pm}_0}: \quad X=\pm 1,\quad\quad \Sigma_{\mathrm{\phi}}=0, \quad\quad T=0,
\end{equation}
that correspond to quasi-de-Sitter states with $q=-1$. The linearisation around $\mathrm{dS^{\pm}_0}$ yields the eigenvalues $-3\gamma_{\mathrm{pf}}$, $-3$ and $0$ with eigenvectors $(1,0,0)$, $(0,1,0)$ and $(0,\mp \frac{n}{3},1)$ respectively. The fixed points $\mathrm{dS^{\pm}_0}$ have two negative real eigenvalues (since $\gamma_\mathrm{pf}>0$) and a zero eigenvalue corresponding to a center manifold. Due to the monotonicity of $T$ it is clear that a single class A  orbit originates from each $\mathrm{dS}^\pm_0$ into $\mathbf{S}$ corresponding to a 1-dimensional center manifold. This center manifold corresponds to what is usually called in the physics literature the  \emph{inflationary attractor solution}, see e.g.~\cite{Alho,Alho2} and references therein. In order to simplify the analysis of the center manifold we use instead system \eqref{sistemalocal} for the unbounded  variable $\tilde{T}$, and introduce the adapted variables
\begin{equation}
\bar{X} = X \mp 1, \qquad \bar{\Sigma}_\phi= \Sigma_{\mathrm{\phi}} \pm\frac{n}{3}\tilde{T}.
\end{equation}
This leads to the transformed adapted system
\begin{equation}
\frac{d\bar{X}}{dN} = -3\gamma_\mathrm{pf} \bar{X}+F(\bar{X},\bar{\Sigma}_\phi,\tilde{T}),\qquad 
\frac{d\bar{\Sigma}_\phi}{dN} = -3\bar{\Sigma}_\phi+ G(\bar{X},\bar{\Sigma}_\phi,\tilde{T}),\qquad 
\frac{d\tilde{T}}{dN} = N(\bar{X},\bar{\Sigma}_\phi,\tilde{T}),
\end{equation}
where the fixed points $\mathrm{dS}^\pm_0$ are now located  at the origin of coordinates $(\bar{X},\bar{\Sigma}_\phi,\tilde{T})=(0,0,0)$, and $F$, $G$ and $N$ are functions of higher order terms.  
The 1-dimensional center manifold $W^{\mathrm{c}}$ at $\mathrm{dS}_\pm$ can be locally represented as the graph  $h\,: \,E^c\rightarrow E^s$, i.e. $(\bar{X},\bar{\Sigma}_\phi)=(h_1(\tilde{T}),h_2(\tilde{T}))$, satisfying the fixed point $h(0)=0$ and  the tangency $\frac{dh(0)}{d\tilde{T}}=0$ conditions (see e.g. \cite{Carr}). Using this in the above equation and using $\tilde{T}$ as an independent variable, we get
\begin{subequations}
	\begin{align}
	&\frac{1}{n}(1+q)\Big(\frac{dh_1}{d\tilde{T}}(\tilde{T})\tilde{T}-h_1(\tilde{T})\mp 1\Big)-\tilde{T}\Big(h_2(\tilde{T})\mp \frac{n}{3}\tilde{T}\Big)=0, \label{Xcentereq}\\
	&\frac{1}{n}(1+q)\tilde{T}\Big(\frac{dh_2}{d\tilde{T}}(\tilde{T})\mp \frac{n}{3}\Big)+(2-q)\Big(h_2(\tilde{T})\mp \frac{n}{3}\tilde{T}\Big)+ \nonumber \\
	&\qquad\qquad+\nu \tilde{T}^{n-2p}\Big(h_1(\tilde{T})\pm 1\Big)^{2p}\Big(h_2(\tilde{T})\mp \frac{n}{3}\tilde{T}\Big)+n \tilde{T}\Big(h_1(\tilde{T})\pm 1\Big)^{2n-1}=0,\label{Sigcentereq}
	\end{align}
\end{subequations}
where $q=-1+3\Big(h_2(\tilde{T})\mp \frac{n}{3}\tilde{T}\Big)^2+\frac{3}{2}\gamma_{\mathrm{pf}}\Big(1-(h_1\big(\tilde{T})\pm 1\big)^{2n}-\big(h_2(\tilde{T})\mp \frac{n}{3}\tilde{T}\big)^{2}\Big)$. The problem of finding the inflationary  attractor solution amounts to solving the previous system of non-linear ordinary differential equations. Although in general the existence of an explicit solution for the above system is not expected, it is possible to approximate the solution by a formal truncated power series expansion in $\tilde{T}$: 
\begin{equation}\label{powerseries}
	h_1(\tilde{T}) = \sum_{i=1}^{N}a_i \tilde{T}^{i} , \qquad
h_2(\tilde{T}) = \sum_{i=1}^{N}b_i \tilde{T}^{i}, \qquad \text{as}\quad \tilde{T}\rightarrow 0,
\end{equation}
with $a_i, b_i\in\mathbb{R}$. Plugging \eqref{powerseries} into~\eqref{Xcentereq}-\eqref{Sigcentereq}, using the expansions $(\bar{X}\pm 1)^{2n}= 1 \pm 2n \bar{X} +\binom{2n}{2n-2}\bar{X}^2+\dots$, $\tilde{T}^{n-2p}=\tilde{T} \delta^{n-2p}_1+\tilde{T}^2 \delta^{n-2p}_2 +\dots$, where $\delta^{a}_{b}$ is the Kronecker delta symbol, and solving the resulting linear system of equations for the coefficients of the expansions yields, as $\tilde{T}\rightarrow0$,
\begin{subequations}
	\begin{align}
	X&=\pm 1 \mp \frac{n}{18}\tilde{T}^2\pm \left(\frac{n^2}{648}(5-2n)+\frac{\nu n}{27 \gamma_\mathrm{pf}}\delta^{n-2p}_1\right)\tilde{T}^4+\mathcal{O}(\tilde{T}^6),\\
	\Sigma_\mathrm{\phi}&=\mp \frac{n}{3}\tilde{T}\Big[1 \mp \left(\frac{n}{18}+\frac{\nu}{3}\delta^{n-2p}_1\right)\tilde{T}^2 \pm \\
	&\pm\left(\frac{n^2}{648}(17-6n)-\frac{\nu}{3}\left(\delta^{n-2p}_2-\frac{n(2-4n+(7+2p)\gamma_\mathrm{pf})+6\gamma_\mathrm{pf}}{18\gamma_\mathrm{pf}}\delta^{n-2p}_1\right)\right)\tilde{T}^4+\mathcal{O}(\tilde{T}^6)\Big],\\
	\Omega_\mathrm{pf}&= \frac{2\nu n^2}{27 \gamma_\mathrm{pf}} \delta^{n-2p}_1 \tilde{T}^4+\mathcal{O}(\tilde{T}^6). \label{CMOmega}
	\end{align}
\end{subequations}
Therefore, it follows that to leading order on the center manifold
\begin{equation}
\frac{d\tilde{T}}{dN}=\frac{n}{3}\tilde{T}^3+\mathcal{O}(\tilde{T}^4)\quad \text{as}\quad \tilde{T}\rightarrow 0,
\end{equation}
which shows explicitly that $\mathrm{dS}^\pm_0$ are center saddles with a unique class A center manifold orbit originating from each fixed point into the interior of $\mathbf{S}$. 

We now show that on $T=0$ the above fixed points are the only possible $\alpha$-limit sets for class A orbits in $\mathbf{S}$, and that the orbit structure on $T=0$ is very simple:

\begin{lemma}\label{LemT0_ng2p}
Let $p<\frac{n}{2}$. Then the $\{T=0\}$ invariant boundary consists of heteroclinic orbits connecting the fixed points as depicted in Figure~\ref{fig:T0_n-2p>0}.
\end{lemma}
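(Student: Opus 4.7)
The plan is to reduce the statement to a quadrant-by-quadrant analysis of the induced planar flow on $\{T=0\}$, exploit one strictly monotone quantity to rule out recurrence, and then read off the heteroclinic connections from the linearisation data already obtained. On $\{T=0\}$ the system~\eqref{globalng2p} restricts to a $2$-dimensional autonomous flow on the compact disk $\{X^{2n}+\Sigma_{\mathrm{\phi}}^2\leq 1\}$. Both axes $\{X=0\}$ and $\{\Sigma_{\mathrm{\phi}}=0\}$ are invariant, since each right-hand side of the reduced system factors through the corresponding coordinate, and the outer boundary $\{\Omega_{\mathrm{pf}}=0\}$ is also invariant on $T=0$ because the interaction term in~\eqref{Aux1} carries a vanishing factor $T^{n-2p}$. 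Together with the already-identified fixed points $\mathrm{FL}_0$, $\mathrm{K}^{\pm}$, $\mathrm{dS}^{\pm}_0$, these invariant sets partition the disk into four open quadrants.

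On each open quadrant I will use the strictly monotone quantity $W:=X^{2n}/\Sigma_{\mathrm{\phi}}^2$. A short calculation from the reduced $T=0$ equations gives
\begin{equation*}
\frac{d\ln W}{d\tau}=2(1+q)+2(2-q)=6,
\end{equation*}
so that $W(\tau)=W(\tau_0)\,e^{6(\tau-\tau_0)}$ along any interior orbit. By the monotonicity principle this excludes periodic orbits, homoclinic loops and non-trivial recurrence inside each quadrant, and Poincaré--Bendixson then forces both the $\alpha$- and $\omega$-limit sets of any interior orbit to be individual fixed points on the quadrant's boundary. Moreover the explicit formula forces $X\to 0$ as $\tau\to-\infty$ and $\Sigma_{\mathrm{\phi}}\to 0$ as $\tau\to+\infty$.

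The linearisation of $\mathrm{FL}_0$ already carried out shows that its stable and unstable manifolds in $\{T=0\}$ lie precisely along the invariant axes, so $\mathrm{FL}_0$ cannot be an $\alpha$- or $\omega$-limit of any interior orbit. In each open quadrant the unique fixed point on the $X=0$ portion of the closure is one of $\mathrm{K}^{\pm}$ and the unique fixed point on the $\Sigma_{\mathrm{\phi}}=0$ portion is one of $\mathrm{dS}^{\pm}_0$, so every interior orbit is a heteroclinic from $\mathrm{K}^{\pm}$ to $\mathrm{dS}^{\pm}_0$. The flows on the three invariant boundaries of each quadrant are then read off directly: on $\{X=0\}$ the reduced equation $d\Sigma_{\mathrm{\phi}}/d\tau=-(2-q)\Sigma_{\mathrm{\phi}}$ yields heteroclinics $\mathrm{K}^{\pm}\to\mathrm{FL}_0$; on $\{\Sigma_{\mathrm{\phi}}=0\}$ the equation $dX/d\tau=\tfrac{1}{n}(1+q)X$ yields $\mathrm{FL}_0\to\mathrm{dS}^{\pm}_0$; and on the arcs $\{\Omega_{\mathrm{pf}}=0\}$ the angular parametrisation $X^n=\cos\theta$, $\Sigma_{\mathrm{\phi}}=\sin\theta$ reduces the flow to $d\theta/d\tau=-\tfrac{3}{2}\sin(2\theta)$, giving $\mathrm{K}^{\pm}\to\mathrm{dS}^{\pm}_0$.

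The main obstacle I expect is the monotonicity step: ensuring that $W$ really covers every conceivable limit behaviour, despite $W$ blowing up on the axes. This is handled by noting that the axes are invariant, so an interior orbit remains in its open quadrant for all $\tau$ and $W$ is smooth and strictly increasing along the whole orbit; any accumulation point on the quadrant boundary must therefore be either a fixed point or belong to a heteroclinic cycle, and no such cycle can form because $\mathrm{FL}_0$ is a saddle whose two separatrices lie on disjoint invariant axes, closing the Poincaré--Bendixson argument.
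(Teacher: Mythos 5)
Your argument is correct, and while it shares the paper's overall skeleton (the two invariant axes and the invariant arc $\Omega_{\mathrm{pf}}=0$ split $\{T=0\}$ into four quadrants, and Poincar\'e--Bendixson finishes the job), the key mechanism is genuinely different. The paper rules out periodic orbits via the index theorem (no interior fixed points in a quadrant) and then exhibits the conserved quantity \eqref{conserved1} to determine the trajectories. You instead observe that $W=X^{2n}/\Sigma_{\mathrm{\phi}}^{2}$ satisfies $d\ln W/d\tau=2(1+q)+2(2-q)=6$ identically on $\{T=0\}$ — which checks out against the induced flow — so $W=W_0e^{6\tau}$ along every quadrant-interior orbit. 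This single computation does double duty: it excludes periodic orbits and recurrence by the monotonicity principle, and, since $X$ and $\Sigma_{\mathrm{\phi}}$ are bounded, it forces $X\to0$ as $\tau\to-\infty$ and $\Sigma_{\mathrm{\phi}}\to0$ as $\tau\to+\infty$, so the $\alpha$- and $\omega$-limit sets are pinned to the correct axes without any appeal to the index theorem. Combined with the observation that the stable and unstable manifolds of the saddle $\mathrm{FL}_0$ lie entirely in the invariant axes (so $\mathrm{FL}_0$ cannot be a limit point of an interior orbit, and no graphic can close up since no boundary orbit returns to $\mathrm{K}^{\pm}$), this identifies every interior orbit as a heteroclinic $\mathrm{K}^{\pm}\to\mathrm{dS}^{\pm}_0$, matching Figure~\ref{fig:T0_n-2p>0}. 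What each approach buys: your monotone function gives the direction of the connections essentially for free, whereas the paper's conserved quantity \eqref{conserved1} is stronger in a different sense — it gives the exact level curves of the flow, not just the qualitative separatrix skeleton. Two cosmetic points: the parametrisation $X^{n}=\cos\theta$, $\Sigma_{\mathrm{\phi}}=\sin\theta$ of the outer arc should be stated per quadrant (signs of $X$ for $n$ even need care), and the claim that the interaction term in \eqref{Aux1} vanishes on $T=0$ uses $n-2p\geq1$, i.e.\ precisely the hypothesis $p<\tfrac{n}{2}$ with $p,n$ integers — worth making explicit.
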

\begin{proof}
	It is straightforward to check that $\{\Sigma_\phi=0\}$ and $\{X=0\}$ are invariant 1-dimensional subsets consisting of heteroclinic orbits $\mathrm{FL}_0\rightarrow\mathrm{dS}^\pm_0$ and $\mathrm{K}^\pm\rightarrow\mathrm{FL}_0$ respectively.  Therefore the two axis divide the (deformed) circle with boundary $X^{2n}+\Sigma^2_\phi=1$, consisting of the heteroclinic orbits $\mathrm{K^+}\rightarrow\mathrm{dS}^\pm_0$ and $\mathrm{K^-}\rightarrow\mathrm{dS}^\pm_0$, into $4$-invariant quadrants. On each quadrant there are no interior fixed points and hence, by the \emph{index theorem}, no periodic orbits. Since closed saddle connections do not exist, it follows by the \emph{Poincar\'e-Bendixson theorem} that each quadrant consists of heteroclinic orbits connecting the fixed points. Moreover, in this case, the $\{T=0\}$ invariant boundary admits the following conserved quantity
	\begin{equation}\label{conserved1}
	\Sigma_{\mathrm{\phi}}^{\gamma_{\mathrm{pf}}}X^{(2-\gamma_{\mathrm{pf}})n}\Omega_{\mathrm{pf}}^{-1}=\text{const.},\quad \Omega_\mathrm{pf}=1-\Sigma^2_\phi-X^{2n},
	\end{equation}
which determines the solution trajectories on $T=0$.
\end{proof}
\begin{figure}[ht!]
	\begin{center}
		\subfigure[$(p,n)=(0,1)$.]{\label{fig:T=0,n=1}
		\includegraphics[width=0.30\textwidth]{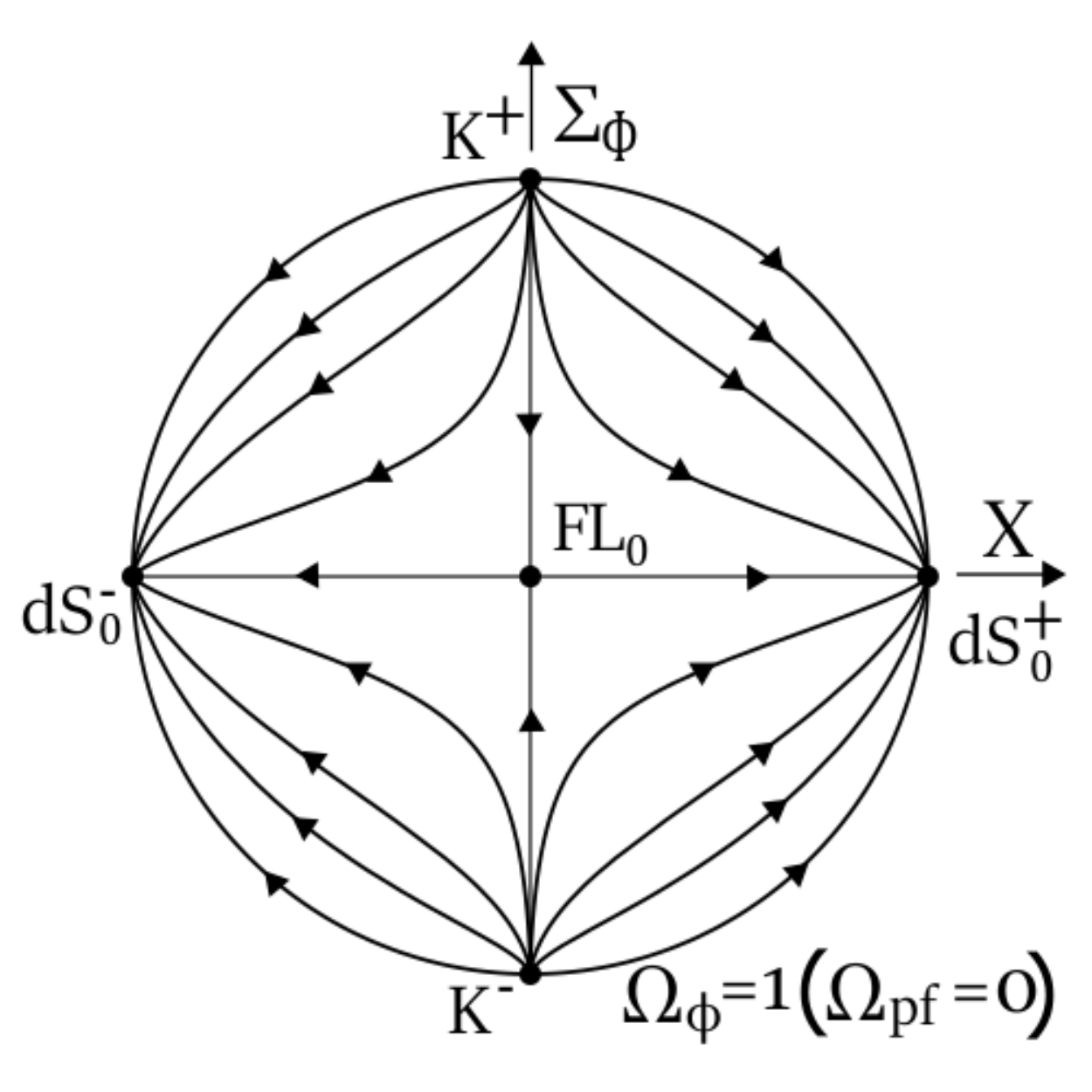}}
		\hspace{2cm}
		\subfigure[$(p,n)=(1,3)$.]{\label{fig:T=0,n=3}
		\includegraphics[width=0.30\textwidth]{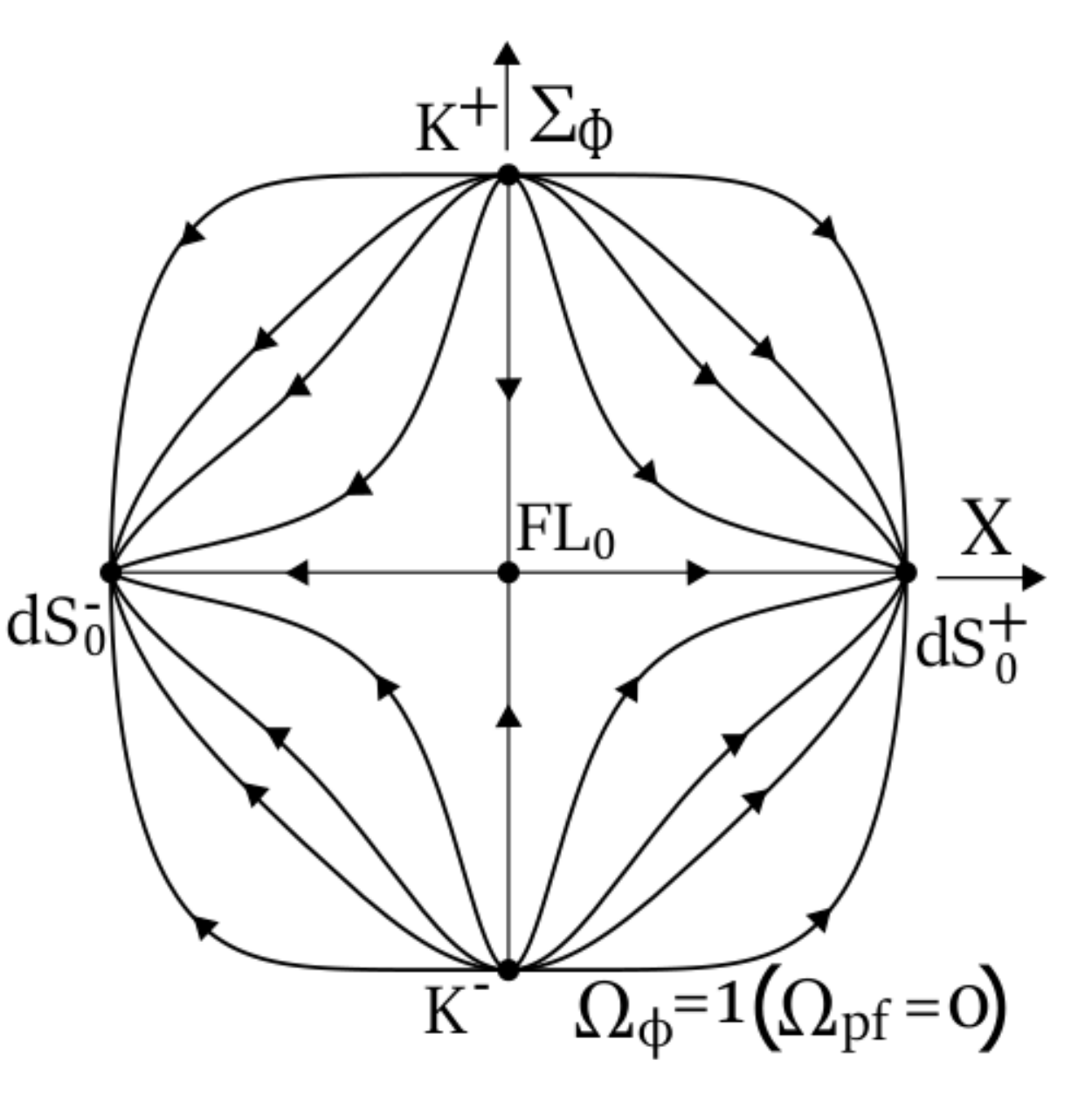}}
	\end{center}
	\vspace{-0.5cm}
	\caption{Invariant boundary $\{T=0\}$ when $p<\frac{n}{2}$.}
	\label{fig:T0_n-2p>0}
\end{figure}
\begin{theorem}
	Let $p<\frac{n}{2}$. Then the $\alpha$-limit set for class $\mathrm{A}$ orbits in $\mathbf{S}$, consists of fixed points on $\{T=0\}$. In particular as $\tau\rightarrow-\infty$ ($N\rightarrow-\infty$), a 2-parameter set of orbits converge to each $\mathrm{K}^\pm$, with asymptotics
	\begin{equation}
	X(N)=(C_X\pm C_T N) e^{\frac{3}{n}N},\qquad \Sigma_\phi(N) =\pm 1\mp C_\Sigma e^{3(2-\gamma_\mathrm{pf})N},\qquad \tilde{T}(N)=C_T e^{\frac{3}{n}N}
	\end{equation}
	with $C_X$, $C_\Sigma>0$, and $C_T>0$ constants. A 1-parameter set of orbits converges to $\mathrm{FL}_0$ with asymptotics
	 \begin{equation}
	 X(N)=C_X e^{\frac{3\gamma_\mathrm{pf}}{2n}N},\qquad \Sigma_\phi(N)=0,\qquad 
	 \tilde{T}(N)=C_T e^{\frac{3\gamma_\mathrm{pf}}{2n}N}
	 \end{equation}
	 with $C_X$ and $C_T>0$ constants, and a unique center manifold orbit converges to each $\mathrm{dS}^\pm_0$ with asymptotics
	 \begin{equation}
	 \label{asymp-var}
	  X  = \pm 1 \mp \frac{n}{18}\left(1-\frac{2n}{3} N\right)^{-1},\qquad 
	  \Sigma_{\mathrm{\phi}}= \mp \frac{n}{3}\left(1-\frac{2n}{3} N\right)^{-1/2}, \qquad \tilde{T}(N)=\left(1-\frac{2n}{3} N\right)^{-1/2}.
	 \end{equation}
	 When $p=\frac{1}{2}(n-1)$ we also get from~\eqref{CMOmega} the asymptotics $\Omega_\mathrm{pf} = \frac{2\nu n^2 }{27\gamma_{\mathrm{pf}}}\left(1-\frac{2n}{3} N\right)^{-2}$.
	 For $p<\frac{1}{2}(n-1)$ one needs to go higher order on the center manifold of $\Omega_\mathrm{pf}$.
\end{theorem}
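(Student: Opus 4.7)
The plan is to exploit the two preceding lemmas. By Lemma~\ref{lemma1} the $\alpha$-limit of a class $\mathrm{A}$ interior orbit lies in $\{T=0\}$, and by Lemma~\ref{LemT0_ng2p} the induced flow there consists only of the five fixed points $\mathrm{K}^{\pm}$, $\mathrm{FL}_0$, $\mathrm{dS}^{\pm}_0$ together with heteroclinic orbits between them, with no cycles and no recurrent points. Since an $\alpha$-limit set is compact, connected, and flow-invariant, any candidate larger than a single fixed point would have to contain the endpoints of some heteroclinic arc together with a connecting path back, forming a cycle; this is excluded by the phase portrait in Figure~\ref{fig:T0_n-2p>0}. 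I would therefore conclude that every class $\mathrm{A}$ orbit has exactly one of the five fixed points as its $\alpha$-limit, reducing the remaining task to a local analysis near each.

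For $\mathrm{K}^{\pm}$ all three eigenvalues are positive, so these are hyperbolic sources and every orbit in a small neighbourhood has $\mathrm{K}^{\pm}$ as its $\alpha$-limit, yielding the claimed $2$-parameter family. To extract the asymptotics I would put the linearised system in Jordan normal form: the repeated eigenvalue $3/n$ is geometrically simple because the coupling comes from the linear term $T\Sigma_\phi$ in~\eqref{gl1}, which at $\Sigma_\phi=\pm 1$ reduces to $\pm T$. Integrating in $N$ (which coincides with $\tau$ on $\{T=0\}$) gives $\tilde T(N)=C_T e^{(3/n)N}$ and $\Sigma_\phi(N)=\pm 1\mp C_\Sigma e^{3(2-\gamma_\mathrm{pf})N}$, and variation of parameters in $dX/dN-(3/n)X=\pm\tilde T$ produces the polynomially-corrected formula $X(N)=(C_X\pm C_T N)e^{(3/n)N}$.

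For $\mathrm{FL}_0$ the linearisation is diagonal on the canonical basis with eigenvalues $\tfrac{3\gamma_\mathrm{pf}}{2n}$, $-\tfrac{3(2-\gamma_\mathrm{pf})}{2}$, $\tfrac{3\gamma_\mathrm{pf}}{2n}$, so $\mathrm{FL}_0$ is a hyperbolic saddle with a $2$-dimensional unstable manifold tangent to the $(X,\tilde T)$-plane and a $1$-dimensional stable manifold in $\Sigma_\phi$. The class $\mathrm{A}$ orbits with $\mathrm{FL}_0$ as $\alpha$-limit are precisely those on the unstable manifold intersected with the interior of $\mathbf{S}$, a $1$-parameter family; on it the exponential decay rate of $\Sigma_\phi$ is strictly faster than that of $X$ and $\tilde T$, so to leading order $\Sigma_\phi(N)=0$ while $X$ and $\tilde T$ grow like $e^{(3\gamma_\mathrm{pf}/(2n))N}$. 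For $\mathrm{dS}^{\pm}_0$ the centre manifold analysis already carried out above reduces the problem to the scalar ODE $d\tilde T/dN=(n/3)\tilde T^3+\mathcal{O}(\tilde T^4)$, whose integration (after fixing a translation in $N$) yields $\tilde T(N)=(1-\tfrac{2n}{3}N)^{-1/2}$; substituting this back into the centre manifold graph recovers the formulas for $X$ and $\Sigma_\phi$, and, in the resonant case $p=(n-1)/2$ (i.e.\ $\delta^{n-2p}_1=1$), also the formula $\Omega_\mathrm{pf}=\tfrac{2\nu n^2}{27\gamma_\mathrm{pf}}(1-\tfrac{2n}{3}N)^{-2}$ directly from~\eqref{CMOmega}.

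The step I expect to require the most care is the global one: ruling out that the $\alpha$-limit of an interior orbit is strictly larger than a single fixed point. This relies on the absence of heteroclinic cycles established in Lemma~\ref{LemT0_ng2p} combined with the standard fact that $\alpha$-limit sets of continuous flows are connected. A secondary subtlety is the Jordan block at $\mathrm{K}^\pm$, where one must verify that the off-diagonal coupling between $X$ and $\tilde T$ is exactly $\pm 1$ at $\Sigma_\phi=\pm 1$ so that the $\pm C_T N$ prefactor appears with the correct sign; this follows transparently from the explicit form of~\eqref{gl1}. The remaining computations are routine ODE integrations and power-series substitutions already laid out in the preceding centre manifold analysis.
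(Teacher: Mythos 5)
Your proposal is correct and follows essentially the same route as the paper, whose proof is the one-line observation that the result follows from Lemma~\ref{lemma1}, Lemma~\ref{LemT0_ng2p}, and the preceding local analysis of the fixed points; you have simply made explicit the connectedness/chain-transitivity argument that forces the $\alpha$-limit set to be a single fixed point, the Jordan-block integration at $\mathrm{K}^\pm$, and the integration of $d\tilde T/dN=\tfrac{n}{3}\tilde T^3$ on the center manifold of $\mathrm{dS}^\pm_0$. All of these details check out against the eigenvalue data and center manifold expansions given in the text.
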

\begin{proof}
	The proof follows from Lemmas~\ref{lemma1} and~\ref{LemT0_ng2p}, and the above local analysis of the fixed points.
\end{proof}
\begin{remark}\label{PastAs1}
	The solutions of class $\mathrm{A}$ which approach $\mathrm{K}^{\pm}$ behave asymptotically as the self-similar massless scalar field or kinaton solution, and the ones approaching $\mathrm{FL}_0$ as the self-similar Friedmann-Lema\^itre solution whose asymptotics towards the past exhibit well-known (big-bang) singularities. In the context of early cosmological inflation the physical interesting solution is the center manifold originating from each $\mathrm{dS}^\pm_0$ whose asymptotics for the variables $(H,\phi,\rho_\mathrm{pf})$ of the original system~\eqref{sistema1}, are given by
	\begin{subequations}	
		\begin{align*}
		&n=1\,:\qquad H\sim -t\quad,\quad \phi\sim-t\quad,\quad \rho_\mathrm{pf} \sim (-t)^{-2}, \quad \quad\text{as}\quad t\rightarrow -\infty \\
		&n=2\,:\qquad H\sim e^{-\frac{2}{3}t}\quad,\quad \phi\sim e^{-\frac{t}{3}}\quad,\quad \rho_\mathrm{pf} \sim e^{\frac{4}{3}t}, \quad \quad\text{as}\quad t\rightarrow -\infty \\
		&n\geq 3\,:\qquad H\sim (-t)^{\frac{n}{n-2}}\quad,\quad \phi\sim (-t)^{\frac{2}{n-2}}\quad,\quad \rho_\mathrm{pf} \sim (-t)^{-\frac{2n(2p+1)}{n-2}}, \quad \quad\text{as}\quad t\rightarrow -\infty.
		\end{align*}
	\end{subequations}
 with $p<\frac{n}{2}$.
\end{remark}
%
%%%%%%%%%%%%%%%%%%%%%%%%%%%%%%%%%%%%%%%%%%%%%%%%%%%%%%
\subsection{Invariant boundary $T=1$}
%%%%%%%%%%%%%%%%%%%%%%%%%%%%%%%%%%%%%%%%%%%%%%%%%%%%%%%%
On the $T=1$ invariant boundary, the system \eqref{globalng2p} reduces to
\begin{equation}\label{ng2p1}
\frac{dX}{d\tau}=\Sigma_{\mathrm{\phi}}\delta_{1}^{n-2p},\qquad
\frac{d\Sigma_{\mathrm{\phi}}}{d\tau}=-nX^{2n-1}\delta_{1}^{n-2p}-\nu X^{2p}\Sigma_{\mathrm{\phi}},
\end{equation}
and the auxiliary equation~\eqref{Aux1} for $\Omega_\mathrm{pf}$, satisfies
\begin{equation}
\frac{d\Omega_{\mathrm{pf}}}{d\tau}=2\nu X^{2p}\Sigma^2_\phi.
\end{equation}
The analysis can be divided into two particular sub-cases: $p<\frac{1}{2}(n-1)$, i.e. $(p,n)=(0,2),(0,3),...$, $(p,n)=(1,4),(1,5),...$, $etc.$,
and $p=\frac{1}{2}(n-1)$, i.e. $(p,n)=(0,1),(1,3),(2,5),...$.  
%
%%%%%%%%%%%%%%%%%%%%%%%%%%%%
\subsubsection{Case $p<\frac{1}{2}(n-1)$}
In this case for all $p\geq0$ there is a line of fixed points
\begin{equation}
\mathrm{L}_{1}: \quad X=X_0, \quad\quad \Sigma_{\mathrm{\phi}}=0, \quad\quad T=1,
\end{equation}
parameterised by $X_0\in [-1,1]$. In addition to $\mathrm{L}_1$ there exists another line of fixed points when $p>0$,
\begin{equation}
\mathrm{L}_{2}: \quad X=0, \quad\quad \Sigma_{\mathrm{\phi}}=\Sigma_{\phi 0}, \quad\quad T=1,
\end{equation}
parameterised by $\Sigma_{\phi 0}\in [-1,1]$.
We shall refer to the non-isolated fixed point at the origin of the $T=1$ invariant set as $\mathrm{FL}_1$, and the end points of $\mathrm{L}_1$ with $X=\pm1$ as $\mathrm{dS}^{\pm}_{1}$. The description of the induced flow on $T=1$ when $p<\frac{1}{2}(n-1)$ is given by the following simple lemma:
\begin{lemma}\label{T1:p<(n-1)/2}
	When $p<\frac{1}{2}(n-1)$, the set $\{T=1\}\setminus \mathrm{L}_1$ for $p=0$, and the set $\{T=1\}\setminus \{\mathrm{L}_1\cup\mathrm{L}_2\}$ for $p>0$ are foliated by invariant subsets $X=\text{const.}$ consisting of regular orbits which enter the region $\Omega_{\mathrm{pf}}>0$ by crossing the set $\Omega_\mathrm{pf}=0$ and converging to the line of fixed points $\mathrm{L}_\mathrm{1}$, as depicted in Figure~\ref{figT1:n-2p-1>0}.
\end{lemma}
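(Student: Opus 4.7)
The plan is to exploit the structural simplification triggered by the hypothesis. Since $p<(n-1)/2$ forces $n-2p\geq 2$, the Kronecker delta $\delta_1^{n-2p}$ appearing in the induced flow~\eqref{ng2p1} vanishes identically, so that on the $T=1$ boundary the system collapses to the decoupled pair
\begin{equation*}
\frac{dX}{d\tau}=0,\qquad \frac{d\Sigma_{\mathrm{\phi}}}{d\tau}=-\nu X^{2p}\Sigma_{\mathrm{\phi}}.
\end{equation*}
The first equation already yields the advertised foliation of $\{T=1\}$ by invariant leaves $\{X=X_0=\mathrm{const}\}$.

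On each leaf the $\Sigma_{\mathrm{\phi}}$ equation is linear with constant coefficient, and integrates explicitly to $\Sigma_{\mathrm{\phi}}(\tau)=\Sigma_{\mathrm{\phi}}(\tau_0)\exp\bigl(-\nu X_0^{2p}(\tau-\tau_0)\bigr)$. For $p=0$ the coefficient is $-\nu$ independently of $X_0$, so $\Sigma_{\mathrm{\phi}}\to 0$ exponentially on every leaf and every orbit converges to the fixed point $(X_0,0,1)\in\mathrm{L}_1$; for $p>0$ the same convergence holds whenever $X_0\neq 0$, while the leaf $X_0=0$ consists entirely of equilibria forming $\mathrm{L}_2$, which is precisely why it is excluded from the foliation of regular orbits. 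This handles assertions (i) foliation and (iii) $\omega$-limit in $\mathrm{L}_1$ simultaneously.

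To verify the remaining claim that these leaves enter $\Omega_{\mathrm{pf}}>0$ by crossing the boundary $\Omega_{\mathrm{pf}}=0$, I would restrict the auxiliary equation~\eqref{Aux1} to $T=1$, obtaining $d\Omega_{\mathrm{pf}}/d\tau = 2\nu X^{2p}\Sigma_{\mathrm{\phi}}^{2}\geq 0$. On the outer shell $\Omega_{\mathrm{pf}}=0$, where $\Sigma_{\mathrm{\phi}}^{2}+X^{2n}=1$, this derivative is strictly positive away from the excluded sets ($\Sigma_{\mathrm{\phi}}=0$ giving $\mathrm{L}_1$, and $X=0$ giving $\mathrm{L}_2$ when $p>0$); this is exactly the transversality needed for the orbit to leave the boundary into the interior of the $T=1$ face. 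No genuine obstacle is anticipated: once the $\delta_1^{n-2p}=0$ simplification is recorded, the rest is a one-line explicit integration plus a sign check on $d\Omega_{\mathrm{pf}}/d\tau$, so the content of the lemma is essentially algebraic.
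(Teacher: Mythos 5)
Your argument is correct and follows essentially the same route as the paper: the conserved quantity $X=\mathrm{const.}$ gives the foliation, the explicit exponential decay of $\Sigma_{\mathrm{\phi}}$ gives convergence to $\mathrm{L}_1$, and the sign of $d\Omega_{\mathrm{pf}}/d\tau$ restricted to $T=1$ gives the crossing of $\Omega_{\mathrm{pf}}=0$. Your sign $d\Omega_{\mathrm{pf}}/d\tau=2\nu X^{2p}\Sigma_{\mathrm{\phi}}^{2}\geq 0$ (strictly positive off the excluded lines) is in fact the correct one for orbits entering $\Omega_{\mathrm{pf}}>0$ forward in time, and your explicit solution also shows the backward-time crossing happens at finite $\tau$.
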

\begin{proof}
	When $p<\frac{1}{2}(n-1)$, the system \eqref{ng2p1} admits the following conserved quantity 
	\begin{equation}
	X=\text{const.}
	\end{equation}
	which determines the solution trajectories on the $T=1$ invariant boundary. The remaining properties of the flow follow from the fact that on $\{T=1\}\setminus \mathrm{L}_1$ for $p=0$, and on $\{T=1\}\setminus \{ \mathrm{L}_1\cup\mathrm{L}_2\}$ for $p>0$, we have $d\Sigma_\phi/d\tau<0$ and $d\Omega_\mathrm{pf}/d\tau<0$.
\end{proof}
\begin{figure}[ht!]
	\begin{center}
		\subfigure[$p=0$.]{\label{fig:T=1,n-2p>1,p=0}	\includegraphics[width=0.30\textwidth]{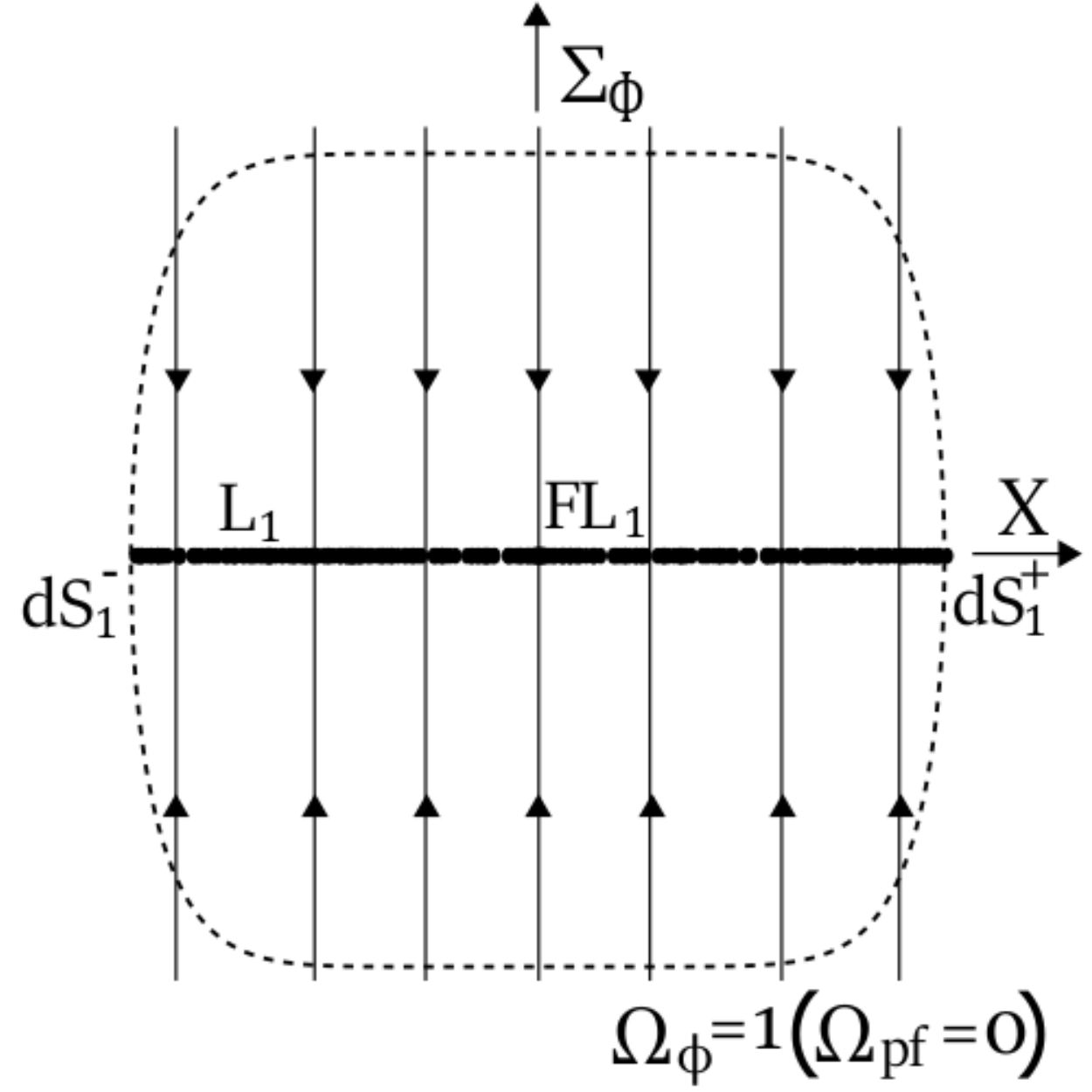}}
		\hspace{2cm}
		\subfigure[$p>0$.]{\label{fig:T=1,n-2p>1,p>0}	\includegraphics[width=0.30\textwidth]{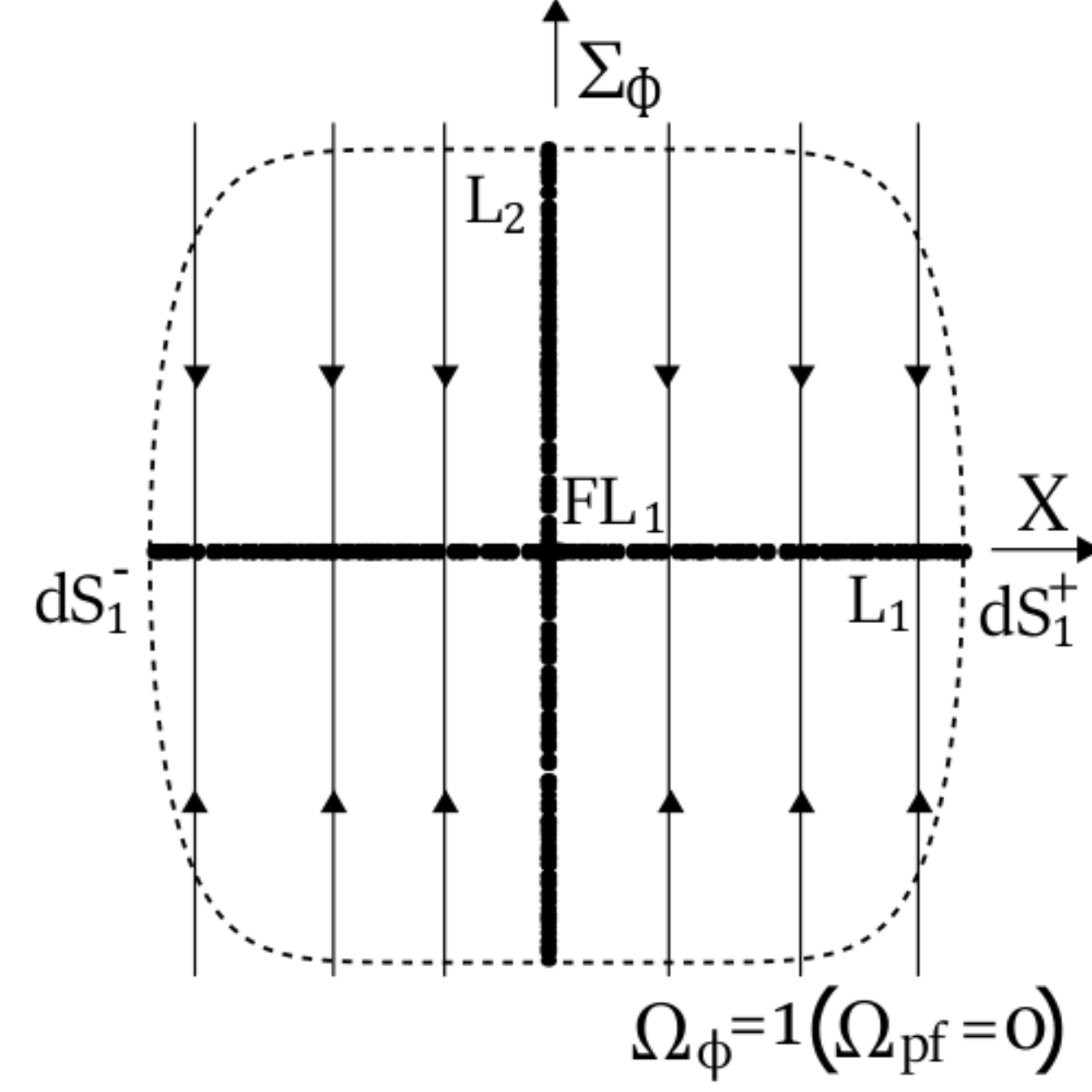}}
	\end{center}
	\vspace{-0.5cm}
	\caption{Invariant boundary $\{T=1\}$ when $p<\frac{1}{2}(n-1)$.}
	\label{figT1:n-2p-1>0}
\end{figure}
\begin{theorem}
	Let $p<\frac{1}{2}(n-1)$. Then the $\omega$-limit set of all orbits in $\mathbf{S}$ is contained on $\mathrm{L}_1$. In particular:
	\begin{itemize}
		\item[i)] If $p<\frac{1}{2}(n-2)$, then as $\tau\rightarrow+\infty$, a 2-parameter set of orbits converges to each of the two fixed points $\mathrm{dS}^{\pm}_1$ on the line $\mathrm{L}_1$ with $X_0 = \pm 1$, and a 1-parameter set of orbits converges to $\mathrm{FL}_1$ with $X_0=0$. 
		\item[ii)] If $p=\frac{1}{2}(n-2)$, then as $\tau\rightarrow+\infty$, a 2-parameter set of orbits converges to each of the two fixed points $\mathrm{S}^{\pm}$ on the line $\mathrm{L}_1$ with
		\begin{equation*}
		\mathrm{S}^{\pm}: \quad X_0 = \pm  \left(\frac{n^2}{3\gamma_{\mathrm{pf}}\nu}\left(-1+\sqrt{1+\left(\frac{3\gamma_{\mathrm{pf}}\nu}{n^2}\right)^2}\right)\right)^{1/n},
		\end{equation*}
		and a 1-parameter set of orbits converges to $\mathrm{FL}_1$ with $X_0=0$.
	\end{itemize}
\end{theorem}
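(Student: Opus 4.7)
The plan is to combine the two boundary reductions already in hand with a centre-manifold analysis along the degenerate line $\mathrm{L}_1$, followed by a dedicated local study at each surviving candidate attractor.

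First I would invoke Lemmas~\ref{lemma1} and~\ref{T1:p<(n-1)/2}: the former forces every $\omega$-limit set of an interior orbit to lie on the invariant face $\{T=1\}$, while the latter shows that on this face every orbit is carried to the line of fixed points $\mathrm{L}_1$. Since $\omega$-limit sets of smooth flows are closed, connected and invariant, each interior orbit has its $\omega$-limit contained in $\mathrm{L}_1$, and it remains to identify which points on the line are genuinely attained.

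The core step is a local analysis at an arbitrary interior point $\mathrm{P}_{X_0}=(X_0,0,1)\in\mathrm{L}_1$ with $0<|X_0|<1$. In coordinates $(x,s,t)=(X-X_0,\Sigma_\phi,1-T)$ the linearisation of~\eqref{globalng2p} has a single nonzero eigenvalue $-\nu X_0^{2p}$ (or $-\nu$ when $p=0$) along $s$, the other two directions being degenerate because the right-hand sides of~\eqref{gl1} and~\eqref{gl3} carry the vanishing factor $(1-T)^{n-2p-1}$. Hence $\mathrm{P}_{X_0}$ admits a one-dimensional stable manifold along $s$ and a two-dimensional centre manifold $W^{\mathrm{c}}$ parameterised by $(x,t)$, on which $s$ is determined by the quasi-steady relation
\begin{equation*}
s\;=\;-\frac{n X_0^{2n-2p-1}}{\nu}\,t^{n-2p-1}+\mathcal{O}(t^{n-2p}),
\end{equation*}
obtained by formally solving $ds/d\tau=0$ at leading order and justified by the standard graph-transform construction~\cite{Carr}. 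Substituting into~\eqref{gl1} and using $1+q\big|_{\mathrm{L}_1}=\tfrac{3}{2}\gamma_{\mathrm{pf}}(1-X_0^{2n})>0$, I find in case~(i) $p<\tfrac{1}{2}(n-2)$
\begin{equation*}
\frac{dx}{d\tau}\;=\;\frac{1}{n}(1+q)\,X_0\,t^{n-2p}+\mathcal{O}(t^{n-2p+1}),
\end{equation*}
since the slaved contribution $s\,t^{n-2p-1}$ is of strictly smaller order $t^{2(n-2p-1)}$ under the hypothesis $n-2p>2$. Because $dt/d\tau<0$ on $W^{\mathrm{c}}$, the sign of $dx/d\tau$ agrees with that of $X_0$, so $X$ drifts monotonically towards $\pm 1$ and no interior $\mathrm{P}_{X_0}$ can be an $\omega$-limit. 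In case~(ii) $p=\tfrac{1}{2}(n-2)$ the two contributions balance at order $t^{2}$, producing
\begin{equation*}
\frac{dx}{d\tau}\;=\;t^{2}\left[\frac{1}{n}(1+q)\,X_0-\frac{n X_0^{n+1}}{\nu}\right]+\mathcal{O}(t^{3}),
\end{equation*}
whose bracket vanishes precisely when $Y:=X_0^{n}$ is the positive root of $3\gamma_{\mathrm{pf}}\nu Y^{2}+2n^{2}Y-3\gamma_{\mathrm{pf}}\nu=0$, producing the quoted $\mathrm{S}^{\pm}$; the sign of the bracket away from this root drives $X$ into it.

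It remains to count the parameters of orbits converging to each surviving candidate. The attractors $\mathrm{dS}^{\pm}_1$, $\mathrm{S}^{\pm}$ and $\mathrm{FL}_1$ are all highly non-hyperbolic---at $\mathrm{dS}^{\pm}_1$ even the leading centre-manifold coefficient computed above vanishes, since $1+q|_{X_0=\pm 1}=0$. The strategy is a quasi-homogeneous blow-up of the $(x,s,t)$-chart in the spirit of Section~\ref{T0,n-2p>1}, supplemented by a centre-manifold expansion analogous to~\eqref{powerseries}, expected to exhibit at each of $\mathrm{dS}^{\pm}_1$ and $\mathrm{S}^{\pm}$ a three-dimensional stable basin (yielding a $2$-parameter family of convergent orbits modulo the time shift) and at $\mathrm{FL}_1$ a two-dimensional stable separatrix (the $1$-parameter family). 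The main obstacle will be precisely this final step: selecting blow-up weights that resolve the two vanishing eigenvalues together with the $X_0$-dependent normal decay rate, and verifying that the desingularised vector fields carry the claimed saddle/sink topology.
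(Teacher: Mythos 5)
Your overall strategy is the paper's: reduce to $\{T=1\}$ via Lemma~\ref{lemma1} and Lemma~\ref{T1:p<(n-1)/2}, then detect the drift along $\mathrm{L}_1$ through a centre-manifold (slaving) computation. Your quasi-steady relation $s=-\tfrac{n}{\nu}X_0^{2(n-p)-1}t^{n-2p-1}+\ldots$ and the resulting leading coefficient of $dx/d\tau$ reproduce exactly the paper's coefficient $b_{01}$, including the balance at order $t^{2}$ when $p=\tfrac12(n-2)$ and the quadratic $3\gamma_{\mathrm{pf}}\nu Y^{2}+2n^{2}Y-3\gamma_{\mathrm{pf}}\nu=0$ locating $\mathrm{S}^{\pm}$. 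So the core drift analysis is sound and coincides with the paper's.

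There are, however, two genuine gaps. First, for $p>0$ the face $\{T=1\}$ contains a second line of fixed points $\mathrm{L}_2$ (at $X=0$), and Lemma~\ref{T1:p<(n-1)/2} only describes the flow on $\{T=1\}\setminus\{\mathrm{L}_1\cup\mathrm{L}_2\}$; the assertion that every $\omega$-limit set lies on $\mathrm{L}_1$ therefore requires showing that no interior orbit accumulates on $\mathrm{L}_2\setminus\mathrm{FL}_1$. The paper needs a cylindrical blow-up of $\mathrm{L}^{+}_2\cup\mathrm{R}^{+}$ performed on top of the blow-up of $\mathrm{FL}_1$ (Lemma~\ref{L2BU}, Subsection~\ref{BUP1st}) to establish this; your argument is silent on it. Second, the parameter count at the surviving candidates is the substantive remaining work, and you leave it as an expectation. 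At $\mathrm{S}^{\pm}$ no blow-up is in fact needed: after the time rescaling $d/d\tilde{\tau}=\bar{T}^{-1}d/d\bar{\tau}$ the paper checks $b_{11}(X_{\pm})<0$ and $c_{01}(X_{\pm})<0$, so these points are hyperbolic sinks of the reduced flow on the two-dimensional centre manifold. At $\mathrm{dS}^{\pm}_1$, where $1+q=0$ kills the leading coefficient, a further one-dimensional centre-manifold expansion gives $d\bar{T}/d\tilde{\tau}=-\tfrac{n}{\nu}\bar{T}^{2}+\mathcal{O}(\bar{T}^{3})$, and the sign of this coefficient is precisely what makes the centre direction attracting from the physical side $\bar{T}>0$ and hence yields a 2-parameter family rather than just the strong stable manifold --- this must be computed, not assumed. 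Finally, at $\mathrm{FL}_1$ with $p>0$ all eigenvalues degenerate and the quasi-homogeneous spherical blow-up of Section~\ref{BUP1st} is genuinely required to obtain the 1-parameter family; your proposal correctly anticipates that this is where the difficulty lies but does not carry it out.
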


\begin{proof}
	The first statement follows from lemmas~\ref{lemma1} and~\ref{T1:p<(n-1)/2} if $p=0$, while for $p>0$, it is shown in Lemma~\ref{L2BU}, Subsection~\ref{BUP1st} (by doing a cylindrical blow-up of $\mathrm{L}_2$ on top of the blow-up of $\mathrm{FL}_1$), that no solution trajectories converge to the set $\mathrm{L}_2\setminus \mathrm{FL}_1$. 
%	then on the invariant subsets $\{X=\text{const.}\}$ we can change the time variable by $d/d\bar{\tau}=X^{-2p}d/d\tau$, which  desingularises the line of fixed points $\mathrm{L}_2$ to a regular semi-orbit with $\omega$-limit set the non-isolated fixed point $\mathrm{FL}_1$. 

	The linearised system around $\mathrm{L}_{1}$ has eigenvalues $0$, $-\nu X_{0}^{2p}$ and $0$, with associated eigenvectors $(1,0,0)$, $(0,1,0)$ and $(0,-\frac{2}{\nu}(p-\frac{1}{2}(n-1)) X_{\mathrm{0}}^{2(n-p)-1}\delta_{2}^{n-2p},1)$. On the $\{T=1\}$ invariant boundary the line of fixed points $\mathrm{L}_{1}$ is normally hyperbolic, i.e. the linearisation yields one negative eigenvalue for all $X_0\in[-1,1]$, except at $X_0=0$ when $p>0$ (where the two lines intersect), and one zero eigenvalue with eigenvector tangent to the line itself, see e.g.~\cite{Aul84}. 
	On the closure of $\mathbf{S}$, the line $\mathrm{L}_1$ is said to be \emph{partially hyperbolic}~\cite{Tak71}. 
	%For partially hyperbolic sets of fixed points, foliations by center manifolds have been construct for the Van der Pol's equation in~\cite{RD96}, making use of $C^k$ normal form theorems given in~\cite{Tak71,IY91,Bon96}, see also~\cite{DR10,BMD11} for the smooth case. 
	Each fixed point on the line, including the  point at the center when $p=0$, has a 1-dimensional stable manifold and a 2-dimensional center manifold, while the point with $X_0=0$ is non-hyperbolic for $p>0$. In this case the blow-up of $\mathrm{FL}_1$ is done in Section~\ref{BUP1st}.
	
	 To analyse the 2-dimensional center manifold of each partially hyperbolic fixed point on the line, we start by making the change of coordinates given by
	\begin{equation}
	\bar{X}=X-X_0,\qquad \bar{\Sigma}_\phi=\Sigma_\phi+ \frac{2n}{\nu}(p-\frac{1}{2}(n-1))X_0^{2(n-p)-1}(1-T)\delta^{n-2p}_2 ,\qquad \bar{T}=1-T,
	\end{equation}
	which takes a point in the line $\mathrm{L_1}$ to the origin $(\bar{X},\bar{\Sigma}_{\phi},\bar{T})=(0,0,0)$ with $\bar{T}\geq0$. The resulting system of equations takes the form
	\begin{equation}\label{centereq}
	\frac{d\bar{X}}{d\tau} = F(\bar{X},\bar{\Sigma}_{\mathrm{\phi}},\bar{T}),\qquad 
	\frac{d\bar{\Sigma}_{\mathrm{\phi}}}{d\tau} = -\nu X_{\mathrm{0}}^{2p}\bar{\Sigma}_{\mathrm{\phi}}+G(\bar{X},\bar{\Sigma}_{\mathrm{\phi}},\bar{T}),\qquad
	\frac{d\bar{T}}{d\tau} =  N(\bar{X},\bar{\Sigma}_{\mathrm{\phi}},\bar{T}),
	\end{equation}
	where $F$, $G$ and $N$ are functions of higher order. The center manifold reduction theorem yields that the above system is locally topological equivalent to a decoupled system on the 2-dimensional center manifold, which can be locally represented as the graph $h:\, E^{c}\rightarrow E^{s}$ with $\bar{\Sigma}_{\mathrm{\phi}}=h(\bar{X},\bar{T})$ which solves
	the nonlinear partial differential equation
	\begin{equation}\label{FlowCMLine}
	F(\bar{X},h(\bar{X},\bar{T}),\bar{T})\partial_{\bar{X}}  h(\bar{X},\bar{T})+N(\bar{X},h(\bar{X},\bar{T}),\bar{T}) \partial_{\bar{T}} h(\bar{X},\bar{T})= -\nu X_0^{2p} h(\bar{X},\bar{T}) + G(\bar{X},h(\bar{X},\bar{T}),\bar{T})
	\end{equation}
	subject to the fixed point and tangency conditions $h(0,0)=0$, and $\nabla h(0,0)=0$, respectively. A quick look at the nonlinear terms suggests that we approximate the center manifold at $(\bar{X},\bar{T})=(0,0)$, by making a formal multi-power series expansion for $h$ of the form
	\begin{equation}
	h(\bar{X},\bar{T})=\bar{T}^{n-2p}\sum^{N}_{i,j=0} \tilde{a}_{ij}\bar{X}^{i}\bar{T}^{j},\qquad \tilde{a}_{ij}\in\mathbb{R}.
	\end{equation}
	Solving for the coefficients it is easy to verify that $\tilde{a}_{i0}$ are identically zero, so that $h$ can be written as a series expansion in $\bar{T}$ with coefficients depending on $\bar{X}$, i.e.	
	\begin{equation}
	h(\bar{X},\bar{T})=\bar{T}^{n-2p}\sum^{N}_{j=1} \bar{a}_{j}(\bar{X})\bar{T}^{j}, \qquad \bar{a}_j(\bar{X})=\sum^{N}_{i=0}a_{ij}\bar{X}^{i},\qquad a_{ij}\in\mathbb{R},
	\end{equation}
	where for example
	\begin{subequations}
		\begin{align}
		a_{01} &= 0,\qquad a_{11}=-\frac{(2(n-p)-1)!}{\nu(n-1)!(n-2p-1)!}X_0^{n}, \nonumber\\
		a_{02} &=\frac{n}{\nu}X_0^{2(n-p)-1} ,\qquad a_{12}= -\frac{n(2n-2p-1)}{\nu}X_0^{2(n-p-2)}, \nonumber \\
		a_{03} &=-\frac{n(n-2p-1)}{\nu}X_0^{2(n-p)-1}+\frac{\left(3(n-1)!\nu-(2(n-p)-1)!X_0^{2(n-p-1)}\right)}{(n-1)!\nu^3}X_0^{n-2p+1}\delta^{n-2p}_2. \nonumber
		\end{align}
	\end{subequations}
	\begin{figure}[ht!]
		\begin{center}
			\subfigure[$b_{01}>0$. For $b_{01}<0$ the direction of the flow is reversed.]{\label{center4}
				\includegraphics[width=0.30\textwidth]{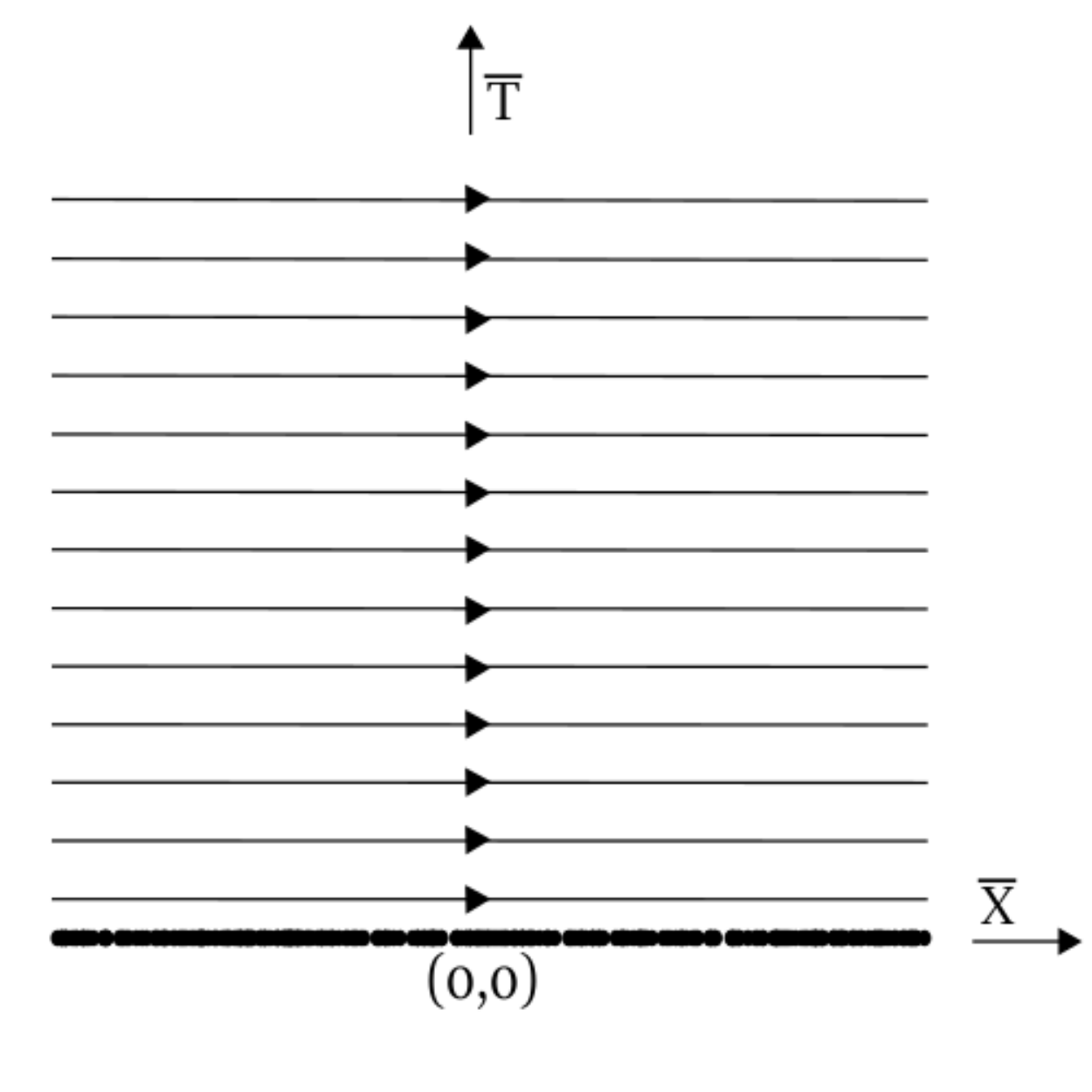}}
			%\subfigure[Negative $b_{01}$]{\label{center5}
			%	\includegraphics[width=0.30\textwidth]{negativeb0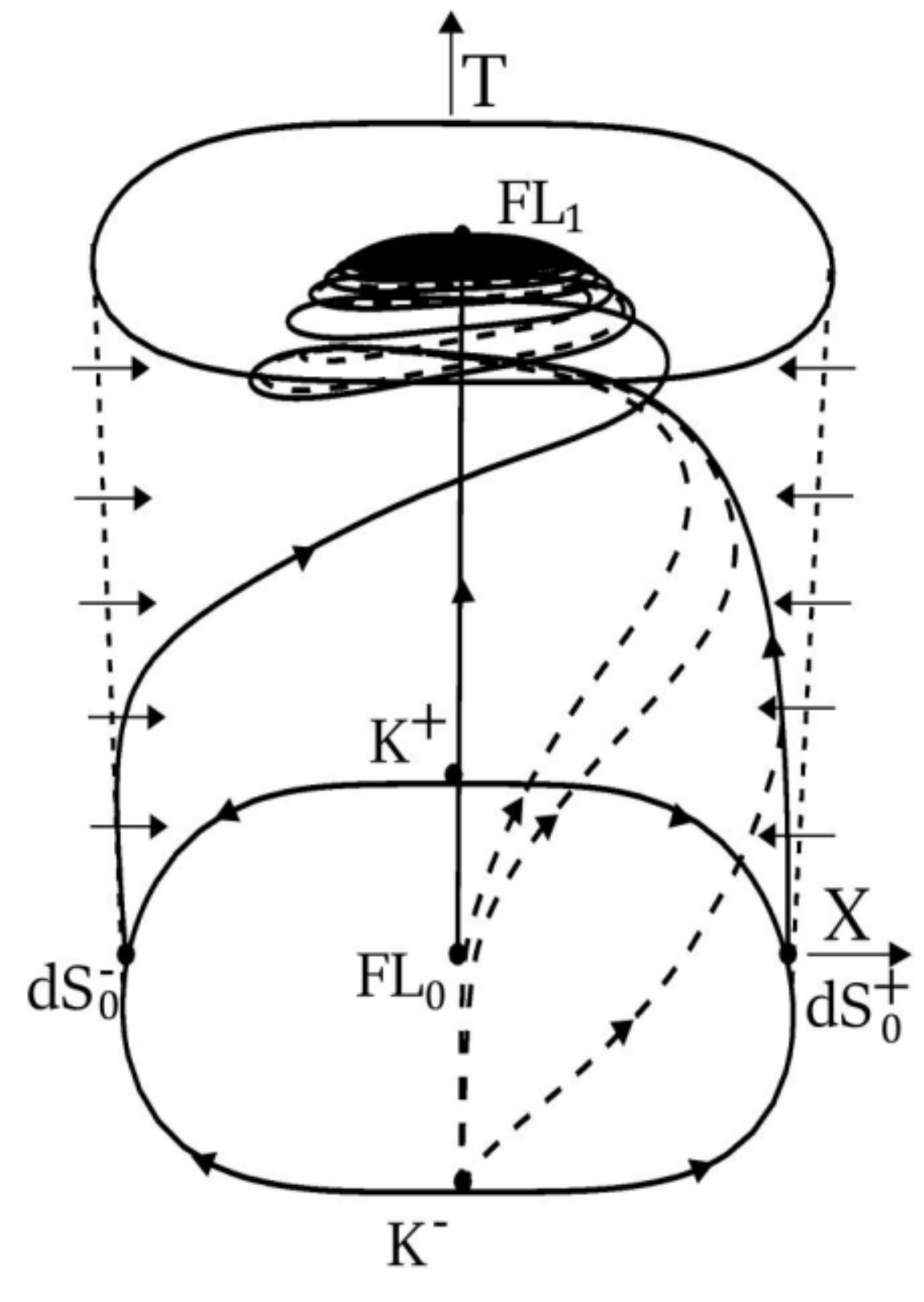}}
			\hspace{2cm}
			\subfigure[$X_0=0$ when $p=0$.]{\label{center1}
				\includegraphics[width=0.30\textwidth]{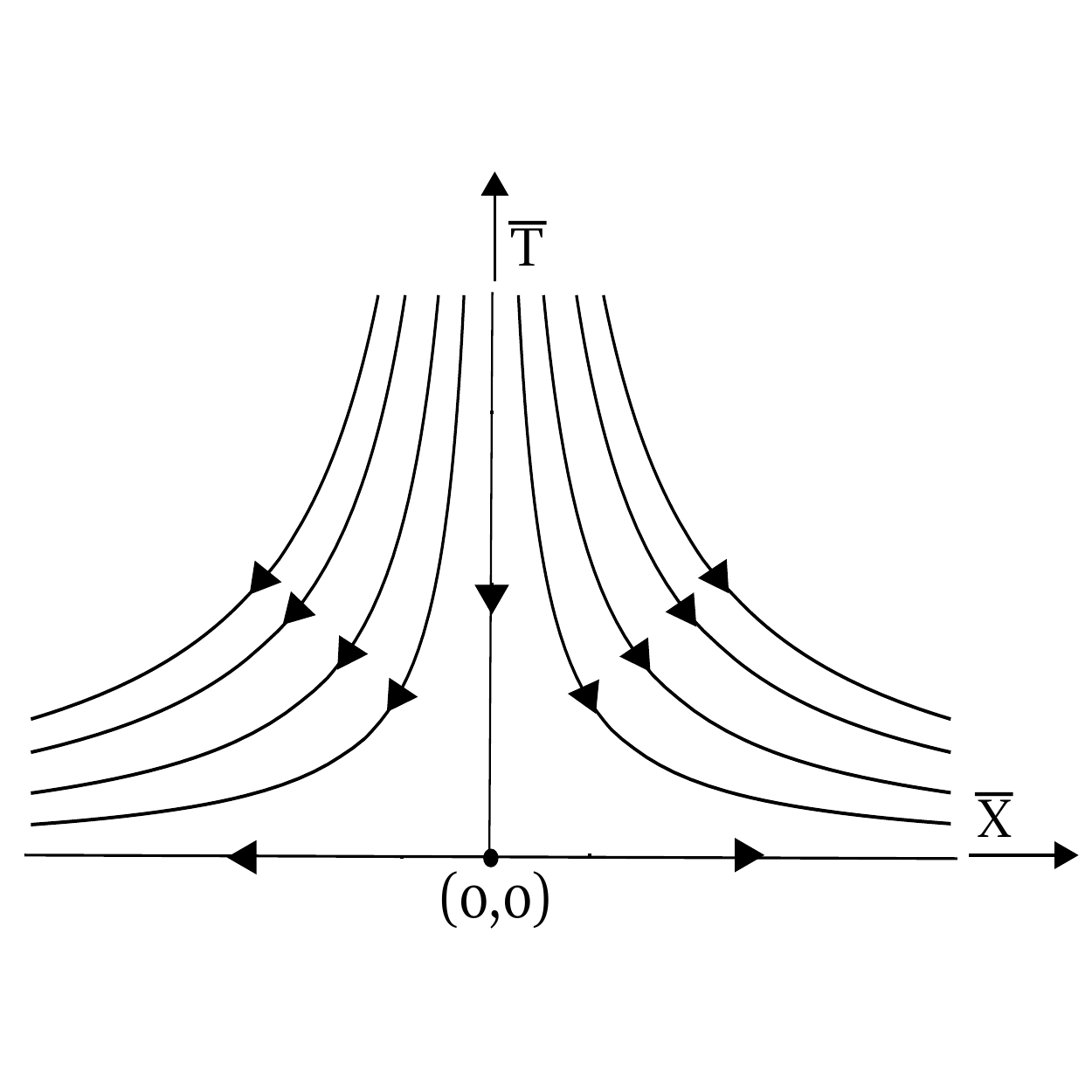}} \\
			\subfigure[$X_0=X_{\pm}$ when $p=\frac{1}{2}(n-2)$.]{\label{center2}
				\includegraphics[width=0.30\textwidth]{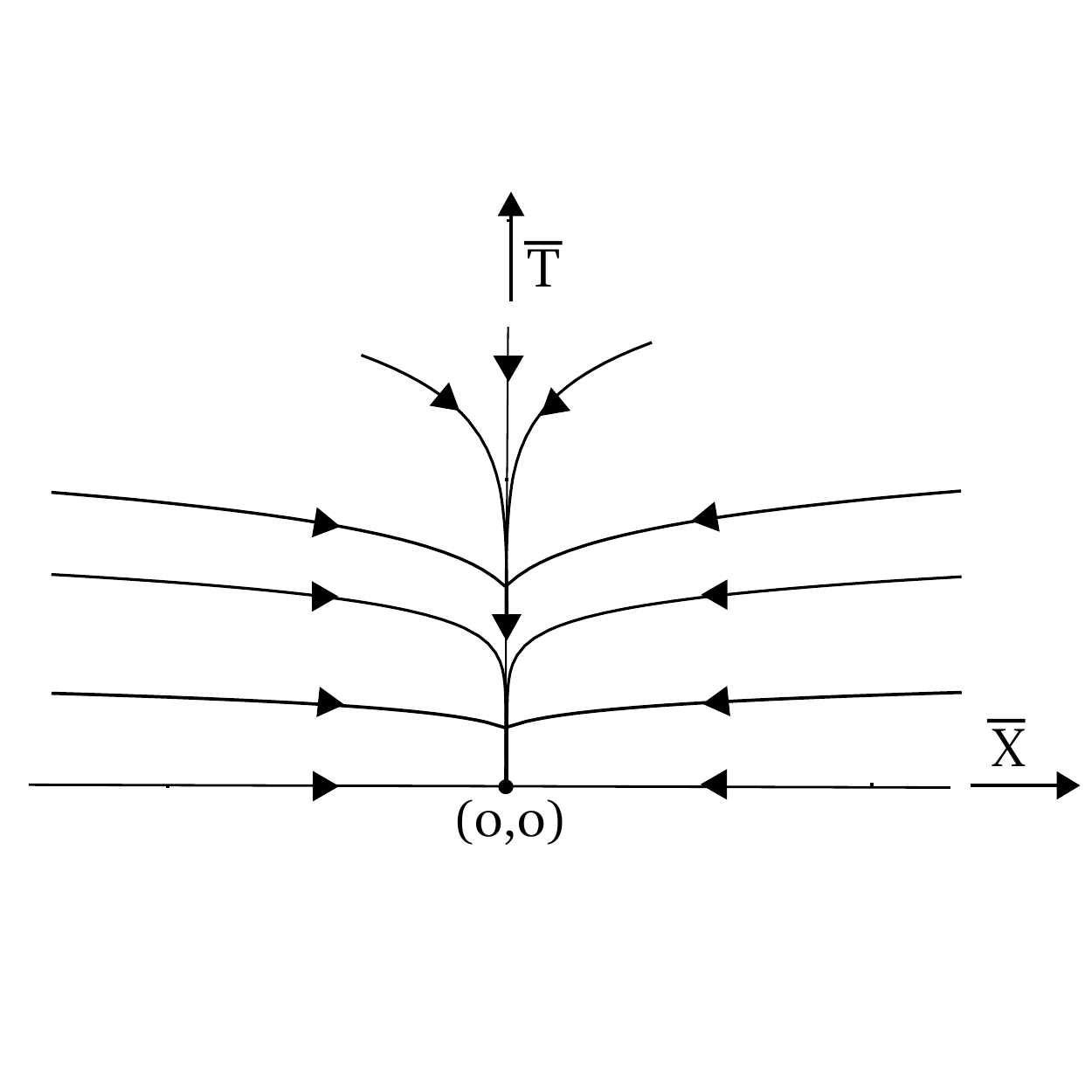}}
			\hspace{2cm}
			\subfigure[$X_0=\pm 1$ when $p<\frac{1}{2}(n-2)$.]{\label{center3}
				\includegraphics[width=0.30\textwidth]{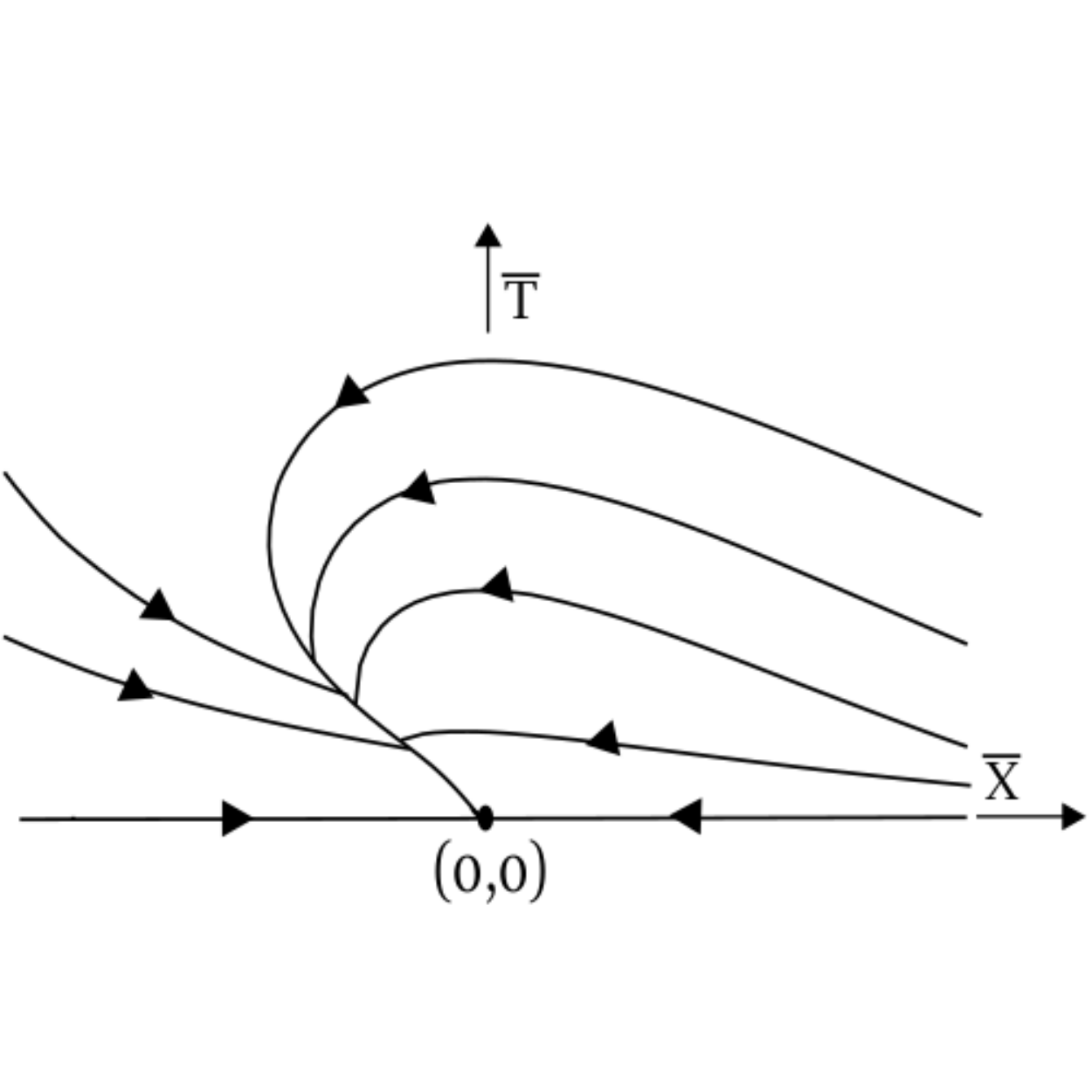}}
		\end{center}
		\vspace{-0.5cm}
		\caption{Flow on the 2-dimensional center manifold of each point on $\mathrm{L}_1$.}
		\label{fig:2D_CM_L1}
	\end{figure}
	After a change of time variable $d/d\tau=\bar{T}^{n-2p-1}d/d\bar{\tau}$, the flow on the 2-dimensional center manifold is given by 
	\begin{subequations}
		\begin{align}
		\frac{d\bar{X}}{d\bar{\tau}} &= \sum^{N}_{j=1} \bar{b}_{j}(\bar{X})\bar{T}^{j}, \qquad \bar{b}_j(\bar{X})=\sum^{N}_{i=0}b_{ij}\bar{X}^{i},\qquad b_{ij}\in\mathbb{R},\\
		\frac{d\bar{T}}{d\bar{\tau}} &=\bar{T}\sum^{N}_{j=1} \bar{c}_{j}(\bar{X})\bar{T}^{j}\qquad \bar{c}_j(\bar{X})=\sum^{N}_{i=0}c_{ij}\bar{X}^{i},\qquad c_{ij}\in\mathbb{R},
		\end{align}
	\end{subequations}
	with
	\begin{subequations}
		\begin{align*}
		b_{01} &=X_0\left(\frac{3\gamma_{\mathrm{pf}}}{2n}(1-X_0^{2n})-\frac{n(n-2p-1)}{\nu}X_0^{2(n-p-1)}\delta^{n-2p}_2\right), \\
		b_{11} &= \left(\frac{3\gamma_{\mathrm{pf}}}{2n}\left(1-(1+2n)X_0^{2n}\right)-\frac{(2(n-p)-1)!}{\nu(n-1)!(n-2p-1)!}X_0^{2(n-p-1)}\delta^{n-2p}_2\right), \\
		b_{02} &=\frac{n}{\nu}X_0^{2(n-p)}, \\
		c_{01}&=-\frac{3\gamma_{\mathrm{pf}}}{2n}(1-X_0^{2n}),\qquad c_{11}=0, \\
		c_{02}&=\frac{3\gamma_{\mathrm{pf}}}{2n}(1-X_0^{2n}),\qquad c_{12}= -3\gamma_{\mathrm{pf}}X_0^{2n}.
		\end{align*}
	\end{subequations}
	For $p=\frac{1}{2}(n-2)$, with $n$ even, the coefficient $b_{01}$ vanishes at $X_0=0$ and $X_0=X_{\pm}$, where
	\begin{equation}
	X_{\pm}=\pm  \left[\frac{n^2}{3\gamma_{\mathrm{pf}}\nu}\left(-1+\sqrt{1+\left(\frac{3\gamma_{\mathrm{pf}}\nu}{n^2}\right)^2}\right)\right]^{1/n}.
	\end{equation}
	Note that $X_-\in(-1,0)$ and $X_+\in(0,1)$. Moreover $b_{01}<0$ for $X_0\in (X_{-},0)\cup(X_+,1]$ and $b_{01}>0$ for $X_0\in [-1,X_{-})\cup(0,X_+)$, with the origin $(0,0)$ being a nilpotent singularity. Since the coefficient $\bar{c}_{01}(\bar{X})\neq 0$ for all $X_0$, then the formal normal form is zero with
	\begin{equation}
	\frac{d\bar{X}_*}{d\bar{\tau}_*} = \text{sign}(b_{01})\bar{T}_*, \qquad \frac{d\bar{T}_*}{d\bar{\tau}_*} =\bar{T}^2_* \Phi(\bar{X}_*,\bar{T}_*) ,
	\end{equation}
	and $\Phi$ an analytic function. The phase-space is the flow-box multiplied by the function $\bar{T}_*$, with the direction of the flow given by the sign of $b_{01}$, see Figure~\ref{center4}. For $X_0=0$ (when $p=0$), then $b_{11}=\frac{3\gamma_{\mathrm{pf}}}{2n}>0$ and $c_{01}=-\frac{3\gamma_{\mathrm{pf}}}{2n}<0$ which after changing the time variable to $d/d\tilde{\tau}=\bar{T}^{-1}d/d\bar{\tau}$ yields a hyperbolic saddle, see Figure~\ref{center1}. For $X_0=X_{\pm}$, we have
	\begin{subequations}
		\begin{align*}
		b_{11}(X_{\pm}) &=-\left(\frac{n^2}{\nu \sqrt{3 \gamma_{\mathrm{pf}}}}\right)^2\left(1+\left(\frac{3\gamma_{\mathrm{pf}}\nu}{n^2}\right)^2-\sqrt{1+\left(\frac{3\gamma_{\mathrm{pf}}\nu}{n^2}\right)^2}\right)<0,\\ 
		c_{01}(X_{\pm}) &=-\frac{n^3}{6 \gamma_{\mathrm{pf}} \nu^2}\left(-1+\sqrt{1+\left(\frac{3\gamma_{\mathrm{pf}} \nu}{n^2}\right)^2}\right)<0,
		\end{align*}
	\end{subequations}
	and after changing the time variable to $d/d\tilde{\tau}=\bar{T}^{-1}d/d\bar{\tau}$ the origin is a hyperbolic sink, see Figure~\ref{center2}. 
	
	For $p<\frac{1}{2}(n-2)$, the coefficient $b_{01}$ vanishes at $X_0=0$ and $X_0=\pm1$, being negative for $X_0\in(-1,0)$ and positive for $X_0\in(0,1)$. For $b_{01}\neq0$ the phase-space is again as depicted in Figure~\ref{center4} with the direction of the flow given by the sign of $b_{01}$, i.e. of $X_0$. 
	
	When $X_0=0$ (and restricting to $p=0$), $b_{01}=0$ and $b_{11}=\frac{3\gamma_{\mathrm{pf}}}{2n}>0$ which after changing the time variable to $d/d\tilde{\tau}=\bar{T}^{-1}d/d\bar{\tau}$ yields that $\mathrm{FL}_1$ is a hyperbolic saddle, see Figure~\ref{center1}. For $p>0$ the blow-up of $\mathrm{FL}_1$ can be found in Section~\ref{BUP1st}, where it is shown that a 1-parameter set of interior orbits in $\mathbf{S}$ end at $\mathrm{FL}_1$, see also Remark~\ref{AsFL1}.
	
	For $X_0=\pm1$, we have that $b_{11}=-3\gamma_\mathrm{pf}<0$, $c_{01}=0$, $c_{02}=0$ and $c_{12}=- 3\gamma_{\mathrm{pf}}<0$ after changing the time variable to $d/d\tilde{\tau}=\bar{T}^{-1}d/d\bar{\tau}$, then 
	\begin{subequations}
		\begin{align*}
		\frac{d\bar{X}}{d\tilde{\tau}} & =-3\gamma_{\mathrm{pf}}\bar{X}+\frac{n}{\nu}\bar{T}+\frac{n}{\nu}\bar{T}^2+\frac{n}{2}(n-2p)\bar{X}\bar{T}-\frac{3}{2}(2n+1)\bar{X}^2+\mathcal{O}(\|(\bar{X},\bar{T})\|^3),\\ \frac{d\bar{T}}{d\tilde{\tau}} & =-3\gamma_{\mathrm{pf}} \bar{X}\bar{T}+\mathcal{O}(\|(\bar{X},\bar{T})\|^3),
		\end{align*}
\end{subequations}	
	and the origin is a semi-hyperbolic fixed point with eigenvalues $-3\gamma_{\mathrm{pf}}$, $0$ and associated eigenvectors $(1,0)$ and $(-\frac{n}{3\gamma_{\mathrm{pf}}\nu},1)$. To analyse the 1-dimensional center manifold $W^{c}$ at $(\tilde{X},\bar{T})=(0,0)$ we introduce the adapted variable $\tilde{X}=\bar{X}+\frac{n}{3\gamma_{\mathrm{pf}}\nu}\bar{T}$. 
	The $1$-dimensional center manifold can be locally represented as the graph $h:E^{c}\rightarrow E^{s}$, i.e. $\tilde{X}=h(\bar{T})$, satisfying the fixed point $h(0)=0$ and tangency $\frac{dh(0)}{d\bar{T}}=0$ conditions, using $\bar{T}$ as an independent variable. Approximating the solution by a formal truncated power series expansion $h(\bar{T})=\sum_{i=2}^{N}a_{i}\bar{T}^{i}$, $a_i\in\mathbb{R}$,	and solving for the coefficients  yields to leading order on the center manifold
	\begin{equation}
	\frac{d\bar{T}}{d\tilde{\tau}}=-\frac{n}{\nu}\bar{T}^2+\mathcal{O}(\bar{T}^3)\quad \text{as}\quad \bar{T}\rightarrow 0.
	\end{equation}
	Therefore for $X_0=\pm 1$, the origin is the $\omega$-limit set of a 1-parameter set of orbits on the 2-dimensional center manifold, see Figure~\ref{center3}.
\end{proof}
\begin{remark}
	Let $p<(n-1)/2$. The asymptotics for solutions of \eqref{sistema1} converging to $\mathrm{dS}^{\pm}_1$ are given by
	\begin{equation*}
	H\sim t^{-\frac{n}{2(n-p)}}\quad,\quad \phi\sim\text{const.}+t^{-\frac{1}{3+4p-n}}\quad,\quad \rho_\mathrm{pf} \sim t^{-\frac{n}{n-p}}, \quad \quad\text{as}\quad t\rightarrow +\infty,
	\end{equation*}
	 while those converging to $\mathrm{S}^{\pm}$ are given by 
	 \begin{equation*}
	 H\sim t^{-\frac{n}{3+n}}\quad,\quad \phi\sim\text{const.}+t^{-\frac{1}{n-1}}\quad,\quad \rho_\mathrm{pf} \sim t^{-\frac{2n}{3+n}}, \quad \quad\text{as}\quad t\rightarrow +\infty.
	 \end{equation*}
	 It is possible to deduce the asymptotics for the 1-parameter set of solutions converging towards the fixed point $\mathrm{FL}_1$ when $p=0$, while for $p>0$, the asymptotics can be obtained by the analysis of the blow-up of $\mathrm{FL}_1$ done in Section~\ref{BUP1st}, see Remark~\ref{AsFL1}.
\end{remark}
The global state-space picture for solutions of the dynamical system~\eqref{globalng2p} when $p<\frac{1}{2}(n-1)$ is shown in Figure~\ref{fig:n-1>2p}. The solid numerical curves correspond to the center manifold of $\mathrm{dS}^{\pm}_0$, and the dashed numerical curves to solution curves  originating from the source $\mathrm{K}^{-}$. All these solutions end at the generic future attractors $\mathrm{dS}^{\pm}_{1}$ when $p<\frac{1}{2}(n-2)$ or $\mathrm{S}^{\pm}$ when $p=\frac{1}{2}(n-2)$.
\begin{figure}[ht!]
	\begin{center}
		\subfigure[$p<\frac{1}{2}(n-2)$. %with $(p,n)=(1,5)$, $\gamma_\mathrm{pf}=\frac{4}{3}$ and $\nu=1$.
		]{\label{fig:}
			\includegraphics[width=0.30\textwidth]{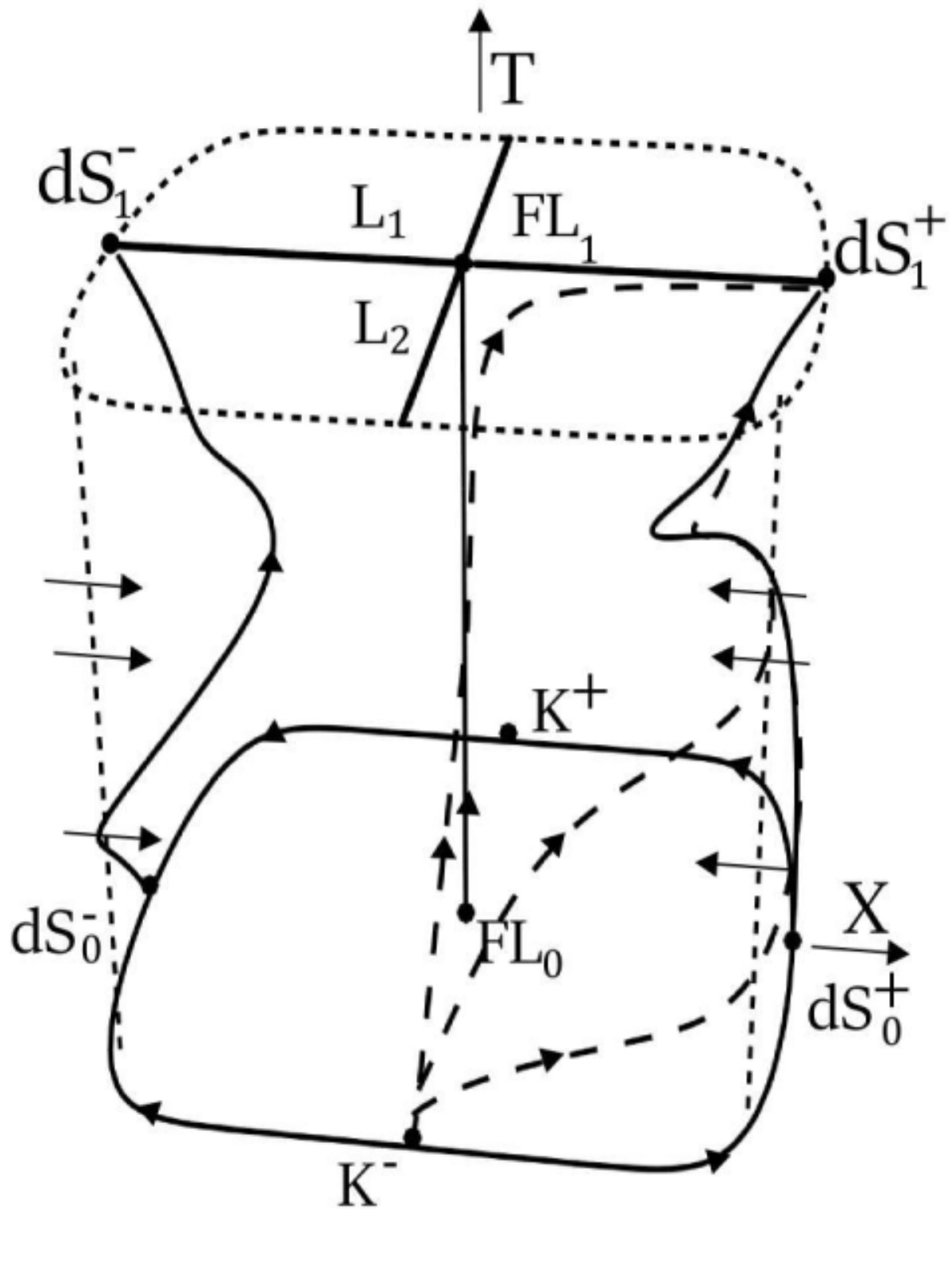}}
		\hspace{2cm}
		\subfigure[$p=\frac{1}{2}(n-2)$. %with $(p,n)=(1,4)$, $\gamma_\mathrm{pf}=\frac{4}{3}$ and $\nu=1$.
		]{\label{fig:}
			\includegraphics[width=0.30\textwidth]{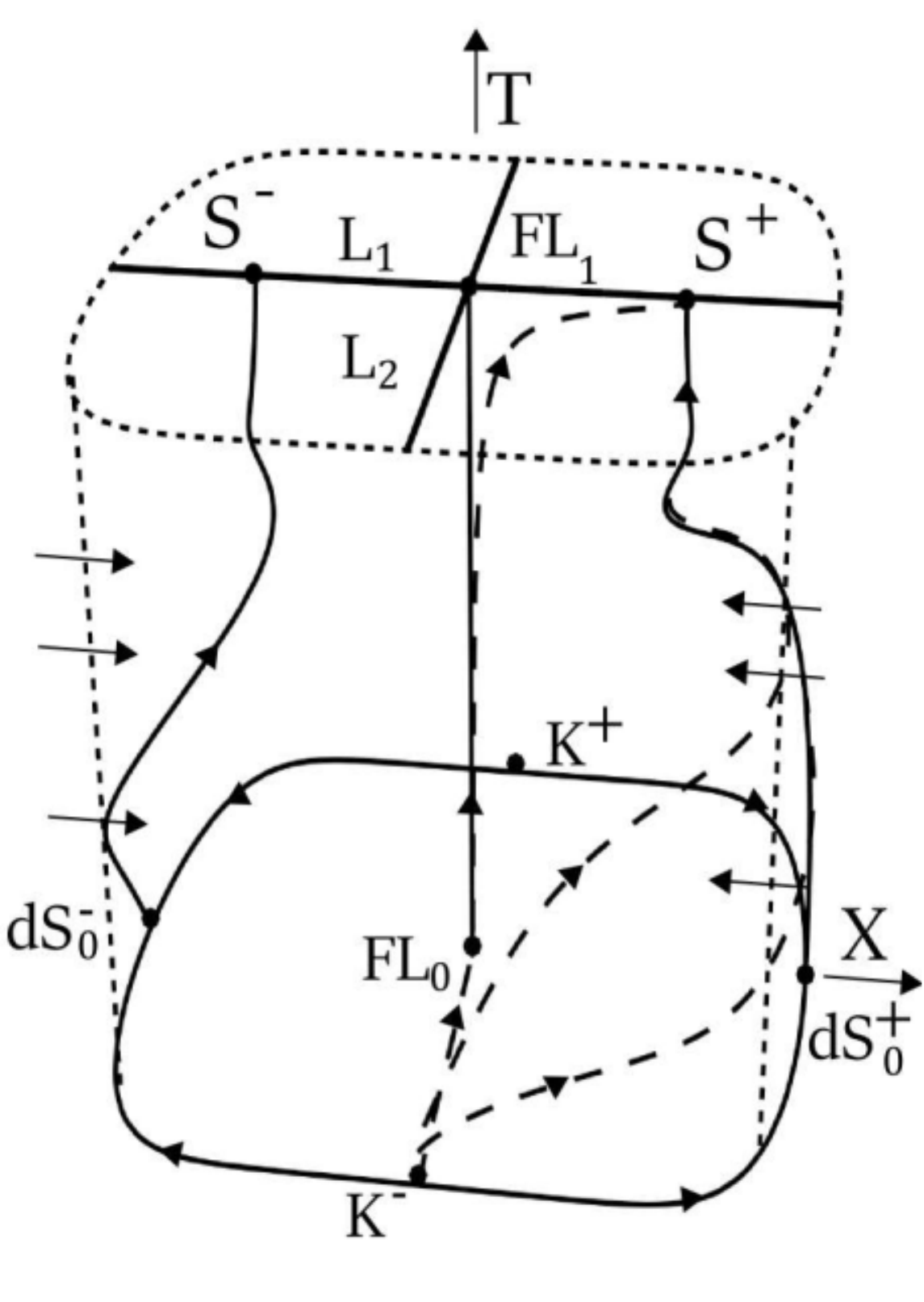}}
	\end{center}
	\vspace{-0.5cm}
	\caption{Global state-space $\mathbf{S}$ when $p<\frac{1}{2}(n-1)$.}
	\label{fig:n-1>2p}
\end{figure}
\subsubsection{Case $p=\frac{1}{2}(n-1)$}
In this case there is a single fixed point lying in the intersection of the $T=1$ invariant boundary with the pure perfect fluid invariant subset $\Omega_{\mathrm{pf}}=1$ given by
\begin{equation}
\mathrm{FL_1}: \quad X=0,\quad\quad \Sigma_{\mathrm{\phi}}=0, \quad\quad T=1.
\end{equation}
The linearisation around this fixed point on the $T=1$ invariant boundary yields the characteristic polynomial $\lambda^2+\nu\delta^{n}_{1}\lambda+n\delta^{n}_{1}=0$. Since $\nu>0$, then for $n=1$ $(p=0)$, $\mathrm{FL}_1$  has two eigenvalues with negative real part being a hyperbolic sink on $T=1$ (stable node if $\nu\geq2$, and a stable focus if $0<\nu<2$), while on the full state space $\mathrm{FL}_1$ has a 1-dimensional center manifold with center tangent space $E^c=\langle (0,0,1)\rangle$, i.e. consisting of the $\Omega_\mathrm{pf}=1$ invariant subset. 
Therefore $\mathrm{FL}_1$ is the $\omega$-limit point of a 2-parameter set of orbits, converging to $\mathrm{FL}_1$ tangentially to the center manifold when $\nu\geq2$, or spiraling around the center manifold when $0<\nu<2$. When $p\neq0$ and $n>1$ all eigenvalues of the fixed point $\mathrm{FL}_1$ are zero. The blow-up of the fixed point $\mathrm{FL}_1$ when $p>0$ is done in Subsection~\ref{BUP2nd}. 
\begin{lemma}\label{Lem2}
	Let $p=\frac{1}{2}(n-1)$. Then the $\{T=1\}$ invariant boundary consists of orbits which enter the region $\Omega_{\mathrm{pf}}>0$ by crossing the set $\Omega_\mathrm{pf}=0$ and converging to the fixed point $\mathrm{FL}_1$ at the origin as $\tau\rightarrow+\infty$, see Figure~\ref{figT1:n-1-2p=0}.
\end{lemma}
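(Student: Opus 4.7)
My plan is to exploit the Li\'enard structure of the induced flow on $\{T=1\}$. With $p=\frac{1}{2}(n-1)$ one has $n-2p=1$ and $\delta_1^{n-2p}=1$, so system~\eqref{ng2p1} collapses to the planar system $\dot X=\Sigma_\phi$, $\dot\Sigma_\phi=-nX^{2n-1}-\nu X^{2p}\Sigma_\phi$, i.e.\ the generalized Li\'enard equation $\ddot X+\nu X^{2p}\dot X+nX^{2n-1}=0$. The natural state space is the closed disk $\bar D=\{(X,\Sigma_\phi)\,:\,\Omega_\phi\leq1\}$, whose interior corresponds to $\Omega_\mathrm{pf}>0$ and whose boundary circle $\Omega_\phi=1$ to $\Omega_\mathrm{pf}=0$. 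Since $\Sigma_\phi=0$ together with $nX^{2n-1}=0$ forces $X=0$, the only fixed point of the induced system is $\mathrm{FL}_1=(0,0)$.

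The key step is to introduce the Hamiltonian-like energy $\mathcal{E}(X,\Sigma_\phi):=\tfrac12(\Sigma_\phi^2+X^{2n})=\tfrac12\Omega_\phi$ and compute
\begin{equation*}
\dot{\mathcal{E}}=\Sigma_\phi\dot\Sigma_\phi+nX^{2n-1}\dot X=-\nu X^{2p}\Sigma_\phi^2\leq0.
\end{equation*}
Hence $\mathcal{E}$ is a Lyapunov function, $\bar D$ is positively invariant, and $\Omega_\mathrm{pf}=1-2\mathcal{E}$ is non-decreasing along orbits, in agreement with~\eqref{Aux1}. LaSalle's invariance principle then implies that every $\omega$-limit set in $\bar D$ lies in the largest invariant subset of $\{\dot{\mathcal{E}}=0\}$. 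For $p>0$ this zero set is $\{X=0\}\cup\{\Sigma_\phi=0\}$; invariance on $\{X=0\}$ forces $0=\dot X=\Sigma_\phi$, while invariance on $\{\Sigma_\phi=0\}$ forces $0=\dot\Sigma_\phi=-nX^{2n-1}$, and hence $X=0$; for $p=0$ only the second branch arises and the same argument applies. In every case the maximal invariant subset reduces to $\mathrm{FL}_1$, so every orbit in $\bar D$ converges to $\mathrm{FL}_1$ as $\tau\to+\infty$.

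For the assertion that orbits enter $\Omega_\mathrm{pf}>0$ by crossing $\Omega_\mathrm{pf}=0$, I reverse time. Strict decrease of $\mathcal{E}$ off the two axes rules out any periodic orbit in $\bar D$, and no fixed point lies on the circle $\{\Omega_\phi=1\}$. Backwards in time $\mathcal{E}$ is non-decreasing and bounded by $\tfrac12$, so the $\alpha$-limit set of an interior orbit cannot contain $\mathrm{FL}_1$ (that would contradict $\mathcal{E}(\tau)\geq\mathcal{E}(0)>0$) and cannot be a periodic orbit; Poincar\'e--Bendixson then forces the backward trajectory to exit $\bar D$ in finite time across $\{\Omega_\phi=1\}$. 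Transversality of the exit follows from $\dot{\mathcal{E}}<0$ at every point of the circle except at $(X,\Sigma_\phi)=(0,\pm1)$, where instead $\dot X=\pm1\neq0$ immediately carries the orbit off the circle into $\Omega_\mathrm{pf}>0$.

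The main obstacle I anticipate is the careful exclusion of non-trivial invariant behaviour on the coordinate axes in the LaSalle step, together with the transversality analysis at the two exceptional points $(0,\pm1)$ where $\dot{\mathcal{E}}$ vanishes along the boundary circle. Unifying the argument across the linear case $p=0$ (where $\mathrm{FL}_1$ is already a hyperbolic sink, as recorded in the text) and the fully degenerate case $p>0$ (where LaSalle supplies the convergence without recourse to linearisation) is the main bookkeeping verification; the rest of the picture — positive invariance of $\bar D$, monotonicity of $\Omega_\mathrm{pf}$, absence of periodic orbits — is an immediate consequence of the sign of $\dot{\mathcal{E}}$.
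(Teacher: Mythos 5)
Your proposal is correct and follows essentially the same route as the paper: the paper's proof is precisely the LaSalle argument with $\Omega_{\mathrm{pf}}$ (equivalently your $\mathcal{E}=\tfrac12\Omega_\phi$) as the monotone quantity, identifying the largest invariant subset of $\{X=0\}\cup\{\Sigma_\phi=0\}$ with the origin; the paper additionally records explicit first integrals that determine the trajectories, which you do not need. One small inaccuracy in your transversality discussion: $\dot{\mathcal{E}}=-\nu X^{2p}\Sigma_\phi^2$ also vanishes on the boundary circle at $(\pm1,0)$ (and for $p=0$ these are in fact the only exceptional points), where the contact with $\{\Omega_\phi=1\}$ is of higher order; there one checks instead that $\dot\Sigma_\phi=\mp n\neq0$ and that $\Omega_\phi-1\sim-\tfrac{2\nu n^2}{3}\tau^3$, so the orbit still crosses into $\Omega_{\mathrm{pf}}>0$.
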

\begin{proof}
	It suffices to note that the bounded function $\Omega_{\mathrm{pf}}$ is strictly monotonically increasing along the orbits, except at the axis of coordinates $\Sigma_\phi=0$ or $X=0$ when $p>0$. However since $d\Sigma_\phi/d\tau\neq0$ on $\Sigma_\phi=0$ and $dX/d\tau\neq0$ on $X=0$, except at origin where the axis intersect, it follows by \emph{LaSalle's invariance principle} that $(\Sigma_\phi,X)\rightarrow(0,0)$ and $\Omega_\mathrm{pf}\rightarrow 1$. In fact, when $p=\frac{1}{2}(n-1)$, the system \eqref{ng2p1} admits the following conserved quantity on $T=1$:
	\begin{subequations}
		\begin{align}
		-\frac{ \arctan \left[\frac{\frac{2n}{\nu} \frac{\Sigma_{\mathrm{\phi}}}{X^{n}} +1}{\sqrt{\left(\frac{2n}{\nu}\right)^2-1}}\right]}{\sqrt{\left(\frac{2n}{\nu}\right)^2-1}}+\frac{1}{2}\log\left[\nu \Sigma_{\mathrm{\phi}} X^{n}+n (1-\Omega_{\mathrm{pf}})\right]=\text{const.},\qquad &\text{if}\quad 0<\nu<2n, \\
		\log\left[\Sigma_{\mathrm{\phi}}+X^{n}\right]+\frac{X^{n}}{\Sigma_{\mathrm{\phi}}+X^{n}}=\text{const.},\qquad &\text{if}\quad \nu=2n, \\
		\frac{\arctanh\left[\frac{\frac{2n}{\nu}\frac{\Sigma_{\mathrm{\phi}}}{X^{n}}+1}{\sqrt{1-\left(\frac{2n}{\nu}\right)^2}}\right]}{\sqrt{1-\left(\frac{2n}{\nu}\right)^2}}+\frac{1}{2}\log\left[\nu \Sigma_{\mathrm{\phi}} X^{n}+n(1-\Omega_{\mathrm{pf}})\right]=\text{const.},\qquad &\text{if}\quad \nu>2n,
		\end{align}
	\end{subequations}
	which determine the solution trajectories on the $T=1$ invariant boundary. %
\end{proof}
\begin{figure}[ht!]
	\begin{center}
		\subfigure[$(p,n)=(0,1)$.]{\label{fig:n=1 futuro}	\includegraphics[width=0.30\textwidth]{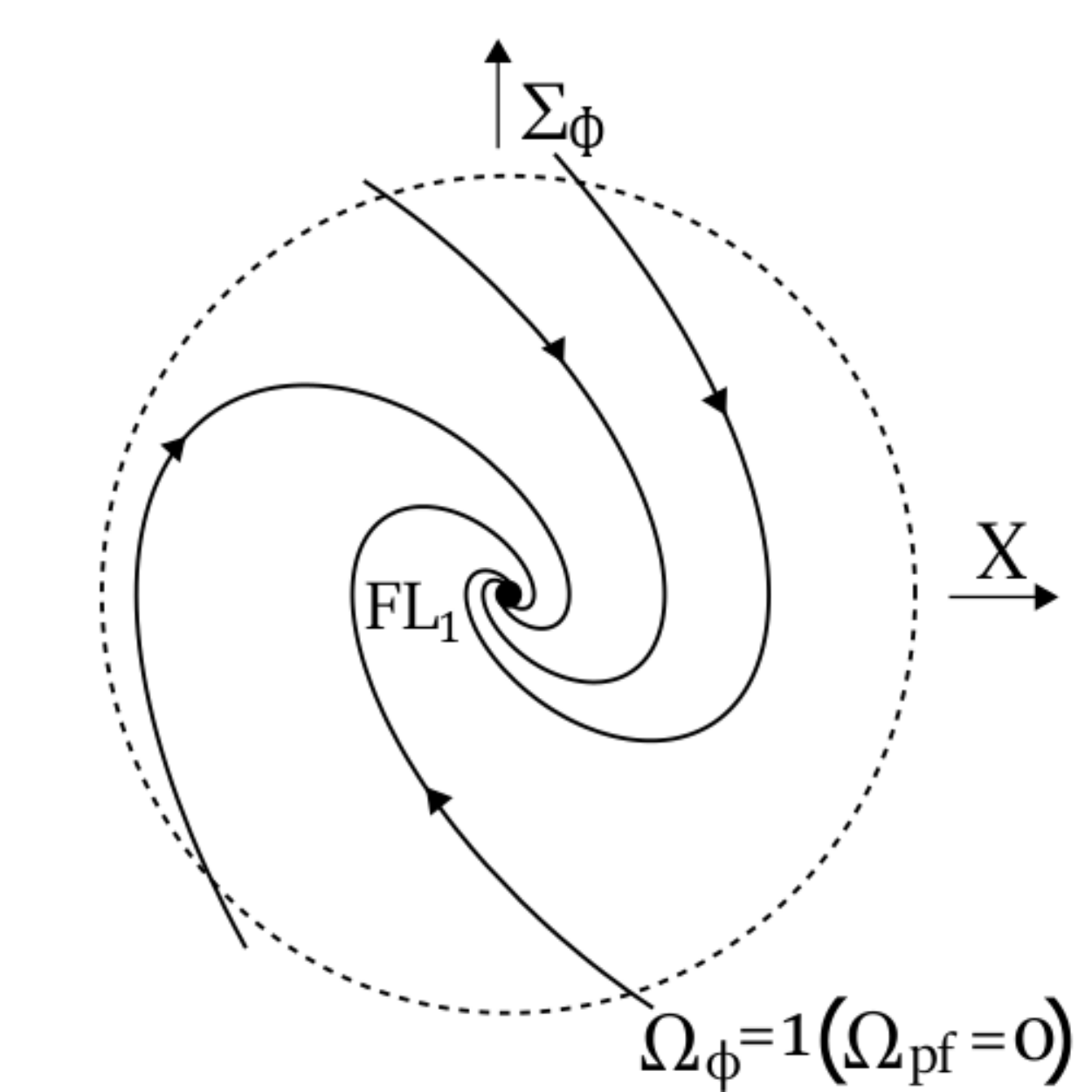}}
		\hspace{2cm}
		\subfigure[$p>0$ with $(p,n)=(1,3)$.]{\label{fig:T=1,n-2p=1}
			\includegraphics[width=0.30\textwidth]{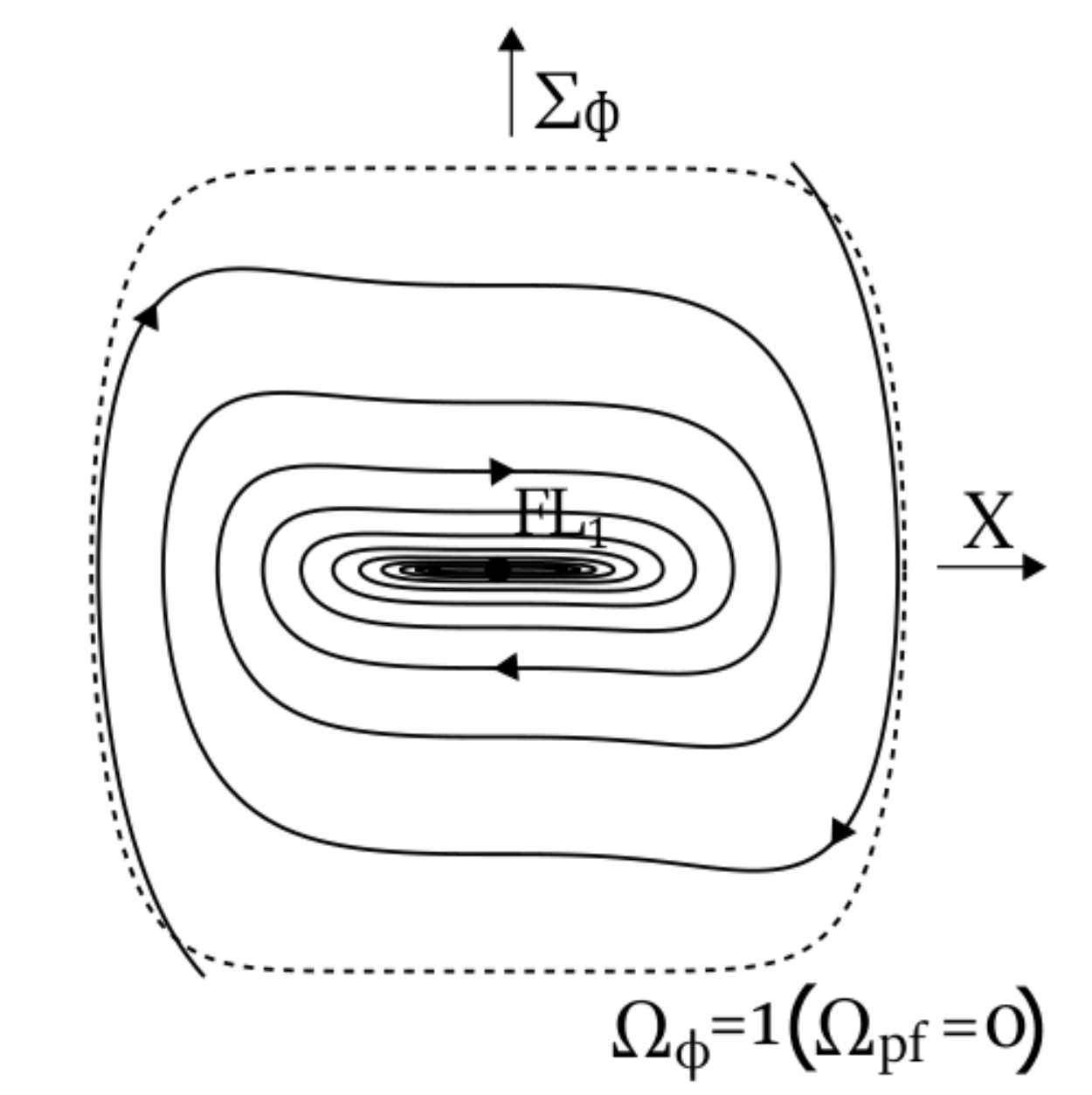}}
	\end{center}
	\vspace{-0.5cm}
	\caption{Invariant boundary $\{T=1\}$ when $p=\frac{1}{2}(n-1)$.}
	\label{figT1:n-1-2p=0}
\end{figure}
\begin{theorem}
     Let $p=\frac{1}{2}(n-1)$. Then as $\tau\rightarrow+\infty$, all orbits in $\mathbf{S}$ converge to the fixed point $\mathrm{FL}_1$. 
\end{theorem}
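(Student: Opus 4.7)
The plan is to combine the global monotonicity result in Lemma~\ref{lemma1} with the description of the induced flow on $\{T=1\}$ given in Lemma~\ref{Lem2}. By Lemma~\ref{lemma1}, for any orbit $\gamma$ in $\mathbf{S}$ the $\omega$-limit set $\omega(\gamma)$ is a non-empty, compact, connected, flow-invariant subset of $\{T=1\}$, and is therefore an invariant set for the induced flow on $\{T=1\}$. Since Lemma~\ref{Lem2} shows that every orbit of that induced flow converges to $\mathrm{FL}_1$ as $\tau\to+\infty$, and since $\omega(\gamma)$ is forward-invariant and closed, this immediately yields $\mathrm{FL}_1\in\omega(\gamma)$.

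The next step is to upgrade this to $\omega(\gamma)=\{\mathrm{FL}_1\}$, which I would establish by contradiction. Suppose there exists $q\in\omega(\gamma)$ with $T(q)=1$ and $q\neq\mathrm{FL}_1$. By invariance of $\omega(\gamma)$, the entire trajectory on $\{T=1\}$ through $q$ belongs to $\omega(\gamma)$. Forward in $\tau$ this trajectory tends to $\mathrm{FL}_1$ by Lemma~\ref{Lem2}. Backward in $\tau$, however, Lemma~\ref{Lem2} (together with the fact that the boundary $\{\Omega_{\mathrm{pf}}=0\}$ is future-invariant but not invariant) shows that the trajectory reaches $\{\Omega_{\mathrm{pf}}=0\}$ at a finite value of $\tau$ and then leaves the closed state space $\bar{\mathbf{S}}$. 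This contradicts $\omega(\gamma)\subset\bar{\mathbf{S}}$, forcing $\omega(\gamma)=\{\mathrm{FL}_1\}$.

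What remains is to verify that $\mathrm{FL}_1$ is genuinely locally attracting in $\bar{\mathbf{S}}$, so that the convergence to the point is not spoiled by a non-hyperbolic direction. For $n=1$, $p=0$, this is immediate from the spectral analysis preceding Lemma~\ref{Lem2}: the two non-zero eigenvalues on $\{T=1\}$ have negative real part (stable node for $\nu\geq 2$, stable focus for $0<\nu<2$), while the $1$-dimensional center manifold coincides with the invariant axis $\{X=\Sigma_\phi=0\}$ on which $T$ is strictly monotone to~$1$, so $\mathrm{FL}_1$ is locally attracting. For $n>1$ and $p>0$, the linearisation at $\mathrm{FL}_1$ is completely nilpotent, and the local attracting character must be extracted from the quasi-homogeneous blow-up of $\mathrm{FL}_1$ carried out in Subsection~\ref{BUP2nd}. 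I expect this desingularisation to be the main technical obstacle: one must choose weights that produce a regularised flow on the exceptional divisor, show that this divisor admits no recurrent dynamics or alternative attractors, and conclude that every direction of approach to $\mathrm{FL}_1$ is captured. Granting that analysis, the theorem follows by combining Lemmas~\ref{lemma1} and~\ref{Lem2} with the local structure at $\mathrm{FL}_1$.
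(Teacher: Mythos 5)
Your argument is correct and is essentially the paper's proof, which simply combines Lemma~\ref{lemma1} (the $\omega$-limit set lies in $\{T=1\}$) with Lemma~\ref{Lem2} (the induced flow on $\{T=1\}$ sends everything to $\mathrm{FL}_1$, with all other backward orbits exiting through $\Omega_{\mathrm{pf}}=0$); your $\omega$-limit-set contradiction is exactly the reasoning the paper leaves implicit. Your final paragraph on local attractivity is superfluous: once $\omega(\gamma)=\{\mathrm{FL}_1\}$ is established in the compact state space, convergence follows by definition, so no blow-up or spectral analysis of $\mathrm{FL}_1$ is needed for this statement (it is only needed for the finer asymptotics in the subsequent remarks).
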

\begin{proof}
	The proof follows from Lemmas~\ref{lemma1} and~\ref{Lem2}.
\end{proof}
\begin{remark}
	It is possible to deduce the asymptotics towards the fixed point $\mathrm{FL}_1$. For $(n,p)=(1,0)$,  and all $\gamma_\mathrm{pf}\in(0,2)$ the preceding analysis yields to leading order on the center manifold
	\begin{equation}
	T(\tau) = 1-\left(1+\frac{3\gamma_\mathrm{pf}}{2}C_T \tau\right)^{-1}\qquad\text{as}\qquad \tau\rightarrow+\infty.
	\end{equation}
	Moreover if  $0<\nu<2$, then
	     	 \begin{subequations}
	     	 	\begin{align}
	     	 	X(\tau)&=e^{-\frac{\nu}{2}\tau}\Big(C_X \cos(\frac{1}{2}\sqrt{4-\nu^2}\tau)+\frac{(\nu C_X+2 C_\Sigma)\sin(\frac{1}{2}\sqrt{4-\nu^2}\tau)}{\sqrt{4-\nu^2}}\Big), \\
	     	 	\Sigma_\phi(\tau) &= e^{-\frac{\nu}{2}\tau}\Big(C_\Sigma \cos(\frac{1}{2}\sqrt{4-\nu^2}\tau)-\frac{(2\nu C_X+\nu C_\Sigma)\sin(\frac{1}{2}\sqrt{4-\nu^2}\tau)}{\sqrt{4-\nu^2}}\Big),
	     	 	\end{align}
	     	 \end{subequations}
as $\tau\rightarrow+\infty$. If $\nu=2$,
	     	 \begin{subequations}
	     	 	\begin{align}
	     	 	X(\tau)&=e^{-\tau}\Big((1+\tau)C_X+\tau C_\Sigma\Big),\\
	     	 	\Sigma_\phi(\tau) &= e^{-\tau}\Big(C_\Sigma-\tau(C_X+C_\Sigma)\Big),\end{align}
	     	 \end{subequations}
as $\tau\rightarrow+\infty$, and if $\nu> 2$, then
	     	 \begin{subequations}
	     	 	\begin{align}
	     	 	X(\tau)&=e^{-\frac{\nu}{2}\tau}\Big(C_X \cosh(\frac{1}{2}\sqrt{\nu^2-4}\tau)+\frac{(\nu C_X+2 C_\Sigma)\sinh(\frac{1}{2}\sqrt{\nu^2-4}\tau)}{\sqrt{\nu^2-4}}\Big),\\
	     	 	\Sigma_\phi(\tau) &= e^{-\frac{\nu}{2}\tau}\Big(C_\Sigma \cosh(\frac{1}{2}\sqrt{\nu^2-4}\tau)-\frac{(2\nu C_X+\nu C_\Sigma)\sinh(\frac{1}{2}\sqrt{\nu^2-4}\tau)}{\sqrt{\nu^2-4}}\Big)
	     	 	\end{align}
	     	 \end{subequations}
	     	 as $\tau\rightarrow+\infty$.
	For $p>0$, i.e. $(p,n)=(1,3),(2,5),(3,7),etc.$, the asymptotics can be obtained by the analysis of the blow-up of $\mathrm{FL}_1$ done in Section~\ref{BUP2nd}, see Remark~\ref{ASFL1_pEQ}.
\end{remark}
The global state-space picture for solutions of the dynamical system~\eqref{globalng2p} when $p=\frac{1}{2}(n-1)$ is shown in Figure~\ref{fig:n-2p=1}. The solid numerical curves correspond to the center manifold of $\mathrm{dS}^{\pm}_0$ and the dashed numerical curves to solution curves  originating from the source $\mathrm{K}^{-}$. All these solutions end up at the generic future attractor $\mathrm{FL}_{1}$.
\begin{figure}[ht!]
	\begin{center}
		\subfigure[$(p,n)=(0,1)$.]{\label{fig:n=1}
			\includegraphics[width=0.30\textwidth]{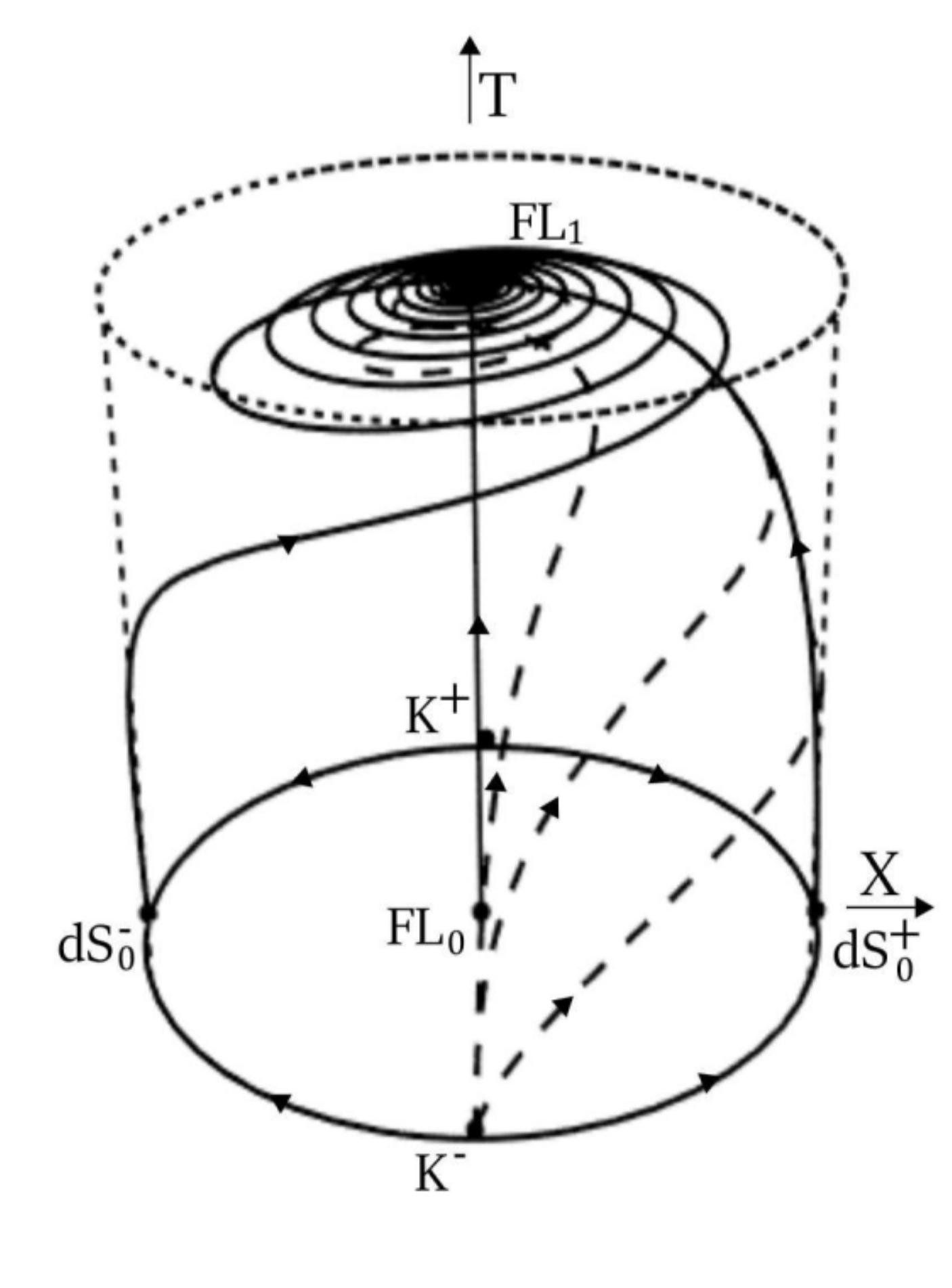}}
		\hspace{2cm}
		\subfigure[$p>0$ with $(p,n)=(1,3)$.]{\label{fig:n>1}
			\includegraphics[width=0.30\textwidth]{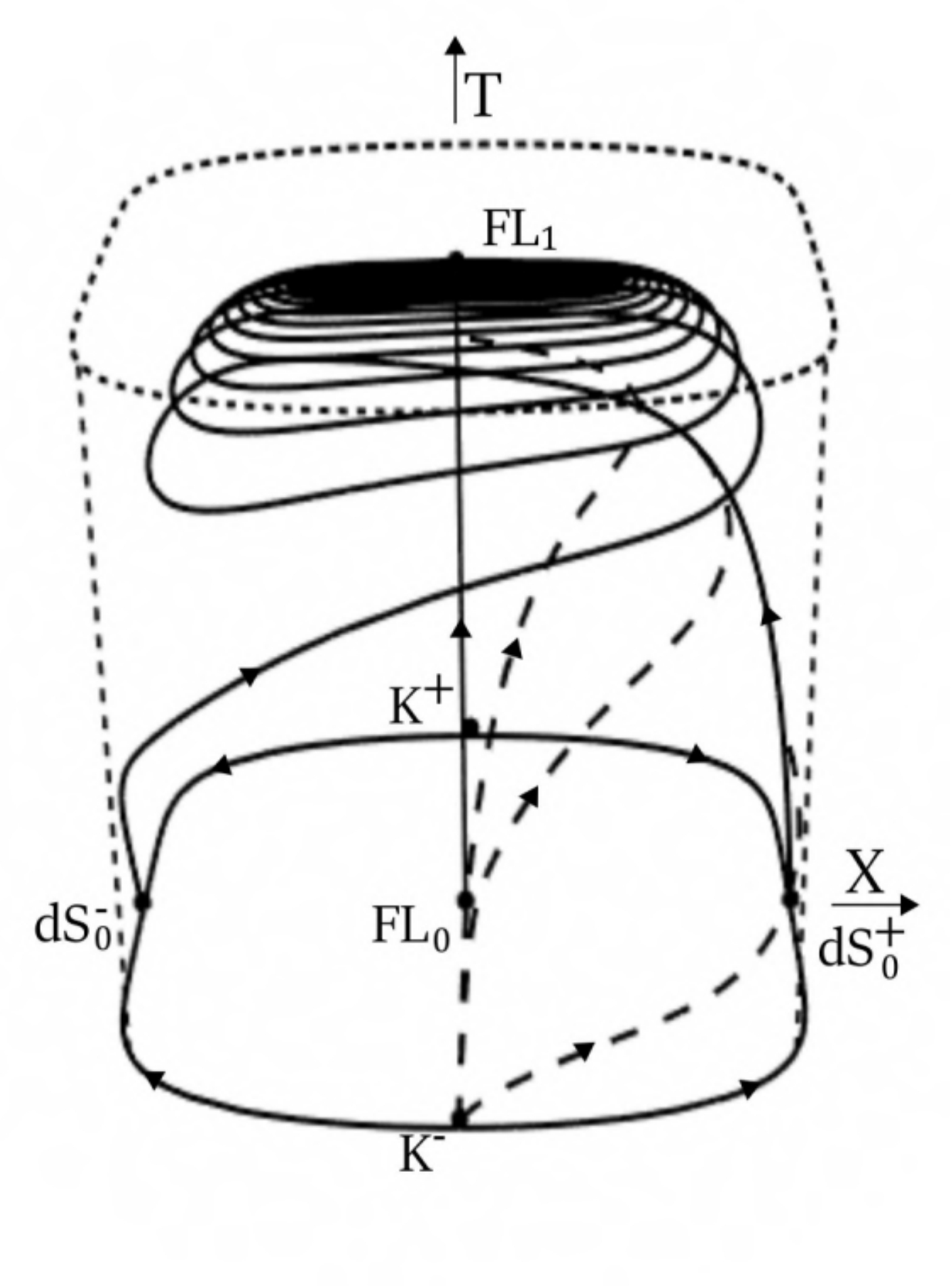}}
	\end{center}
	\vspace{-0.5cm}
	\caption{Global state-space 
		$\mathbf{S}$ when $p=\frac{1}{2}(n-1)$.}
	\label{fig:n-2p=1}
\end{figure}
%
%
%
%%%%%%%%%%%%%%%%%%%%%%%%%%%%%%%%%%%%%%%
\subsection{Blow-up of $\mathrm{FL}_1$ when $p>0$}
\label{ApA}
To analyse the non-hyperbolic fixed point $\mathrm{FL}_1$ of the dynamical system \eqref{globalng2p} when $p>0$, we start by relocating $\mathrm{FL}_1$ at the origin, i.e. we introduce $\bar{T}=1-T$, to obtain the dynamical system
\begin{subequations}
	\begin{align}
	\frac{dX}{d\tau} &=\frac{1}{n}(1+q)\bar{T}^{n-2p}X+(1-\bar{T})\bar{T}^{n-2p-1}\Sigma_\phi , \\
	\frac{d\Sigma_\phi}{d\tau} &= -\left[(2-q)\bar{T}^{n-2p}+\nu(1-\bar{T})^{n-2p}X^{2p}\right]\Sigma_\phi-n(1-\bar{T})\bar{T}^{n-2p-1}X^{2n-1} , \\
	\frac{d\bar{T}}{d\tau} &= -\frac{1}{n}(1+q)(1-\bar{T})\bar{T}^{n-2p+1},
	\end{align}
\end{subequations}
where we recall
\begin{equation}
q=-1+3\Sigma^2_\phi+\frac{3}{2}\gamma_\mathrm{pf}(1-\Sigma^2_\phi-X^{2n}).
\end{equation}
In order to understand the dynamics near the origin $(X,\Sigma_\phi,\bar{T})=(0,0,0)$, which is a non-hyperbolic fixed point for $p>0$, we employ the spherical blow-up method, see e.g.~\cite{Dum93,Perko,BookPlanar}. I.e. we transform the fixed point at the origin to the unit 2-sphere $\mathbb{S}^2=\{(x,y,z)\in\mathbb{R}^3:x^2+y^2+z^2=1\}$, and define the blow-up space manifold as $\mathcal{B}:=\mathbb{S}^2\times[0,u_0]$ for some fixed $u_0>0$. We further define the quasi-homogeneous blow-up map
\begin{equation}\label{BUPMAP}
\Psi\,: \mathcal{B}\rightarrow \mathbb{R}^3,\qquad \Psi(x,y,z,u)=(u^{n-2p}x, u^{n}y, u^{2p}z),
\end{equation}
which after cancelling a common factor $u^{2p(n-2p)}$ (i.e. by changing the time variable to $\bar{\tau}$ defined by $d/d\bar{\tau}=u^{-2p(n-2p)}d/d\tau$, where recall $p<\frac{n}{2}$, with $p>0$) leads to a desingularisation of the non-hyperbolic fixed point on the blow-up locus $\{u=0\}$. Since $\Psi$ is a diffeomorphism outside of the sphere $\mathbb{S}^2\times\{u=0\}$, which corresponds to the fixed point  $(0,0,0)$, the dynamics on the blow-up space $\mathcal{B}\setminus\{\mathbb{S}^2\times\{u=0\}\}$ is topological conjugate to $\mathbb{R}^3\setminus\{0,0,0\}$.

It usually simplifies the computations if instead of standard spherical coordinates on $\mathcal{B}$, one uses different local charts $\kappa_i:\mathcal{B}\rightarrow\mathbb{R}^3$ such that $\psi_i:\Psi\circ \kappa^{-1}_i$ and the resulting state vector(-fields) are simpler to analyze. We choose six charts $\kappa_i$ such that
\begin{subequations}
	\begin{align}
	\psi_{1\pm}&=(\pm u_{1\pm}^{n-2p}, u_{1\pm}^{n}y_{1\pm},u_{1\pm}^{2p}z_{1\pm}),\\
	\psi_{2\pm} &= (u_{2\pm}^{n-2p}x_{2\pm},\pm u_{2\pm}^{n},u_{2\pm}^{2p}z_{2\pm}),\\
	\psi_{3\pm} &= (u_{3\pm}^{n-2p}x_{3\pm},u_{3\pm}^n y_{3\pm},\pm u_{3\pm}^{2p}),
	\end{align}
\end{subequations}
where $\psi_{1\pm}$, $\psi_{2\pm}$, and $\psi_{3\pm}$ are called the directional blow-ups in the positive/negative $x$, $y$, and $z$-directions respectively.
It is easy to check that the different charts are given explicitly by
\begin{subequations}
	\begin{align}
	\kappa_{1\pm}&:\quad(u_{1\pm},y_{1\pm},z_{1\pm})=(\pm u x^{\frac{1}{n-2p}},\pm y x^{-n},\pm z x^{-2p}),\\
	\kappa_{2\pm}&:\quad(x_{2\pm},u_{2\pm},z_{2\pm})=(\pm x y^{-\frac{n-2p}{n}},\pm u y^{\frac{1}{n}},\pm z y^{-\frac{2p}{n}}),\\
	\kappa_{3\pm}&:\quad (x_{3\pm},u_{3\pm},z_{3\pm})=(\pm x z^{-\frac{n-2p}{2p}},\pm y z^{-\frac{n}{2p}}, \pm u z^{\frac{1}{2p}}).
	\end{align}
\end{subequations}
The transition maps $\kappa_{ij}=\kappa_j\circ \kappa^{-1}_i$ then allow us to identify fixed points and special invariant manifolds on different charts, and to deduce all the dynamics on the blow-up space. For example, in this case, we will need some of the following transition charts
\begin{subequations}
	\begin{align}
	\kappa_{1+2+}\quad &:\quad(x_{2+},u_{2+},z_{2+})=(y_{1+}^{-\frac{n-2p}{n}},u_{1+}y_{1+}^{\frac{1}{n}},y_{1+}^{-\frac{2p}{n}}z_{1+}), \quad y_{1+}>0;\\
	\kappa_{2+1+}\quad &: \quad(u_{1+},y_{1+},z_{1+})=(u_{2+}x_{2+}^{\frac{1}{n-2p}},x_{2+}^{-n}, z_{2+} x_{2+}^{-2p}),\quad x_{2+}>0;
	\end{align}
\end{subequations}
\begin{subequations}
	\begin{align}
	\kappa_{1+3+}\quad &: \quad (x_{3+},y_{3+},u_{3+})=(z_{1+}^{-\frac{n-2p}{2p}},y_{1+}z_{1+}^{-\frac{n}{2p}},u_{1+}z_{1+}^{\frac{1}{2p}}),\quad 	z_{1+}>0;\\
	\kappa_{3+1+}\quad &: \quad (u_{1+},y_{1+},z_{1+})=(u_{3+}x_{3+}^{\frac{1}{n-2p}},y_{3+}, y_{3+}x_{3+}^{-n},x_{3+}^{-2p}),\quad x_{3+}>0;
	\end{align}
\end{subequations}
\begin{subequations}
	\begin{align}
	\kappa_{2+3+}\quad &: \quad (x_{3+},y_{3+},u_{3+})=(x_{2+} z_{2+}^{-\frac{n-2p}{2p}},z_{2+}^{-\frac{n}{2p}}, u_{2+}z_{2+}^{\frac{1}{2p}}),\quad z_{2+}>0;\\
	\kappa_{3+2+}\quad &:\quad (x_{2+},u_{2+},z_{2+})=(x_{3+}y_{3+}^{-\frac{n-2p}{n}},u_{3+}y_{3+}^{\frac{1}{n}},y_{3+}^{-\frac{2p}{n}}),\quad y_{3+}>0.
	\end{align}
\end{subequations}
Since the physical state-space has $\bar{T}\geq0$, we are only interested in the region $\{z\geq0\}$, i.e. the union of the upper hemisphere of the unit sphere $\mathbb{S}^2$ with the equator of the sphere $\{z=0\}$ which constitutes an invariant boundary. This motivates that we start the analysis by using chart $\kappa_{3+}$, i.e. the directional blow-up map in the positive $z$-direction, on which the northern hemisphere is mapped into the invariant plane of coordinates $(x_3,y_3)=(x,y,1)$. After cancelling a common factor $u_3^{2p(n-2p)}$ (i.e. by changing the time variable $d/d\tau=u_3^{2p(n-2p)}d/d\bar{\tau}_{3}$) we obtain the regular dynamical system
\begin{subequations}\label{PosZ}
	\begin{align}
	\frac{dx_3}{d\bar{\tau}_3} =&\frac{1}{2np}(1+q)(2p+(n-2p)(1-u_3^{2p}))x_3 +(1-u_3^{2p})y_3, \\
	\frac{dy_3}{d\bar{\tau}_3} =& \frac{1}{2p}(1+q)(1-u^{2p})y_3-\big(2-q+\nu(1-u_3^{2p})^{n-2p}x_3^{2p} \big)y_3-n(1-u_3^{2p})x_3^{2n-1}u_3^{2n(n-1-2p)},\\
	\frac{du_3}{d\bar{\tau}_3} =& -\frac{1}{2np}(1+q)(1-u_3^{2p})u_3	,
	\end{align}
\end{subequations}
where
$$
q=-1+\frac{3\gamma_\mathrm{pf}}{2}\left(1+\frac{2-\gamma_\mathrm{pf}}{\gamma_\mathrm{pf}}y_3^2u_3^{2n}-x_3^{2n}u_3^{2n(n-2p)}\right).
$$
In these coordinates the equator of the sphere is at infinity and it is better analysed using charts $\kappa_{1\pm}$ and $\kappa_{2\pm}$. Moreover, the above system is symmetric under the transformation $(x_3,y_3)\rightarrow -(x_3,y_3)$ and, therefore, it suffices to consider the charts in the positive directions.
To study the points at infinity, we notice that both the directional blow-ups in the positive $x$ and $y$ directions already tell how such local chart must be given. To study the region where $x_3$ becomes infinite, we use the transition chart $\kappa_{3+1+}$:
\begin{equation}
\left(y_1,z_1,u_1\right)=\left(y_3 x_3^{-n}, x_3^{-2p},u_3 x_3^{\frac{1}{n-2p}}\right),
\end{equation}
together with the change of time variable $d/d\bar{\tau}_1=z_1d/d\bar{\tau}_3$, i.e. $d/d\tau=u^{2p(n-2p)}_1d/d\bar{\tau}_1$, which leads to the regular system of equations
\begin{subequations}\label{PosX}
	\begin{align}
	\frac{dy_1}{d\bar{\tau}_1}=&-n(1-u_{1}^{2p}z_{1})z_{1}^{n-2p-1}u_{1}^{2n(n-2p-1)}-\left(2-q+\frac{1+q}{n-2p}\right)y_1 z_1^{n-2p}-\nu y_{1}(1-u_1^{2p})^{n-2p}  \nonumber \\
	&-\frac{n}{n-2p}\left(1-u_{1}^{2p}z_{1}\right)y_1^{2}z_1^{n-2p-1},\\
	\frac{dz_1}{d\bar{\tau}_1}=& -\left(\frac{1+q}{n}\left(\frac{n}{n-2p}-u_{1}^{2p}z_1\right)z_{1}^{n-2p}+\frac{2p}{n-2p}(1-u_1^{2p}z_1)y_{1}z_{1}^{n-2p-1}\right)z_1, \\
	\frac{du_1}{d\bar{\tau}_1}=& \frac{1}{n-2p}\left(\frac{1+q}{n}z_1+\left(1-u_1^{2p}z_1\right)y_1\right)u_1 z_1^{n-2p-1},
	\end{align}
\end{subequations}
where
\begin{equation*}
q=-1+\frac{3\gamma_\mathrm{pf}}{2}\left(1+\frac{2-\gamma_\mathrm{pf}}{\gamma_\mathrm{pf}}y_{1}^{2} u_{1}^{2n}-u_1^{2n(n-2p)}\right).
\end{equation*}
To study the region where $y_3$ becomes infinite, we use the chart $\kappa_{3+2+}$:
\begin{equation}
\left(x_2,z_2,u_2\right)=\left(x_3 y_3^{\frac{2p-n}{n}},y_3^{\frac{2p}{n}},u_3 y_3^{\frac{1}{n}}\right),
\end{equation}
and change time variable $d/\bar{\tau}_2=z_2d/d\bar{\tau}_3$, i.e. $d/d\tau=u^{2p(n-2p)}_2d/\bar{\tau}_2$, to obtain the regular dynamical system
\begin{subequations}\label{BU_PosYDir}
	\begin{align}
	\frac{dx_2}{d\bar{\tau}_2}&=\frac{1}{n}\left((2(n-2p)+1)-(n-2p-1)q\right)x_2 z_2^{n-2p}+\left(1+(n-2p)x_2^{2n}\right)(1-u_2^{2p}z_2)z_2^{n-2p-1}\nonumber\\&+\nu \frac{n-2p}{n}(1-u_2^{2p}z_2)^{n-2p}x_2^{2p+1}, \\
	\frac{dz_2}{d\bar{\tau}_2}&=\frac{2p}{n}\left(\left(2-q+\frac{1+q}{2p}(1-u_2^{2p}z_2)\right)z_2^{n-2p}+\nu(1-u_2^{2p}z_2)^{n-2p}x_2^{2p}\right)z_2 \nonumber \\ &+2p(1-u_2^{2p}z_2)x_2^{2n-1}z_2^{n-2p}u_2^{2n(n-2p-1)}, \\
	\frac{du_2}{d\bar{\tau}_2}&= -\left((2-q)z_{2}^{n-2p}+\frac{\nu}{n} (1-u_{2}^{2p}z_{2})^{n-2p}x_2^{2p}+\left(1-u_2^{2p}z_2\right)z_2^{n-2p-1}x_2^{2n-1}u_2^{2n(n-2p-1)}\right)u_2,
	\end{align}
\end{subequations}
where
\begin{equation}
q=-1+\frac{3\gamma_\mathrm{pf}}{2}\left(1+\frac{2-\gamma_\mathrm{pf}}{\gamma_\mathrm{pf}}u_2^{2n}-x_2^{2n}u_2^{2n(n-2p)}\right).
\end{equation}
The general structure of the blow-up space $\mathcal{B}$ for the two different cases with $p<\frac{1}{2}(n-1)$ and  $p=\frac{1}{2}(n-1)$ is shown in Figure~\ref{fig:BUFL1}. In the first case, we shall see ahead that the points $\mathrm{R}^{\pm}$ are still non-hyperbolic and, therefore, we need a further blow-up, while in the second case the number of fixed points on the equator depends on whether $\nu<2n$, $\nu=2n$ or $\nu>2n$. The lines of fixed points $\mathrm{L}^{\pm}_1$ and $\mathrm{L}^{\pm}_2$ correspond to the lines of fixed points $\mathrm{L}_1$ and $\mathrm{L}_2$ on the cylinder state space $\mathbf{S}$ with $X_0<0$, or $X_0>0$, and $\Sigma_{\phi0}<0$, or $\Sigma_{\phi0}<0$, respectively, i.e. $\mathrm{L}_1=\mathrm{L}^{-}_1\cup\{\mathbf{0}\}\cup\mathrm{L}^{+}_1$ and $\mathrm{L}_2=\mathrm{L}^{-}_2\cup\{\mathbf{0}\}\cup\mathrm{L}^{+}_2$.
\begin{figure}[ht!]
	\begin{center}
		\subfigure[Case $p<\frac{1}{2}(n-1)$.]{\label{fig:BUPa}
			\includegraphics[width=0.450\textwidth]{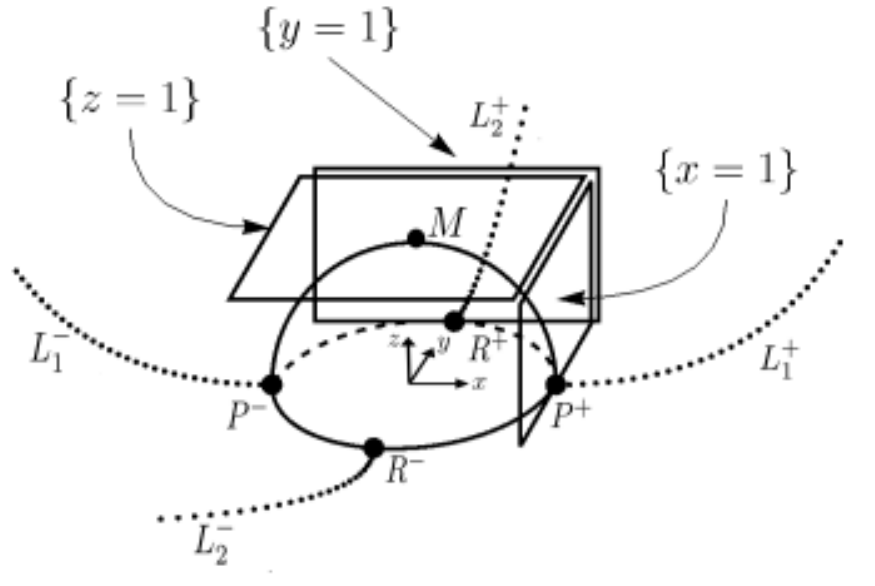}}
		\hspace{0cm}
		\subfigure[Case $p=\frac{1}{2}(n-1)$.]{\label{fig:BUPb}
			\includegraphics[width=0.450\textwidth]{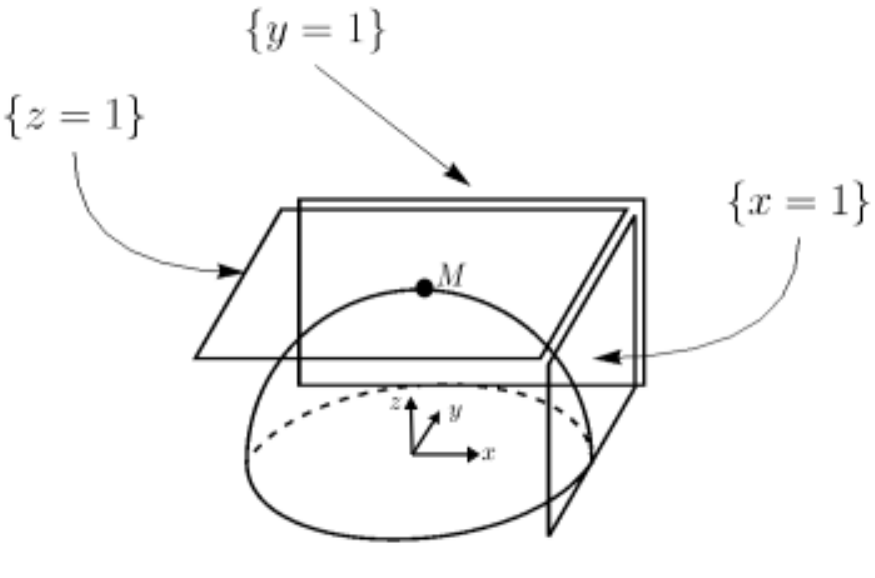}}
	\end{center}
	\vspace{-0.5cm}
	\caption{Blow-up space $\mathcal{B}$ and its model independent structure. The structure of the equator when $p=\frac{1}{2}(n-1)$ depends on whether $\nu<2n$, $\nu=2n$ or $\nu>2n$.}
	\label{fig:BUFL1}
\end{figure}

To obtain a global phase-space description, instead of projecting the upper-half of the unit $2$-sphere on the $z=1$ plane, we shall project it into the open unit disk  $x^2+y^2< 1$ which can be joined to the equator (the unit circle on $\{z=0\}$). In this way we obtain a global understanding of the flow on the Poincar\'e-Lyapunov unit cylinder $\mathbb{D}^2\times [0,\bar{u}_0]$ for some fixed $\bar{u}_0>0$. Usually, a generalised angular variable is used on the invariant subset $\{u_3=0\}$, see e.g.~\cite{BRU90,Lia66}. Here we use a different type of transformation, based on~\cite{Alho2,Alho3}, which makes the analysis somewhat simpler: 
\begin{equation}\label{PLCylinder}
\left(x_3,y_3,u_3\right)=\left(\left(\frac{r}{1-r}\right)^{\frac{1}{2p}}\cos \theta ,\left(\frac{r}{1-r}\right)^{\frac{2p+1}{2p}}F(\theta)\sin\theta,(1-r)^{\frac{1}{2p}}\bar{u}\right),
\end{equation}
where
\begin{equation}
F(\theta)=\sqrt{\frac{1-\cos^{2(2p+1)}\theta}{1-\cos^2 \theta}}=\sqrt{\sum^{2p}_{k=0}\cos^{2k}\theta},
\end{equation}
is bounded and analytic for $\theta\in[0,2\pi)$, satisfying $F(\theta)\geq1$ (with  $F\equiv1$ when $p=0$) and $F(0)=\sqrt{2p+1}$. The above transformation leads to
\begin{equation}
x_3^{2(2p+1)}+y_3^{2}=\left(\frac{r}{1-r}\right)^{\frac{2p+1}{p}}.
\end{equation}
Making a further change of time variable from $\bar{\tau}_3$ to $\xi$ defined by
\begin{equation}
\frac{d}{d\xi}=(1-r)\frac{d}{d\bar{\tau}_3},
\end{equation}
we obtain the dynamical system
\begin{subequations}\label{PL_Cylinder}
	\begin{align}
	\frac{dr}{d\xi}&=2p (1-r)r\Bigg(\frac{(1+q)(1-r)}{2pn(2p+1)}\left((2p+1)\left(n-(n-2p)(1-r)\bar{u}^{2p}\right)-n(1-\bar{u}^{2p})F^{2}(\theta)\sin^{2}\theta\right)\nonumber\\
	&+\frac{F^{2}(\theta)\sin^{2}\theta}{2p+1}\left((2-q)(1-r)+r(1-(1-r)\bar{u}^{2p})^{n-2p}\nu\cos^{2p}\theta\right)+r(1-(1-r)\bar{u}^{2p})F(\theta)\sin \theta\nonumber\\
	&-\frac{n(1-\bar{u}^{2p})}{2p+1}\bar{u}^{2n(n-2p-1)}(1-r)^{\frac{(n-1)(n-2p-1)}{p}}r^{\frac{n-1}{p}-1}F(\theta)\sin \theta\Bigg),\\
	\frac{d\theta}{d\xi}&=-\frac{F(\theta)^2}{2np}\left((1+q)(1-r)(n-(1-r)(n-2p)\bar{u}^{2p})\cos\theta+2pn r(1+(1-r)\bar{u}^{2p})F(\theta)\sin\theta\right)\sin \theta\nonumber\\&-\frac{4n}{4(2p+1)}(1-r)^{\frac{(n-2p-1)(n-1)}{p}}r^{\frac{n-p-1}{p}}\bar{u}^{2n(n-2p-1)}(1-\bar{u}^{2p})\cos^{2n}\theta F(\theta)\nonumber\\&+\frac{(1-r)}{4p(2p+1)}(1+q)(1-\bar{u}^{2p}-2p(2-q))F(\theta)^2 \sin 2\theta +2r(1+(1-r)\bar{u}^{2p})^{n-2p}\nu F(\theta)\cos^{2p}\theta \sin 2\theta ,\\
	\frac{d\bar{u}}{d\xi}&= -\frac{(1+q)(1-(1-r)\bar{u}^{2p})}{2np}(1-r)\bar{u}-\frac{\sin^2\theta}{4p(2p+1)}
	\Big(2(1-r)r\left(2p(2-q)-(1+q)(1-\bar{u}^{2p})\right)F^{2}(\theta)\nonumber\\&+4p\nu r^2(1-(1-r)\bar{u}^{2p})^{n-2p}F^2(\theta)+4pn(1-r)^{\frac{(n-2p-1)(n-1)}{p}}r^{\frac{(n-1)}{p}}\bar{u}^{2n(n-2p-1)}(1-\bar{u}^{2p})\cos^{n-1}\theta F(\theta)\Big)\bar{u}	\nonumber\\
	&+\frac{r\cos^{4p+1}\theta}{2np}\left((1+q)(1-r)(n-(1-r)(n-2p)\bar{u}^{2p})\cos\theta +2pnr(1+(1-r\bar{u}^{2p}))F(\theta)\right)\bar{u},
	\end{align}
\end{subequations}
where
$$
q=-1+\frac{3\gamma_\mathrm{pf}}{2}\left(1-(1-r)^{\frac{n(n-2p-1)}{p}}r^{\frac{n}{p}}\bar{u}^{2n(n-2p)}\cos^{2n}\theta +\frac{2-\gamma_\mathrm{pf}}{\gamma_\mathrm{pf}}(1-\cos^{2p(2p+1)}\theta)(1-r)^{\frac{n-2p-1}{p}}r^{2+\frac{1}{p}}\bar{u}^{2n}\right).
$$
On the $\{\bar{u}=0\}$ invariant boundary, the right-hand-side of the above dynamical system is regular and can be regularly extended up to $\{r=0\}$ and $\{r=1\}$, while for $\bar{u}>0$ it can be extended to the invariant boundaries $\{r=0\}$ and $\{r=1\}$ at least in a $C^1$ manner. The general structure of the Poincar\'e-Lyapunov cylinders in both cases when $p<\frac{1}{2}(n-1)$ and $p=\frac{1}{2}(n-1)$ is shown in Figure~\ref{fig:BUFL1_Lyapunov}. All fixed points are thus hyperbolic or semi-hyperbolic and, in particular, when $p<\frac{1}{2}(n-1)$ the line $\mathrm{L}_2$ no longer exists, see Figure~\ref{fig:BUFL1_Lyapunov}.
\begin{figure}[ht!]
	\begin{center}
		\subfigure[Case $p<\frac{1}{2}(n-1)$.]{\label{fig:BUP_PLa}
			\includegraphics[width=0.35\textwidth]{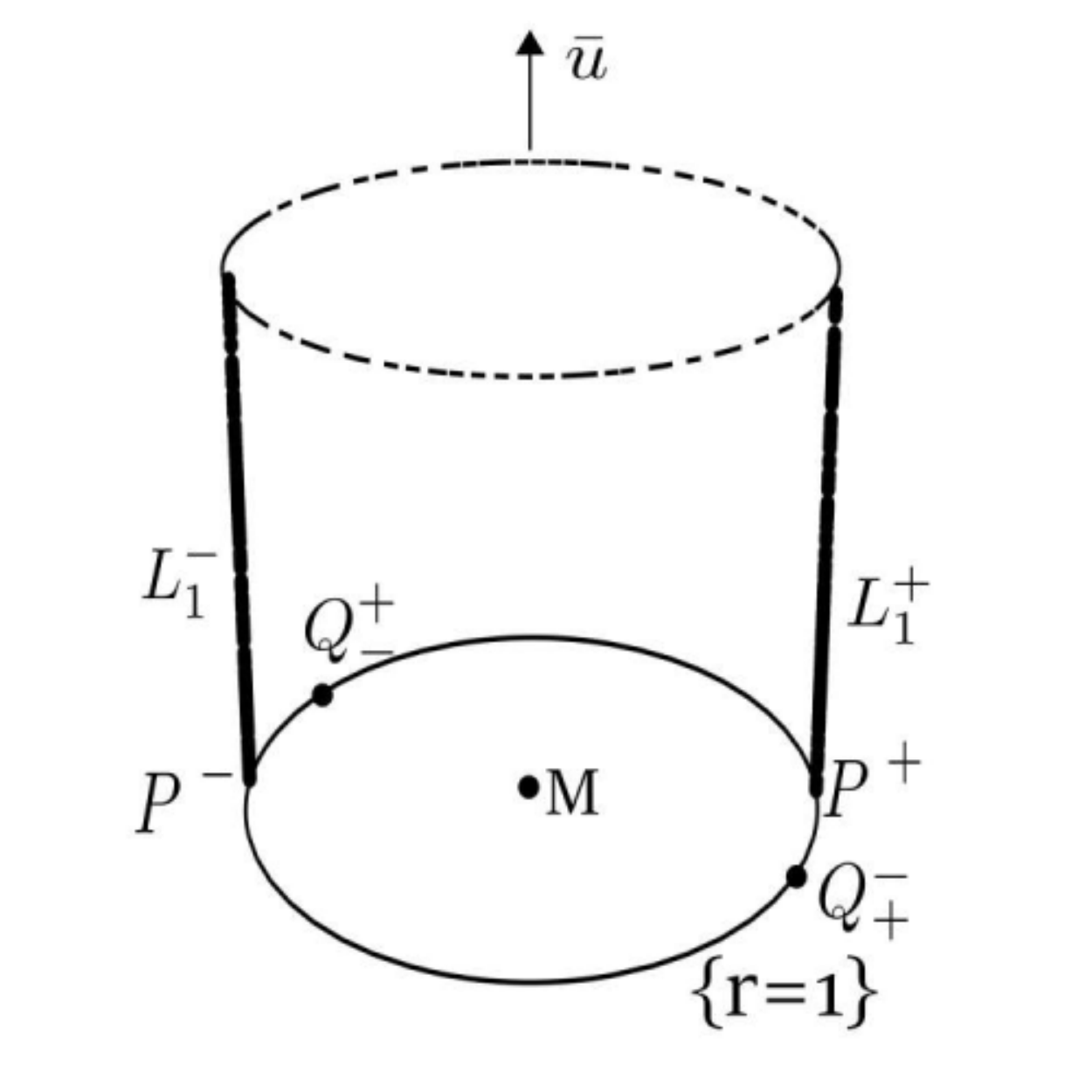}}
		\hspace{0cm}
		\subfigure[Case $p=\frac{1}{2}(n-1)$.]{\label{fig:BUP_PLb}
			\includegraphics[width=0.35\textwidth]{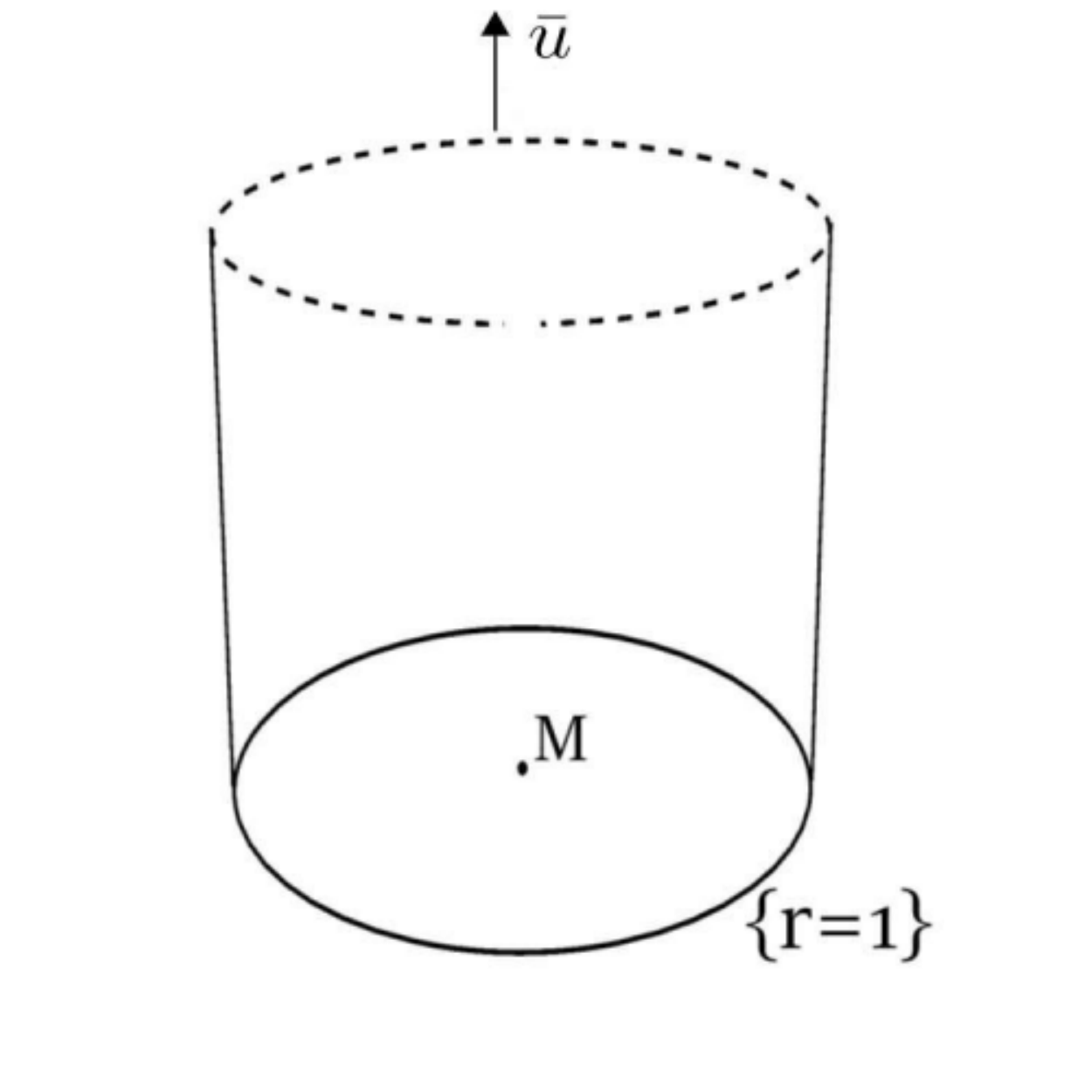}}
	\end{center}
	\vspace{-0.5cm}
	\caption{Blow-up space on the Poincar\'e-Lyapunov cylinder and its model independent structure. The structure of the equator when $p=\frac{1}{2}(n-1)$ depends on whether $\nu<2n$, $\nu=2n$ or $\nu>2n$.}
	\label{fig:BUFL1_Lyapunov}
\end{figure}
%
%%%%%%%%%%%%%%%%%%%%%%%%%%%%%%%%%
\subsubsection{Case $p<\frac{1}{2}(n-1)$}\label{BUP1st}
Consider the case $p<\frac{1}{2}(n-1)$ with $p>0$, i.e. $n>2$, for example $(p,n)=(1,4),(1,5),(1,6),...$, $(p,n)=(2,6),(2,7),..$ , etc. 
\subsubsection*{Positive $z$-direction}
In this case all fixed points of~\eqref{PosZ} are located at the invariant subset $\{u_3=0\}$, where the induced flow is given by
\begin{equation}\label{PosZSMA}
\frac{dx_3}{d\bar{\tau}_3}=\frac{3}{(1+2p)(1-K)}x_3+y_3,\quad \frac{dy_3}{d\bar{\tau}_3}=\left(\frac{3K}{1-K}-\nu x^{2p}_3\right)y_3,
\end{equation}
and where we introduced the notation
\begin{equation}\label{DefK}
K(\gamma_\mathrm{pf},p)=1-\frac{4p}{(1+2p)\gamma_\mathrm{pf}}.
\end{equation}
Since $\gamma_\mathrm{pf}\in(0,2)$, it follows that $K\in\left(-\infty,\frac{1}{1+2p}\right)$, i.e. $K<1$ for all $p>0$.
\begin{remark}
	By changing coordinates to $\left(\bar{x},\bar{y}\right)=\left(x_3,\frac{3x_3}{(1+2p)(1-K)}+y_3\right)$
	the system of equations~\eqref{PosZSMA} is transformed into an equivalent Li\'enard-type system
	\begin{equation}\label{LS}
	\frac{d\bar{x}}{d\bar{\tau}_3}=\bar{y},\quad \frac{d\bar{y}}{d\bar{\tau}_3}=-f(\bar{x})\bar{y}-g(\bar{x}),
	\end{equation}
	where
	\begin{subequations}
	\begin{align}
	f(\bar{x}) &= -\frac{3(1+p)K}{2p(1+2p)(1-K)}+\nu \bar{x}^{2p}, \\
	g(\bar{x}) &= \frac{3}{(1+2p)(1-K)}\left(\frac{3K}{1-K}-\nu \bar{x}^{2p}\right)\bar{x},
	\end{align}
\end{subequations}
	which arises from the second-order Li\'enard-type differential equation
	\begin{equation}
	\frac{d^2\bar{x}}{d\bar{\tau}^2_3}+f(\bar{x})\frac{d\bar{x}}{d\bar{\tau}_3}+g(\bar{x})=0.
	\end{equation}
	Introducing the functions
	\begin{equation}
	F(\bar{x})=\int^{\bar{x}}_{0} f(s)ds,\qquad G(\bar{x})=\int^{\bar{x}}_{0} g(s)ds,
	\end{equation}
	the energy of the system is $E=\frac{1}{2}\bar{y}^2+G(\bar{x})$,  
	and making a further change of variable $Y=\bar{y}+F(\bar{x})$ leads to the Li\'enard plane
	\begin{equation}\label{LIenardPlane}
	\frac{d\bar{x}}{d\bar{\tau}_3}= Y-F(\bar{x}),\qquad \frac{dY}{d\bar{\tau}_3}=-g(\bar{x}).
	\end{equation}
	%	%
	There is vast amount of literature on Li\'enard-type systems, see e.g.,~\cite{Perko,BookPlanar,LMP76,DH99} and references therein. The most difficult problem in this context concerns the existence, number, relative position and bifurcations of limit cycles arising in the Li\'enard equations.
\end{remark}
The fixed points of~\eqref{PosZSMA} are the real solutions to
\begin{equation}
\left(\frac{3K}{1-K}-\nu x^{2p}_3\right)y_3=0,\quad y_3=-\frac{3}{(1+2p)(1-K)}x_3.
\end{equation}
In this case there are at most three fixed points. The fixed point at the origin of coordinates
\begin{equation}
\mathrm{M}\,:\quad x_3=0,\quad y_3=0,
\end{equation}
whose linearised system has eigenvalues
$\lambda_1 =\frac{3}{(1+2p)(1-K)}$, $\lambda_2=\frac{3K}{(1-K)}$, $\lambda_3=-\frac{3}{n(1+2p)(1-K)}$ and associated eigenvectors $v_1=(1,0,0)$, $v_2=\left(-\frac{(1+2p)(1-K)}{3\left(1-(1+2p)K\right)},1,0\right)$, $v_3=(0,0,1)$. Therefore on $\{u_3=0\}$, $\mathrm{M}$ is a hyperbolic fixed point if and only if $K\neq0$, being a saddle if $K<0$ ($0<\gamma_\mathrm{pf} < \frac{4p}{2p+1}$) and a source if $K>0$ ($\frac{4p}{2p+1}<\gamma_\mathrm{pf}$). 
When $K=0$ ($\gamma_\mathrm{pf}=\frac{4p}{2p+1}$), there is a bifurcation leading to a center manifold associated with a zero eigenvalue. 

To analyse the center manifold we introduce the adapted variable $\bar{y}_3=y_3-\frac{(1+2p)(1-K)}{3\left(1-(1+2p)K\right)}x_3$. The center manifold reduction theorem yields that the resulting system is locally topological equivalent to the 1-dimensional decoupled equation on the center manifold, which can be locally represented as graph $h: E^c\rightarrow E^u$, i.e. $x_3=h(\bar{y}_3)$, solving the nonlinear differential equation
\begin{equation*}
\begin{split}
\frac{3h(\bar{y}_3)}{(1+2p)(1-K)}-\frac{\nu(1+2p)(1-K)}{3\left(1-(1+2p)K\right)}&\left(h(\bar{y}_3)-\frac{(1+2p)(1-K)}{3\left(1-(1+2p)K\right)}\bar{y}_3\right)\bar{y}_3= \\
&=\bar{y}_3\left(\frac{3}{1-K}-\nu \left(h(\bar{y}_3)-\frac{(1+2p)(1-K)}{3\left(1-(1+2p)K\right)}\bar{y}_3\right)^{2p}\right)\frac{dh}{d\bar{y}_3},
\end{split}
\end{equation*}
subject to the fixed point $h(0)=0$ and tangency $\frac{dh}{d\bar{y}_3}(0)=0$ conditions, respectively. Approximating the solution by a formal truncated power series expansion, and solving for the coefficients, yields to leading order
\begin{equation}
\frac{d\bar{y}_{3}}{d\bar{\tau}_3}=-\nu \left(\frac{2p+1}{3}\right)^{2p}\bar{y}_{3}^{2p+1}+\mathcal{O}(\bar{y}_{3}^{2p+3}),\qquad \bar{y}_{3} \rightarrow 0,
\end{equation} 
and therefore the one dimensional center manifold is stable. In addition to $\mathrm{M}$ there are two more fixed points when $K>0$, whereas no additional fixed points exist when $K\leq0$. When $K>0$, the fixed points are
\begin{equation}
\mathrm{S}^\pm:\quad x_3=\pm \left(\frac{3K}{\nu(1-K)}\right)^{\frac{1}{2p}},\quad y_3=\mp \left(\frac{3}{(1+2p)(1-K)}\right)\left(\frac{3K}{\nu(1-K)}\right)^{\frac{1}{2p}},
\end{equation}
The linearisation around these fixed points yields the eigenvalues
\begin{equation}
\lambda_{1,2}=\frac{3\left(1\mp\sqrt{1+8p(1+2p)K}\right)}{2(1+2p)(1-K)},\quad 
\lambda_3= -\frac{3}{n(1+2p)(1-K)},
\end{equation}
with associated eigenvectors
\begin{equation}
v_{1,2}=\left(\frac{(1-K)\left(1\mp\sqrt{1+8p(1+2p)K}\right)}{12pK},1,0\right),\quad 
v_3 = (0,0,1).
\end{equation}
It follows that $\mathrm{S}^\pm$ are hyperbolic saddles.
%
%%%%%%%%%%%%%%%%%%%%%%%%%%%%%%%%%%%%%%%
\subsubsection*{Fixed points at infinity}
On the positive $x$-direction, for $p<\frac{1}{2}(n-1)$, the system~\eqref{PosX} when restricted to the invariant subset $\{u_1=0\}$ reduces to
\begin{equation}
\frac{dy_1}{d\bar{\tau}_1}=-\left[3\left(1-\frac{n-2p-1}{n-2p}\gamma_\mathrm{pf}\right) z_1^{n-2p}+\nu \right]y_1,\quad \frac{dz_{1}}{d\bar{\tau}_1}=-\frac{1}{n-2p}\left(\frac{3\gamma_\mathrm{pf}}{2}z_1-2p y_1\right)z_1^{n-2p}.
\end{equation}
Furthermore, we are interested in the invariant subset $\{z_1=0\}$ on $\{u_1=0\}$, resulting in
\begin{equation}\label{2p<n-1:EQPosX}
\frac{dy_1}{d\bar{\tau}_1}=-\nu y_1,
\end{equation}
which has only one fixed point at the origin
\begin{equation}
\mathrm{P}^{+}\quad:\quad y_1=0,\quad z_1=0,\quad u_1=0.
\end{equation}
The linearisation yields the eigenvalues $\lambda_1=-\nu$, $\lambda_2=0$, and $\lambda_3=0$ with associated eigenvectors $v_1=(1,0,0)$, $v_2=(0,1,0)$, and $v_3=(0,0,1)$. The zero eigenvalue in the $u_1$-direction is associated with a line of fixed points parameterized by constant values of $u_1=u_0>0$, and which corresponds to the half of the line of fixed points $\mathrm{L}_1$ with $X_0>0$, and denoted by $\mathrm{L}^{+}_1$, see Figure~\ref{fig:BUPa}. Thus on the  $\{u_1=0\}$ invariant set, the fixed point $\mathrm{P}^{+}$ is semi-hyperbolic, with the center manifold being the invariant subset $\{y_1=0\}$, where the flow is given by
\begin{equation}
\frac{dz_1}{d\bar{\tau}_1}=-\frac{3\gamma_\mathrm{pf}}{2(n-2p)}z^{n-2p+1}_1, \qquad \text{as}\quad z_1\rightarrow 0,
\end{equation}
and so, on $\{u_1=0\}$, $\mathrm{P}^+$ is the $\omega$-limit point of a 1-parameter set of orbits. By the symmetry of the system on the $(x_3,y_3)$ plane, an equivalent fixed point $\mathrm{P}^-$ exists in the negative $x$-direction.

On the positive $y$-direction, when $p<\frac{1}{2}(n-1)$, the flow induced on the invariant subset $\{u_2=0\}$ is given by
\begin{subequations}
	\begin{align}
	\frac{dx_2}{d\bar{\tau}_2}&=\nu \frac{n-2p}{n}x_2^{2p+1}+\left(\frac{1}{n}\left((n-2p)-\frac{3(n-2p-1)\gamma_\mathrm{pf}}{2}\right)x_2 z_2+(1+(n-2p)x_2^{2n})\right)z_2^{n-2p-1},\\
	\frac{dz_2}{d\bar{\tau}_2}&=\frac{2p}{n}\left(3\left(1-\frac{(2p+1)\gamma_\mathrm{pf}}{2}\right)z_2^{n-2p}+\nu x_2^{2p}\right)z_2.
	\end{align}
\end{subequations}
We now focus on the invariant subset $\{z_2=0\}$ on $\{u_2=0\}$, which results in
\begin{equation}\label{EQPosY}
\frac{dx_2}{d\bar{\tau}_2}=\nu \frac{n-2p}{n}x_2^{2p+1},
\end{equation}
and has only one fixed point at the origin
\begin{equation}
\mathrm{R}^{+}\quad:\quad x_2=0,\quad z_2=0,\quad u_2=0,
\end{equation}
whose linearised system has all eigenvalues zero. The zero eigenvalue in the $u_2$ direction is due to the line of fixed points parameterized by constant values of $u_2=u_0>0$, and which corresponds to the half of the line of fixed points $\mathrm{L}_2$ with $\Sigma_{\phi 0}>0$, and denoted by $\mathrm{L}^{+}_2$, see Figure~\ref{fig:BUPa}. Nevertheless it follows from~\eqref{2p<n-1:EQPosX}, and~\eqref{EQPosY}, that the equator of the Poincar\'e sphere consists of heteroclinic orbits $\mathrm{R}_{+}\rightarrow\mathrm{Q}_{\pm}$, and $\mathrm{R}_{-}\rightarrow\mathrm{Q}_{\pm}$.

To blow-up $\mathrm{R}^{+}$ and, more generally, the complete line $\mathrm{L}^{+}_2\cup\mathrm{R}^+$, we perform a cylindrical blow-up, i.e. we transform each point on the line to a circle $\mathbb{S}^1=\{(v,w)\in\mathbb{R}^2:v^2+w^2=1\}$. The blow-up space is $\bar{\mathcal{B}}=\mathbb{S}^1\times[0,u_{20})\times[0,s_0)$ for some fixed $u_{20}>0$ and $s_0>0$. We further define the quasi-homogeneous blow-up map
\begin{equation*}
\bar{\Psi}:\bar{\mathcal{B}}\rightarrow\mathbb{R}^3,\quad \bar{\Psi}(v,w,u_2,s)=(s^{n-2p-1}v,s^{2p+1}w,u_2),
\end{equation*}
and choose four charts such that
\begin{subequations}
	\begin{align}
	\bar{\psi}_{1\pm}&=\left(\pm s_{1\pm}^{n-2p-1},s_{1\pm}^{2p+1}w_{1\pm},u_2\right),\\
	\bar{\psi}_{2\pm} &= \left(s^{n-2p-1}_{2\pm}v_{2\pm},\pm s^{2p+1}_{2\pm},u_2\right).
	\end{align}
\end{subequations}
In fact we only consider the semi-circle with $w\geq0$ since $z_2\geq0$, which means that we only need to consider the blow-up in the positive $w$-direction, i.e. the directional blow-up defined by $\bar{\psi}_{2+}$. 

We start with the $v$-direction $\{v=\pm 1\}$ which after cancelling a common factor $s_{1\pm}^{2p(n-2p-1)}$ (i.e by changing the time variable $d/d\bar{\tau}_2=s_{1\pm}^{2p(n-2p-1)}d/d\tilde{\tau}_{1\pm}$) leads to the regular dynamical system
\begin{subequations}
	\begin{align}
	\frac{dw_{1\pm}}{d\tilde{\tau}_{1\pm}}&=\frac{1}{2p-n+1}\left(1-s_{1\pm}^{2p+1}u^{2p}_2w_{1\pm}\right)w_{1\pm}-\frac{1}{(n-2p-1)}\left(n(2-q)+(1+2p)(1+q)\right)s_{1\pm}^{2n}w_{1\pm}^{n-2p+1}\nonumber\\
	&+\left(\frac{1+q}{n}s_{1\pm}^{2n}w_{1\pm}\mp \frac{1+2p}{n-2p-1}\left(1+(n-2p)s_{1\pm}^{2n(n-2p-1)}\right)\right)\left(1-s_{1\pm}^{2p}u^{2p}_2w_{1\pm}\right)w_{1\pm}^{n-2p}\nonumber\\
	&\mp 2p \left(1-s_{1\pm}^{2p+1}u^{2p}_2 w_{1\pm}\right)\left(s_{1\pm}u_2\right)^{2n(n-2p-1)}w^{n-2p}_{1\pm},\nonumber \\
	\frac{ds_{1\pm}}{d\tilde{\tau}_{1\pm}}&=\frac{1}{n(n-2p-1)}\left(\frac{1}{n}\left((n-2p)(2-q)+1+q\right)s_{1\pm}^{n}w_{1\pm}^{n-2p}+(n-2p)\nu \left(1-s_{1\pm}^{2p+1}w_{1\pm}u_2\right)^{n-2p}\right)s_{1\pm}\nonumber\\
	&\pm \frac{1}{n-2p-1}\left(1+(n-2p)s_{1\pm}^{2n(n-2p-1)}\right)\left(1-s_{1\pm}w_{1\pm}u^{2p}_2\right)w^{n-2p-1}_{1\pm}s_{1\pm}, \nonumber\\
	\frac{du_2}{d\tilde{\tau}_{1\pm}}&=\left(-(2-q)s_{1\pm}^n w_{1\pm}^{n-2p}-\frac{\nu}{n}(1-s_{1\pm}^{2p+1}w_{1\pm}u^{2p}_2)^{n-2p}\mp \left(1-s_{1\pm}^{2p+1}w_{1\pm}u^{2p}_2\right)\left(s_{1\pm}u_2\right)^{2n(n-2p-1)}w_{1\pm}^{n-2p-1}\right)u_2,\nonumber
	\end{align}
\end{subequations}
where
\begin{equation}
q=-1+\frac{3\gamma_\mathrm{pf}}{2}\left(1+\frac{2-\gamma_\mathrm{pf}}{\gamma_\mathrm{pf}}u^{2n}_2-u^{2n(n-2p)}_2s_{1\pm}^{2n(n-2p-1)}\right).
\end{equation}
The above system has the fixed points
\begin{equation}
\mathrm{T}^{+}_{\pm}:\quad w_{1\pm}=0,\quad s_{1\pm}=0,\quad u_2=0,
\end{equation}
whose linearised system has eigenvalues $-\frac{\nu}{n-2p-1}$, $-\frac{\nu}{n}$, and $\frac{(n-2p)}{n(n-2p-1)}\nu$ with eigenvectors the canonical basis of $\mathbb{R}^3$, and the fixed points
\begin{equation}
\mathrm{Q}^{+}_{\pm}:\quad w_{1\pm}=\mp \left(\frac{\nu}{2p+1}\right)^{\frac{1}{n-2p-1}} ,\quad s_{1\pm}=0,\quad u_2=0,
\end{equation}
where only $\mathrm{Q}^{+}_{-}$ exists in the region $w_{1\pm}\geq0$. The eigenvalues of the linearised system around $\mathrm{Q}^{+}_{-}$ are $\frac{2p \nu}{n(2p+1)}$, $\nu$, and $-\frac{\nu}{n}$ with associated eigenvectors the canonical basis of $\mathbb{R}^3$.

In the positive $w$-direction $\{w=1\}$, and after cancelling a common factor $s_{2+}^{2p(n-2p-1)}$ (i.e. by changing the time variable $d/d\bar{\tau}_2=s_{2+}^{2p(n-2p-1)}d/d\tilde{\tau}_{2+}$), we get the regular dynamical system
\begin{subequations}
	\begin{align}
	\frac{dv_{2+}}{d\tilde{\tau}_{2+}}&=\frac{n(2-q)+(2p+1)(1+q)}{n(2p+1)}s^n_{2+}v_{2+}+\frac{\nu}{1+2p}\left(1-s_{2+}u^{2p}_2\right)^{n-2p}v_{2+}^{2p+1}\nonumber \\
&+\left(1+(n-2p)s_{2+}^{2n(n-2p-1)}v_{2+}^{2n}\right)\left(1-s_{2+}^{2p+1}u^{2p}_2\right)\nonumber \\
&-\frac{n-2p-1}{2p+1}\left(\frac{1+q}{n}s_{2+}^{2n}+\left(s_{2+}u_2\right)^{2n(n-2p-1)}2pv_{2+}^{2n-1}\right)\left(1-s_{2+}^{2p+1}u^{2p}_2\right)v_{2+}, \nonumber \\
	\frac{ds_{2+}}{d\tilde{\tau}_{2+}}&=\frac{1}{2p+1}\left(\frac{1+q}{n}s_{2+}^{n+1}+2p\left(s_{2+}u_2\right)^{2n(n-2p-1)}v_{2+}^{2n-1}\right)\left(1-s_{2+}^{2p+1}u^{2p}_2\right)s_{2+}\nonumber \\
	&+\frac{1}{n(2p+1)}\left(2p(2-q)s_{2+}^{n}+\nu\left(1-s_{2+}^{2p+1}u^{2p}_2\right)^{n-2p}v_{2+}^{2p}\right),\nonumber \\
	\frac{du_2}{d\tilde{\tau}_{2+}}&= \left(-(2-q)s_{2+}^n-\frac{\nu}{n}(1-s_{2+}^{2p+1}u^{2p}_2)^{n-2p}v_{2+}^{2p}- \left(1-s_{2+}^{2p+1}u^{2p}_2\right)v_{2+}^{2n-1}\left(s_{2+}u_2\right)^{2n(n-2p-1)}\right)u_2,\nonumber
	\end{align}
\end{subequations}
where
\begin{equation}
q=-1+\frac{3\gamma_\mathrm{pf}}{2}\left(1+\frac{2-\gamma_\mathrm{pf}}{\gamma_\mathrm{pf}}u^{2n}_2-s_{2+}^{2n(n-2p-1)}u^{2n(n-2p)}_2v_{2+}^{2n}\right).
\end{equation}
This system only has the fixed point $\mathrm{Q}^{+}_{-}$ which is located at $v_{2+}=-\left(\frac{2p+1}{\nu}\right)^{\frac{1}{2p+1}}$, $s_{2+}=0$, and $u_2=0$. In turn, the linearised system around $\mathrm{Q}^{+}_{-}$ has eigenvalues $\left(\frac{2p+1}{\nu}\right)^{\frac{2p}{2p+1}}$, $\frac{2p \nu}{n(2p+1)}\left(\frac{2p+1}{\nu}\right)^{\frac{2p}{2p+1}}$, and $-\frac{\nu}{n}\left(\frac{2p+1}{\nu}\right)^{\frac{2p}{2p+1}}$ with associated eigenvectors the canonical basis of $\mathbb{R}^3$. The blow-up of $\mathrm{L}^+_2\cup\mathrm{R}^{+}$ is shown in Figure~\ref{fig:BUP_PL}. 

The blow-up of the equivalent non-hyperbolic set $\mathrm{L}^{-}_{2}\cup\mathrm{R}^{-}$ follows by symmetry considerations from which we deduce the existence of the equivalent fixed points $\mathrm{T}^{-}_{\pm}$, and $\mathrm{Q}^{-}_{+}$.
\begin{figure}[ht!]
	\begin{center}
		\includegraphics[width=0.75\textwidth]{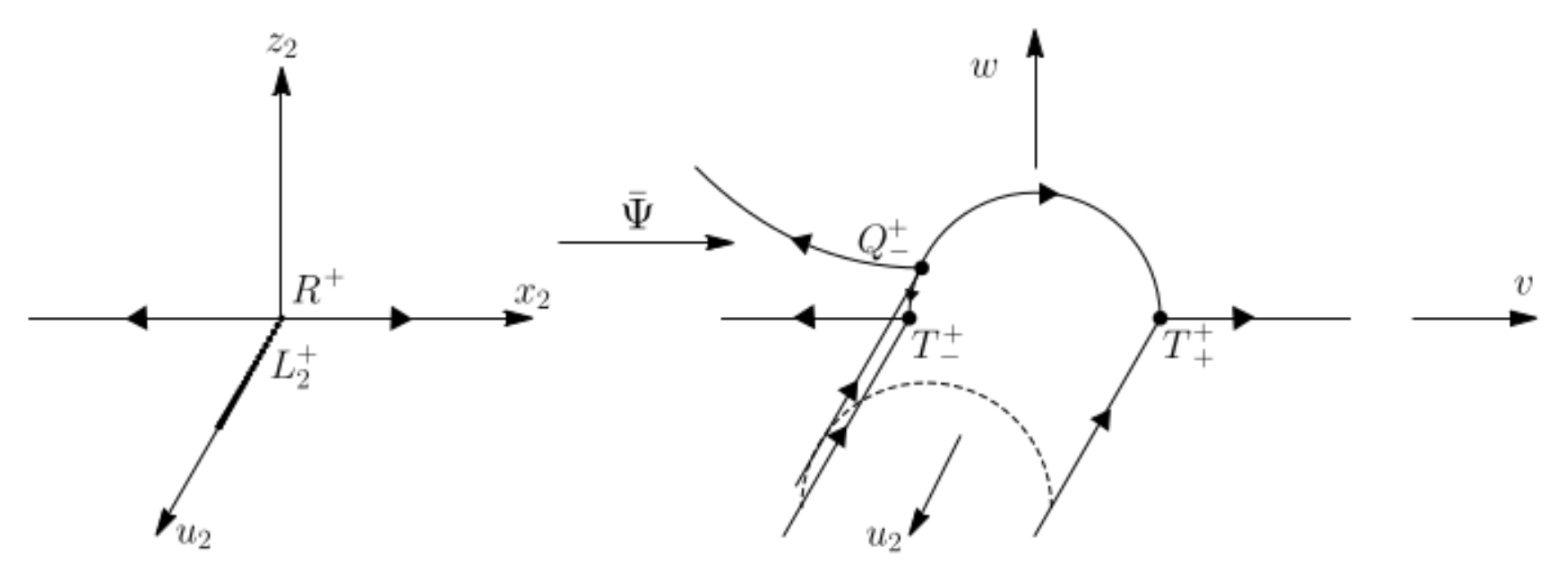}
	\end{center}
	\vspace{-0.5cm}
	\caption{Blow-up of the non-hyperbolic line of fixed points $\mathrm{L}^{+}_2\cup\mathrm{R}^{+}$.}
	\label{fig:BUP_PL}
\end{figure}
Moreover, since all fixed points are located on the invariant subset $\{u_2=0\}$, we have the following result concerning the line $\mathrm{L}_2$ on the cylinder state-space $\mathbf{S}$:
\begin{lemma}\label{L2BU}
	No interior orbit in $\mathbf{S}$ converges to the points on the set $\mathrm{L}_2\setminus\mathrm{FL}_1$.
\end{lemma}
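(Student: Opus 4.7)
The strategy is to use the cylindrical blow-up of $\mathrm{L}^{+}_2\cup\mathrm{R}^{+}$ constructed above, together with the $(X,\Sigma_\phi)\to -(X,\Sigma_\phi)$ symmetry of~\eqref{globalng2p} which gives the analogous blow-up of $\mathrm{L}^{-}_2\cup\mathrm{R}^{-}$. The plan is to argue that no point of $\mathrm{L}_2\setminus\mathrm{FL}_1$ can lie in the $\omega$-limit set of any interior orbit, by combining the locations of the fixed points of the desingularised flow with a strict monotonicity of $u_2$ on the exceptional divisor.

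First I would observe that in every blow-up chart ($\bar\psi_{1\pm}$ and $\bar\psi_{2+}$), each of the fixed points found above, namely $\mathrm{T}^{+}_{\pm}$ and $\mathrm{Q}^{+}_{-}$, lives on $\{u_2=0\}$. Meanwhile, points of $\mathrm{L}^{+}_2\setminus\mathrm{FL}_1$ lift to the sets $\{s_{1\pm}=0,\, u_2>0\}$ (respectively $\{s_{2+}=0,\,u_2>0\}$), which are invariant cylinders containing no fixed points of the desingularised flow.

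Next I would evaluate $du_2/d\tilde\tau_{1\pm}$ on the invariant subset $\{s_{1\pm}=0\}$. Inserting $s_{1\pm}=0$ in the formula for $du_2/d\tilde\tau_{1\pm}$ displayed above and using $p<(n-1)/2$, so that $n-2p-1>0$ kills both the first and third terms, leaves
\begin{equation*}
\left.\frac{du_2}{d\tilde\tau_{1\pm}}\right|_{s_{1\pm}=0}=-\frac{\nu}{n}\,u_2<0\qquad\text{for } u_2>0.
\end{equation*}
An analogous calculation in chart $\bar\psi_{2+}$ yields $du_2/d\tilde\tau_{2+}|_{s_{2+}=0}=-(\nu/n)\,v_{2+}^{2p}\,u_2\leq 0$, with equality only on the fibre $\{v_{2+}=0\}$; but this fibre is regularly covered by $\bar\psi_{1+}$ through the transition map $v_{2+}=w_{1+}^{-(n-2p-1)/(2p+1)}$, where the previous strict estimate applies. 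So $u_2$ is strictly decreasing along the flow on $\{s=0,\,u_2>0\}$ in a neighbourhood of the entire lifted set.

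Finally, assume for contradiction that some interior orbit $\gamma\subset\mathbf{S}$ has $\omega(\gamma)$ containing a point of $\mathrm{L}_2\setminus\mathrm{FL}_1$. Lifting $\gamma$ to the blow-up space, the set $\omega(\gamma)$ is non-empty, compact, connected and flow-invariant, and by the hypothesis it has non-empty intersection with $\{s=0,\,u_2>0\}$. Invariance of $\omega(\gamma)$ together with the strict monotonicity of $u_2$ on this subset forces $\omega(\gamma)\cap\{s=0,u_2>0\}$ to consist of fixed points; as none exist with $u_2>0$, this is a contradiction. The symmetry argument then covers $\mathrm{L}^{-}_2\setminus\mathrm{FL}_1$. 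The main technical nuisance is the degenerate fibre $\{v_{2+}=0\}$ in chart $\bar\psi_{2+}$, but the overlap with $\bar\psi_{1+}$ removes it; beyond that, the proof is essentially a one-line monotonicity calculation on the exceptional divisor, which is why the statement is formulated as a short lemma.
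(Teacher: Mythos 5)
Your overall strategy coincides with the paper's: the lemma is recorded there as an immediate consequence of the cylindrical blow-up of $\mathrm{L}^{+}_2\cup\mathrm{R}^{+}$, the whole content being that every fixed point of the desingularised flow ($\mathrm{T}^{+}_{\pm}$ and $\mathrm{Q}^{+}_{-}$) sits on $\{u_2=0\}$, so that the fibres over $\mathrm{L}_2\setminus\mathrm{FL}_1$ contain no equilibria. Your computation of $du_2/d\tilde\tau_{1\pm}$ and $du_2/d\tilde\tau_{2+}$ on the exceptional divisor is correct and is a genuine improvement in explicitness over the paper's one-line justification. However, two steps as written do not hold up. First, the degenerate fibre $\{v_{2+}=0\}$ is \emph{not} covered by the chart $\bar{\psi}_{1+}$: under the transition map $v_{2+}=w_{1+}^{-(n-2p-1)/(2p+1)}$ the locus $v_{2+}=0$ corresponds to $w_{1+}\rightarrow+\infty$, i.e.\ to the point at infinity of the $v$-directional chart, so the "overlap removes it" claim is false and the degeneracy of $du_2/d\tilde\tau_{2+}$ at $v_{2+}=0$ must be dealt with inside the $w$-chart itself. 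Second, the deduction that "invariance of $\omega(\gamma)$ together with strict monotonicity of $u_2$ on $\{s=0,u_2>0\}$ forces $\omega(\gamma)\cap\{s=0,u_2>0\}$ to consist of fixed points" is not a valid principle here: $u_2$ is only monotone on the boundary set, not along the interior orbit $\gamma$ itself, and an $\omega$-limit set may perfectly well contain non-equilibrium orbits along which a boundary Lyapunov function strictly decreases (heteroclinic connections are the standard example). LaSalle-type conclusions require monotonicity along $\gamma$, which you do not have.

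Both defects are repaired by the same observation, which is the missing ingredient: at $v_{2+}=0$, $s_{2+}=0$ one has $dv_{2+}/d\tilde\tau_{2+}=1\neq 0$ for every $u_2$, so the set $\{v_{2+}=0\}\cap\{s_{2+}=0\}$ contains no complete orbit. Now use the actual hypothesis of the lemma, which is convergence to a single point $P\in\mathrm{L}_2\setminus\mathrm{FL}_1$ rather than mere accumulation: this means the $\omega$-limit set of the lifted orbit is a nonempty, compact, invariant subset of the single circle $\{s=0,\ u_2=u_0\}$ with $u_0>0$. Any orbit contained in that circle must have $du_2/d\tilde\tau\equiv 0$, hence $v_{2+}\equiv 0$ by your formula $du_2/d\tilde\tau_{2+}=-(\nu/n)v_{2+}^{2p}u_2$, which is impossible since $dv_{2+}/d\tilde\tau_{2+}=1$ there. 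This yields the contradiction cleanly, without appealing to the nonexistent chart overlap or to an unjustified monotonicity principle. With that substitution your proof is complete and in fact supplies the rigour the paper leaves implicit.
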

%%%%%%%%%%%%%%%%%%%%%%%%%%%%%%%%%%%
\subsubsection*{Global phase-space on the Poincar\'e-Lyapunov disk}
The previous results can be collected in a global phase-space by employing the Poincar\'e-Lyapunov compactification. This compactification has the advantage that all fixed points are hyperbolic or semi-hyperbolic and, in particular, on the Poincar\'e-Lyapunov cylinder the line $\mathrm{L}_2$ is absent. 

When $p<\frac{1}{2}(n-1)$, we obtain from~\eqref{PL_Cylinder} that the induced flow on the $\{\bar{u}=0\}$ invariant subset is given by
\begin{subequations}
	\begin{align*}
	\frac{dr}{d\xi}&=2p(1-r)r\left(-\frac{\nu r}{2p+1}\left(1-\cos^{2(2p+1)}\theta\right)\cos^{2p}\theta+\frac{12p (1-r)}{(2p+1)^2(1-K)}\left(\frac{K}{1-K}+\cos^{2(2p+1)}\theta\right)\right)\nonumber\\&+2p(1-r)r^2 F(\theta)\sin \theta, \\
	\frac{d\theta}{d\xi} &= -F(\theta)\left(\frac{3}{2p+1}F(\theta)\sin 2\theta (1-r)+\left(1+\frac{\nu}{2(2p+1)}\sin2 \theta F(\theta)\cos^{2p}\theta-\cos^{2(2p+1)}\theta\right)r \right).
	\end{align*}
\end{subequations}
At $\{r=0\}$ lies the fixed point $\mathrm{M}$ which is the origin of the $(x_3,y_3)$ plane. The fixed point $\mathrm{M}$ is a hyperbolic saddle for $K<0$, a center-saddle when $K=0$ and a hyperbolic source when $K>0$. When $K>0$ we have two additional saddle fixed points $\mathrm{S}^\pm$ that are located at
\begin{subequations}
	\begin{align*}
	\frac{r_{\mathrm{S}^\pm}}{1-r_{\mathrm{S}^\pm}}&=\left(\frac{9K}{\nu(1+2p)(1-K)^2}\right)^{\frac{1}{2p+1}}\left[1+\left(\frac{(2p+1)K}{\nu}\right)^2\right]^{\frac{p}{2p+1}}, \\
	\theta_{\mathrm{S}^\pm} &= \arccos\left(\pm \left(1+\left(\frac{(2p+1)K}{\nu}\right)^{-2}\right)^{-\frac{1}{2(2p+1)}}\right).
	\end{align*}
\end{subequations}
The points at infinity in the $(x_3,y_3)$ plane are now located in the $\{r=1\}$ invariant set. The hyperbolic sinks $\mathrm{P}^{\pm}$ and the hyperbolic sources $\mathrm{Q}^{+}_{-}$ and $\mathrm{Q}^{-}_{+}$ are given by
\begin{subequations}
	\begin{align*}
	\theta_{\mathrm{P}^+} &=0 ,\qquad \theta_{\mathrm{P}^-}= \pi, \\
	\theta_{\mathrm{Q}^{+}_{-}} &=\arccos\left(-\left(\frac{(2p+1)^2}{(2p+1)^2+4\nu^2}\right)^{\frac{1}{2(2p+1)}} \right) ,\qquad 	\theta_{\mathrm{Q}^{-}_{+}} =\arccos\left(\left(\frac{(2p+1)^2}{(2p+1)^2+4\nu^2}\right)^{\frac{1}{2(2p+1)}} \right).
	\end{align*}
\end{subequations}
\begin{figure}[ht!]
	\begin{center}
		\subfigure[Case $0<\gamma_\mathrm{pf} < \frac{4p}{2p+1}$, i.e. $K\in(-\infty,0)$.]{\label{}
			\includegraphics[width=0.30\textwidth]{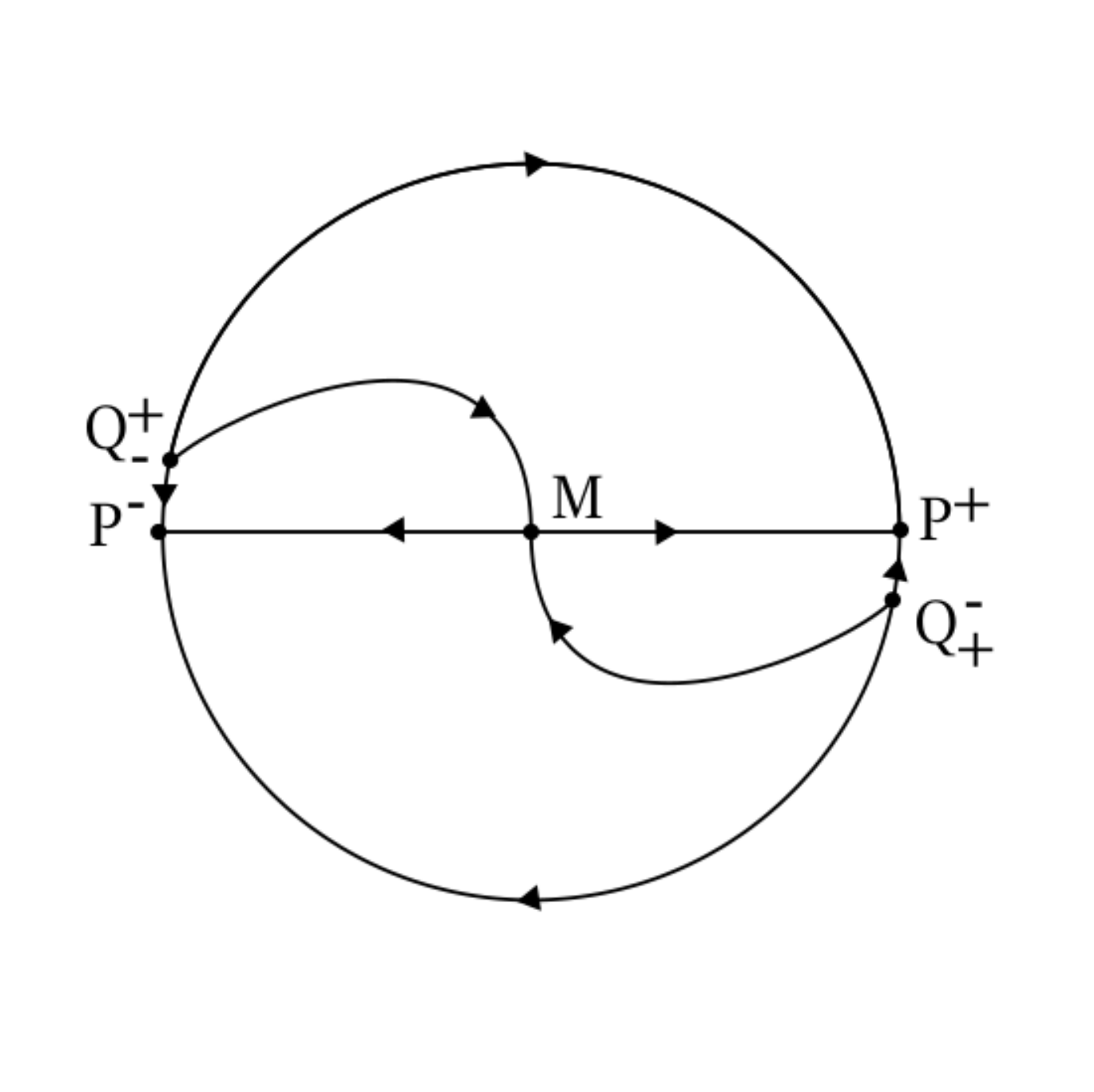}}
				\subfigure[Case $\gamma_\mathrm{pf}=\frac{4p}{2p+1}$, i.e.  $K=0$.]{\label{}
			\includegraphics[width=0.30\textwidth]{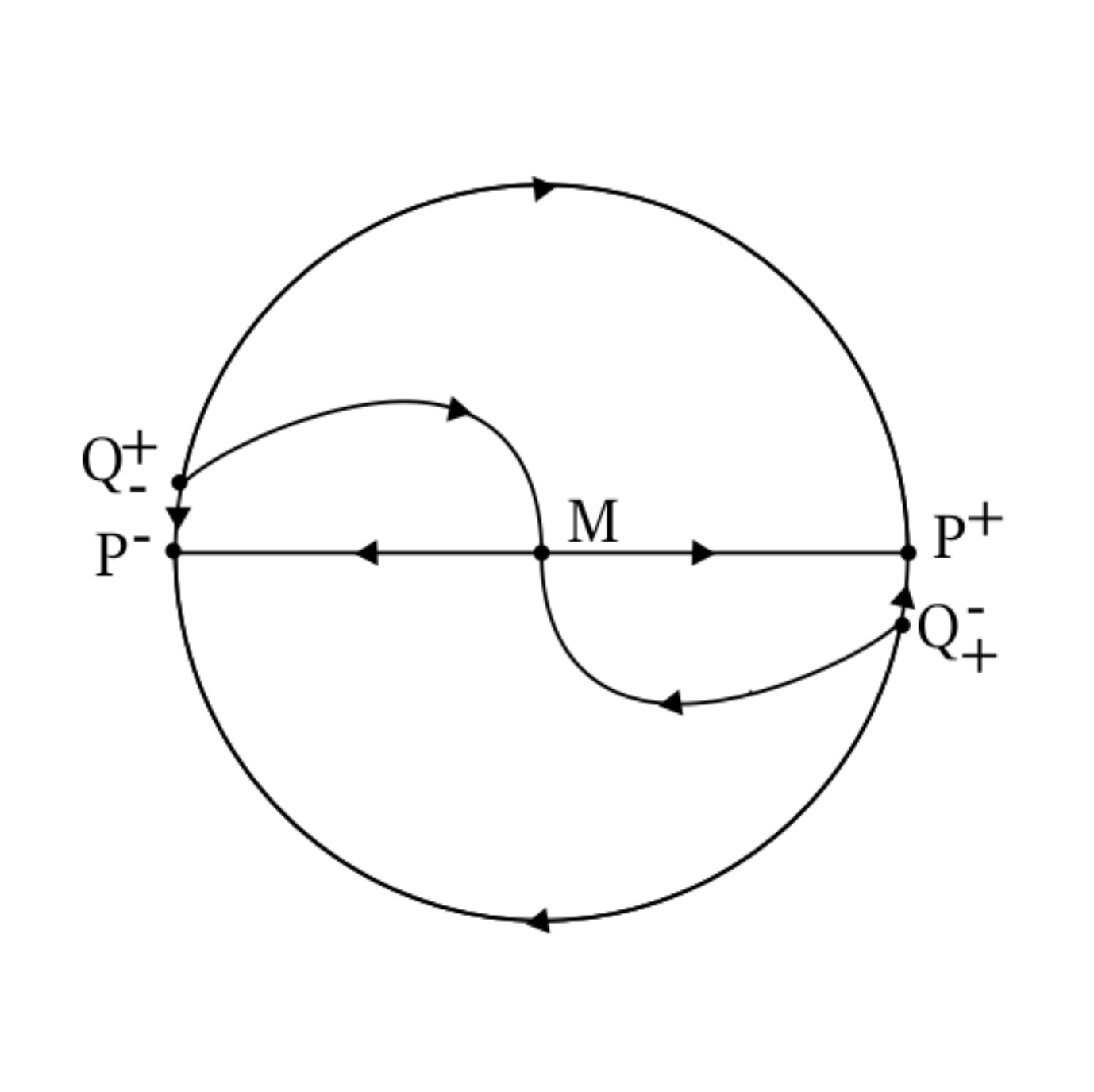}}
		\subfigure[Case  $\frac{4p}{2p+1}<\gamma_\mathrm{pf}<2$, i.e. $K\in(0,\frac{1}{1+2p})$.]{\label{}
			\includegraphics[width=0.30\textwidth]{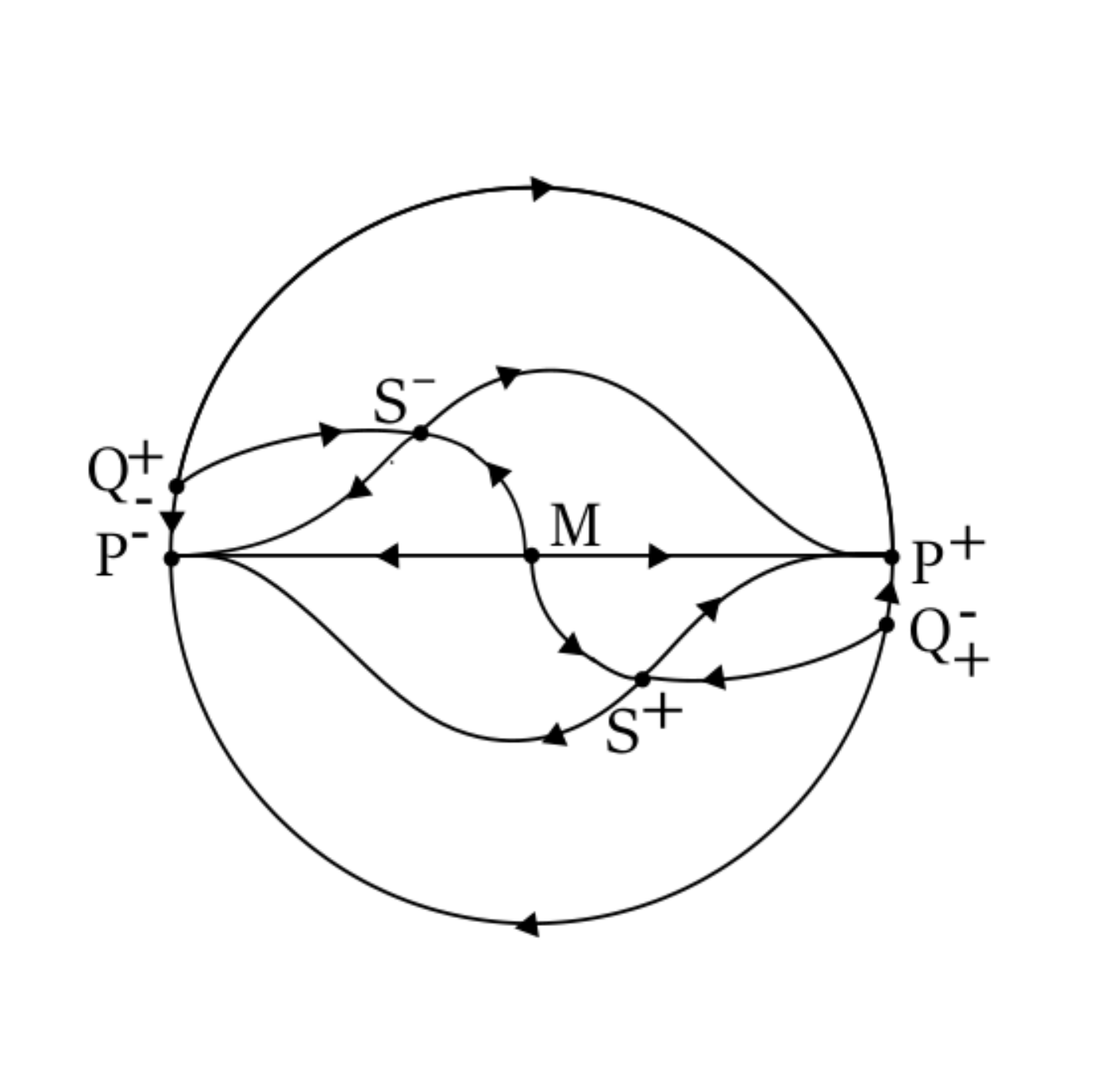}}
	\end{center}
	\vspace{-0.5cm}
	\caption{Poincar\'e-Lyapunov disk when $p<\frac{1}{2}(n-1)$ with $p>0$.}
	\label{fig:PP0}
\end{figure}
\begin{theorem}
	Let $p<\frac{1}{2}(n-1)$ with $p>0$. Then for all $\nu>0$, the Poincar\'e-Lyapunov disk  consists of heteroclinic orbits connecting the fixed points $\mathrm{M}$, $\mathrm{P}^\pm$, $\mathrm{Q}^{+}_{-}$, $\mathrm{Q}^{-}_{+}$, and $\mathrm{S}^\pm$ when they exist, with the separatrix skeleton as depicted in Figure~\ref{fig:PP0}.
\end{theorem}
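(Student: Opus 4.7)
The plan is to apply the Poincar\'e--Bendixson theorem on the compact invariant $2$-disk $\{\bar{u}=0\}$ of the Poincar\'e--Lyapunov cylinder~\eqref{PL_Cylinder}. First I would list all fixed points of the induced planar flow on this disk: the interior point $\mathrm{M}$ at $r=0$, the pair of interior saddles $\mathrm{S}^{\pm}$ present only when $K>0$, and the boundary fixed points $\mathrm{P}^{\pm}$ (sinks) and $\mathrm{Q}^{+}_{-},\mathrm{Q}^{-}_{+}$ (sources) on the equator $\{r=1\}$. Their hyperbolic or semi-hyperbolic structure has already been established in the chart analyses~\eqref{PosZ}, \eqref{PosX} and~\eqref{BU_PosYDir} together with Lemma~\ref{L2BU}, and the symmetry $(x_{3},y_{3})\mapsto -(x_{3},y_{3})$ of the planar system halves the work.

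Next I would describe the boundary flow. On $\{r=1\}$ the expressions for $dr/d\xi$ and $d\theta/d\xi$ reduce to a $1$-dimensional system on the circle whose zeros are exactly $\mathrm{P}^{\pm}$ and $\mathrm{Q}^{\pm}_{\mp}$; the sign of $d\theta/d\xi$ between consecutive zeros then yields the four heteroclinic arcs $\mathrm{Q}^{+}_{-}\to\mathrm{P}^{\pm}$ and $\mathrm{Q}^{-}_{+}\to\mathrm{P}^{\pm}$ comprising the equator. I would also verify that trajectories along the two symmetry axes $\{\theta=0,\pi\}$ and $\{\theta=\pi/2,3\pi/2\}$ are invariant (this follows from the Li\'enard structure~\eqref{LS}), which produces four invariant arcs joining $\mathrm{M}$ to $\mathrm{P}^{\pm}$ or passing through $\mathrm{S}^{\pm}$ when they exist, and hence organises the global picture in each quadrant.

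The core of the proof is to rule out periodic orbits and polycycles in the interior. For $K<0$ the unique interior fixed point $\mathrm{M}$ is a hyperbolic saddle of index $-1$, so the Poincar\'e index theorem forbids any periodic orbit in the interior. For $K=0$ the reduction of $\mathrm{M}$ onto its $1$-dimensional center manifold gives $d\bar{y}_{3}/d\bar{\tau}_{3}=-\nu\bigl(\tfrac{2p+1}{3}\bigr)^{2p}\bar{y}_{3}^{2p+1}+\mathcal{O}(\bar{y}_{3}^{2p+3})$, a strictly attracting flow that excludes closed orbits in a neighbourhood of $\mathrm{M}$, after which the argument closes as in the $K<0$ case. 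For $K>0$ the index theorem leaves only one possibility: a periodic orbit encircling the source $\mathrm{M}$ without enclosing the saddles $\mathrm{S}^{\pm}$. I would exclude it by using the equivalent Li\'enard plane representation~\eqref{LIenardPlane} together with a Cherkas--Dulac-type divergence argument applied in the region bounded by the stable separatrices of $\mathrm{S}^{\pm}$; the symmetry $(\bar{x},\bar{y})\mapsto-(\bar{x},\bar{y})$ forces any candidate limit cycle to be symmetric, and the monotonicity of $\nu \bar{x}^{2p}-\text{const}$ in $|\bar{x}|$ makes the relevant divergence sign-definite inside the separatrix-bounded region. Ruling out polycycles is then straightforward: the local eigenvector directions at $\mathrm{S}^{\pm}$ together with the equator flow show that each unstable branch of $\mathrm{S}^{\pm}$ must terminate at $\mathrm{P}^{\pm}$ or at $\mathrm{M}$ (impossible, since $\mathrm{M}$ is a source), while each stable branch must emanate from a source, leaving no closed saddle connections.

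With no periodic orbits and no polycycles, Poincar\'e--Bendixson forces the $\omega$- and $\alpha$-limits of every interior orbit to be fixed points, and the local eigen-data at each fixed point match these limits uniquely into the separatrix skeleton of Figure~\ref{fig:PP0}. The main technical obstacle is the $K>0$ case: since both $f$ and $g$ in the Li\'enard form change sign, no classical non-existence theorem applies off the shelf and the Dulac function must be tailored to the region delimited by the $\mathrm{S}^{\pm}$ separatrices. I expect that the explicit algebraic structure of $f$ and $g$, in which the nonlinearities share a common monomial $\nu\bar{x}^{2p}$, will make such a Dulac function (for instance of the form $B=|\bar{x}|^{\alpha}$ for a suitable $\alpha$) computable in closed form.
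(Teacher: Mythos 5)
Your overall strategy (local analysis of the fixed points, index theory, Poincar\'e--Bendixson) is the same as the paper's, but you miss the single structural fact that makes the argument close in two lines, and as a result you leave the hardest case open. From~\eqref{PosZSMA} the set $\{y_3=0\}$ is invariant (the $y_3$-equation has an overall factor $y_3$), it contains $\mathrm{M}$, and it separates the disk into the two invariant half-disks $\{y_3>0\}$ and $\{y_3<0\}$, each containing exactly one of the saddles $\mathrm{S}^{\pm}$ when $K>0$ and no interior fixed point when $K\leq 0$. Consequently no periodic orbit can encircle $\mathrm{M}$ (it would have to cross the invariant line), and any periodic orbit confined to a half-disk would have to enclose a set of fixed points of total index $+1$, while the only candidate there is a single saddle of index $-1$. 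This is the paper's proof. You actually note the invariance of $\{\theta=0,\pi\}$, but you never use it: in the $K>0$ case you conclude that ``the index theorem leaves only one possibility: a periodic orbit encircling the source $\mathrm{M}$'' and then propose to kill it with a Cherkas--Dulac argument whose Dulac function you do not construct (``I expect that \dots will make such a Dulac function \dots computable''). That is a genuine gap --- the central non-existence claim in the $K>0$ case is left as an expectation --- and it is also unnecessary, because the scenario you are trying to exclude is already impossible by the invariance of $\{y_3=0\}$.

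A secondary error: you assert that $\{\theta=\pi/2,3\pi/2\}$, i.e.\ $\{x_3=0\}$, is also invariant ``by the Li\'enard structure''. It is not: on $x_3=0$ one has $dx_3/d\bar{\tau}_3=y_3\neq 0$ away from the origin. So the quadrant decomposition you build on it does not exist; only the half-disk decomposition by $\{y_3=0\}$ is available. Your treatment of $K=0$ is acceptable once you observe that the center-manifold flow $d\bar{y}_3/d\bar{\tau}_3=-\nu\bigl(\tfrac{2p+1}{3}\bigr)^{2p}\bar{y}_{3}^{2p+1}+\dots$ with $2p+1$ odd makes $\mathrm{M}$ a topological saddle of index $-1$, so the index argument still applies; and your discussion of why closed saddle connections cannot occur is in fact more explicit than the paper's one-line assertion.
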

\begin{proof}
	First notice that $\{y_3=0\}$ is an invariant subset consisting of heteroclinic orbits $\mathrm{M}\rightarrow \mathrm{P}^\mathrm{\pm}$ which splits the phase-space into two invariant sets $\{y_3>0\}$ and $\{y_3<0\}$. On each of these invariant sets there are no fixed points when $K\leq0$, and if $K>0$ there is a single fixed point which is a saddle. Therefore by the index theorem there are no periodic orbits in each of these regions. Since closed saddle connections cannot exist either, then by the Poincar\'e-Bendixson theorem the $\omega$ and $\alpha$-limit sets of all orbits in the Poincar\'e-Lyapunov disk are the fixed points $\mathrm{M}$, $\mathrm{P}^\pm$, $\mathrm{Q}^{+}_{-}$, $\mathrm{Q}^{-}_{+}$, and $\mathrm{S}^\pm$ when they exist and the phase-space consists of heteroclinic orbits connecting these fixed points. In particular, from the local stability properties of the fixed points, it is straightforward to show that when $K\leq0$ there are two separatrices $\mathrm{Q}^{+}_{-}\rightarrow\mathrm{M}$ and $\mathrm{Q}^{-}_{+}\rightarrow\mathrm{M}$ which further split the regions $y_3>0$ and $y_3<0$ into two invariant subsets. The flow on these subsets is trivial. When $0<K<\frac{1}{1+2p}$ there are four separatrices $\mathrm{Q}^{+}_{-}\rightarrow \mathrm{S}^-$, $\mathrm{M}\rightarrow\mathrm{S}^-$ and $\mathrm{Q}^{-}_{+}\rightarrow \mathrm{S}^+$, $\mathrm{M}\rightarrow\mathrm{S}^+$ which further split the invariant regions $y_3>0$ and $y_3<0$ into four invariant subsets where the flow is also trivial.
\end{proof}

\begin{remark}\label{AsFL1}
	It is interesting to obtain the asymptotics for the orbits on the cylinder $\mathbf{S}$ towards $\mathrm{FL}_1$ when $p<(n-1)/2$. For example when $0<\gamma_\mathrm{pf}<\frac{2n}{n+1}$, there exists a 1-parameter family of orbits in $\mathbf{S}$ with the following asymptotics towards $\mathrm{FL}_1$ as $\tau\rightarrow+\infty$
	\begin{subequations}
		\begin{align*}
		X(\tau) &=  -\frac{2C}{3(2-\gamma_{\mathrm {pf}})}\left(1+\frac{3\gamma_{\mathrm{pf}}}{2n}(n-2p)\tau\right)^{-\frac{n+1}{(n-2p)\gamma_{\mathrm{pf}}}\left(\frac{2n}{n+1}-\gamma_{\mathrm{pf}}\right)}, \\
		\Sigma_\phi(\tau) &= C\left(1+\frac{3\gamma_{\mathrm{pf} }}{2n}(n-2p)\tau\right)^{-\frac{n\gamma_{\mathrm{pf}}}{n-2p}(2-\gamma_{\mathrm{pf}})}, \\
		T(\tau) &=1-\left(1+\frac{3\gamma_{\mathrm {pf}}}{2n}(n-2p)\tau\right)^{-\frac{1}{n-2p}},
		\end{align*}
	\end{subequations}
	with $C\in\mathbb{R}$ a constant, which is obtained via the linearised solution at $\mathrm{M}$ restricted to the 2-dimensional stable manifold when $K<0$. Similarly, one can obtain the asymptotics towards $\mathrm{FL}_1$ when $K=0$, by making use of the flow on the center manifold of $\mathrm{M}$, and when $0<K<\frac{1}{1+2p}$ via the linearised solution at $\mathrm{S}^{\pm}$ restricted to the 2-dimensional stable manifold. 
%	which yields as $\tau\rightarrow+\infty$,
%	%
%	\begin{subequations}
%		\begin{align}
%		X(\tau)&= \pm \left(\frac{3(2p+1)\gamma_\mathrm{pf}}{4p \nu}K\right)^{\frac{1}{2p}}\left(1+\frac{3\gamma_\mathrm{pf}}{2n}(n-2p)\tau\right)^{-\frac{1}{2p}}\\&+C\frac{\sqrt{1+8p(2p+1)K}}{3(2p+1)K}\left(1+\frac{3\gamma_\mathrm{pf}}{2n}(n-2p)\tau\right)^{-\frac{1}{2p}\left(\frac{n}{2(n-2p)}(2-\sqrt{1-8p(2p+1)K})\right)}\nonumber\\
%		\Sigma_\mathrm{\phi}(\tau)&=\mp \left(\frac{3\gamma_\mathrm{pf}}{4p}\right)\left(\frac{3(2p+1)\gamma_\mathrm{pf}}{4p \nu}K\right)^{\frac{1}{2p}}\left(1+\frac{3\gamma_\mathrm{pf}}{2n}(n-2p)\tau\right)^{-\frac{n}{2p(n-2p)}}\\&+C\left(1+\frac{3\gamma_\mathrm{pf}}{2n}(n-2p)\tau\right)^{-\frac{n}{p(n-2p)}\left(3-\sqrt{1+8p(2p+1)K}\right)}\\
%		T(\tau)&=1-\left(1+\frac{3\gamma_\mathrm{pf}}{2n}(n-2p)\tau\right)^{-\frac{1}{n-2p}}
%		\end{align}
%	\end{subequations}
%
\end{remark}

\subsubsection{Case $p=\frac{1}{2}(n-1)$}\label{BUP2nd}
Consider now the case $p=\frac{1}{2}(n-1)$ with $p>0$, $n>1$ with $n$ odd, i.e. $(p,n)=(1,3),(2,5),(3,7),etc.$.
\subsubsection*{Positive $z$-direction}
Setting $p=\frac{1}{2}(n-1)$ in~\eqref{PosZ} leads to
\begin{subequations}\label{X3EQ}
	\begin{align}
		\frac{dx_3}{d\bar{\tau}_3} &= \frac{1}{n-1}\left(1-\frac{u^{n-1}}{n}\right)(1+q)x_3+(1-u_3^{n-1})y_3,\\
\frac{dy_3}{d\bar{\tau}_3} &= \frac{1}{n-1}(1+q)(1-u^{n-1})y_3-\big(2-q+\nu(1-u_3^{n-1})x_3^{n-1} \big)y_3-n(1-u_3^{n-1})x_3^{2n-1},\\
\frac{du_3}{d\bar{\tau}_3}&= -\frac{1}{n(n-1)}(1+q)(1-u_3^{n-1})u_3	,
	\end{align}
\end{subequations}
where
\begin{equation}
q=-1+\frac{3\gamma_\mathrm{pf}}{2}+\frac{3\gamma_\mathrm{pf}}{2}\left(\frac{(2-\gamma_\mathrm{pf})}{\gamma_\mathrm{pf}}y_3^2-x_3^{2n}\right)u_3^{2n}.
\end{equation}
Since $n$ is odd, the system is symmetric under the transformation $(x_3,y_3,u_3)\rightarrow (-x_3,-y_3,u_3)$, and all fixed points lie on the invariant subset $\{u_3=0\}$ where the induced flow is given by
\begin{equation}\label{PosZEQ}
	\frac{dx_3}{d\bar{\tau}_3} = \frac{3}{n(1-K)}x_3+y_3,\qquad 
	\frac{dy_3}{d\bar{\tau}_3} = \frac{3}{n(1-K)} n K y_3-n x^{2n-1}_3-\nu y_3 x^{n-1}_3,
\end{equation}
and $K$ defined in \eqref{DefK} reduces to
\begin{equation}
K(\gamma_\mathrm{pf},n)=1-\frac{2(n-1)}{n\gamma_\mathrm{pf}},
\end{equation}
with $K\in\left(-\infty,\frac{1}{n}\right)$. Since $n>1$, it follows that $K<1$ for all $n$. 
\begin{remark}
	The system~\eqref{PosZEQ} can be transformed into the Li\'enard type system~\eqref{LS}, where now the functions $f(\bar{x})$ and $g(\bar{x})$ are given by
		\begin{subequations}\label{Lienard2}
		\begin{align}
		f(\bar{x}) &= \nu \bar{x}^{n-1}-\frac{3(1+n K)}{n(1-K)}, \\
		g(\bar{x}) &= n \bar{x}^{2n-1}-\frac{3}{n(1-K)}\nu \bar{x}^{n}+\left(\frac{3}{n(1-K)}\right)^2 n K \bar{x}.
		\end{align}
	\end{subequations}
\end{remark}
The fixed points of~\eqref{PosZEQ} are the real solutions to 
\begin{equation}
-x_3\left(n x^{2(n-1)}_3-\frac{3}{n(1-K)}\nu x^{n-1}_3+\left(\frac{3}{n(1-K)}\right)^2 nK\right)=0,\qquad y_3=-\frac{3}{n(1-K)}x_3.
\end{equation}
The first equation admits at most five real solutions. The origin of coordinates is always a fixed point
\begin{equation}
\mathrm{M}:\quad x_3=0,\quad y_3=0,
\end{equation}
and the linearised system at $\mathrm{M}$ has eigenvalues $\frac{3}{n^2(1-K)}$, $\frac{3}{n(1-K)}nK$ and $-\frac{3}{n^2(1-K)}$, with associated eigenvectors $(1,0,0)$, $(-\frac{n(1-K)}{3(1-nK)},1,0)$ and $(0,0,1)$.
On $\{u_3=0\}$, $\mathrm{M}$ is a hyperbolic fixed point if and only if $K\neq0$, being a saddle if $K<0$ (i.e. $0<\gamma_\mathrm{pf} < \frac{2(n-1)}{n}$) and a source if $K>0$  (i.e. $\frac{2(n-1)}{n}<\gamma_\mathrm{pf}<2$). 

When $K=0$ ($\gamma_\mathrm{pf}=\frac{2(n-1)}{n}$) there is a bifurcation leading to a center manifold associated with a zero eigenvalue. To analyse the center manifold we introduce the adapted variables $\bar{y}_3=y_{3}-\frac{n(1-K)}{3(1-nK)}x_{3}$. The center manifold can then be locally represented as the graph $h: E^c\rightarrow E^u$ i.e. $x_3=h(\bar{y}_3)$, which solves the differential equation
\begin{equation*}
	\aligned
	&	\frac{3h(\bar{y}_3)}{n(1-K)}+\frac{n(1-K)}{1-nK}\Bigg[\frac{1}{3}\left(h(\bar{y}_3)-\frac{n(1-K)}{3(1-nK)}\bar{y}_3\right)^{2n-1}+\left(1-\frac{2}{n}-\frac{2}{3}\left(h(\bar{y}_3)-\frac{n(1-K)}{3(1-nK)}\bar{y}_3\right)^{n-1}\right)\bar{y}_3\Bigg]= \\
	&=\left(\frac{3K \bar{y}_3}{n(1-K)}-\nu \bar{y}_3\left(h(\bar{y}_3)-\frac{n(1-K)}{3(1-nK)}\bar{y}_3\right)^{n-1}-n\left(h(\bar{y}_3)-\frac{n(1-K)}{3(1-nK)}\bar{y}_3\right)^{2n-1}\right)\frac{dh}{d\bar{y}_3}
	\endaligned
\end{equation*}
subject to the fixed point $h(0)=0$ and tangency $\frac{dh(0)}{d\bar{y}_3}=0$ conditions respectively. Approximating the solution by a formal truncated power series expansion and solving for the coefficients yields
\begin{equation}\label{CM_M1}
\frac{dy_3}{d\bar{\tau}_3}=-\nu\left(\frac{n}{3}\right)^{n-1}y^{n}_3\left(1-\frac{n}{\nu}\left(\frac{n}{3}\right)^{n}y^{n-1}_3\right),\quad y_3\rightarrow 0,
\end{equation}
and therefore $\mathrm{M}$ is a center-saddle in this case.  
The remaining four fixed points that may or not exist, depending on the parameters range, are 
\begin{equation}
\mathrm{S}^{\pm}_\pm\,:\quad	x_3=\pm \left(\frac{3A_\pm}{n(1-K)}\right)^{\frac{1}{n-1}},\qquad
y_3=\mp \left(\frac{3}{n(1-K)}\right)^{\frac{n}{n-1}}A^{\frac{1}{n-1}}_\pm ,
\end{equation}
with
\begin{equation}
A_\pm = \frac{\nu}{2n}\pm \sqrt{\left(\frac{\nu}{2n}\right)^2-K},
\end{equation}
where the subscripts $\pm$ stand for $A_\pm$ and the superscripts $\pm$ stand for the sign of $x_3$ at the fixed point. 

To study the existence of the fixed points $\mathrm{S}^{\pm}_{\pm}$ we note that since $n>1$, with $n$ odd, then $A_\pm$ must be positive constants. For all $K\leq0$ only $A_+$ is positive, while for $K>0$, $A_\pm$ are real and positive if and only if $\nu\geq2n\sqrt{K}$. The linearisation around the fixed points $\mathrm{S}^{\pm}_\pm$ yields the eigenvalues

\begin{subequations}
	\begin{align}
	&\lambda_{1_{\pm}}= \frac{3\left(1+nK-\nu A_\pm\right)}{2n(1-K)}+\sqrt{\frac{4}{9}(1-nK)^2+2\left(n-1+n(1-K)\right)A_\pm +(\nu^2-8n^2+4n)A_\pm},\nonumber\\
	&\lambda_{2_{\pm}}= \frac{3\left(1+nK-\nu A_\pm\right)}{2n(1-K)}-\sqrt{\frac{4}{9}(1-nK)^2+2\left(n-1+n(1-K)\right)A_\pm +(\nu^2-8n^2+4n)A_\pm},\nonumber \\
	&\lambda_3=-\frac{3}{n^2(1-K)}, \nonumber
	\end{align}
\end{subequations}
with associated eigenvectors
\begin{equation*}
v_1=(a_1,1,0),\quad v_2=(a_2,1,0),\quad v_3=(0,0,1)
\end{equation*}
and
\begin{equation*}
a_i=\frac{\lambda_i n^2 (1-K)^2-6n(n-1)(1-K)}  {9A_\pm \left( n(2n-1)A_\pm+\nu(n-1)\right)}.
\end{equation*}
To study the character and stability of the fixed points $\mathrm{S}^{\pm}_+$ it is helpful to introduce the quantities
\begin{equation}
\begin{split}
f_\pm(K,n)=&\sqrt{n}\Bigg[(2n-1)\left(1-K\right)\frac{ 1-nK\left(n\frac{10n-9}{2n-1}-\frac{(3n-1)^2}{n\left(1-K\right)}\right)}{1+4n(n-1)K} \nonumber \\
& \qquad\qquad\qquad\qquad\qquad\pm\sqrt{\frac{4n(n-1)}{1-K}}\left(1-K\right)^2 \frac{\Big|-1-nK\left(n+\frac{(n^2-1)}{n(1-K)}\right)\Big|}{1+4n(n-1)K}\Bigg]^{1/2},
\end{split}
\end{equation}
for $K\neq -\frac{1}{4n(n-1)}$. Moreover, and for future purposes, let
\begin{equation*}
	g_{\pm}^{\mp}(n)= \frac{3}{n}-4\mp4\sqrt{\frac{n-1}{n}}\pm 2 \sqrt{10+\frac{2\mp 12\sqrt{n(n+1)}(3-5n)-11n}{n^2}}
\end{equation*}
which satisfy $g^{\pm}_{+}(n)>\frac{1}{n}$ and $g^{\pm}_{-}(n)<-\frac{1}{n}$.

Just as in the stability properties of the fixed point $\mathrm{M}$, the sign of $K$ plays a prominent role in the qualitative properties of the phase-space. Hence we shall split the analysis into two subcases $K\leq0$ and $K>0$.
\begin{itemize}
	\item[a)] {\bf Subcase} $-\infty<K\leq0$, i.e. $0<\gamma_\mathrm{pf}\leq\frac{2(n-1)}{n}$: 
	
	Since for $K\leq0$, only $A_+$ is positive, then, and in addition to $\mathrm{M}$, only $\mathrm{S}^{\pm}_{+}$ exist.
	
    For $K<-\frac{1}{4n(n-1)}$, $f_{-}$ is real and satisfies $f_{-}(K,n)>\sqrt{n}(1+n K)$, while $f_+$ is imaginary. In this case the pair of eigenvalues $\lambda_{1+},\lambda_{2+}$ are real if $\nu\geq f_{-}(K,n)$ and complex otherwise. Moreover, if $0<\nu < (1+n K)\sqrt{n}$, then $\lambda_{1+},\lambda_{2+}$ have positive real part, if $\nu>(1+n K)\sqrt{n}$ then $\lambda_{1+},\lambda_{2+}$ have negative real part, and  if $\nu=(1+n K)\sqrt{n}>0$ both eigenvalues are purely imaginary. Therefore on the  $\{u_3=0\}$ invariant subset, $\mathrm{S}^{\pm}_{+}$ are unstable strong focus if $\nu <(1+n K)\sqrt{n}$, a center or weak focus if $\nu=(1+n K)\sqrt{n}$, a stable strong focus if $(1+n K)\sqrt{n}<\nu<f_{-}(K,n)$ and a stable node if $\nu \geq f_-(K,n)$.
    Notice that the cases $\nu \leq (1+n K)\sqrt{n}$ only exists when $-\frac{1}{n}<K<-\frac{1}{4n(n-1)}$, i.e. when $\frac{2(n-1)}{n+1}<\gamma_\mathrm{pf}\leq\frac{8(n-1)^2}{(2n-1)^2}$ holds.
	Finally we also note that $f_{-}(K,n)<2n$ for $K>g^{-}_{-}(n)$, $f_{-}(K,n)=2n$ for $K=g^{-}_{-}(n)$ and $f_{-}(K,n)>2n$ for $K<g^{-}_{-}(n)$.

	For $K=-\frac{1}{4n(n-1)}$. the pair of eigenvalues $\lambda_{1+}$, $\lambda_{2+}$ are real if $\nu\geq \frac{7+8n(2n-3)}{4(n-1)}\sqrt{\frac{n}{2(2n-1)}}$ and complex otherwise. Moreover if $0<\nu<\frac{\sqrt{n}(4n-5)}{4(n-1)}$, then $\lambda_{1+}$, $\lambda_{2+}$ have positive real part, if $\nu>\frac{\sqrt{n}(4n-5)}{4(n-1)}$ then $\lambda_{1+}$, $\lambda_{2+}$ have negative real part, and if $\nu=\frac{\sqrt{n}(4n-5)}{4(n-1)}$, then  both eigenvalues are purely imaginary. Therefore on the  $\{u_3=0\}$ invariant subset, $\mathrm{S}^{\pm}_{+}$ are unstable focus if $\nu<\frac{\sqrt{n}(4n-5)}{4(n-1)}$, a center or weak focus if $\nu=	\frac{\sqrt{n}(4n-5)}{4(n-1)}$, a stable strong focus if $\frac{\sqrt{n}(4n-5)}{4(n-1)}<\nu<\frac{7+8n(2n-3)}{4(n-1)}\sqrt{\frac{n}{2(2n-1)}}$ and a stable node if $\nu \geq \frac{7+8n(2n-3)}{4(n-1)}\sqrt{\frac{n}{2(2n-1)}}$.
    Moreover notice that $\frac{7+8n(2n-3)}{4(n-1)}\sqrt{\frac{n}{2(2n-1)}}<2n$.
	
	For $-\frac{1}{4n(n-1)}<K\leq0$, both $f_\pm$ are real with $f_{-}(K,n)<\sqrt{n}(1+n K)<f_{+}(K,n)$. In this case, the pair of eigenvalues $\lambda_{1+},\lambda_{2+}$ are real if $0<\nu\leq f_{-}(K,n)$ or $\nu \geq f_{+}(K,n)$. Moreover, if $0<\nu < (1+n K)\sqrt{n}$, then $\lambda_{1+},\lambda_{2+}$ have positive real part, if $\nu>(1+n K)\sqrt{n}$ then $\lambda_{1+},\lambda_{2+}$ have negative real part, and  if $\nu=(1+n K)\sqrt{n}>0$ both eigenvalues are purely imaginary. Therefore on the  $\{u_3=0\}$ invariant subset, $\mathrm{S}^{\pm}_{+}$ are unstable nodes if $0<\nu \leq f_{-}(K,n)$, unstable strong focus if $f_{-}(K,n)<\nu<\sqrt{n}(1+n K)$, a center or weak focus if $\nu=(1+n K)\sqrt{n}$, a stable strong focus if $(1+n K)\sqrt{n}<\nu<f_{+}(K,n)$, and a stable node if $\nu \geq f_{+}(K,n)$.
	Finally we also note that for this range of $K$ it holds that $f_{-}(K,n)<2n$, and $f_{+}(K,n)<2n$ for $g^{+}_{-}(n)<K$, $f_{+}(K,n)=2n$ for $K=g^{+}_{-}(n)$ and $f_{+}(K,n)>2n$ for $K<g^{+}_{-}(n)$
	.
	
	The cases when $K\in(-\frac{1}{n},0)$ with $\nu=(1+n K)\sqrt{n}>0$ consist of bifurcations where stability is changed, from an unstable focus to a stable focus leading to a center-focus problem, which will not be treated here, although as we will see ahead, numerical results indicate that they are centers. 
	
	As a final remark, note that when $K=0$ the expression for the fixed points reduce to $x_3=\pm \left(\frac{3\nu}{n^2}\right)^{\frac{1}{n-1}}$, $y_3=\mp \left(\frac{\nu}{n}\right)^{\frac{1}{n-1}}\left(\frac{3}{n}\right)^{\frac{n}{n-1}}$, and for $\nu<\sqrt{n}$, the solution of Equation~\eqref{CM_M1} describes globally the two heteroclinic orbits $\mathrm{S}^{\pm}_{+}\rightarrow \mathrm{M}$.  
\end{itemize}
\begin{itemize}
	\item[b)] {\bf Subcase} $0<K<1/n$, i.e. $\frac{2(n-1)}{n}<\gamma_\mathrm{pf}<2$: 
	
	In addition to $\mathrm{M}$, there are fixed points only for $\nu\geq 2n\sqrt{K}$, while no additional fixed points exists for $0<\nu<2n\sqrt{K}$. When $\nu > 2n\sqrt{K}$ there are four fixed points $\mathrm{S}^{\pm}_{\pm}$, which  merge into two fixed points $\mathrm{S}^{\pm}_{0}$ when $\nu=2n\sqrt{K}$,
	\begin{equation}
		\mathrm{S}^{\pm}_0\,:\quad	x_3=\pm \left(\frac{6K}{n(1-K)}\right)^{\frac{1}{n-1}},\qquad
		y_3=\mp \left(\frac{6K}{n(1-K)}\right)^{\frac{1}{n-1}}.
	\end{equation}
For $K\in\left(0,\frac{1}{n}\right)$ both $f_\pm$ are real, and the following inequalities hold $2n\sqrt{K}<f_+(K,n)<(1+nK)\sqrt{n}<f_-(K,n)<2n$. The eigenvalues $\lambda_{1+}$ and $\lambda_{2+}$ of the linearisation around $\mathrm{S}^{\pm}_+$ are real if $2n\sqrt{K}<\nu\leq f_{+}(K,n)$ or $\nu\geq f_{-}(K,n)$, and complex if $f_+(K,n)<\nu<f_{-}(n,K)$. Moreover if $\nu<(1+n K)\sqrt{n}$ then $\lambda_{1_{+}},\lambda_{2_{+}}$ have positive real part, and if $\nu>(1+nK)\sqrt{n}$ then $\lambda_{1_{+}},\lambda_{2_{+}}$ have negative real part. Finally if $\nu=(1+nK)\sqrt{n}$ both eigenvalues are purely imaginary. Therefore on the invariant subset $\{u_3=0\}$ the fixed points $\mathrm{S}^{\pm}_{+}$ unstable nodes if $2n\sqrt{K}<\nu\leq f_+(K,n)$, unstable strong focus if $f_+(K,n)<\nu<(1+n K)\sqrt{n}$, a center or a weak focus if $\nu=(1+n K)\sqrt{n}$, a stable strong focus if $(1+nK)\sqrt{n}<\nu<f_{-}(n,K)$, and a stable node if $\nu\geq f_-(n,K)$. Notice that when $\nu=(1+n K)\sqrt{n}$ leads to a center-focus problem, which will not be treated here, although numerical simulations indicate that they are centers. 

The eigenvalues $\lambda_{1-}$ and $\lambda_{2-}$ of the linearisation around $\mathrm{S}^{\pm}_-$ are real for all $\nu>2n\sqrt{K}$. Moreover, $\lambda_{1_{-}}$ is negative and $\lambda_{2_{-}}$ is positive, so that $\mathrm{S}^{\pm}_-$ are hyperbolic saddles.

When $\nu=2n\sqrt{K}$, with $0<K<1/n$, it follows that $\lambda_1$ and $\lambda_2$ are always real and positive, so that the fixed points $\mathrm{S}^{\pm}_{0}$ are unstable nodes.
\end{itemize}
%
%
%%%%%%%%%%%%%%%%%%%%%%%%%%%%%%%%%%%%%%%%%%%
\subsubsection*{Fixed points at infinity}
We start with the positive $x$-direction. Setting $p=\frac{1}{2}(n-1)$, the flow induced on the $\{u_1=0\}$ invariant subset is given by
\begin{equation}
	\frac{dy_1}{d\bar{\tau}_1} =-n-\nu y_1-n y^2_1-3y_1z_1 ,\qquad 
	\frac{dz_1}{d\bar{\tau}_1} = -(n-1)\left(\frac{3\gamma_\mathrm{pf}}{2(n-1)}z_1+y_1\right)z_1.
\end{equation}
We are interested in the invariant subset $\{z_1=0\}$ on $\{u_1=0\}$, where the system reduces to
\begin{equation}
z_1\frac{dy_1}{d\bar{\tau}_3} =-n-\nu y_1-n y^2_1,
\end{equation}
and yields two fixed points when $\nu>2n$: 
\begin{equation}
\mathrm{P}^+_\pm\quad:\quad y_1=B_\pm,\quad z_1=0,\quad u_1=0,
\end{equation}
where we introduced the notation
\begin{equation}
B_\pm = -\frac{\nu}{2n}\pm\sqrt{\left(\frac{\nu}{2n}\right)^2-1}<0.
\end{equation}
The linearised system at $\mathrm{P}^+_\pm$ has eigenvalues
$\lambda_1=-(n-1)B_\pm$, $\lambda_2=\mp 2n \sqrt{\left(\frac{\nu}{2n}\right)^2-1}$ and $\lambda_3=B_\pm$, with associated eigenvectors $v_1=\left(1,a,0\right)$, $v_2=(0,1,0)$ and $v_3=(0,0,1)$,
where $a=-\frac{3}{n-1}\frac{B_\pm+\frac{\lambda_2}{2n}}{B_\pm+\lambda_2}$.
Since $\mathrm{P}^+_\mathrm{+}$ has $\lambda_2,\lambda_3<0$ and $\lambda_1>0$, and $\mathrm{P}^+_\mathrm{-}$ has $\lambda_3<0$ and $\lambda_1,\lambda_2>0$, they are both hyperbolic saddles. But on $\{u_1=0\}$, $\mathrm{P}^+_-$ is a source and $\mathrm{P}^+_+$ is a saddle. In particular, a one-parameter set of orbits originates from $\mathrm{P}^+_-$ and a single orbit from $\mathrm{P}^+_+$ into the region $\{z_1>0\}$. 
%
%The linearised solution around $\mathrm{P}_\pm$ are given by
%%
%\begin{subequations}
%	\begin{align}
%	u_1(\bar{\tau}_1)&= C_u e^{-\left(\frac{\nu}{2n}\mp\sqrt{\left(\frac{\nu}{2n}\right)^2-1}\right) \bar{\tau}_1}\\
%	y_1(\bar{\tau}_1)&=-\frac{\nu}{2n}\pm\sqrt{\left(\frac{\nu}{2n}\right)^2-1}+\left(C_y+a C_z\right)e^{\mp 2n \sqrt{\left(\frac{\nu}{2n}\right)^2-1}\bar{\tau}}-aC_z e^{ (n-1)\left(\frac{\nu}{2n}\mp\sqrt{\left(\frac{\nu}{2n}\right)^2-1}\right) \bar{\tau}_1} \\
%	z_1(\bar{\tau}_1)&= C_z e^{(n-1)\left(\frac{\nu}{2n}\mp\sqrt{\left(\frac{\nu}{2n}\right)^2-1}\right)\bar{\tau}}
%	\end{align}
%\end{subequations}
%%
%The unstable manifold of $\mathrm{P}_+$ is obtained by setting $C_y=-aC_z$, while the stable manifold corresponds to $C_z=0$. 
%
When $\nu=2n$, the fixed points $\mathrm{P}^+_\pm$ merge into a single fixed point 
\begin{equation}
\mathrm{P}^+_0\quad:\quad y_1=-1,\quad z_1=0,\quad u_1=0,
\end{equation}
where the eigenvalues reduce to $\lambda_1=n-1$, $\lambda_2=0$ and $\lambda_3=-1$. Hence the $z_1=0$ axis is the center manifold of $\mathrm{P}^+_0$. As before, symmetry considerations allow us to deduce the existence of equivalent fixed points $\mathrm{P}^{-}_\pm$ and $\mathrm{P}^{-}_0$ when blowing-up in the negative $x$-direction.
Finally, when $0<\nu<2n$ there are no fixed points on the $\{z_1=0\}$ axis. 

Using the directional blow-up in the positive $y$-direction, the flow induced on $\{u_2=0\}$ is given by
\begin{equation}
\aligned
&\frac{dx_2}{d\bar{\tau}_2}=1+\frac{\nu}{n}x_2^{n}+x_2^{2n}+\frac{3}{2}x_2 z_2\\
&\frac{dz_2}{d\bar{\tau}_2}=\frac{(n-1)}{n}\left(n x_2^{2n-1}+\nu x_2^{n-1}+ 3\left(1+\frac{(2-n)\gamma_\mathrm{pf}}{2(n-1)}\right)z_2\right)z_2.
\endaligned
\end{equation}
Further restricting to the invariant subset $\{z_2=0\}$ results in
\begin{equation}
\frac{dx_2}{d\bar{\tau}_2}=1+\frac{\nu}{n}x_2^{n}+x_2^{2n}.
\end{equation}
Since $n$ is odd, there are two fixed points when $\nu>2n$, corresponding to $\mathrm{P}^-_\pm$ studied above which are located at $(x_2,z_2,u_2)=\left(B^{1/n}_{\pm},0,0\right)$ and, for $\nu=2n$, merge into a single fixed point $\mathrm{P}^-_{0}$. When $0<\nu<2n$ there are no fixed points on $\{z_2=0\}\cap\{u_2=0\}$. A similar analysis in the negative $y$-direction shows that the only fixed points are $\mathrm{P}^{+}_\pm$ and $\mathrm{P}^{+}_{0}$.
Therefore the equator of the Poincar\'e sphere has fixed points when $\nu\geq 2n$ on the second and fourth quadrants, which are $\alpha$-limit points for orbits in the northern hemisphere. When $0<\nu<2n$ the equator consists of a periodic orbit. In order to study the stability of the periodic orbit at infinity  in the $(x_3,y_3)$ plane when $0<\nu<2n$, we shall employ a Poincar\'e-Lyapunov compactification on the unit disk.
%
%
%%%%%%%%%%%%%%%%%%%%%%%%%%%%%%%%%%%%%%%%%%
\subsubsection*{Global phase-space on the Poincar\'e-Lyapunov disk}
When $p=\frac{1}{2}(n-1)$, we obtain from~\eqref{PL_Cylinder} that the induced flow on the $\{\bar{u}=0\}$ invariant subset is given by
\begin{subequations}
	\begin{align}
	\frac{dr}{d\xi} &= \frac{(n-1)(1-r)r}{n}\left[-\nu r (1-\cos^{2n}\theta)\cos^{n-1}\theta+3(1-r)\left(\frac{K}{1-K}+\cos^{2n}\theta\right)\right], \\
	\frac{d\theta}{d\xi} &=-F(\theta)\left[ \frac{3}{2n} (1-r)F(\theta)\sin{2\theta}+\left(1+\frac{\nu}{2n}F(\theta)\cos^{n-1}{\theta}\sin{2\theta}\right)r\right].
	\end{align}
\end{subequations}
At $\{r=0\}$ lies the fixed point $\mathrm{M}$ which is the origin of the $(x_3,y_3)$ plane. The previous analysis showed that $\mathrm{M}$ is a hyperbolic saddle if $K<0$, a center-saddle if $K=0$, and a hyperbolic source if $K>0$. The fixed points $\mathrm{S}^\pm_\pm$ are located at
\begin{equation}
\frac{r_{\mathrm{S}^\pm_\pm}}{1-r_{\mathrm{S}^\pm_\pm}}=\left(\frac{3A_\pm}{n(1-K)}\right)\left(1+A^{-2}_\pm\right)^{\frac{n-1}{2n}}, \quad \theta_{\mathrm{S}^\pm_\pm} = \arccos{\left(\pm\frac{1}{\left(1+A^{-2}_\pm\right)^{\frac{1}{2n}}}\right)},
\end{equation}
while the infinity of the $(x_3,y_3)$ plane is now located at $\{r=1\}$. When $\nu>2n$ there are four fixed points corresponding to $\mathrm{P}^\pm_\pm$ and given by
\begin{equation}
\theta_{\mathrm{P}^+_\pm} = \arccos{\left(\pm\left(\frac{1-\sqrt{1-4(n/\nu)^2}}{\sqrt{2}}\right)^{\frac{1}{2n}}\right)},\quad \theta_{\mathrm{P}^-_\pm} =\pi+ \arccos{\left(\pm\left(\frac{1-\sqrt{1-4(n/\nu)^2}}{\sqrt{2}}\right)^{\frac{1}{2n}}\right)},
\end{equation}
which merge into two fixed points $\mathrm{P}^{\pm}_0$ when $\nu=2n$, with
\begin{equation}
\theta_{\mathrm{P}^{+}_0} = \arccos\left(-\left(\frac{1}{\sqrt{2}}\right)^{\frac{1}{n}}\right),\quad \theta_{\mathrm{P}^{-}_0} =\pi+ \arccos\left(-\left(\frac{1}{\sqrt{2}}\right)^{\frac{1}{n}}\right),
\end{equation}
which we have seen to be $\alpha$-limit points for interior orbits on the disk. Finally, when $\nu\in(0,2n)$ there exists a periodic orbit:
\begin{equation}
\Gamma_\infty:\quad\frac{d\theta}{d\xi}= -F(\theta)\left(1+\frac{\nu}{2n}F(\theta)\cos^{n-1}{\theta}\sin{2\theta}\right)<0.
\end{equation}
\begin{lemma}\label{StabGamma}
	For all $0<\nu<2n$, with $n>1$, $n$ odd, $\Gamma_\infty$ is an unstable limit cycle.
\end{lemma}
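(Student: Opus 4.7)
The plan is to establish instability by computing the Floquet exponent of $\Gamma_\infty$ in the direction transverse to the cycle and showing it is strictly positive. Note first that $\Gamma_\infty$ is a genuine periodic orbit of the 3D system~\eqref{PL_Cylinder}: both $\{r=1\}$ and $\{\bar{u}=0\}$ are invariant, because the right-hand sides of $dr/d\xi$ and $d\bar{u}/d\xi$ carry explicit factors $(1-r)$ and $\bar{u}$ respectively, so $\Gamma_\infty$ is a closed invariant curve sitting on the edge of the disk at infinity within the invariant plane $\{\bar u = 0\}$. I will analyse stability within $\{\bar{u}=0\}$, which is the relevant sector for the blow-up picture in Figure~\ref{fig:BUP_PLb}.

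First I would set $\epsilon := 1-r$ and linearise the $r$-equation of~\eqref{PL_Cylinder} specialised to $p=(n-1)/2$ and $\bar{u}=0$. A direct expansion around $\epsilon = 0$ produces
\begin{equation*}
\frac{d\epsilon}{d\xi} \;=\; \alpha(\theta)\,\epsilon \;+\; \mathcal{O}(\epsilon^2),
\qquad
\alpha(\theta) \;=\; \frac{(n-1)\nu}{n}\bigl(1-\cos^{2n}\theta\bigr)\cos^{n-1}\theta.
\end{equation*}
The crucial sign observation is that $n$ is odd, so $n-1$ is even and $\cos^{n-1}\theta\geq 0$; combined with $1-\cos^{2n}\theta\geq 0$, this gives $\alpha(\theta)\geq 0$, with zeros only on a set of measure zero. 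Parametrising by $\theta$ (which is legitimate because $d\theta/d\xi$ does not vanish, by the inequality already quoted in the statement), the Floquet exponent reads
\begin{equation*}
\mu \;=\; \int_0^{2\pi}\frac{\alpha(\theta)}{\bigl|d\theta/d\xi\bigr|}\,d\theta,
\end{equation*}
so its sign is governed by the positivity of $\alpha$ once the denominator is under control.

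The main obstacle is precisely that control of $d\theta/d\xi$, i.e. verifying $1+\tfrac{\nu}{2n}F(\theta)\cos^{n-1}\theta\sin 2\theta > 0$ uniformly in $\theta$. The key identity for $F$ gives $F(\theta)^2\sin^2\theta = 1-\cos^{2n}\theta$, from which $|F(\theta)\cos^n\theta\sin\theta| = \sqrt{\cos^{2n}\theta\,(1-\cos^{2n}\theta)} \leq 1/2$ by the elementary bound $\sqrt{u(1-u)}\leq 1/2$ on $[0,1]$. Since $\tfrac{\nu}{2n}F(\theta)\cos^{n-1}\theta\sin 2\theta = \tfrac{\nu}{n}F(\theta)\cos^n\theta\sin\theta$, this yields
\begin{equation*}
1+\tfrac{\nu}{2n}F(\theta)\cos^{n-1}\theta\sin 2\theta \;\geq\; 1-\tfrac{\nu}{2n} \;>\; 0 \qquad\text{iff}\qquad 0<\nu<2n,
\end{equation*}
which is exactly the hypothesis of the lemma and ensures both that $\Gamma_\infty$ is a smooth closed orbit (no fixed points on the equator) and that the Floquet integrand is bounded.

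Putting the two pieces together, the integrand in $\mu$ is nonnegative, integrable, and strictly positive on a set of positive measure, so $\mu>0$. Consequently the Poincar\'e first-return map on a transverse section has derivative $e^\mu>1$, and orbits in $\{\bar{u}=0,\,r<1\}$ close to $\Gamma_\infty$ spiral away from it under the forward flow, which is the definition of an unstable limit cycle. If one further wishes to discuss stability in the full 3D system, the invariance of $\{\bar u = 0\}$ reduces the problem to a separate (one-sided) analysis of the linear-in-$\bar u$ coefficient in $d\bar u/d\xi$ restricted to $\Gamma_\infty$, but this is not needed for the lemma as stated.
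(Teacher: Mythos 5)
Your proof is correct and follows essentially the same route as the paper's: both arguments linearise the radial direction transverse to the cycle and integrate the resulting coefficient over one period, with the sign controlled by the fact that $\cos^{n-1}\theta\geq 0$ for $n$ odd (your $\alpha(\theta)$ is exactly $-S_1\cdot(d\theta/d\xi)$ in the paper's notation, via $1-\cos^{2n}\theta=F^2(\theta)\sin^2\theta$). Two minor points in your favour: working directly in forward time $\xi$ makes the positive Floquet exponent yield instability immediately, whereas the paper parametrises the return map by $\theta$, obtains $P'(0)=e^{\alpha(2\pi)}<1$, and must then invoke the monotone decrease of $\theta$ to reverse the conclusion; and you actually prove the uniform bound $1+\tfrac{\nu}{2n}F(\theta)\cos^{n-1}\theta\sin 2\theta\geq 1-\tfrac{\nu}{2n}>0$ via $\sqrt{u(1-u)}\leq\tfrac{1}{2}$, a fact the paper asserts without justification when defining $\Gamma_\infty$.
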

\begin{proof}
	To analyse the periodic orbit we introduce the variable $s=(1-r)/r$, so that the periodic orbit is now located at $s=0$, and a new time variable $\xi$ by $d/d\bar{\tau}_3=\frac{(1-r)}{r}d/d\xi$ which leads to a regular ODE that, close to $s=0$, is given by
	\begin{equation}
	\frac{ds}{d\theta}=S_1(\theta) s+S_2(\theta) s^2+o(s^2),
	\end{equation}
	where
	\begin{subequations}
		\begin{align}
		S_1 &= -\frac{\nu(n-1) F(\theta)\sin^2\theta\cos^{n-1}\theta}{n\left(1+\frac{\nu}{2n}F(\theta)\cos^{n-1}{\theta}\sin{2\theta}\right)}, \\
%		S_2 &=\frac{\frac{3}{n}\left(1-\frac{(1-nK)}{n(1-K)}n+(n-1)\cos^{2n}\theta\right)-\frac{3}{n}F(\theta)\sin{2\theta}\left(-\frac{\nu}{2n}(n-1) F^2(\theta)\sin^2\theta\cos^{n-1}\theta\right)}{F(\theta)\left(1+\frac{\nu}{2n}F(\theta)\cos^{n-1}{\theta}\sin{2\theta}\right)}.\\
		S_2 &=-\frac{3(n-1)\left(\frac{K}{1-K}+\cos^{2n}\theta+\frac{\nu}{n(1-K)}F(\theta)\cos^{n}\theta\sin\theta\right)}{F(\theta)\left(1+\frac{\nu}{2n}F(\theta)\cos^{n-1}\theta\sin 2\theta\right)^2}
		\end{align}
	\end{subequations}
	Denoting by $s(\theta,s_0)$ the solution of the above differential equation such that $s(0,s_0)=s_0$, then close to $s=0$ we have
	\begin{equation}
	S(\theta)=\beta_1(\theta)s_0+\beta_2(\theta)s^2_0+o(s^2_0),
	\end{equation}
	where $\beta_1$ and $\beta_2$ solve the initial value problem
	\begin{subequations}
		\begin{align}
		\frac{d\beta_1}{d\theta} &= S_1(\theta)\beta_1(\theta), \qquad \beta_1(0)=1, \\
		\frac{d\beta_2}{d\theta} &= S_1(\theta)\beta_2(\theta)+S_2(\theta)\beta^2_1(\theta),\qquad \beta_2(0) =0.
		\end{align}
	\end{subequations}
	The solutions are
	\begin{equation}
	\beta_1(\theta) = e^{\alpha(\theta)},\qquad 
	\beta_2(\theta) = -e^{\alpha(\theta)}\int_{0}^{\theta}e^{\alpha(\psi)}S_2(\psi)d\psi,
	\end{equation}
	with
	\begin{equation}
	\alpha(\theta)=-\frac{\nu}{n}(n-1)\int^{\theta}_{0} \frac{F(\tilde{\theta})\sin^2{\tilde{\theta}}\cos^{n-1}\tilde{\theta}}{1+\frac{\nu}{2n}F(\tilde{\theta})\sin{2\tilde{\theta}}\cos^{n-1}\tilde{\theta}}d\tilde{\theta}.
	\end{equation}
	The Poincar\'e return map near $s=0$ is $P(s_0)=s(2\pi,s_0)$. Since $P(0)=0$, and for $n$ odd, $P^{\prime}(0)=e^{\alpha(2\pi)}<1$ and $\theta$ is strictly monotonically decreasing, then the periodic orbit $\Gamma_\infty$ is an unstable limit cycle for all $0<\nu<2n$ and $n>1$.
\end{proof}
\begin{proposition}\label{EqRep}
	Let $p=\frac{1}{2}(n-1)$ with $p>0$ ($n>1$ with $n$ odd). Then the infinity of the $(x_3,y_3)$ plane (corresponding to the $\{r=1\}$ invariant boundary of the Poincar\'e-Lyapunov unit disk) is a repeller.
\end{proposition}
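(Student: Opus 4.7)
The plan is to split along the three regimes $0<\nu<2n$, $\nu=2n$, and $\nu>2n$ that correspond to the qualitatively different structures of the equator identified above. For $0<\nu<2n$ the whole equator $\{r=1\}$ coincides with the periodic orbit $\Gamma_\infty$, which by Lemma~\ref{StabGamma} is an unstable hyperbolic limit cycle; the conclusion is then immediate from the definition of instability.

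For $\nu\geq 2n$ I would establish normal hyperbolicity along the equator. Setting $s:=1-r$, the linearisation of the $r$-equation on $\{\bar u=0\}$ at $s=0$ reads
\[
\frac{ds}{d\xi}=c(\theta)\,s+\mathcal{O}(s^2),\qquad c(\theta):=\frac{(n-1)\,\nu\,(1-\cos^{2n}\theta)\cos^{n-1}\theta}{n}.
\]
Since $n-1$ is even, $c(\theta)\geq 0$ on $[0,2\pi)$ with zeros only on the finite set $\{0,\pi/2,\pi,3\pi/2\}$. Plugging in the explicit angular coordinates of the equator fixed points $\mathrm{P}^{\pm}_{\pm}$ (respectively $\mathrm{P}^{\pm}_0$ in the degenerate case) gives $|\cos\theta_{\mathrm{P}}|\in(0,1)$ in every case, so $c(\theta_{\mathrm{P}})>0$ and the radial eigenvalue at every equator fixed point is strictly positive; the stable set of each such fixed point inside the 2D invariant disk is therefore contained in the equator itself.

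To close the argument I would apply Poincar\'e--Bendixson on the compact invariant disk $\{\bar u=0\}$. The $\omega$-limit of any interior orbit accumulating on $\{r=1\}$ must be either (i) an equator fixed point, (ii) a closed orbit, or (iii) a heteroclinic cycle on the equator. Option (i) is excluded by the strict positivity of the normal eigenvalue; option (ii) by the absence of closed orbits on $\{r=1\}$ when $\nu\geq 2n$; and option (iii) by the fact that the equator flow is monotone from the sources $\mathrm{P}^{\pm}_-$ to the sinks $\mathrm{P}^{\pm}_+$ (with $\mathrm{P}^{\pm}_0$ semi-stable at the bifurcation value). Equivalently, since $d\theta/d\xi$ on $\{r=1\}$ does not vanish at any zero of $c(\theta)$, the integral $\int c(\theta(\xi))\,d\xi$ grows linearly along any nearby trajectory, so $s(\xi)$ escapes any neighborhood of $0$ in forward time.

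I anticipate the most delicate step to be the degenerate value $\nu=2n$, where $\mathrm{P}^{\pm}_0$ are only semi-hyperbolic along the equator while the normal coefficient $c(\theta_{\mathrm{P}^{\pm}_0})$ remains strictly positive; a brief center-manifold computation at $\mathrm{P}^{\pm}_0$ is needed to verify that the tangential center direction is contained in $\{r=1\}$ and does not spawn an interior stable branch, after which the normal-hyperbolicity argument and the Poincar\'e--Bendixson closure apply uniformly across the full range $\nu\geq 2n$.
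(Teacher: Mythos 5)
Your proposal is correct and follows essentially the same route as the paper: the case $0<\nu<2n$ is handled by Lemma~\ref{StabGamma}, and the case $\nu\geq 2n$ by the local analysis of the equator fixed points $\mathrm{P}^{\pm}_{\pm}$ and $\mathrm{P}^{\pm}_0$, whose transverse (radial) eigenvalue is strictly positive while their stable/center directions lie inside $\{r=1\}$. Your explicit computation of the normal coefficient $c(\theta)=\tfrac{(n-1)\nu}{n}(1-\cos^{2n}\theta)\cos^{n-1}\theta\geq 0$ and the Poincar\'e--Bendixson closure simply make explicit what the paper leaves implicit in its one-line proof.
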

\begin{proof}
	The proof follows by the local analysis of the fixed points $\mathrm{P}^\pm_\pm$, $\mathrm{P}^\pm_0$ and Lemma~\ref{StabGamma}.
\end{proof}
Figure~\ref{fig:PP1}, shows the three different types of orbit structure at the invariant boundary $\{r=1\}$ of the unit disk.
\begin{figure}[ht!]
	\begin{center}
		\subfigure[$0<\nu<2n$.]{\label{d1}
			\includegraphics[width=0.30\textwidth]{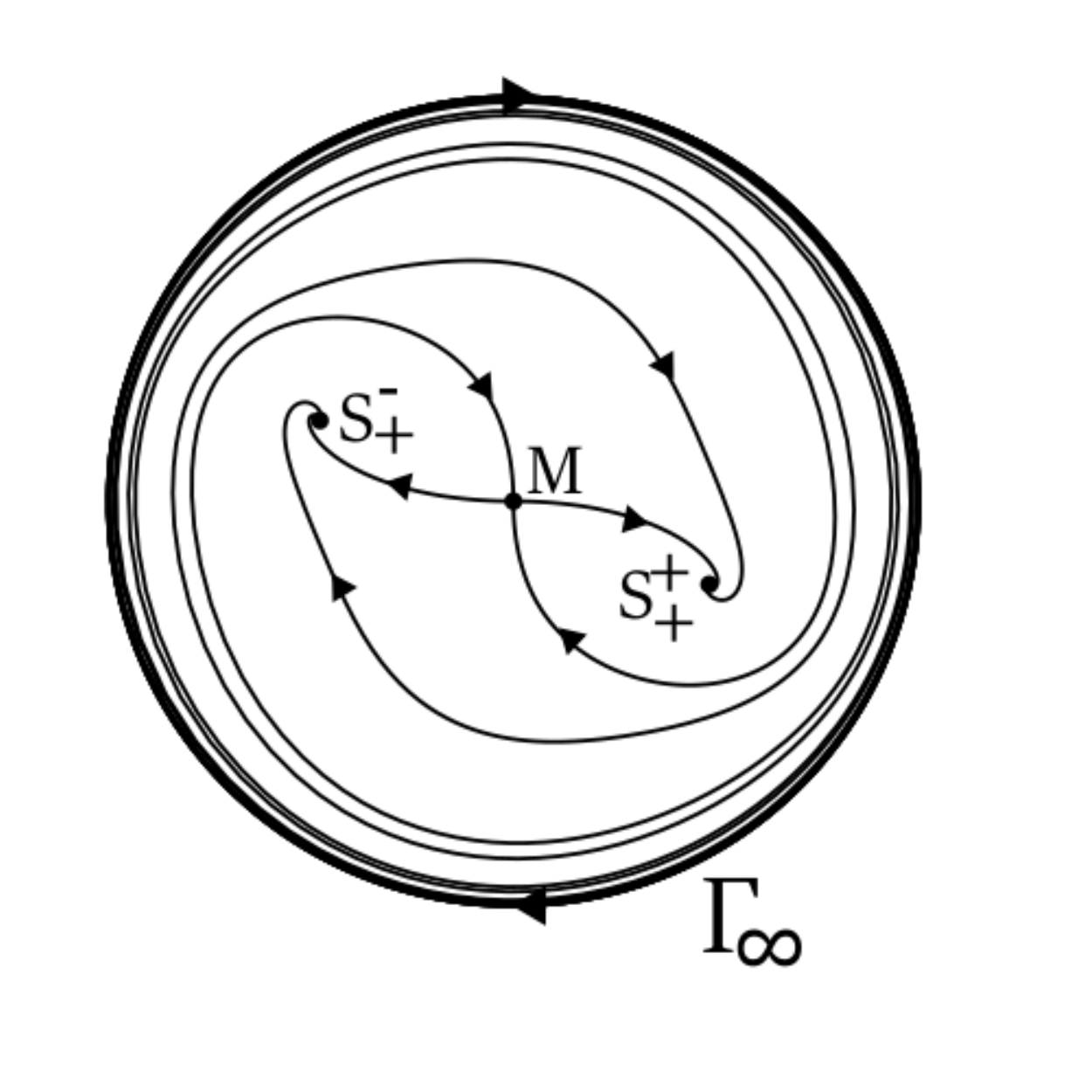}}
		\subfigure[$\nu=2n$.]{\label{d2}
			\includegraphics[width=0.30\textwidth]{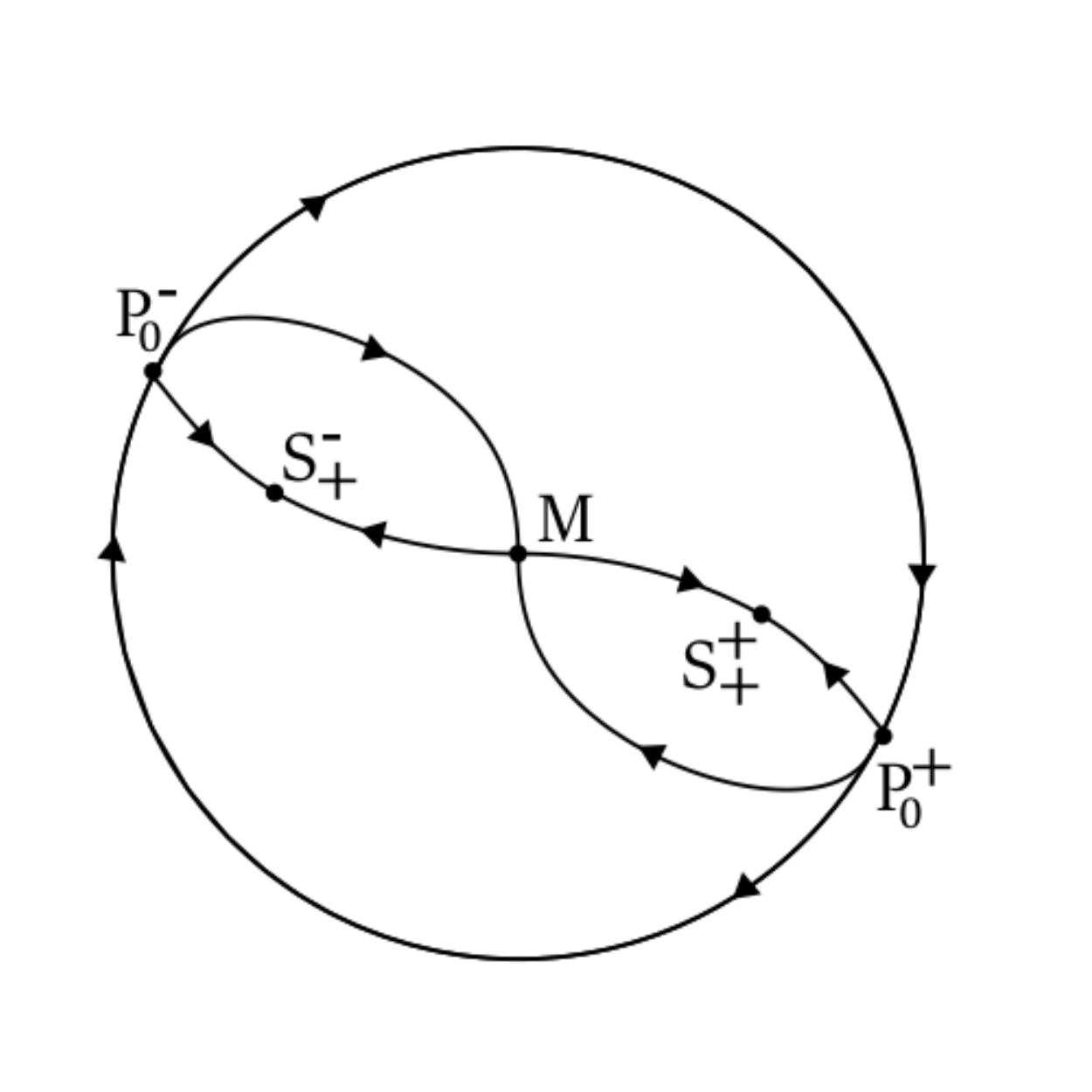}}
		\subfigure[$\nu>2n$.]{\label{d3}
			\includegraphics[width=0.30\textwidth]{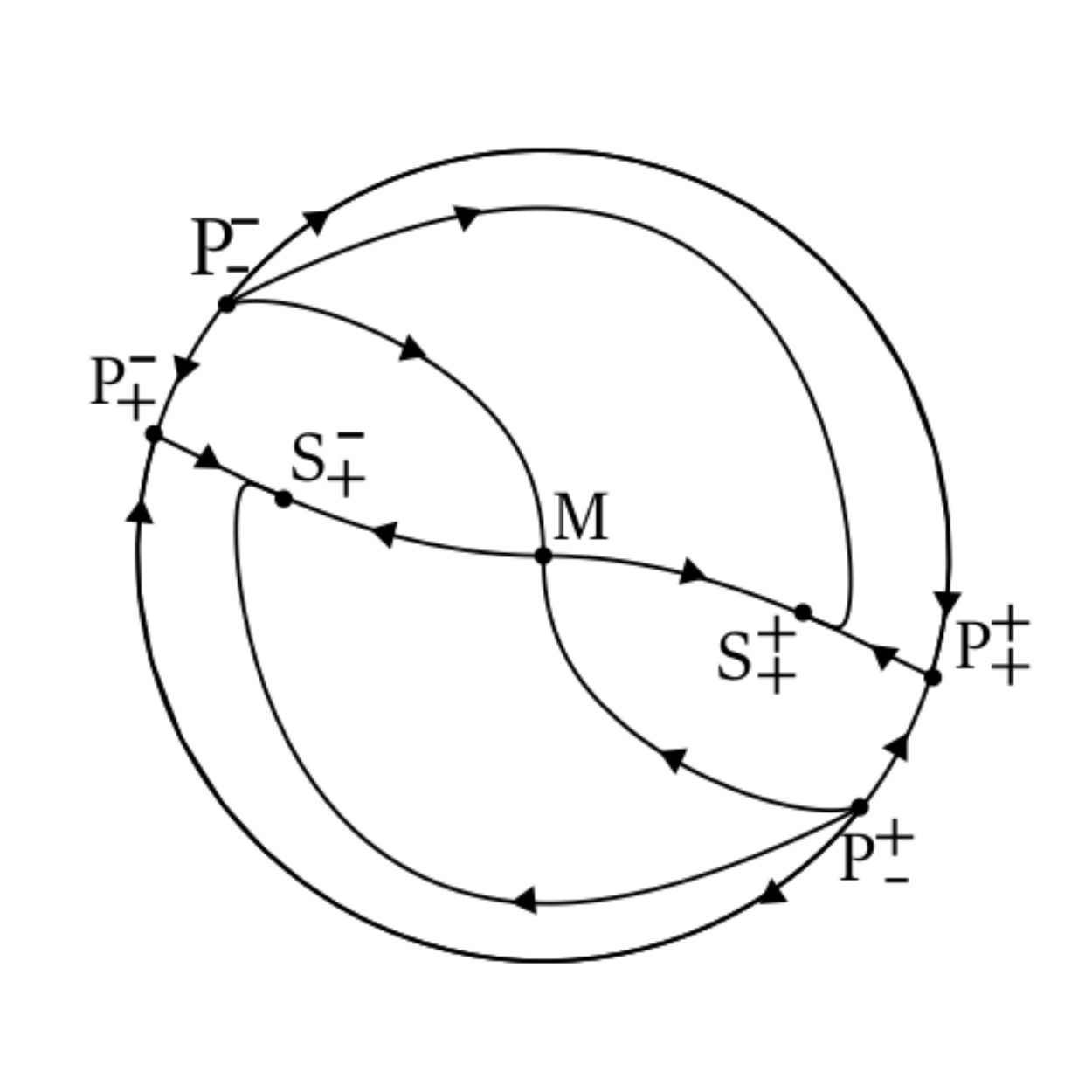}}
	\end{center}
	\vspace{-0.5cm}
	\caption{Poincar\'e-Lyapunov disk when $p=\frac{1}{2}(n-1)$ and $K\in(-\infty,-\frac{1}{n}]$, for $0<\nu<2n$, $\nu=2n$ and $\nu>2n$. These cases are numerically represented with the values $(p,n)=(1,3)$ and $\gamma_\mathrm{pf}=1$ i.e. $K=-1/3$, which corresponds to the borderline case $K=-\frac{1}{n}>g^{-}_{-}(n)$ for which $f_-(K,n)<2n$. In particular Fig.(a) has $\nu<f_-(K,n)<2n$, so that $\mathrm{S}^{\pm}_+$ are strong focus, while in Figs. (b) and (c) $\mathrm{S}^{\pm}_+$ are nodes.}
	\label{fig:PP1}
\end{figure}

\begin{theorem}
	Let $p=\frac{1}{2}(n-1)$ with $p>0$ ($n>1$ with $n$ odd). If $K\in(-\infty,-\frac{1}{n}]$, i.e. $\gamma_\mathrm{pf}\in(0,\frac{2(n-1)}{n+1})$, then for all $\nu>0$ the $\omega$-limit set of all interior orbits on the Poincar\'e-Lyapunov disk is contained on the set $\mathrm{M}\cup\mathrm{S}^{\pm}_+$. In particular, as $\xi\rightarrow+\infty$, exactly 2 orbits converge to the fixed point $\mathrm{M}$ and a 1-parameter family of orbits converges to each fixed point $\mathrm{S}^{\pm}_+$, the separatrix skeleton being as depicted in Figure~\ref{fig:PP1}.
\end{theorem}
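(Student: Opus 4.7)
The plan is to combine Proposition~\ref{EqRep} with Bendixson's criterion on the two-dimensional Poincar\'e-Lyapunov disk, thereby reducing the $\omega$-limit problem to the local stability analysis of $\mathrm{M}$ and $\mathrm{S}^{\pm}_+$ carried out above. Since the boundary $\{r=1\}$ is a repeller by Proposition~\ref{EqRep}, no interior orbit can accumulate there, so every $\omega$-limit set lies in the open disk, where the only fixed points are $\mathrm{M}$ and $\mathrm{S}^{\pm}_+$.

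Under the hypothesis $K\le -\tfrac{1}{n}$ we have $1+nK\le 0$ and $1-K>0$, so the inequality $\nu>\sqrt{n}(1+nK)$ is automatic for every $\nu>0$; the previous local computation then shows that $\mathrm{M}$ is a hyperbolic saddle (eigenvalues $\tfrac{3}{n^2(1-K)}>0$ and $\tfrac{3K}{1-K}<0$) while $\mathrm{S}^{\pm}_+$ are asymptotically stable (strong focus or node according to $\nu$ relative to $f_-(K,n)$). It therefore suffices to exclude periodic orbits and separatrix polycycles in the interior and then invoke Poincar\'e-Bendixson on the compact disk.

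The decisive step is Bendixson's criterion applied to the planar system~\eqref{PosZEQ} on $\{u_3=0\}$. A direct computation gives the divergence
\begin{equation*}
\frac{\partial}{\partial x_3}\!\left(\tfrac{3}{n(1-K)}x_3+y_3\right)+\frac{\partial}{\partial y_3}\!\left(\tfrac{3K}{1-K}y_3-n x_3^{2n-1}-\nu y_3 x_3^{n-1}\right)=\frac{3(1+nK)}{n(1-K)}-\nu x_3^{n-1}.
\end{equation*}
Because $n$ is odd, the exponent $n-1$ is even, so $\nu x_3^{n-1}\ge 0$ identically; combined with $1+nK\le 0$ and $1-K>0$, the right-hand side is non-positive on the entire disk and vanishes only on the measure-zero set $\{x_3=0\}$ in the borderline case $K=-1/n$. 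Hence Bendixson's criterion excludes closed orbits and, via the divergence theorem applied to any region enclosed by a polycycle, also excludes separatrix cycles.

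With these ingredients in hand, Poincar\'e-Bendixson forces the $\omega$-limit of every interior orbit to be a single element of $\{\mathrm{M}\}\cup\{\mathrm{S}^{\pm}_+\}$. Since $\mathrm{M}$ is a planar saddle, its stable manifold is one-dimensional and contributes exactly two orbits converging to $\mathrm{M}$; all other orbits must converge to one of the two sinks $\mathrm{S}^{\pm}_+$, yielding a 1-parameter family of trajectories accumulating on each. The separatrix skeleton of Figure~\ref{fig:PP1} is then determined by tracking the two unstable branches emerging from $\mathrm{M}$ into the basins of $\mathrm{S}^{+}_+$ and $\mathrm{S}^{-}_+$. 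The principal obstacle is precisely the Bendixson step: the sign-definiteness of the divergence is what fails as soon as $K>-1/n$, where weak-focus behaviour at $\nu=\sqrt{n}(1+nK)$ and potential Hopf bifurcations would force a much finer Li\'enard-type analysis of~\eqref{Lienard2}.
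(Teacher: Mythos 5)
Your proposal is correct and follows essentially the same route as the paper: Proposition~\ref{EqRep} to confine $\omega$-limit sets to the open disk, the sign-definiteness of the divergence $\frac{3(1+nK)}{n(1-K)}-\nu x_3^{n-1}$ (equivalently~\eqref{DivF}) under $K\leq-\frac{1}{n}$ to rule out closed orbits and saddle loops via Bendixson--Dulac, and then Poincar\'e--Bendixson together with the local classification of $\mathrm{M}$ as a saddle and $\mathrm{S}^{\pm}_{+}$ as sinks. Your explicit divergence-theorem treatment of polycycles is a slightly more careful phrasing of the paper's remark that closed saddle connections cannot exist, but the argument is the same.
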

\begin{proof}
	From Proposition~\ref{EqRep} every regular orbits on the $(x_3,y_3)$ plane remains bounded for all future times. The divergence of the vector field~\eqref{PosZEQ} yields
	\begin{equation}\label{DivF}
	\frac{3\gamma_\mathrm{pf}}{2(n-1)}(1+nK)-\nu x^{n-1}_3,
	\end{equation}
	which for $K\leq-1/n$ and $\nu>0$ does not change sign and vanishes at a set of measure zero. Therefore by the \emph{Bendixson-Dulac criteria} there are no interior periodic orbits. Moreover the origin $\mathrm{M}$ is a saddle and $\mathrm{S}^\pm_+$ are sinks for all $\nu>0$ (focus if $\nu < f_-(K,n)$, or nodes if $\nu \geq f_-(K,n)$). Since closed saddle connections cannot exist, it follows by the Poincar\'e-Bendixson theorem that the only possible $\omega$-limit sets in this case are the fixed points $\mathrm{S}^{\pm}_+$ and $\mathrm{M}$. The last statement follows from the local stability properties of the fixed points.
\end{proof}
%
%Besides the possible orbit structure on the invariant boundary $\{r=1\}$, Figure~\ref{fig:PP1}, also shows the trivial separatrix skeleton on the Poincar\'e-Lyapunov disk for $K\in(-\infty,-\frac{1}{n}]$ and all $\nu>0$.
%
\begin{remark}\label{Missing1}
	We now discuss briefly the case when $K\in(-\frac{1}{n},0]$. Numerical results indicate that in this case a unique stable interior limit cycle $\Gamma_\mathrm{in}$ exists if $\nu\leq (1+nK)\sqrt{n}$, with $\mathrm{S}^{\pm}_+$ being sources (unstable nodes or focus) for $\nu< (1+nK)\sqrt{n}$, and centers with an unstable outer periodic orbit when $\nu= (1+nK)\sqrt{n}$. Furthermore, no interior limit cycle exists when $\nu>(1+nK)\sqrt{n}$ and $\mathrm{S}^{\pm}_{+}$ are sinks (stable focus or stable nodes), see Figure~\ref{fig:PL2}, for some representative examples.
\end{remark}
\begin{figure}[ht!]
	\begin{center}
		\subfigure[$0<\nu<(1+nK)\sqrt{n}$.]{\label{Sim1}
			\includegraphics[width=0.30\textwidth]{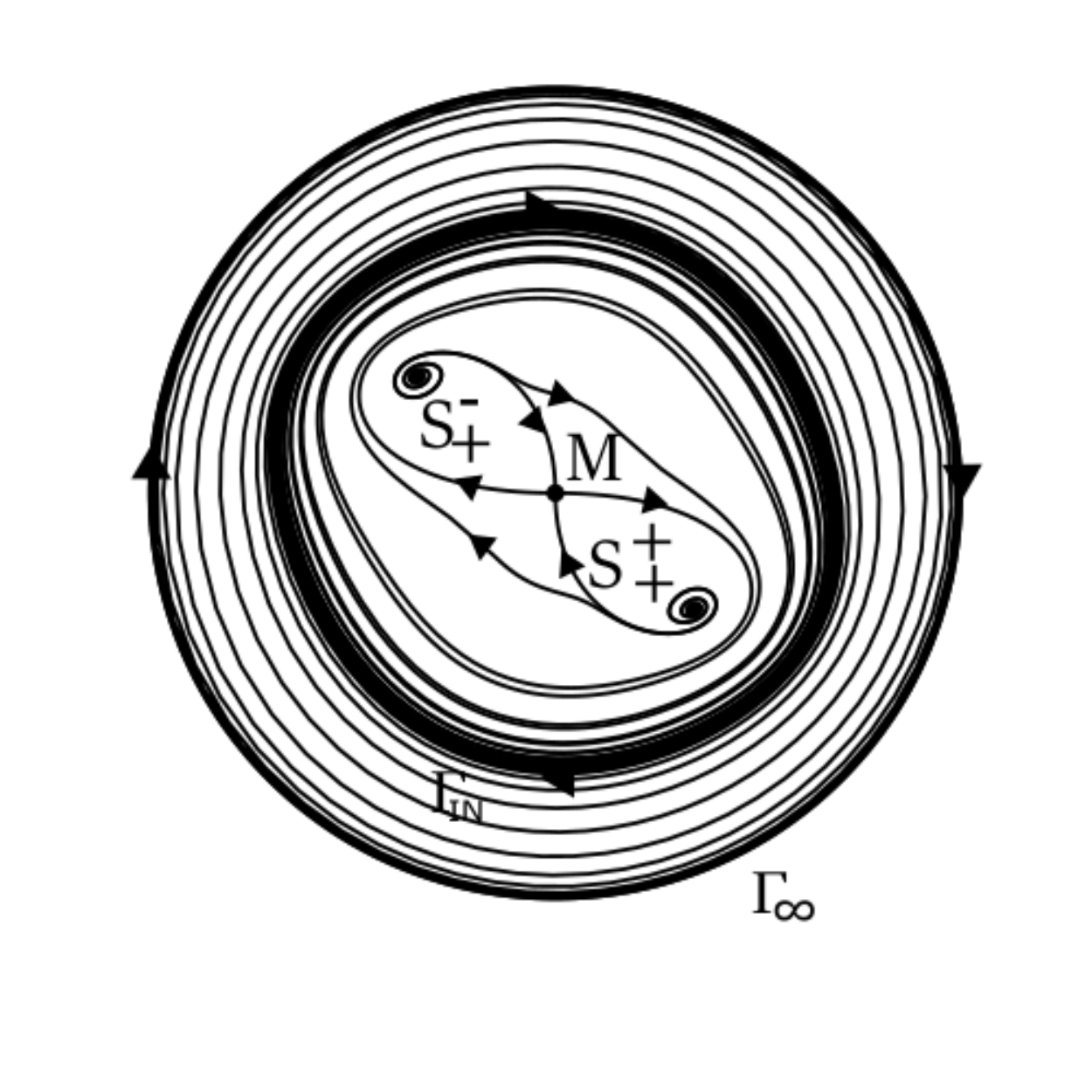}}
		\subfigure[$\nu=(1+nK)\sqrt{n}>0$.]{\label{Sim2}
			\includegraphics[width=0.30\textwidth]{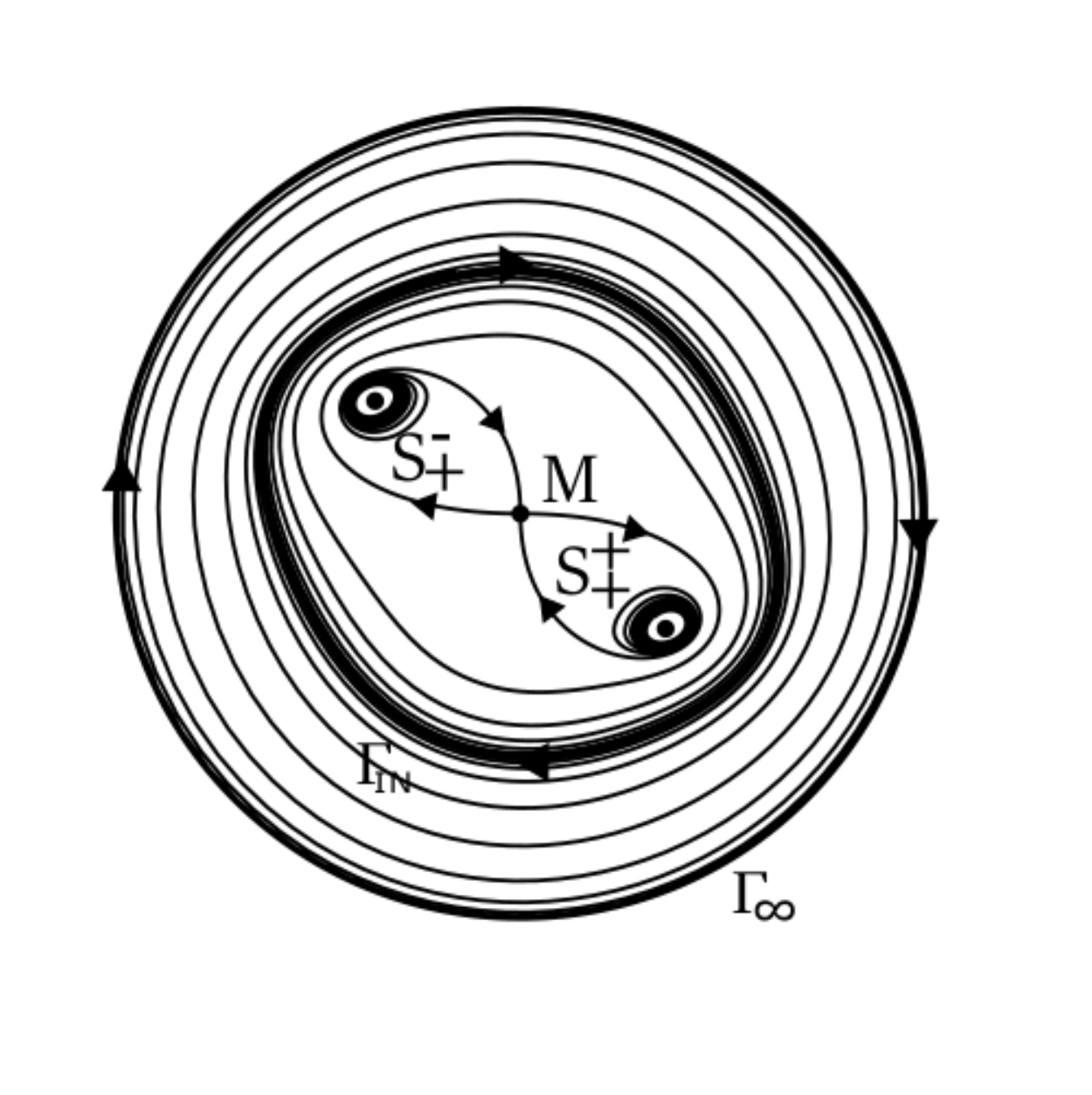}}
		\subfigure[$(1+nK)\sqrt{n}<\nu< f_{-}(K,n)<2n$.]{\label{Sim3}
			\includegraphics[width=0.30\textwidth]{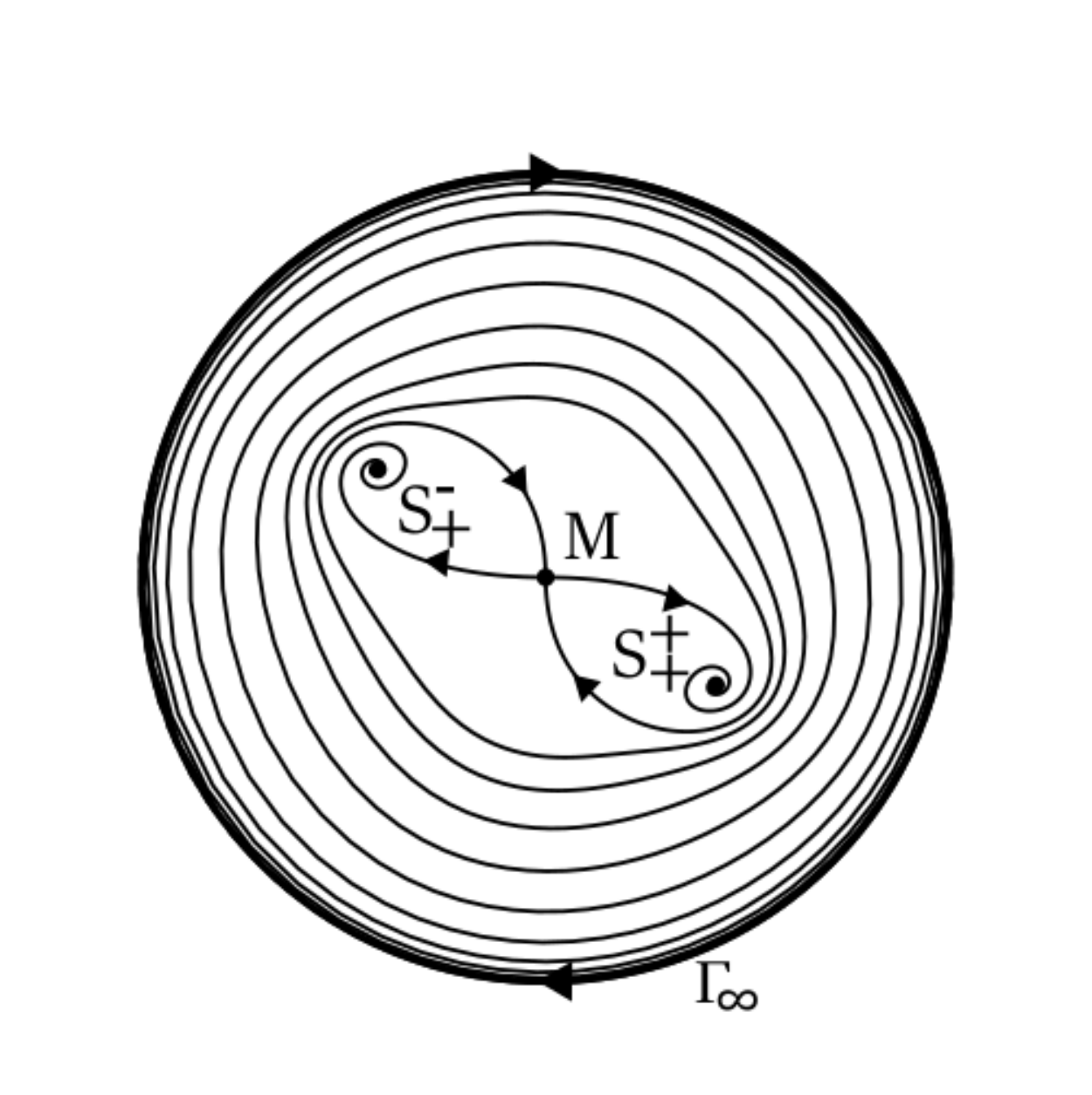}}
		\subfigure[$f_{-}(K,n)\leq \nu<2n$.]{\label{Sim4}
			\includegraphics[width=0.30\textwidth]{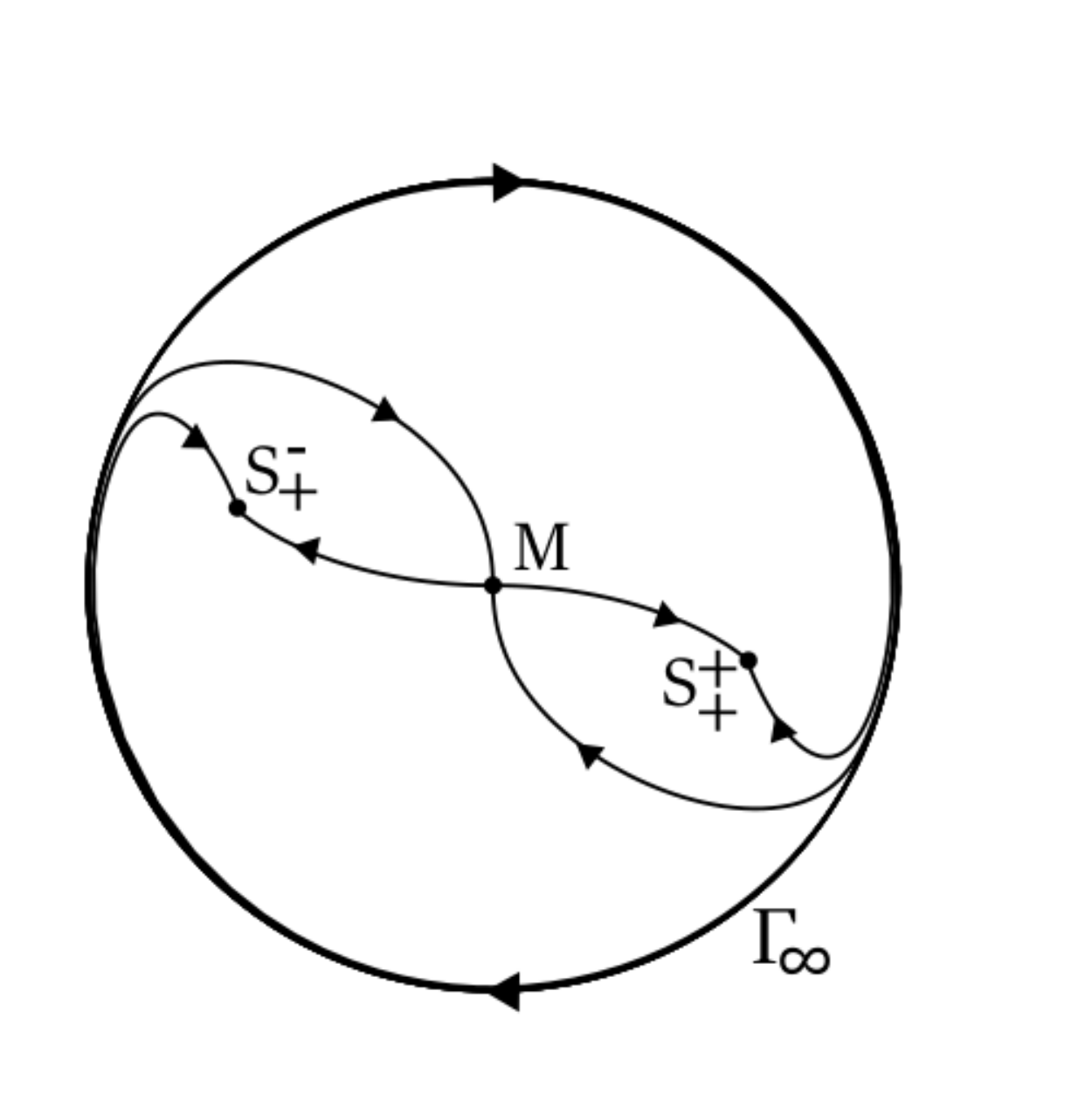}}
			\subfigure[$f_{-}(K,n)< \nu=2n$.]{\label{Sim5}
			\includegraphics[width=0.30\textwidth]{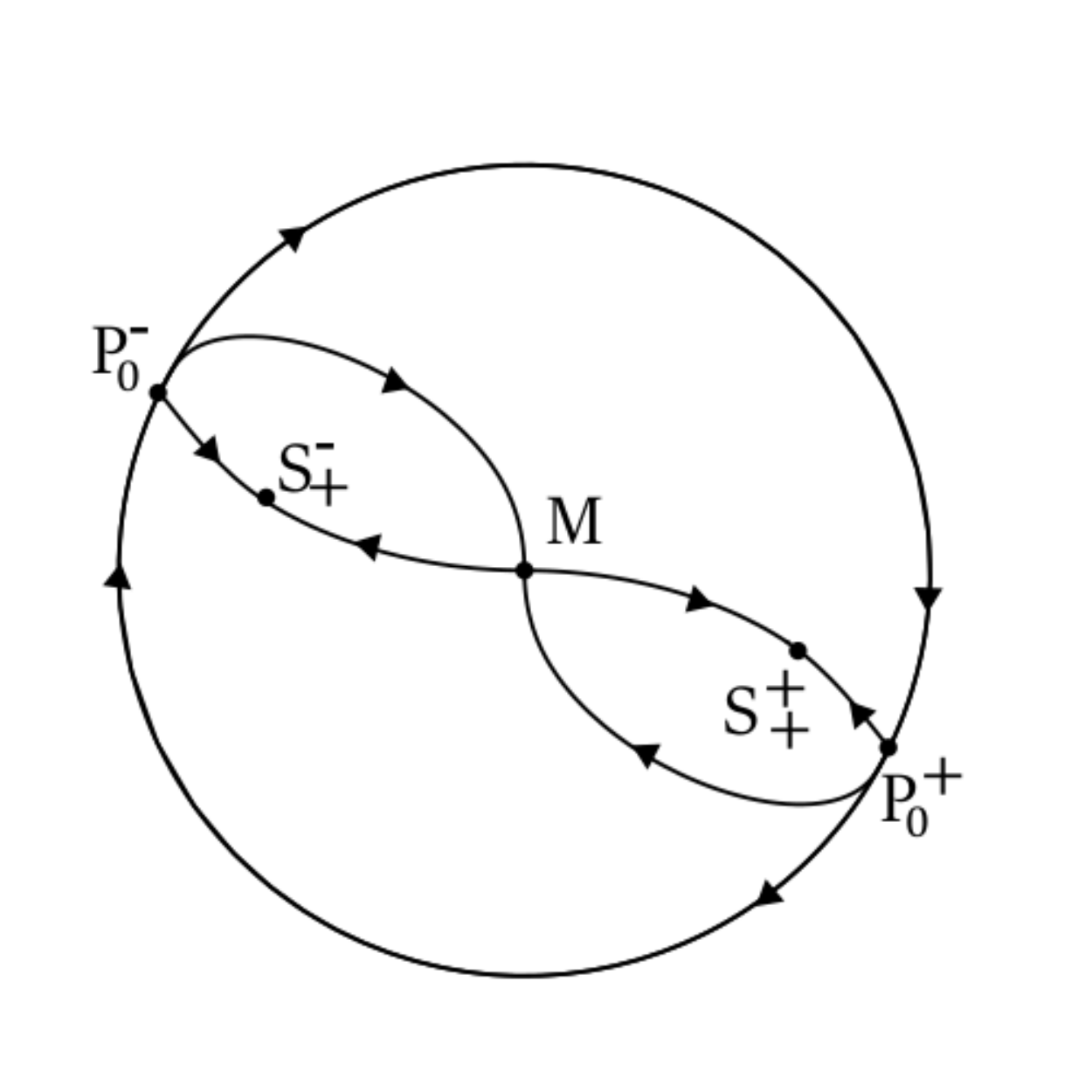}}
			\subfigure[$f_{-}(K,n)<2n<\nu$.]{\label{Sim6}
			\includegraphics[width=0.30\textwidth]{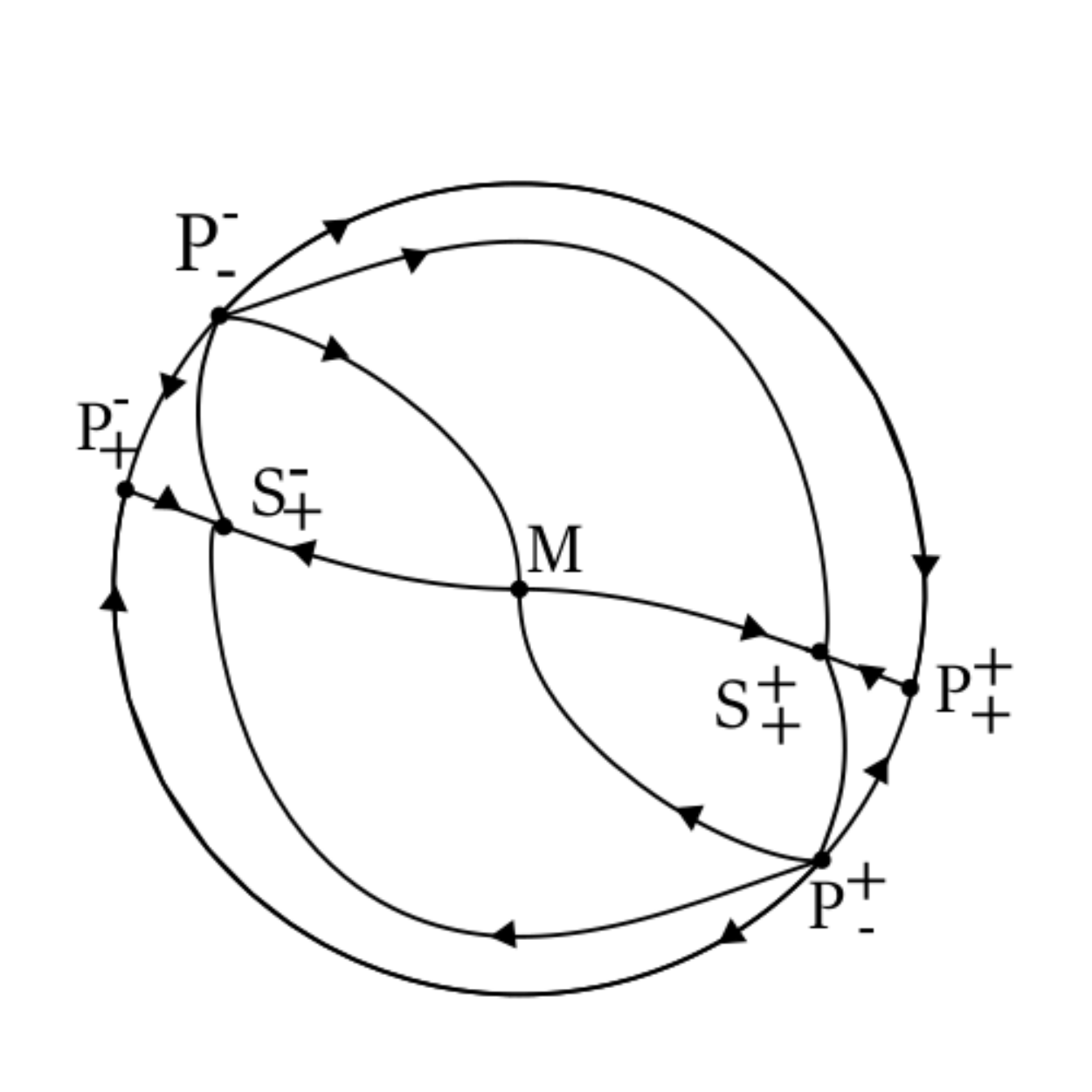}}
	\end{center}
	\vspace{-0.5cm}
	\caption{Poincar\'e-Lyapunov disks when $p=\frac{1}{2}(n-1)$ and $K\in(-\frac{1}{n},0]$. These cases are numerically exemplified with $n=3$ and $\gamma_{\mathrm{pf}}=5/4$, i.e. $K=-\frac{1}{15}$ and $f_{-}(-\frac{1}{15},3)=\frac{8}{5}\sqrt{5+2\sqrt{10}}$, so that $K<-\frac{1}{4n(n-1)}$ and $f_-(K,n)<2n$ since $K>g^{-}_{-}(n)$.
	}
	\label{fig:PL2}
\end{figure}
\begin{theorem}
	Let $p=\frac{1}{2}(n-1)$ with $p>0$ ($n>1$ with $n$ odd).  If $K\in\left(0,\frac{1}{n}\right)$, i.e. $\gamma_\mathrm{pf}\in\left(\frac{2(n-1)}{n},2\right)$, then for all $0<\nu<2n\sqrt{K}$, all interior orbits on the Poincar\'e-Lyapunov unit disk converge to a unique interior stable limit cycle $\Gamma_{\mathrm{in}}$ as $\xi\rightarrow+\infty$. 
\end{theorem}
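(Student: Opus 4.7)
The plan is to establish existence, uniqueness and stability of the interior limit cycle $\Gamma_\mathrm{in}$ by applying the classical Li\'enard uniqueness theorem to the Li\'enard form recorded in the remark containing~\eqref{Lienard2}, and then to invoke the Poincar\'e--Bendixson theorem on the Poincar\'e--Lyapunov disk to conclude global convergence, exploiting that both the only finite equilibrium $\mathrm{M}$ and the boundary cycle $\Gamma_\infty$ are repelling.

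First, I set $F(x) = \int_0^x f(s)\,ds = \frac{\nu}{n} x^n - \frac{3(1+nK)}{n(1-K)} x$ and $G(x) = \int_0^x g(s)\,ds$. Because $n$ is odd, $n-1$ is even, so $f$ is even and $g$ is odd; hence $F$ is odd and $G$ is even. The next step is to verify the hypotheses of the standard Li\'enard uniqueness theorem (see e.g.~\cite{Perko,BookPlanar}): (i) $xg(x) > 0$ for $x \neq 0$; (ii) $F$ has a unique positive zero $x_1$ with $F < 0$ on $(0, x_1)$, and $F$ is strictly increasing with $F(x) \to +\infty$ on $(x_1, +\infty)$; and (iii) $G(x) \to \infty$ as $|x| \to \infty$.

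Setting $u = x^{n-1} \geq 0$, one has
\begin{equation*}
\frac{g(x)}{x} = n u^2 - \frac{3\nu}{n(1-K)}\, u + \left(\frac{3}{n(1-K)}\right)^{\!2} n K,
\end{equation*}
a quadratic in $u$ with discriminant proportional to $\nu^2 - 4n^2 K$; this is negative precisely because $0 < \nu < 2n\sqrt{K}$, which establishes (i). For (ii), $F(x) = x\bigl(\tfrac{\nu}{n} x^{n-1} - \tfrac{3(1+nK)}{n(1-K)}\bigr)$ has unique positive zero $x_1 = \bigl(3(1+nK)/(\nu(1-K))\bigr)^{1/(n-1)}$, and $F'(x) = f(x)$ has its unique positive zero at $x_* = n^{-1/(n-1)} x_1 < x_1$, so $F$ is monotonically increasing on $(x_*, \infty) \supset (x_1, \infty)$. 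Condition (iii) is immediate from the leading $\tfrac{1}{2n}\cdot n\, x^{2n}$ term of $G$. The Li\'enard uniqueness theorem then produces a unique limit cycle $\Gamma_\mathrm{in}$ in the $(x_3, y_3)$ plane, which is orbitally stable since the origin $\mathrm{M}$ is a hyperbolic source for $K > 0$ (as shown earlier in this subsection).

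To complete the proof, I combine this with the previously established facts: for $0 < \nu < 2n\sqrt{K}$ no additional finite equilibria $\mathrm{S}^\pm_\pm$ exist, and since $K < 1/n$ implies $\nu < 2n\sqrt{K} < 2n$, Lemma~\ref{StabGamma} gives that the boundary cycle $\Gamma_\infty$ at $\{r = 1\}$ is an unstable limit cycle. Every interior orbit is therefore eventually trapped in a compact annular region between a small neighbourhood of $\mathrm{M}$ and a neighbourhood of $\Gamma_\infty$; this region contains no equilibria and, by uniqueness, exactly one periodic orbit $\Gamma_\mathrm{in}$. The Poincar\'e--Bendixson theorem then forces the $\omega$-limit of every interior orbit to be $\Gamma_\mathrm{in}$. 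The main obstacle is the verification of the Li\'enard hypotheses under the sharp bound $\nu < 2n\sqrt{K}$; indeed $\nu = 2n\sqrt{K}$ is exactly the threshold at which the discriminant in (i) vanishes, consistent with the appearance of the extra equilibria $\mathrm{S}^\pm_\pm$ for $\nu \geq 2n\sqrt{K}$ treated in the subcase analysis above.
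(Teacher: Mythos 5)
Your proposal is correct and follows essentially the same route as the paper: apply Li\'enard's theorem to the system~\eqref{Lienard2} to get existence, uniqueness and stability of $\Gamma_\mathrm{in}$, then combine the repelling character of $\mathrm{M}$ and of the boundary cycle $\Gamma_\infty$ (Lemma~\ref{StabGamma}, using $\nu<2n\sqrt{K}<2n$) with the Poincar\'e--Bendixson theorem to conclude global convergence. Your explicit verification of the hypothesis $\bar{x}g(\bar{x})>0$ via the discriminant of the quadratic in $u=\bar{x}^{n-1}$ is a welcome elaboration of a step the paper only asserts, and it correctly identifies $\nu=2n\sqrt{K}$ as the sharp threshold.
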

\begin{proof}
	Here we make use of the equivalent system~\eqref{LIenardPlane} and apply \emph{Li\'enard's Theorem}, see e.g.~\cite{Perko}. 
	The functions $g(\bar{x})$ and $F(\bar{x})=\int^{\bar{x}}_0 f(s)ds$ in~\eqref{Lienard2} are odd functions of $\bar{x}$ and, if $\nu<2n\sqrt{K}$, then $\bar{x}g(\bar{x})>0$ for $\bar{x}\neq0$. Moreover, $F(0)=0$, $F^{\prime}(0)<0$, and $F(\bar{x})$ has unique positive zero at $\bar{x}=\left(\frac{n}{\nu}\frac{3\gamma_{\mathrm{pf}}}{2(n-1)}(1+nK)\right)^{\frac{1}{n-1}}$. Furthermore for $\bar{x}\geq \left(\frac{n}{\nu}\frac{3\gamma_{\mathrm{pf}}}{2(n-1)}(1+nK)\right)^{\frac{1}{n-1}}$, $F(\bar{x})$ is monotonically increasing to infinity as $\bar{x}\rightarrow +\infty$. Therefore the system has a unique stable interior limit cycle $\Gamma_\mathrm{in}$. Since in this case $\nu<2n\sqrt{K}<2n$, then by Lemma~\ref{StabGamma} the infinity consists of an unstable limit cycle. Moreover $\mathrm{M}$ is a hyperbolic source and therefore the only possible $\omega$-limit set is the unique interior stable limit cycle $\Gamma_\mathrm{in}$.
\end{proof}
\begin{remark}
	We note that Li\'enard's theorem also gives the relative location of the interior stable limit cycle $\Gamma_\mathrm{in}$.
\end{remark}
Figure~\ref{fig:PP2} shows the Poincar\'e-Lyapunov disk for $K\in\left(0,\frac{1}{n}\right)$ and $0<\nu<2n\sqrt{K}$, where the orbits accumulate at the interior stable limit cycle $\Gamma_\mathrm{in}$.
\begin{figure}[ht!]
	\begin{center}
		\includegraphics[width=0.30\textwidth]{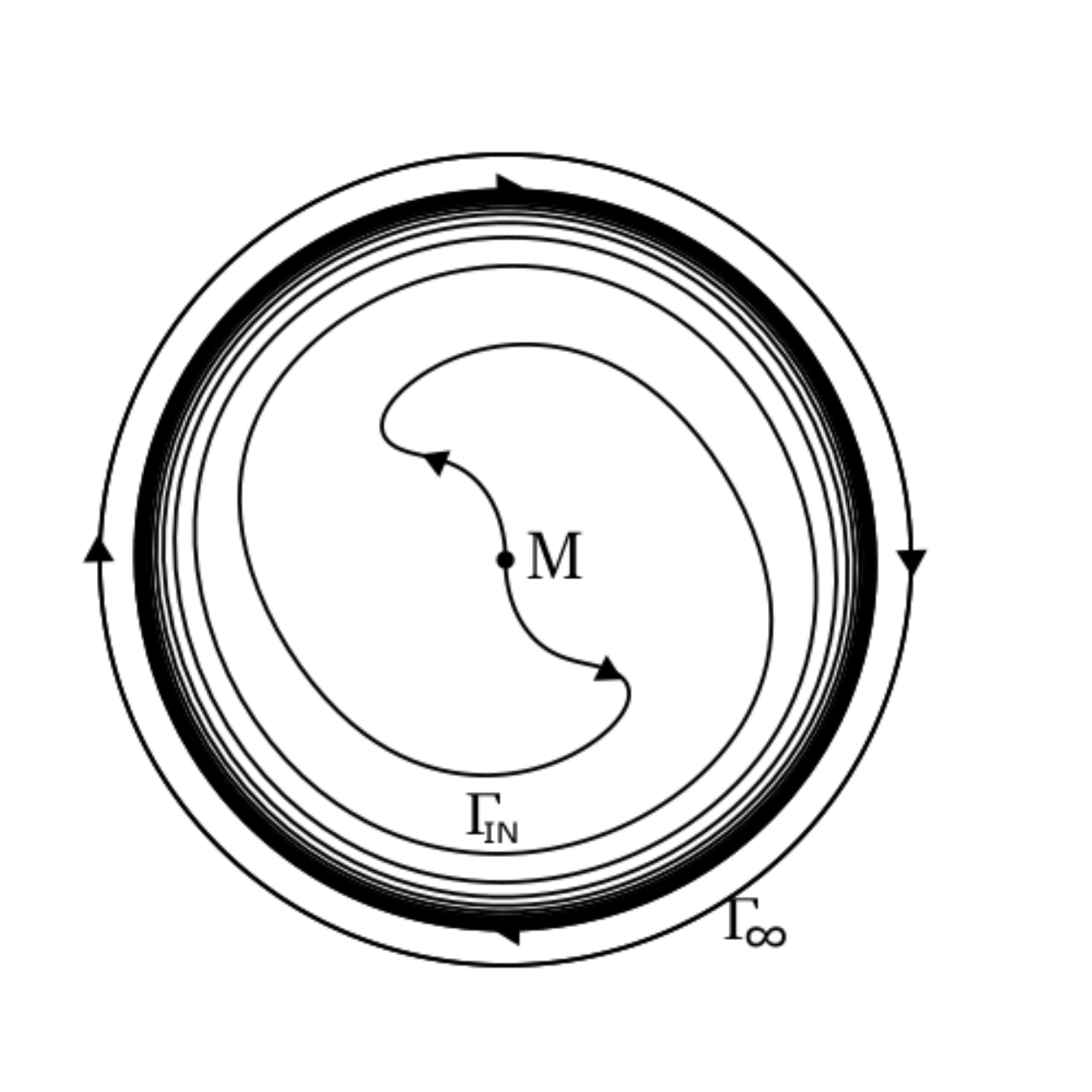}
	\end{center}
	\vspace{-0.5cm}
	\caption{Poincar\'e-Lyapunov disk when $p=\frac{1}{2}(n-1)$ and $K\in(0,\frac{1}{n})$, with $\nu\in(0,2n\sqrt{K})$.}
	\label{fig:PP2}
\end{figure}
\begin{remark}\label{Missing}
	We now briefly discuss the missing case $K\in(0,\frac{1}{n})$ with $\nu\geq2n\sqrt{K}$. Numerical results suggest that, in this case, an interior stable limit cycle exists if $\nu\leq (1+nK)\sqrt{n}$, with $\mathrm{S}^{\pm}_{+}$ being sources when $\nu< (1+nK)\sqrt{n}$ and centers with an unstable outer periodic orbit if $\nu= (1+nK)\sqrt{n}$, while no interior periodic orbits exist when $\nu>(1+nK)\sqrt{n}$. Recall that when $K>0$, the fixed point $\mathrm{M}$ is a source and $\mathrm{S}^{\pm}_{-}$ are saddles when they exist, i.e. when $\nu>2n\sqrt{K}$. If $\nu=2n\sqrt{K}$, then $\mathrm{S}^{\pm}_{+}$ and $\mathrm{S}^{\pm}_{-}$ merge into the fixed points $\mathrm{S}^{\pm}_{0}$, which are unstable strong focus. See Figure~\ref{fig:PL4}, for some representative cases.
\end{remark}
\begin{figure}[ht!]
	\begin{center}
				\subfigure[$\nu=2n\sqrt{K}$.]{\label{c0}
			\includegraphics[width=0.25\textwidth]{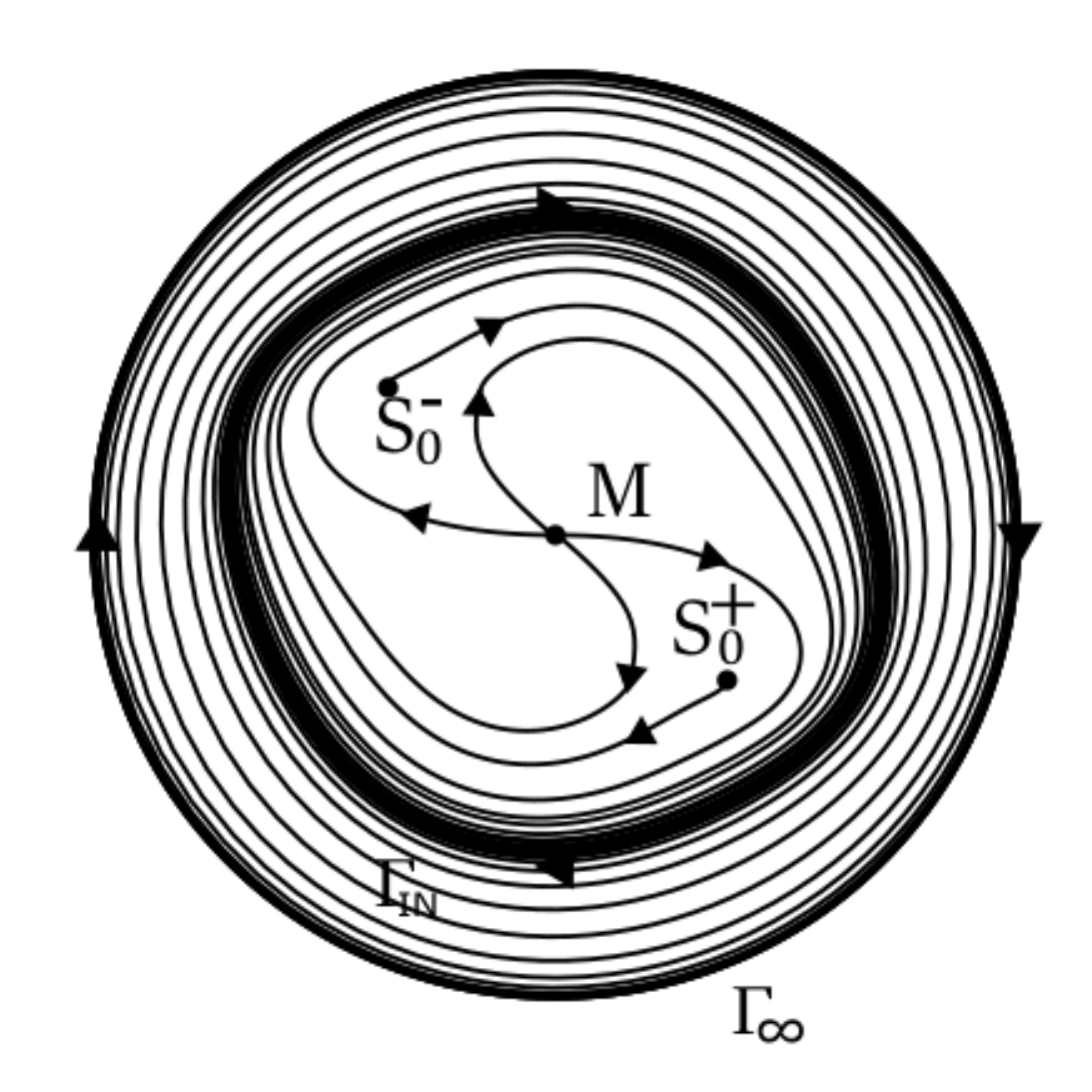}}
		\subfigure[$2n\sqrt{K}<\nu\leq f_+(n,K)$.]{\label{c1}
			\includegraphics[width=0.25\textwidth]{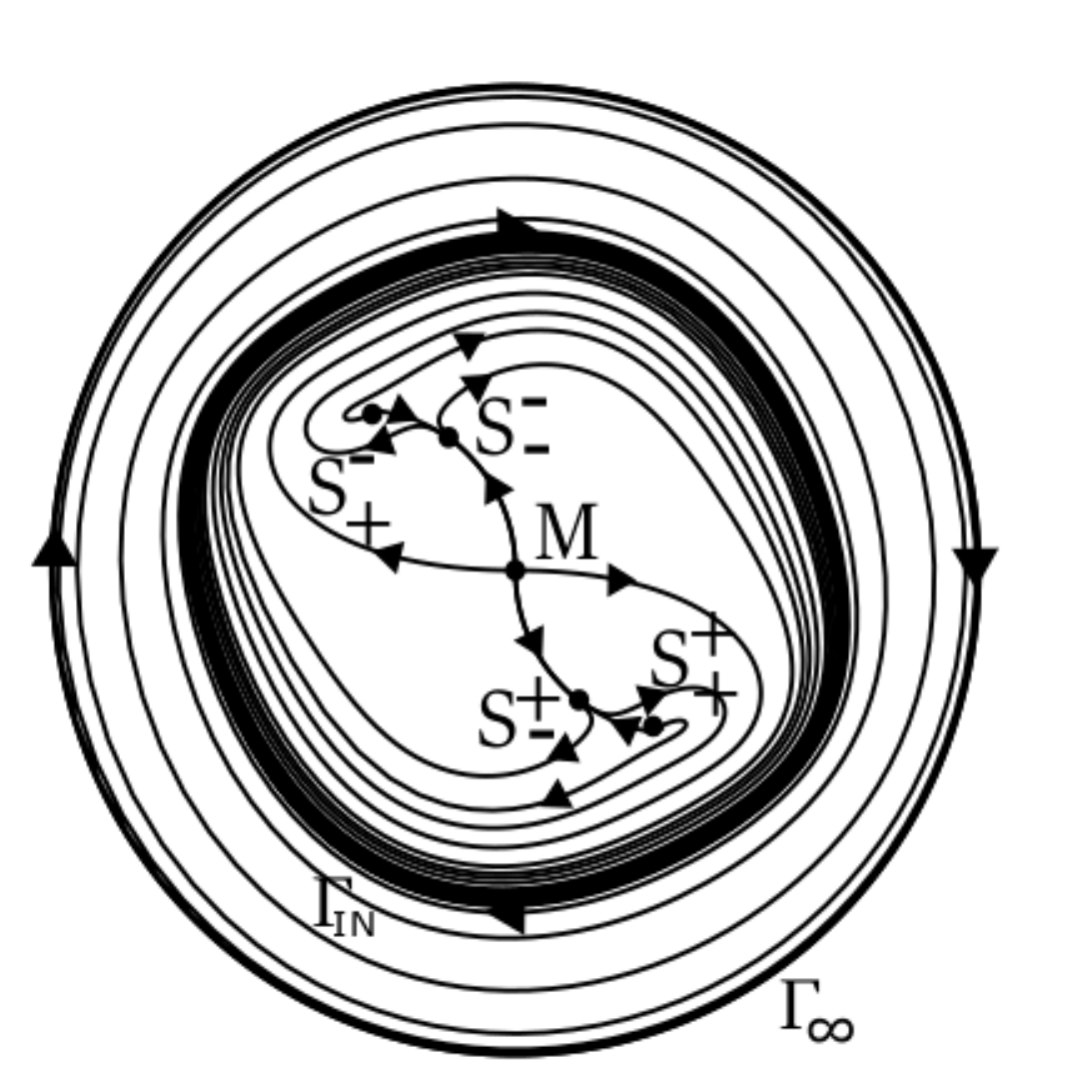}}
		\subfigure[$f_+(n,K)<\nu<(1+nK)\sqrt{n}$.]{\label{c2}
			\includegraphics[width=0.25\textwidth]{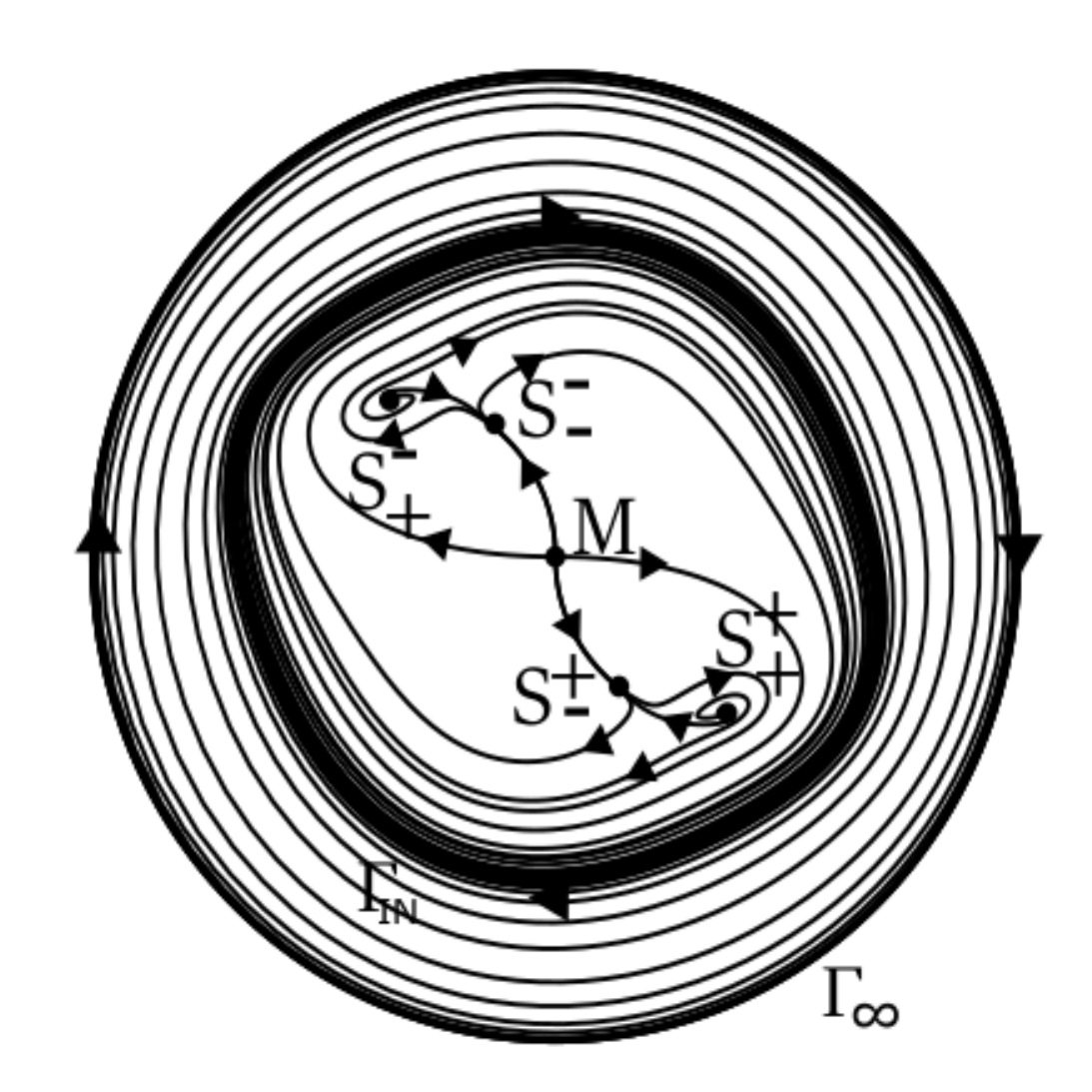}}
		\subfigure[$\nu=(1+nK)\sqrt{n}$.]{\label{c3}
			\includegraphics[width=0.25\textwidth]{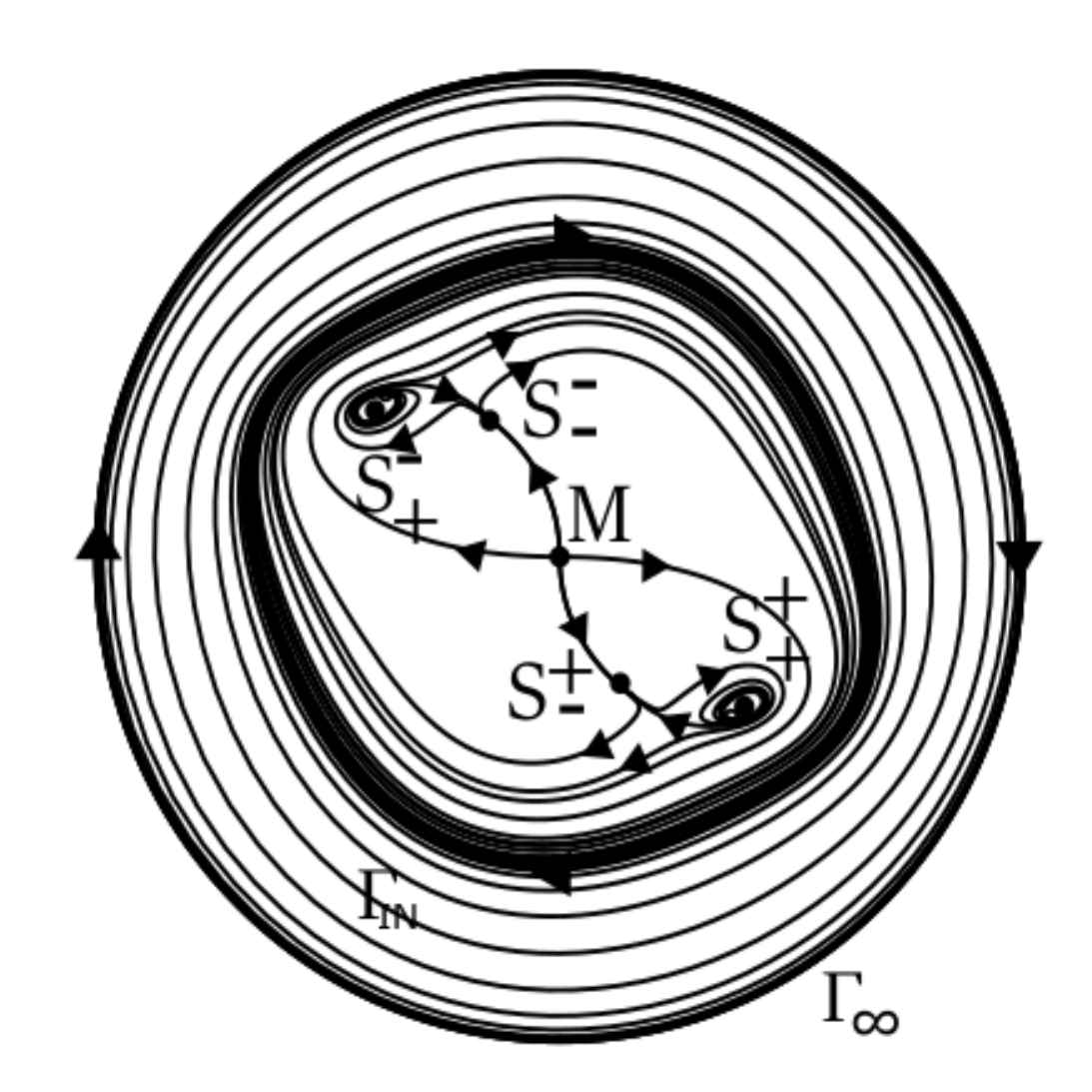}}
		\subfigure[{$(1+nK)\sqrt{n}<\nu<f_-(n,K)$}.]{\label{c4}
			\includegraphics[width=0.25\textwidth]{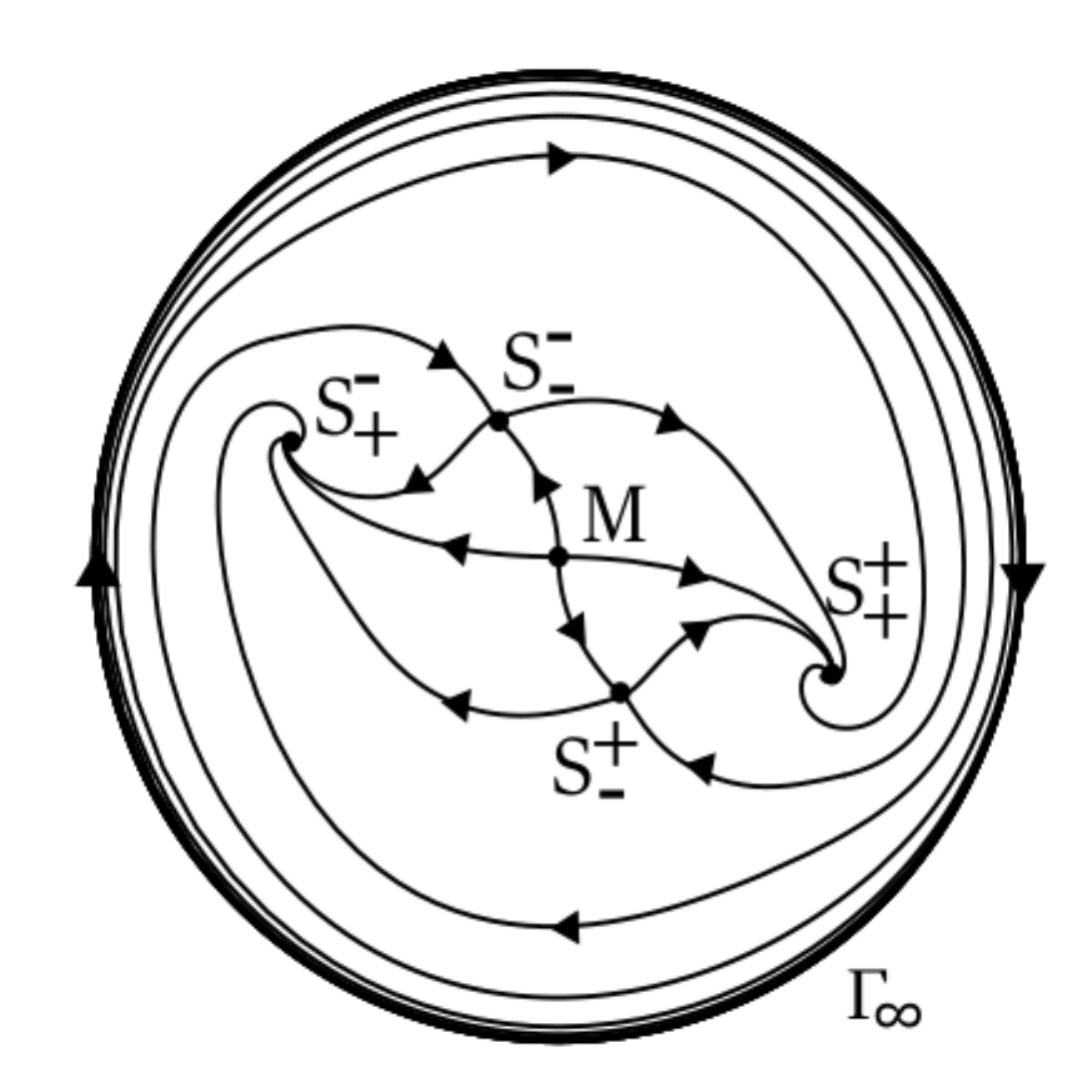}}
		\subfigure[$ f_-(n,K)\leq \nu<2n$.]{\label{c5}
			\includegraphics[width=0.25\textwidth]{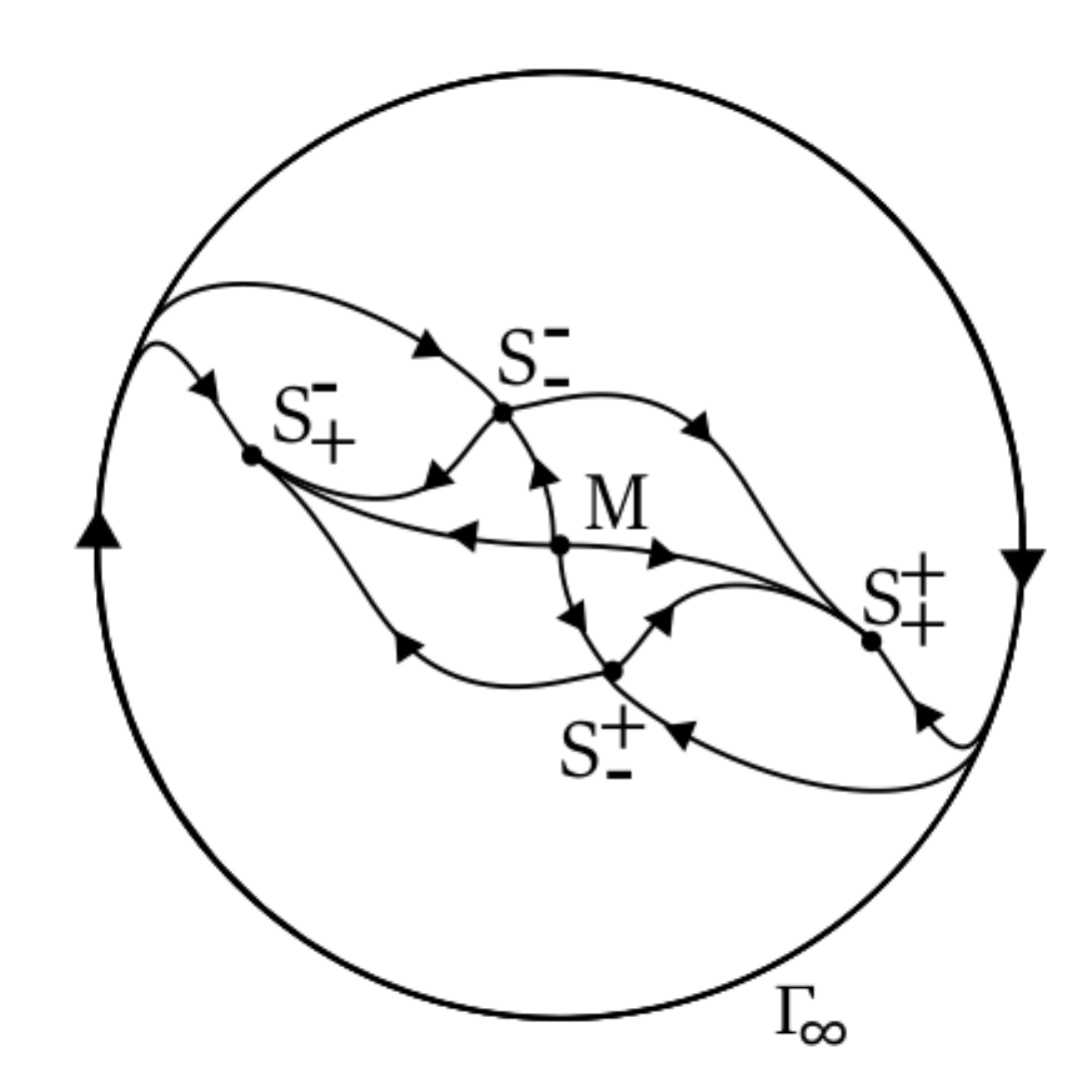}}
	\end{center}
	\vspace{-0.5cm}
	\caption{Poincar\'e-Lyapunov disks when $p=\frac{1}{2}(n-1)$ and $K\in(0,\frac{1}{n})$. 
	%These cases are numerically exemplified with $n=3$, and $\gamma_{\mathrm{pf}}=3/2$, i.e. $K=\frac{1}{9}$.
	}
	\label{fig:PL4}
\end{figure}
Most of the results about the existence of limit cycles for the Li\'enard system rely on the strong assumption that $\bar{x}g(\bar{x})>0$ for $\bar{x}\neq0$, i.e. that the fixed point at the origin is the only fixed point of the system, see e.g.~\cite{SV11} and references therein. For recent works where such assumption is relaxed see e.g.~\cite{GG21} and references therein. Nevertheless, they do not seem enough to prove the numerical results discussed in Remark~\ref{Missing}. This is the case, for instance, of Theorem 3.4 of~\cite{GG21} with $V(x,Y)=Y^2+2G(x)-YF(x)$, for which $H(x)=-2F(x)G'(x)+4F'(x)G(x)=\frac{6(n-1)x^{n+1}}{n^2(n+1)(1-K)}\left(\left(n(1+n)(1+ n K)-\nu^2\right)x^{n-1}-3\nu\right)$.
\begin{remark}\label{ASFL1_pEQ}
	It is interesting to obtain the asymptotics for the orbits on the cylinder $\mathbf{S}$ towards $\mathrm{FL}_1$. For example when $0<\gamma_\mathrm{pf}<\frac{2n}{n+1}$, there exists a one parameter family of orbits in $\mathbf{S}$ with the asymptotics towards $\mathrm{FL}_1$ as in Remark \ref{AsFL1} with $n-2p=1$.
\end{remark}
%
%%%%%%%%%%%%%%%%%%%%%%%%%%%%%%%%%%%%%%%%%%%%%%%
\section{Dynamical systems' analysis when $p=\frac{n}{2}$}
\label{case2}
%%%%%%%%%%%%%%%%%%%%%%%%%%%%%%%%%%%%%%%%%%%%%%
By setting $p=\frac{n}{2}$ with $n$ even, i.e. $(p,n)=(1,2),(2,4),(3,6),...$, the global dynamical system \eqref{globalsys} becomes
\begin{subequations} \label{system,n=2p}
	\begin{align}
	\frac{dX}{d\tau} &= \frac{1}{n}(1+q)(1-T)X+T\Sigma_{\mathrm{\phi}}, \\
	\frac{d\Sigma_{\mathrm{\phi}}}{d\tau} &= -(2-q)(1-T)\Sigma_{\mathrm{\phi}}-n T X^{2n-1}-\nu(1-T)X^{n}\Sigma_{\mathrm{\phi}} , \\
	\frac{dT}{d\tau} &= \frac{1}{n}(1+q)T(1-T)^2 ,
	\end{align}
\end{subequations}
where the deceleration parameter $q$ is given by $\eqref{deceleration}$. The auxiliary equation for $\Omega_\phi$ (or equivalently for $\Omega_\mathrm{pf}=1-\Omega_\phi$) takes the form
\begin{equation}
\frac{d\Omega_{\phi}}{d\tau} = -3(1-T)\left[(\gamma_\mathrm{pf}-\gamma_\phi)\Omega_\phi(1-\Omega_{\phi})+\nu \sigma_\phi \Omega^{\frac{3}{2}}_\phi\right],
\end{equation}
where we recall that the \emph{effective scalar-field equation of state} is $\gamma_\phi=2\Sigma^2_\phi/\Omega_\phi$
and we introduced the \emph{effective interaction term} $\sigma_\phi$ defined by
\begin{equation}
\sigma_\phi = \frac{2\Sigma_{\mathrm{\phi}}^2X^n}{3\Omega^{3/2}_\phi}.
\end{equation}
%
%%%%%%%%%%%%%%%%%%%%%%%%%%%%%%%%%%%%%%%%%%%%%%
\subsection{Invariant boundary $T=0$}
%%%%%%%%%%%%%%%%%%%%%%%%%%%%%%%%%%%%%%%%%%%%%
%
The induced flow on the $T=0$ invariant boundary is given by
\begin{equation}\label{T=0,n=2p}
	\frac{dX}{d\tau}= \frac{1}{n}(1+q)X, \qquad
\frac{d\Sigma_{\mathrm{\phi}}}{d\tau}=-\Big[2-q+\nu X^{n}\Big]\Sigma_{\mathrm{\phi}},
\end{equation}
and from the auxiliary equation for $\Omega_\mathrm{pf}$, we have that
\begin{equation}\label{Inv}
\left.\frac{d\Omega_{\mathrm{pf}}}{d\tau}\right|_{\Omega_{\mathrm{pf}}=1}=0, \qquad\left.\frac{d\Omega_{\mathrm{pf}}}{d\tau}\right|_{\Omega_{\mathrm{pf}}=0}=2\nu X^{n}\Sigma_{\mathrm{\phi}}^2 \geq 0.
\end{equation}
Therefore on the $T=0$ invariant boundary, the system \eqref{system,n=2p} admits five fixed points. The fixed point at the origin, with $\Omega_{\mathrm{pf}}=1$, is given by 
\begin{equation}
\mathrm{FL_{0}}:\quad X=0 \quad, \quad \Sigma_{\mathrm{\phi}}=0, \quad\quad T=0,
\end{equation}
where $q=\frac{1}{2}(3\gamma_\mathrm{pf}-2)$ corresponds to the flat Friedmann-Lema\^itre solution, and whose linearisation yields the eigenvalues $\frac{3}{2n}$, $-\frac{3}{2}(2-\gamma_{\mathrm{pf}})$ and $\frac{3}{2n}$ with associated eigenvectors the canonical basis of $\mathbb{R}^3$. This fixed point has one negative and two positive real eigenvalues, being a hyperbolic saddle from which originates a 1-parameter family of Class A orbits in $\mathbf{S}$ (see Definition \ref{definition1}). 

On $T=0$, the subset $\Omega_\mathrm{pf}=0$ ($X^{2n}+\Sigma^2_\phi=1$) is not invariant (except at $\Sigma_\phi=0$ or $X=0$), but it is future-invariant as follows from~\eqref{Inv}. On this subset there are four fixed points. The first two equivalent fixed points are given by
\begin{equation}
\mathrm{K^{\pm}}:\quad X=0,\quad\quad \Sigma_{\mathrm{\phi}}=\pm 1, \quad\quad T=0,
\end{equation}
and correspond to massless scalar field states (or kinaton states) with $q=2$. The linearisation around these fixed points yields the eigenvalues $\frac{3}{n}$, $\frac{3}{n}$, and $3(2-\gamma_{\mathrm{pf}})$, with associated eigenvectors the canonical basis of $\mathbb{R}^3$. It follows that $\mathrm{K^{\mathrm{\pm}}}$ are hyperbolic sources, so that from each $\mathrm{K^{\mathrm{\pm}}}$ originates a 2-parameter family of Class A orbits in $\mathbf{S}$. The other two equivalent fixed points are
\begin{equation}
\mathrm{dS^{\pm}_0}:\quad X=\pm 1,\quad\quad \Sigma_{\mathrm{\phi}}=0,\quad\quad T=0 ,
\end{equation}
and correspond to a quasi-de-Sitter state with $q=-1$. 
The linearisation yields the eigenvalues $-3\gamma_{\mathrm{pf}}$, $-(3+\nu)$ and $0$ with eigenvectors $(1,0,0)$, $(0,1,0)$ and $(0,\mp\frac{n}{3+\nu},1)$. These fixed points have two negative real eigenvalues and a zero eigenvalue, possessing a 2-dimensional stable manifold contained in the boundary $T=0$, and a 1-dimensional center manifold (the inflationary attractor solution). Just as in section~\ref{T0,n-2p>1}, the monotonicity of $T$ implies that $\mathrm{dS}^\pm_0$ are center-saddles with a unique orbit, the center manifold orbit, entering the state-space $\mathbf{S}$. However, due to its relevant physical meaning, it is important to obtain approximations for the center manifold solution.

In order to analyse the center manifold of $\mathrm{dS}^\pm_0$  we use instead system \eqref{sistemalocal} for the unbounded  variable $\tilde{T}$ for $p=\frac{n}{2}$, and introduce the adapted variables
\begin{equation}\label{def-v}
\bar{X} = X\mp 1,\qquad \bar{\Sigma}_\phi = \Sigma_\phi\pm \frac{n}{3+\nu}\tilde{T},\qquad \bar{T}=\tilde{T},
\end{equation}
which put the fixed points $\mathrm{dS}^\pm_0$ at the origin of coordinates $(\bar{X},\bar{\Sigma}_\phi,\tilde{T})=(0,0,0)$. This leads to the system
\begin{equation}
\frac{d\bar{X}}{dN} = -3\gamma_\mathrm{pf} \bar{X}+F(\bar{X},\bar{\Sigma}_\phi,\tilde{T}),\qquad 
\frac{d\bar{\Sigma}_\phi}{dN} = -(3+\nu)\bar{\Sigma}_\phi+ G(\bar{X},\bar{\Sigma}_\phi,\tilde{T}),\qquad 
\frac{d\tilde{T}}{dN} = N(\bar{X},\bar{\Sigma}_\phi,\tilde{T}),
\end{equation}
where $F$, $G$ and $N$ are functions of higher-order terms. The 1-dimensional center manifold $W^{\mathrm{c}}$ at $\mathrm{dS}^\pm_0$ can be locally represented as the graph  $h\,: \,E^c\rightarrow E^s$, i.e. $(\bar{X},\bar{\Sigma}_\phi)=(h_1(\tilde{T}),h_2(\tilde{T}))$, satisfying the fixed point, $h_1(0)=0=h_2(0)$, and the tangency conditions $\frac{dh_1(0)}{d\tilde{T}}=0=\frac{dh_2(0)}{d\tilde{T}}$. Hence using $\tilde{T}$ as an independent variable, we get
\begin{subequations}\label{CMds}
	\begin{align}
	&\frac{1}{n}(1+q)\Big(\tilde{T}\frac{dh_1}{d\tilde{T}}-(h_1(\tilde{T})\pm 1)\Big)-\tilde{T}\Big(h_2(\tilde{T})\mp \frac{n}{3+\nu}\tilde{T}\Big)=0, \\
	&\frac{1}{n}(1+q)\tilde{T}\Big(\frac{dh_2}{d\tilde{T}}\mp \frac{n}{3+\nu}\Big)+(2-q)\Big(h_2(\tilde{T})\mp \frac{n}{3+\nu}\Big)+\\ \nonumber &\qquad\qquad+\nu \Big(h_1(\tilde{T})\pm 1\Big)^{n}\Big(h_2(\tilde{T})\mp \frac{n}{3+\nu}\tilde{T}\Big)-\tilde{T}\Big(h_1(\tilde{T})\pm1\Big)^{2n-1}=0.
	\end{align}
\end{subequations}
Finding the attractor solution amounts to solve the above system of non-linear ordinary differential equations. We can however approximate the solution by performing a formal power series expansion
\begin{equation}\label{taylor}
h_1(\tilde{T}) = \sum_{i=1}^{N}a_i \tilde{T}^{i} , \qquad
h_2(\tilde{T}) = \sum_{i=1}^{N}b_i \tilde{T}^{i}, \qquad \text{as}\quad \tilde{T}\rightarrow 0,
\end{equation}
where $a_i,b_i\in\mathbb{R}$. Inserting \eqref{taylor} into~\eqref{CMds} subject to the fixed point and tangency conditions, and solving the resulting linear system of equations for the coefficients, results in
\begin{subequations}\label{exptaylor}
	\begin{align*}
	X &= \pm 1\mp \frac{n}{6\gamma_{\mathrm{pf}}}\frac{3\gamma_{\mathrm{pf}}+2\nu}{(3+\nu)^2}\tilde{T}^2+\mathcal{O}(\tilde{T}^3), \\
	\Sigma_{\mathrm{\phi}}  &=\mp\frac{n}{3+\nu}\tilde{T}\left[1-\frac{n}{6\gamma_{\mathrm{pf}}(3+\nu)^3}\Bigg(3\gamma_{\mathrm{pf}}\Big(3+(1-n)\nu\Big)-2\nu \Big(3+(1-n)\nu-6n\Big)\Bigg)\tilde{T}^2+\mathcal{O}(\tilde{T}^3)\right], \nonumber\\
	\Omega_\mathrm{pf} &= \frac{2n^2 \nu}{3\gamma_{\mathrm{pf}}(3+\nu)^2}\tilde{T}^2+\mathcal{O}(\tilde{T}^3).
	\end{align*}
\end{subequations}
Therefore, it follows that to leading order on the center manifold
\begin{equation}
\frac{d\tilde{T}}{dN}=\frac{n}{3+\nu}\tilde{T}^3+\mathcal{O}(\tilde{T}^4),\quad\text{as}\quad\tilde{T}\rightarrow 0,
\end{equation}
which shows explicitly that $\mathrm{dS}^\pm_0$ are center saddles with a unique class A center manifold orbit originating from each fixed point into the interior of $\mathbf{S}$. 

We now show that on $T=0$ the fixed points $\mathrm{FL}_0$, $\mathrm{K}^{\pm}$ and $\mathrm{dS}^{\pm}_0$ are the only possible $\alpha$-limit sets for Class A orbits in $\mathbf{S}$, and that the orbit structure on $T=0$ is very simple:
\begin{figure}[ht!]
	\begin{center}
		\includegraphics[width=0.30\textwidth]{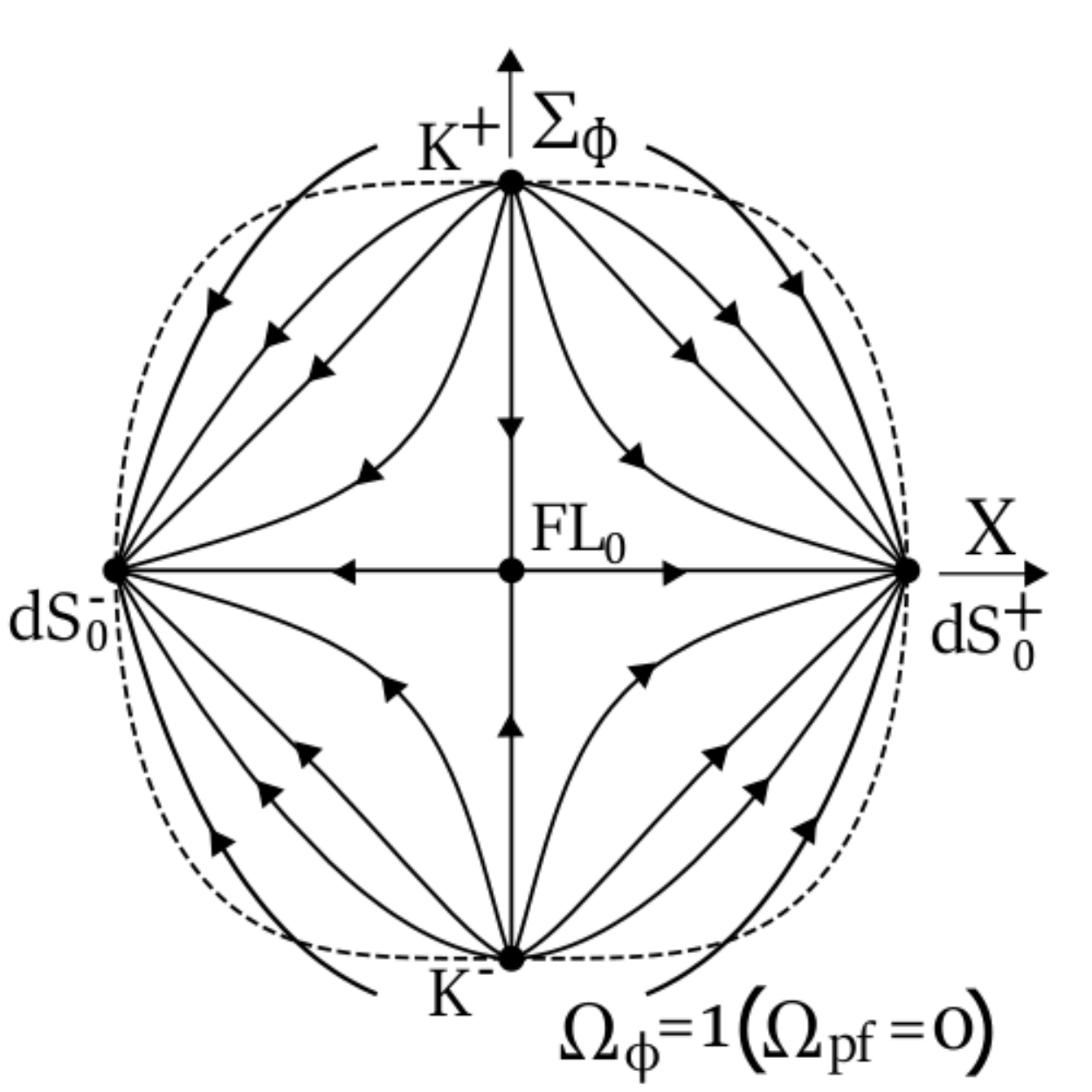}
	\end{center}
	\vspace{-0.5cm}
	\caption{Invariant boundary $\{T=0\}$ when $p=\frac{n}{2}$.}
	\label{fig:T=0,n=2p}
\end{figure}
\begin{lemma}\label{Dulac1}
	Let $p=\frac{n}{2}$. Then the $\{T=0\}$ invariant boundary consists of heteroclinic orbits connecting the fixed points and semi-orbits crossing the set $\Omega_\mathrm{pf}=0$ and converging to $\mathrm{dS}^{\pm}_0$, as depicted in Figure~\ref{fig:T=0,n=2p}. 
\end{lemma}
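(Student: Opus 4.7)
The plan is to partition the physical disk on $\{T=0\}$ by the two invariant coordinate axes and establish, in each of the resulting open quadrants, strict monotonicity of both $|X|$ and $|\Sigma_\phi|$. This replaces the conserved-quantity argument used in Lemma~\ref{LemT0_ng2p}, which fails when $\nu>0$, and is already strong enough to identify the global phase portrait on $\{T=0\}$ without appealing to the Poincar\'e--Bendixson theorem.

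First I would verify that $\{X=0\}$ and $\{\Sigma_\phi=0\}$ are invariant for the reduced flow~\eqref{T=0,n=2p}, since both right-hand sides vanish identically on these sets. Restricting to the axes, and using the constraint $\Omega_\mathrm{pf}=1-\Sigma_\phi^2-X^{2n}$, yields $\tfrac{d\Sigma_\phi}{d\tau}=-\tfrac{3}{2}(2-\gamma_\mathrm{pf})(1-\Sigma_\phi^2)\Sigma_\phi$ on $\{X=0\}$ and $\tfrac{dX}{d\tau}=\tfrac{3\gamma_\mathrm{pf}}{2n}(1-X^{2n})X$ on $\{\Sigma_\phi=0\}$, giving respectively the heteroclinic orbits $\mathrm{K}^\pm\to\mathrm{FL}_0$ and $\mathrm{FL}_0\to\mathrm{dS}^\pm_0$. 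These two axes, together with the shell arc $\Omega_\mathrm{pf}=0$, bound four open quadrants inside the disk $\{X^{2n}+\Sigma_\phi^2<1\}$.

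The core step is to rewrite the coefficients in~\eqref{T=0,n=2p} as
\[
1+q \;=\; 3\Sigma_\phi^2 + \tfrac{3}{2}\gamma_\mathrm{pf}\Omega_\mathrm{pf}, \qquad 2-q+\nu X^n \;=\; \tfrac{3}{2}(2-\gamma_\mathrm{pf})(1-\Sigma_\phi^2) + \tfrac{3}{2}\gamma_\mathrm{pf} X^{2n} + \nu X^n,
\]
and observe that both expressions are non-negative on the closed disk and strictly positive away from the fixed points. Here the parity condition that $n$ is even, forced by $p=n/2$ with $p\in\mathbb{N}$, is essential: it guarantees $\nu X^n\geq 0$, so the interaction term cannot reverse the sign of the second expression. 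Consequently $|X|$ is strictly increasing and $|\Sigma_\phi|$ strictly decreasing along every orbit in each quadrant, which forces convergence to the unique fixed point in the quadrant's closure compatible with $|X|\to 1$ and $\Sigma_\phi\to 0$, namely the adjacent $\mathrm{dS}^\pm_0$. For class $\mathrm{B}$ orbits I would use that on the open shell arc $d\Omega_\mathrm{pf}/d\tau=2\nu X^n\Sigma_\phi^2>0$, so such orbits enter the interior immediately and are then captured by the same monotonicity argument. The only point that needs a careful word is that the strict monotonicity persists up to (and including approach to) the shell, which is immediate from the displayed formulas; I do not anticipate a significant obstacle.
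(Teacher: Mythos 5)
Your proposal is correct, but it reaches the conclusion by a genuinely different route than the paper. The paper's proof simply declares the argument ``identical'' to that of Lemma~\ref{LemT0_ng2p}: the two invariant axes split the disk into four quadrants, the index theorem excludes periodic orbits in each (no interior fixed points), and the Poincar\'e--Bendixson theorem then forces every orbit to be heteroclinic, with the caveat that for $p=n/2$ there is no conserved quantity and the shell $\Omega_{\mathrm{pf}}=0$ is only future-invariant. You keep the same quadrant decomposition but replace the Poincar\'e--Bendixson/index-theorem machinery with two explicit monotone quantities, using the decompositions $1+q=3\Sigma_\phi^2+\tfrac{3}{2}\gamma_{\mathrm{pf}}\Omega_{\mathrm{pf}}$ and $2-q+\nu X^{n}=\tfrac{3}{2}(2-\gamma_{\mathrm{pf}})(1-\Sigma_\phi^2)+\tfrac{3}{2}\gamma_{\mathrm{pf}}X^{2n}+\nu X^{n}$, which are nonnegative and vanish only at $\mathrm{dS}^{\pm}_0$ and $\mathrm{K}^{\pm}$ respectively; your observation that $n=2p$ even is what keeps $\nu X^{n}\geq 0$ is exactly the right point. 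This is more elementary, rules out periodic orbits and closed saddle connections for free, and copes naturally with the mere future-invariance of the shell, since only positive invariance of the quadrants is used. The one step to make explicit is the passage from monotonicity to the identification of the limit: monotonicity alone gives $|X|\to X_\infty\in(0,1]$ and $|\Sigma_\phi|\to\Sigma_\infty\in[0,1)$, and you then need the standard LaSalle/invariance remark that the $\omega$-limit set is invariant and that your two coefficients cannot vanish on it except at $\mathrm{dS}^{\pm}_0$, which forces $X_\infty=1$, $\Sigma_\infty=0$; the time-reversed argument (either convergence to $\mathrm{K}^{\pm}$ or exit through the shell in finite negative time) supplies the heteroclinics from $\mathrm{K}^{\pm}$ and the class $\mathrm{B}$ semi-orbits. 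With that sentence added, your proof is complete and, in my view, cleaner than the one in the paper.
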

\begin{proof}
	 The proof is identical to the proof of Lemma~\ref{LemT0_ng2p}, although in this case there are no conserved quantities, and the set $\Omega_\mathrm{pf}=0$ is not invariant but future invariant, except at $\mathrm{K}^\pm$ and $\mathrm{dS}^\pm_0$.
\end{proof}
\begin{theorem}
	Let $p=\frac{n}{2}$. Then the $\alpha$-limit set for class $\mathrm{A}$ orbits in $\mathbf{S}$ consists of fixed points on $\{T=0\}$. In particular as $\tau\rightarrow-\infty$, a 2-parameter set of orbits converges to each $\mathrm{K}^\pm$, a 1-parameter set orbits converges to $\mathrm{FL}_0$, and a unique center manifold orbit converges to each $\mathrm{dS}^\pm_0$. 
\end{theorem}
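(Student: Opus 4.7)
My plan is to assemble the proof by combining three already-established ingredients: the global monotonicity result Lemma~\ref{lemma1}, the description of the $\{T=0\}$ boundary flow in Lemma~\ref{Dulac1}, and the local linear/centre-manifold analyses of the five fixed points $\mathrm{FL}_0$, $\mathrm{K}^\pm$, $\mathrm{dS}^\pm_0$ carried out immediately above. First, by Lemma~\ref{lemma1}, every class $\mathrm{A}$ orbit in the interior of $\mathbf{S}$ has its $\alpha$-limit set contained in the invariant boundary $\{T=0\}$. Second, by Lemma~\ref{Dulac1}, the flow on $\{T=0\}$ decomposes into the five fixed points together with the heteroclinic orbits and the transverse semi-orbits crossing $\Omega_\mathrm{pf}=0$ and converging to $\mathrm{dS}^{\pm}_0$. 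Since the $\alpha$-limit set of an interior orbit is closed, connected, and flow-invariant, and since no periodic orbits exist on $\{T=0\}$ (as used in Lemma~\ref{Dulac1}), it must reduce to a single fixed point.

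For the parameter-count statements I would read off the dimensions of the local unstable sets from the linearisations. At each $\mathrm{K}^\pm$ all three eigenvalues are positive, so these are hyperbolic sources, an open neighbourhood in $\mathbf{S}$ is filled with orbits emanating from them, and a $2$-parameter family in the class $\mathrm{A}$ region has $\mathrm{K}^\pm$ as $\alpha$-limit. At $\mathrm{FL}_0$ two eigenvalues are positive and one is negative, giving a $2$-dimensional unstable manifold; one of these unstable directions is the $T$-axis itself, and the remaining unstable direction produces a $1$-parameter family of interior orbits with $\alpha$-limit $\mathrm{FL}_0$. At each $\mathrm{dS}^\pm_0$ the linearisation has two negative eigenvalues with eigenvectors tangent to $\{T=0\}$, spanning a $2$-dimensional stable manifold contained in the boundary, plus a simple zero eigenvalue. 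The centre manifold reduction carried out above, with the power series expansion \eqref{exptaylor}, yields the reduced equation $d\tilde{T}/dN = \frac{n}{3+\nu}\tilde{T}^3+\mathcal{O}(\tilde{T}^4)$, showing that $\tilde{T}$ increases monotonically away from the origin on the centre manifold. Hence $\mathrm{dS}^\pm_0$ is a centre-saddle admitting exactly one orbit into the interior of $\mathbf{S}$.

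The main subtlety, and the only step that is not purely a citation of earlier results, is ruling out that the $\alpha$-limit set of a class $\mathrm{A}$ orbit is an entire heteroclinic orbit on $\{T=0\}$ rather than just its endpoint. This is handled by the standard argument that an $\alpha$-limit set that contains an entire heteroclinic segment also contains both its endpoints, and that class $\mathrm{A}$ orbits approach $\{T=0\}$ transversally through the local stable/centre structure analysed above, so the approach singles out one fixed point on the boundary. Combining this with the local counts then yields the precise statement of the theorem, and the explicit asymptotics at $\mathrm{K}^\pm$, $\mathrm{FL}_0$ and $\mathrm{dS}^\pm_0$ can, if desired, be written down using the corresponding eigenvectors and the centre manifold expansion, in direct analogy with the case $p<n/2$ discussed earlier.
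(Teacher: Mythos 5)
Your proposal is correct and follows essentially the same route as the paper, whose proof of this theorem is precisely the citation of Lemma~\ref{lemma1}, Lemma~\ref{Dulac1}, and the local stability analysis of the fixed points on $\{T=0\}$. The extra detail you supply (the parameter counts from the eigenvalue signs and the exclusion of heteroclinic chains as $\alpha$-limit sets) is a reasonable elaboration of what the paper leaves implicit, not a different argument.
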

\begin{proof}
	The proof follows from Lemma \ref{lemma1}, Lemma \ref{Dulac1} and the local stability analysis of the fixed points.
\end{proof}
\begin{remark}\label{PastAs2}
	When $p=n/2$, the asymptotics for the inflationary attractor solution originating from $\mathrm{dS}^\pm_0$ are given by %the past asymptotics for Class $\mathrm{A}$ solutions are similar to the case $p<\frac{n}{2}$, see Remark~\ref{PastAs1}. However for the inflationary attractor solution
	\begin{subequations}	
		\begin{align*}
		&n=1\,:\qquad H\sim -t\quad,\quad \phi\sim-t\quad,\quad \rho_\mathrm{pf} \sim (-t)^{-2}, \quad \quad\text{as}\quad t\rightarrow -\infty \\
		&n=2\,:\qquad H\sim e^{-\frac{2}{3+\nu}t}\quad,\quad \phi\sim e^{-\frac{t}{3+\nu}}\quad,\quad \rho_\mathrm{pf} \sim e^{\frac{4}{3+\nu}t}, \quad \quad\text{as}\quad t\rightarrow -\infty \\
		&n\geq 3\,:\qquad H\sim (-t)^{\frac{n}{n-2}}\quad,\quad \phi\sim (-t)^{\frac{2}{n-2}}\quad,\quad \rho_\mathrm{pf} \sim (-t)^{-2}, \quad \quad\text{as}\quad t\rightarrow -\infty.
		\end{align*}
	\end{subequations}
\end{remark}
%
%%%%%%%%%%%%%%%%%%%%%%%%%%%%%%%%%%%%%%%%%%%%%%%
\subsection{Invariant boundary $T=1$}
%%%%%%%%%%%%%%%%%%%%%%%%%%%%%%%%%%%%%%%%%%%%%%%
%
When $p=\frac{n}{2}$, the induced flow on the $T=1$ invariant boundary is given by
\begin{equation}\label{T1neq2p}
	\frac{dX}{d\tau} = \Sigma_{\mathrm{\phi}}, \qquad 
\frac{d\Sigma_{\mathrm{\phi}}}{d\tau} = -n X^{2n-1}.
\end{equation}
This system has only one fixed point at $\Omega_\mathrm{pf}=1$ given by
\begin{equation}
\mathrm{FL_1}: \quad X=0, \quad\quad \Sigma_{\mathrm{\phi}}=0, \quad\quad T=1.
\end{equation}
\begin{lemma}
	Let $p=\frac{n}{2}$. Then the $\{T=1\}$ invariant boundary is foliated by periodic orbits $\mathcal{P}_{\Omega_\phi}$ characterized by constant values of $\Omega_\phi$, centered at $\mathrm{FL}_1$, see Figure~\ref{fig:T=1,n=2p}.
\end{lemma}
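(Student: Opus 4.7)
The plan is to exhibit a conserved quantity for the reduced planar system on $\{T=1\}$ and argue that its regular level sets are periodic orbits.

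First I would observe that the system~\eqref{T1neq2p} is Hamiltonian: differentiating $\Omega_\phi = X^{2n} + \Sigma_\phi^2$ along the flow gives
\begin{equation*}
\frac{d\Omega_\phi}{d\tau} \;=\; 2nX^{2n-1}\frac{dX}{d\tau} + 2\Sigma_\phi\frac{d\Sigma_\phi}{d\tau} \;=\; 2nX^{2n-1}\Sigma_\phi - 2n\Sigma_\phi X^{2n-1} \;=\; 0,
\end{equation*}
so $\Omega_\phi$ is a first integral (equivalently, $\tfrac{1}{2}\Omega_\phi$ is the Hamiltonian with canonical variables $(X,\Sigma_\phi)$). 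Consequently every orbit on $\{T=1\}$ is contained in a level set $\{X^{2n}+\Sigma_\phi^2 = c\}$ for some $c\in [0,1]$.

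Next, I would show that for each $c\in(0,1)$ the level set $\mathcal{C}_c:=\{X^{2n}+\Sigma_\phi^2=c\}$ is a simple closed curve encircling $\mathrm{FL}_1$. This is immediate since the map $(X,\Sigma_\phi)\mapsto X^{2n}+\Sigma_\phi^2$ is smooth, strictly positive away from the origin, and radially monotone in a homogeneous sense, so each $\mathcal{C}_c$ is a compact smooth Jordan curve (a deformed circle for $n>1$), containing no fixed point.

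Finally I would conclude periodicity by a Poincaré--Bendixson argument: on $\mathcal{C}_c$ the vector field is nonvanishing (as $\Sigma_\phi=0$ together with $X^{2n-1}=0$ forces $c=0$), and $\mathcal{C}_c$ is a compact invariant set of a $C^1$ planar flow containing no equilibria, hence it must itself be a periodic orbit $\mathcal{P}_{\Omega_\phi=c}$. Letting $c$ vary in $(0,1)$ foliates $\{T=1\}\setminus\{\mathrm{FL}_1\}$ by these periodic orbits centered at $\mathrm{FL}_1$, as claimed. There is no serious obstacle here: the only thing to be careful about is checking that $\mathcal{C}_c$ is connected and topologically a circle for all admissible $c$, which follows at once from the even exponent $2n$ and the convexity of $\{X^{2n}+\Sigma_\phi^2\leq c\}$.
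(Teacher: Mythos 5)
Your proposal is correct and follows essentially the same route as the paper: the paper's proof simply notes that the auxiliary equation for $\Omega_\phi$ on $\{T=1\}$ gives $d\Omega_\phi/d\tau=0$, so $\Omega_\phi=X^{2n}+\Sigma_\phi^2$ is constant along orbits, and concludes from there. Your additional verification that each level set $\{X^{2n}+\Sigma_\phi^2=c\}$, $c\in(0,1)$, is a compact Jordan curve free of equilibria (the only fixed point being $\mathrm{FL}_1$ at the origin) and hence a periodic orbit is exactly the detail the paper leaves implicit.
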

\begin{proof}
On $\{T=1\}$ the auxiliary equation for $\Omega_\phi$ reads
\begin{equation}
\frac{d\Omega_\phi}{d\tau}=0\quad\Rightarrow \quad \Omega_\phi=X^{2n}+\Sigma_{\mathrm{\phi}}^2=\text{const.}
\end{equation}
from which the result follows.
%Hence the $\{T=1\}$ invariant boundary is foliated by periodic orbits characterized by $\Omega_\phi=\text{const.}$, where $\Omega_\phi=0$ corresponds to the fixed point $\mathrm{FL}_1$ which is a center.
%	
\end{proof}
\begin{figure}[ht!]
	\begin{center}
		\includegraphics[width=0.30\textwidth]{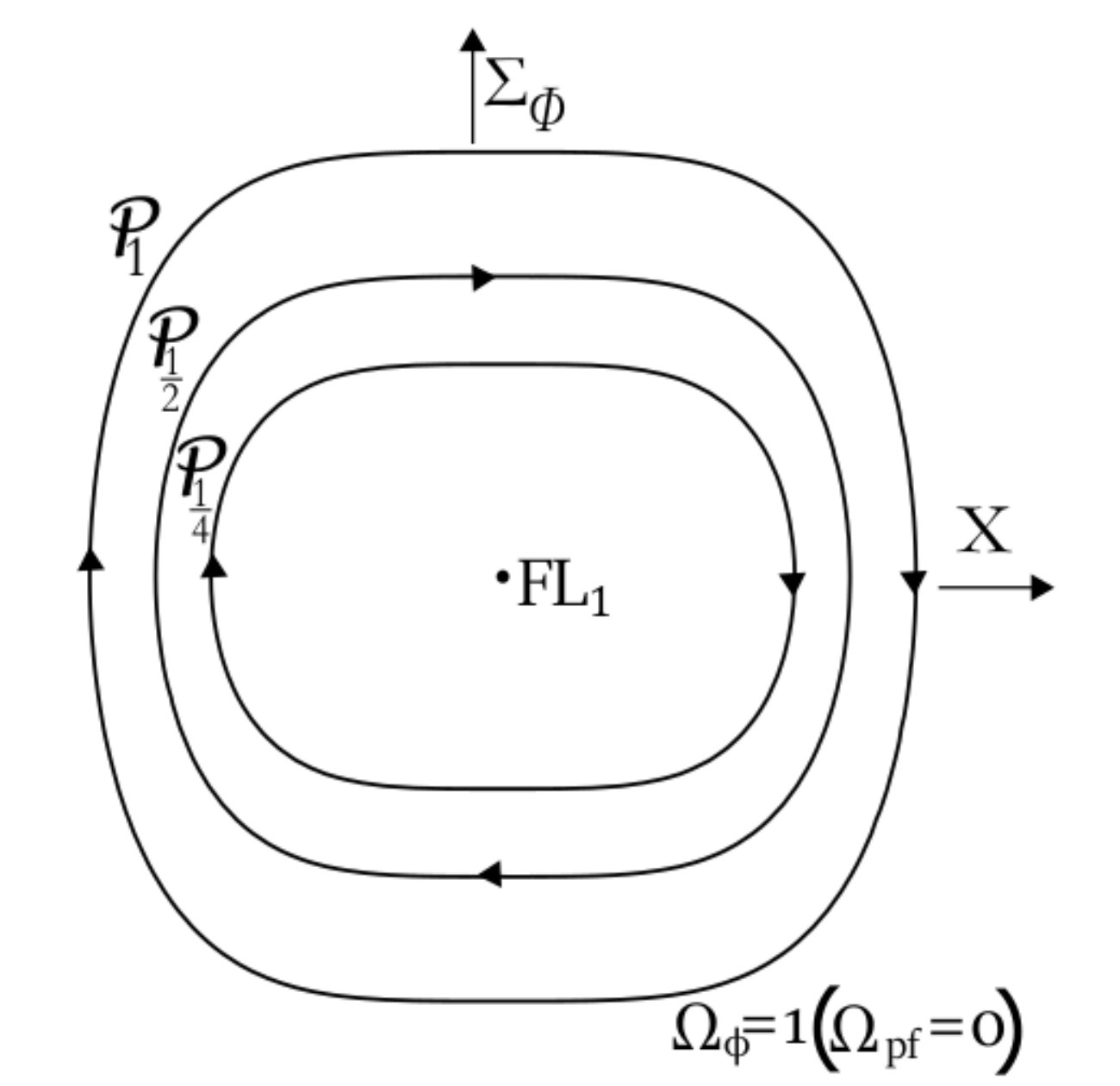}
	\end{center}
	\vspace{-0.5cm}
	\caption{Invariant boundary $\{T=1\}$ when $p=\frac{n}{2}$.}
	\label{fig:T=1,n=2p}
\end{figure}
\begin{theorem}\label{teorema1}
	Let $\nu>0$ and $p=\frac{n}{2}$. 
	\begin{itemize}
		\item[(i)] If $0<\gamma_\mathrm{pf}\leq\frac{2n}{n+1}$, then all orbits in $\mathbf{S}$ converge, for $\tau \rightarrow + \infty$, to the fixed point $\mathrm{FL}_1$ with $\Omega_\mathrm{pf}=0$ $(\Omega_\mathrm{pf}=1)$; \label{igual}
		\item[(ii)] If $\frac{2n}{n+1}<\gamma_\mathrm{pf}<2$, then all orbits in $\mathbf{S}$ converge, for $\tau \rightarrow + \infty$, to an inner periodic orbit $\mathcal{P}_{\Omega_\phi}$ with $0<\Omega_\phi=1-\Omega_\mathrm{pf}<1$.\label{maior}
	\end{itemize}
\end{theorem}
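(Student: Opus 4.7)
The plan is to treat the dynamics near the invariant boundary $\{T=1\}$ as a singularly perturbed, slow--fast problem: the induced flow on $\{T=1\}$ is Hamiltonian with conserved quantity $\Omega_\phi$ and consists of the periodic orbits $\mathcal{P}_{\Omega_\phi}$, while for $T<1$ a slow drift in $\Omega_\phi$ of order $(1-T)$ is superimposed on the fast rotation. By Lemma~\ref{lemma1}, every interior orbit has $T\to 1$, so $(1-T)$ eventually plays the role of a small, state-dependent perturbation parameter. Introducing action--angle-like coordinates $(\Omega_\phi,\theta)$ adapted to $\mathcal{P}_{\Omega_\phi}$ on a tubular neighborhood of $\{T=1\}$, the full system takes the slow--fast form
\[
\frac{d\Omega_\phi}{d\tau}=(1-T)\,P(\Omega_\phi,\theta,T),\quad \frac{d\theta}{d\tau}=\omega(\Omega_\phi)+O(1-T),\quad \frac{dT}{d\tau}=\tfrac{1}{n}(1+q)\,T(1-T)^2,
\]
where the drift on the boundary, derived from \eqref{MatEq}, satisfies $P(\Omega_\phi,\theta,1)=3[(\gamma_\mathrm{pf}-\gamma_\phi)\Omega_\phi(1-\Omega_\phi)-\nu\sigma_\phi\Omega_\phi^{3/2}]$ and $\omega(\Omega_\phi)>0$ is the angular frequency of $\mathcal{P}_{\Omega_\phi}$.

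The central computation is the $\theta$-average of $P$ on $\{T=1\}$. The virial identity $\tfrac{d}{d\tau}(X\Sigma_\phi)=\Sigma_\phi^2-nX^{2n}$ averaged over one period gives $\langle\Sigma_\phi^2\rangle=\tfrac{n}{n+1}\Omega_\phi$, so $\bar\gamma_\phi:=\langle\gamma_\phi\rangle=\tfrac{2n}{n+1}$. Parametrising each orbit by $X=\Omega_\phi^{1/(2n)}\xi$ (and using that $n$ is even, so $X^n\geq 0$), a direct integration yields $\bar\sigma:=\langle\sigma_\phi\rangle=\tfrac{2 J_n}{3 I_0}>0$ \emph{independent of $\Omega_\phi$}, with $J_n=\int_0^1\xi^n\sqrt{1-\xi^{2n}}\,d\xi$ and $I_0=\int_0^1(1-\xi^{2n})^{-1/2}\,d\xi$. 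The averaged drift equation reads
\[
\frac{d\bar\Omega_\phi}{d\tau}=3(1-T)\,\bar\Omega_\phi\bigl[(\gamma_\mathrm{pf}-\bar\gamma_\phi)(1-\bar\Omega_\phi)-\nu\bar\sigma\sqrt{\bar\Omega_\phi}\bigr].
\]
In case (i), $\gamma_\mathrm{pf}\leq\bar\gamma_\phi$ makes both summands in the bracket non-positive and not simultaneously zero on $(0,1]$ (since $\nu\bar\sigma>0$), so the bracket is strictly negative there and the averaged flow drives $\bar\Omega_\phi$ monotonically to $0$. In case (ii), $\gamma_\mathrm{pf}>\bar\gamma_\phi$, the bracket is a strictly decreasing function of $\bar\Omega_\phi$ (its derivative $-(\gamma_\mathrm{pf}-\bar\gamma_\phi)-\tfrac{\nu\bar\sigma}{2\sqrt{\bar\Omega_\phi}}$ is a sum of two negative terms), strictly positive at $0$ and strictly negative at $1$, so it has a unique zero $\Omega_\phi^*\in(0,1)$ which is an asymptotically stable equilibrium of the averaged system. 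A standard Krylov--Bogoliubov averaging argument, with $(1-T)$ as the time-dependent perturbation parameter, then lifts these averaged conclusions to the full system: $\Omega_\phi\to 0$ in case (i) (whence $(X,\Sigma_\phi)\to(0,0)$ and the $\omega$-limit is $\mathrm{FL}_1$), and $\Omega_\phi\to\Omega_\phi^*$ in case (ii) (whence the $\omega$-limit is $\mathcal{P}_{\Omega_\phi^*}$).

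The main technical obstacle is the rigorous justification of the averaging step, since $(1-T)$ is not an independent external parameter but is coupled to the state via its own slow evolution $dT/d\tau=O((1-T)^2)$, and moreover the fast frequency $\omega(\Omega_\phi)\propto\Omega_\phi^{(n-1)/(2n)}$ degenerates as $\Omega_\phi\to 0$ for $n\geq 2$. Scale separation between fast and slow motions is therefore non-uniform and fails at $\Omega_\phi=0$. In case (ii) this causes no difficulty because $\Omega_\phi^*$ is bounded away from the degeneracy, and the hyperbolic stability of $\Omega_\phi^*$ in the averaged system persists under small perturbations and is promoted to orbital convergence onto $\mathcal{P}_{\Omega_\phi^*}$ by a Lyapunov argument on $(\Omega_\phi-\Omega_\phi^*)^2$ combined with the monotonicity of $T$. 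In case (i), one applies averaging only on $\{\Omega_\phi\geq\varepsilon\}$ to drive $\Omega_\phi$ below any prescribed $\varepsilon>0$, and then supplements with a separate analysis near $\mathrm{FL}_1$ (for instance a quasi-homogeneous blow-up in the spirit of Section~\ref{BUP2nd}) to conclude that no trajectory can escape or linger away from $\mathrm{FL}_1$ once it enters this neighborhood.
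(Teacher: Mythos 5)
Your proposal follows essentially the same route as the paper's proof: averaging over the periodic orbits of the $\{T=1\}$ flow with $1-T$ as a time-dependent perturbation parameter, the same key averages $\langle\gamma_\phi\rangle=\tfrac{2n}{n+1}$ and $\langle\sigma_\phi\rangle$ constant in $\Omega_\phi$, and the same averaged drift equation whose unique interior zero yields $\mathcal{P}_{\Omega_\phi^*}$ in case (ii) and whose strictly negative bracket forces $\Omega_\phi\to0$ in case (i). The paper carries out the lift to the full system in the variable $r=\sqrt{\Omega_\phi}$ via a near-identity transformation and an explicit Gronwall estimate on subintervals $[\tau_k,\tau_k+1/\epsilon_k]$, which is exactly the Krylov--Bogoliubov step you defer to, and it treats the borderline $\gamma_\mathrm{pf}=\tfrac{2n}{n+1}$ by a center-manifold computation at the merged fixed point rather than your $\varepsilon$-truncation near $\mathrm{FL}_1$.
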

\begin{proof}
	The proof relies on Lemma \ref{lemma1} together with generalised averaging techniques based on the methods introduced in \cite{Alho2,Bessa}, see also~\cite{Faj21} and~\cite{Letal21}. Standard averaging techniques and theorems can be found in~\cite{GHbook} for the periodic case and~\cite{Sandersbook} for the general case. In these theorems a key role is played by a perturbation parameter $\epsilon$. Roughly, a differential equation of the form
	$\frac{dx}{d\tau}=\epsilon f(x,\tau,\epsilon)$ with $\epsilon>0$ is approximated by the truncated averaged equation at $\epsilon=0$ , i.e. $\frac{d\bar{y}}{d\tau}=\epsilon \langle f(x,.,0)\rangle$. In the present situation the role of   $\epsilon$-parameter is instead played by the function 
	\begin{equation}
	\label{eps-eq}
	\epsilon(\tau)=T(\tau)-1.
	\end{equation}
	Therefore, we have to prove an averaging theorem for the case where $\epsilon$ is not a constant, but a variable that slowly goes to zero. 
	
	Each periodic orbit on $T=1$ $(\epsilon=0)$ has an associated time period $P(\Omega_{\mathrm{\phi}})$, so that for a given real function $f$, its average over a time period associated with $\Omega_{\mathrm{\phi}}$ is given by
	\begin{equation}\label{average}
	\langle f \rangle = \frac{1}{P(\Omega_{\mathrm{\phi}})}\int_{\tau_0}^{\tau_0+P(\Omega_{\mathrm{\phi}})} f(\tau) d\tau.
	\end{equation}
	In what follows we will need to compute several averaged quantities such as $\langle\Sigma^2_\phi\rangle$ and $\langle X^{n} \Sigma^2_\phi\rangle$. We therefore use a different formulation which is better adapted to the problem at hand. So we introduce new polar variables $(r,\theta)$ that solve the constraint equation $\Omega_\phi=\Sigma^2_\phi+X^{2n}$, where $\Omega_\phi$ can be seen as the square of the radial coordinate, i.e. $r=\sqrt{\Omega_\phi}$, and
\begin{equation}\label{Angular}
(X,\Sigma_\phi) =(r^{\frac{1}{n}}\cos\theta ,r G(\theta)\sin\theta),\qquad G(\theta)=\sqrt{\frac{1-\cos^{2n}\theta}{1-\cos^2 \theta}}=\sqrt{\sum^{n-1}_{k=0}\cos^{2k}\theta},
\end{equation}
where $G(\theta)$ satisfies $G(\theta)\geq1$ (with  $G\equiv1$ when $n=1$) and $G(0)=\sqrt{n}$, see~\cite{Alho2,Alho3}. The resulting system of equations takes the form
\begin{subequations} \label{original}
	\begin{align}
	\frac{dr}{d\tau} &= \frac{3}{2}\epsilon\left[(\gamma_{\mathrm{pf}}-\gamma_{\mathrm{\phi}})r(1-r^2)-\nu \sigma_\phi r^2\right] :=\epsilon f(r,\tau,\epsilon), \label{omega evo}\\
	\frac{d\theta}{d\tau} &= -\frac{\epsilon}{2n}\left(3+\nu r\cos^n{\theta}\right)G^2(\theta)\sin{2\theta}+(1-\epsilon)G(\theta)r^{\frac{n-1}{n}}, \label{ThetaEv} \\
	\frac{d\epsilon}{d\tau} &= -\frac{1}{n}\epsilon^2(1-\epsilon)(1+q), \label{epsilon evo}
	\end{align}
\end{subequations}
where 
\begin{equation}
1+q=\frac{3}{2}\left((\gamma_\phi-\gamma_\mathrm{pf})r^2+\gamma_\mathrm{pf}\right)
\end{equation}
and
\begin{subequations}
	\begin{align}
	\gamma_\phi &=2G^2(\theta)\sin^2\theta, \label{EqStheta}\\
	\sigma_\phi  &=\frac{2}{3}G^2(\theta)\sin^2{\theta}\cos^{n}\theta. \label{Inttheta}
	\end{align}
\end{subequations}
At $\epsilon=0$, it follows that
\begin{equation}
\frac{d\theta}{d\tau} = G(\theta)r^{\frac{n-1}{n}}
\end{equation}
and therefore $\theta$ is strictly monotonically increasing in $r\in(0,1]$. In this case, the average of a real function $f$ over a time period, associated with $\Omega_\phi=\text{const.}$, is given by
\begin{equation}
\langle f \rangle = \frac{\Gamma[\frac{n+1}{2n}]}{4\sqrt{\pi}\Gamma[1+\frac{1}{2n}]}\int^{2\pi}_{0}\frac{f}{G(\theta)}d\theta,
\end{equation}
where $\Gamma[x]$ is the usual $\Gamma$-function. From ~\eqref{Angular} we obtain
\begin{subequations}
	\begin{align}
	\langle X^{2n} \rangle =\frac{\Omega_{\mathrm{\phi}}}{n+1}, \qquad
	\langle\Sigma_{\mathrm{\phi}}^2\rangle
	=\frac{n\Omega_{\mathrm{\phi}}}{n+1},\qquad 	\langle\Sigma_{\mathrm{\phi}}^2X^n\rangle= \frac{\Gamma^2[\frac{1}{2}+\frac{1}{2n}]\Omega^{3/2}_\phi}{2(1+2n)\Gamma^2[1+\frac{1}{2n}]},
	\end{align}
\end{subequations}
where $\Omega_\phi=r^2$. Note that the above implies  $\langle\Sigma_{\mathrm{\phi}}^2 \rangle=n\langle X^{2n}\rangle$, which is in accordance with the result in~\cite{Alho2} obtained by averaging the dynamical system. In particular, it follows that the scalar field equation of state~\eqref{EqStheta} has an average
	\begin{equation}
	\langle \gamma_\phi\rangle = \frac{2n}{n+1},
	\end{equation}
while for the interaction term~\eqref{Inttheta}, we get
\begin{equation}
\langle\sigma_\phi\rangle=\frac{\Gamma^2[\frac{1}{2}+\frac{1}{2n}]}{3(1+2n)\Gamma^2[1+\frac{1}{2n}]},
\end{equation}
both independent of $\Omega_\phi$. 

The general idea of the averaging method is to start with the near identity transformation
	\begin{equation} \label{omega avr}
	r(\tau)=y(\tau)+\epsilon(\tau)g(y,\tau,\epsilon)
	\end{equation}
	and then prove that the evolution of the variable $y$ is approximated, at first order, by the solution $\bar{y}$ of the truncated averaged equation. The evolution equation for $y$ is obtained using equations \eqref{omega evo} and \eqref{epsilon evo}, leading to
\begin{eqnarray*}
\frac{dy}{d\tau} &=& \left(1+\epsilon\frac{\partial g}{\partial y}\right)^{-1}\left[\frac{dr}{d\tau}-\left(g+\epsilon\frac{\partial g}{\partial \epsilon}\right)\frac{d\epsilon}{d\tau}-\epsilon\frac{\partial g}{\partial \tau}\right]\\
&=&\left(1+\epsilon\frac{\partial g}{\partial y}\right)^{-1}\Bigg[\frac{3}{2}\epsilon\left((\gamma_{\mathrm{pf}}-\langle\gamma_{\mathrm{\phi}}\rangle)y(1-y^2)+(\langle\gamma_{\mathrm{\phi}}\rangle-\gamma_{\mathrm{\phi}})y(1-y^2)-\nu \langle\sigma_{\mathrm{\phi}}\rangle y^2+\nu(\langle\sigma_{\mathrm{\phi}}\rangle-\sigma_{\mathrm{\phi}})y^2-\frac{2}{3}\frac{\partial g}{\partial \tau}\right)\nonumber\\
&+&\frac{3}{2}\epsilon^2g\left((\gamma_{\mathrm{pf}}-\gamma_{\mathrm{\phi}})(1-3y^2)-2\nu\sigma_{\mathrm{\phi}}+\frac{1}{n}(1-\epsilon)\left((\gamma_\phi-\gamma_\mathrm{pf})y^2+\gamma_\mathrm{pf}\right)\right)\nonumber\\&-&\frac{3}{2}\epsilon^3\left((\gamma_{\mathrm{pf}}-\gamma_{\mathrm{\phi}})(1-2y)g^2-\nu \sigma_{\mathrm{\phi}}g^2+\frac{1}{n}(1-\epsilon)\left((\gamma_\phi-\gamma_\mathrm{pf})y^2+\gamma_\mathrm{pf}\right)\frac{\partial g}{\partial \epsilon}\right)\Bigg].\nonumber
\end{eqnarray*}
	Setting 
\begin{equation}\label{gbound}
\begin{split}
\frac{\partial g}{\partial \tau}&=f(y,\tau,\epsilon)-\langle f(y,.,0)\rangle\\&=\frac{3}{2}\left(\langle \gamma_{\mathrm{\phi}}\rangle -\gamma_{\mathrm{\phi}}\right)y(1-y^2)+\frac{3}{2}\left(\langle\sigma_{\mathrm{\phi}}\rangle -\sigma_{\mathrm{\phi}}\right)y^2,
\end{split}
\end{equation}
where, for large times, the right-hand-side is almost periodic and has an average that is zero so that the variable $g$ is bounded. Then we can use the fact that $(1+\epsilon\frac{\partial g}{\partial y})^{-1}\approx 1-\epsilon\frac{\partial g}{\partial y}+\mathcal{O}(\epsilon^2)$ to get
\begin{equation}\label{y final}
\frac{dy}{d\tau}=\epsilon\langle f \rangle (y)+\epsilon^2 h(y,g,\tau,\epsilon)+\mathcal{O}(\epsilon^3),
\end{equation}
where
\begin{equation}
\langle f \rangle (y)=\frac{3}{2}\left((\gamma_{\mathrm{pf}}-\langle\gamma_{\mathrm{\phi}}\rangle)y(1-y^2)-\nu \langle\sigma_{\mathrm{\phi}}\rangle y^2\right)
\end{equation}
and
\begin{equation}
\begin{split}
	h(y,g,,\tau,\epsilon)=&\frac{3}{2}\left((\gamma_{\mathrm{pf}}-\gamma_{\mathrm{\phi}})(1-3y^2)g-2\nu \sigma_{\mathrm{\phi}}+\frac{1}{n}\left((\gamma_\phi-\gamma_\mathrm{pf})y^2+\gamma_\mathrm{pf}\right)\right) g \\
	&-\frac{3}{2}\left((\gamma_{\mathrm{pf}}-\langle\gamma_{\mathrm{\phi}}\rangle)y(1-y^2)-\nu \langle\sigma_{\mathrm{\phi}}\rangle y^2\right)\frac{\partial g}{\partial y}.
\end{split}
\end{equation}
Dropping the higher-order terms in $\epsilon$ in \eqref{y final}, we study the truncated averaged equation coupled to an evolution equation for $\epsilon$:
	\begin{subequations}
		\begin{align}
		\frac{d\bar{y}}{d\tau} &= \frac{3}{2}\epsilon\Big(\gamma_{\mathrm{pf}}-\langle\gamma_{\mathrm{\phi}}\rangle\Big)\bar{y}(1-\bar{y}^2)-\frac{3}{2}\epsilon \nu \langle \sigma_{\mathrm{\phi}} \rangle \bar{y}^2, \nonumber \\
		\frac{d\epsilon}{d\tau} &= -\frac{1}{n}\epsilon^2(1-\epsilon)(1+q) .\nonumber
		\end{align}
	\end{subequations}
	This system has a line of fixed points at $\epsilon=0$ which can be removed by introducing a new time variable 
	\begin{equation*}
	\frac{1}{\epsilon}\frac{d}{d\tau}=\frac{d}{d\bar{\tau}},
	\end{equation*}
leading to
	\begin{subequations} \label{y avr system}
		\begin{align}
		\frac{d\bar{y}}{d\bar{\tau}} &= \frac{3}{2}\Big(\gamma_{\mathrm{pf}}-\langle\gamma_{\mathrm{\phi}}\rangle\Big)\bar{y}(1-\bar{y}^2)-\frac{3}{2}\nu \langle \sigma_{\mathrm{\phi}} \rangle \bar{y}^2, \label{y avr system 1}\\
		\frac{d\epsilon}{d\bar{\tau}} &= -\frac{1}{n}\epsilon(1-\epsilon)(1+q), \label{y avr system 2}
		\end{align}
	\end{subequations}
	where now the $\epsilon=0$ invariant subset has the isolated fixed point
	\begin{equation}
	\mathrm{F_1}:\quad y=0,\quad \epsilon=0,
	\end{equation}
	whose linearisation yields the eigenvalues $3(\gamma_{\mathrm{pf}}-\langle \gamma_{\mathrm{\phi}}\rangle)$ and $-3\gamma_{\mathrm{pf}}/2$, with associated eigenvectors $(1,0)$ and $(0,1)$ respectively. When 
	$\gamma_{\mathrm{pf}}>\langle\gamma_{\mathrm{\phi}}\rangle$, and since $\langle \sigma_{\mathrm{\phi}}\rangle \in (0,1)$ and $\nu >0$, there is a second fixed point
	\begin{equation}
	\mathrm{F_2}:\quad \bar{y}=-\frac{1}{2}\frac{\nu \langle \sigma_{\mathrm{\phi}} \rangle}{\gamma_{\mathrm{pf}}-\langle \gamma_{\mathrm{\phi}}\rangle}+\sqrt{1+\frac{1}{4}\frac{\nu^2 \langle \sigma_{\mathrm{\phi}}\rangle ^2}{(\gamma_{\mathrm{pf}}-\langle \gamma_{\mathrm{\phi}}\rangle)^2}},\qquad \epsilon=0.
	\end{equation}
	The linearised system at $\mathrm{F}_2$ has eigenvalues
	\begin{subequations}
		\begin{align*}
		\lambda_1 & = -3(\gamma_{\mathrm{pf}}-\langle \gamma_{\mathrm{\phi}}\rangle)-\frac{3}{2}\nu \langle \sigma_{\phi} \rangle \Bigg(\frac{1}{2}\frac{\nu \langle\sigma_{\mathrm{\phi}}\rangle}{\gamma_{\mathrm{pf}}-\langle \gamma_{\mathrm{\phi}}\rangle}-\sqrt{1+\frac{1}{4}\frac{\nu^2 \langle \sigma_{\phi}\rangle^2}{(\gamma_{\mathrm{pf}}-\langle\gamma_{\mathrm{\phi}}\rangle)^2}}\Bigg), \\
		\lambda_2 & = -\frac{3}{4n}\Bigg(\nu \langle \sigma_{\mathrm{\phi}}\rangle+2\gamma_{\mathrm{pf}}(\langle \gamma_{\mathrm{\phi}}\rangle -\gamma_{\mathrm{pf}})\sqrt{4+\frac{\nu^2 \langle\sigma_{\mathrm{\phi}}\rangle^2}{(\gamma_{\mathrm{pf}}-\langle \gamma_{\mathrm{\phi}}\rangle)^2}}\Bigg),
		\end{align*}
	\end{subequations}
	with associated eigenvectors $v_1=(1,0)$ and $v_2=(0,1)$. Hence $\mathrm{F}_2$ is a hyperbolic sink while $\mathrm{F}_1$ is a hyperbolic saddle. Notice that in the absence of interactions, i.e. when $\nu\rightarrow0$, the fixed point $\mathrm{F}_2$ has the limit $\bar{y}=1$ as in~\cite{Alho,Alho2}. When $\gamma_{\mathrm{pf}}=\langle\gamma_{\mathrm{\phi}}\rangle$, $\mathrm{F}_2$ merges into $\mathrm{F}_1$, leading to a center manifold with flow given by $d\bar{y}/d\bar{\tau}=-3/2 \nu \langle\sigma_{\mathrm{\phi}}\rangle \bar{y}^2$, so that the solutions converge to $\mathrm{F}_1$ tangentially to the $\epsilon=0$ axis. When $\gamma_\mathrm{pf}<\langle \gamma_{\mathrm{\phi}}\rangle$, $\mathrm{F}_1$ is the only fixed point being a hyperbolic sink. 
	
	Next, we prove that the solutions $y$ of the full averaged system \eqref{y final} have the same limit as the solutions $\bar{y}$ of the truncated averaged equation when $\tau \rightarrow +\infty$ and, hence, so does $r$ and subsequently $\Omega_{\mathrm{\phi}}$.  In order to do that we define sequences $\{\tau_n\}$ and $\{\epsilon_n\}$, with $n\in \mathbb{N}$, as follows
	\begin{subequations}
		\begin{align}
		\tau_{n+1}-\tau_{n}=\frac{1}{\epsilon_n},\quad\quad\tau_0=0, \\
		\epsilon_{n+1}=\epsilon(\tau_{n+1}),\quad\quad\epsilon_0>0,
		\end{align}
	\end{subequations}
	with $\lim \tau_n = +\infty$ and $\lim \epsilon_n = 0$, since $\epsilon(\tau) \rightarrow 0$ as $\tau \rightarrow +\infty$. Notice that for $\epsilon$ small enough, $y$ is monotone and bounded and, therefore, has a limit as $\tau\rightarrow+\infty$. Then, we estimate $|\eta(\tau)|=|y(\tau)-\bar{y}(\tau)|$, where 
	$y$ and $\bar{y}$ are solution trajectories with the same initial conditions as
	\begin{eqnarray}
	|\eta(\tau)| &=& \Bigg|\int_{\tau_n}^{\tau}\Bigg(\frac{3}{2}\epsilon\Big(\gamma_{\mathrm{pf}}-\langle\gamma_{\mathrm{\phi}}\rangle\Big)y\Big(1-y^2\Big)-\frac{3}{2}\epsilon\nu \langle\sigma_{\mathrm{\phi}}\rangle y^2+\epsilon^2 h(y,g,\tau,\epsilon)\Bigg)ds\nonumber\\
	&-&\int_{\tau_n}^{\tau}\Bigg(\frac{3}{2}\epsilon\Big(\gamma_{\mathrm{pf}}-\langle\gamma_{\mathrm{\phi}}\rangle\Big)\bar{y}\Big(1-\bar{y}^2\Big)-\frac{3}{2}\epsilon \nu\langle\sigma_{\mathrm{\phi}}\rangle^2\bar{y}^2\Bigg)ds\Bigg|\nonumber \\
	&\leq& \epsilon_n\int_{\tau_n}^{\tau}\frac{3}{2}\underbrace{|\gamma_{\mathrm{pf}}-\langle\gamma_{\mathrm{\phi}}\rangle|}_{|.|\leq C}|y-\bar{y}|\underbrace{|1-(y^2+\bar{y}^2+y\bar{y})|}_{|.|\leq 1}ds+\frac{3}{2}\epsilon_n\nu\langle\sigma_{\mathrm{\phi}}\rangle^2\int_{\tau_n}^{\tau}\underbrace{|1-(y+\bar{y})|}_{|.|\leq 1}|\bar{y}-y|ds \nonumber \\
	&+&\epsilon_n^2\int_{\tau_n}^{\tau} \underbrace{|h(y,g,\tau,\epsilon)|}_{|.|\leq M}ds+\mathcal{O}(\epsilon_n^3) \nonumber \\
	&\leq&\frac{3}{2}C\epsilon_n\int_{\tau_n}^{\tau}|\eta(s)|ds+\frac{3}{2}\epsilon_n\nu\langle\sigma_{\mathrm{\phi}}\rangle\int_{\tau_n}^{\tau}|\eta(s)|ds+\epsilon_n^2 M(\tau-\tau_{n})+\mathcal{O}(\epsilon_n^3)\nonumber\\
	&=& \frac{3}{2}\epsilon_n\Bigg(C+\nu \langle\sigma_{\mathrm{\phi}}\rangle\Bigg)\int_{\tau_n}^{\tau}|\eta(s)|ds+\epsilon_n^2M(\tau-\tau_{n})+\mathcal{O}(\epsilon_n^3),\nonumber
	\end{eqnarray}
	for $\tau>\tau_{n+1}$, and where $C$ and $M$ are  positive constants. By Gronwall's inequality 
	\begin{equation}
	|\eta(\tau)|\leq \frac{\epsilon_n M}{\frac{3}{2}\Big(C+\nu \langle\sigma_{\mathrm{\phi}}\rangle\Big)}\Bigg(e^{\frac{3}{2}\epsilon_n(C+\nu \langle\sigma_{\mathrm{\phi}}\rangle)(\tau-\tau_n)}-1\Bigg)+\mathcal{O}(\epsilon^2).
	\end{equation}
	Hence for $\tau-\tau_{n} \in [0,1/\epsilon_n]$, i.e. $\tau\in[\tau_n,\tau_{n+1}]$, it follows that
	\begin{eqnarray}
	|\eta(\tau)|\leq K \epsilon_n,
	\end{eqnarray}
	with $K$ a positive constant. Letting $n\rightarrow+\infty$ implies that $\eta\rightarrow0$ as $\tau\rightarrow+\infty$. Therefore $y$ and $\bar{y}$ have the same limit as $\tau\rightarrow+\infty$, i.e. the fixed point $\mathrm{F}_1$ or $\mathrm{F}_2$. 
	Finally, from equation \eqref{omega avr}, using the triangle inequality and the fact that $\epsilon \rightarrow 0$ as $\tau \rightarrow +\infty$, it follows that $r$ (and hence also $\Omega_{\mathrm{\phi}}$) has the same limit as $\bar{y}$.
	\end{proof}
The global state-space picture when $p=n/2$ for the three different future asymptotic regimes  is shown in Figure~\ref{fig:n=2p final}. The solid numerical curves correspond to the center manifold solutions of $\mathrm{dS}^{\pm}_0$, and the dashed numerical curves to orbits originating from the source $\mathrm{K}^{-}$.
\begin{figure}[ht!]
	\begin{center}
		\subfigure[$\gamma_\mathrm{pf}-\langle\gamma_\phi\rangle>0$.]{\label{fig:1}
			\includegraphics[width=0.30\textwidth]{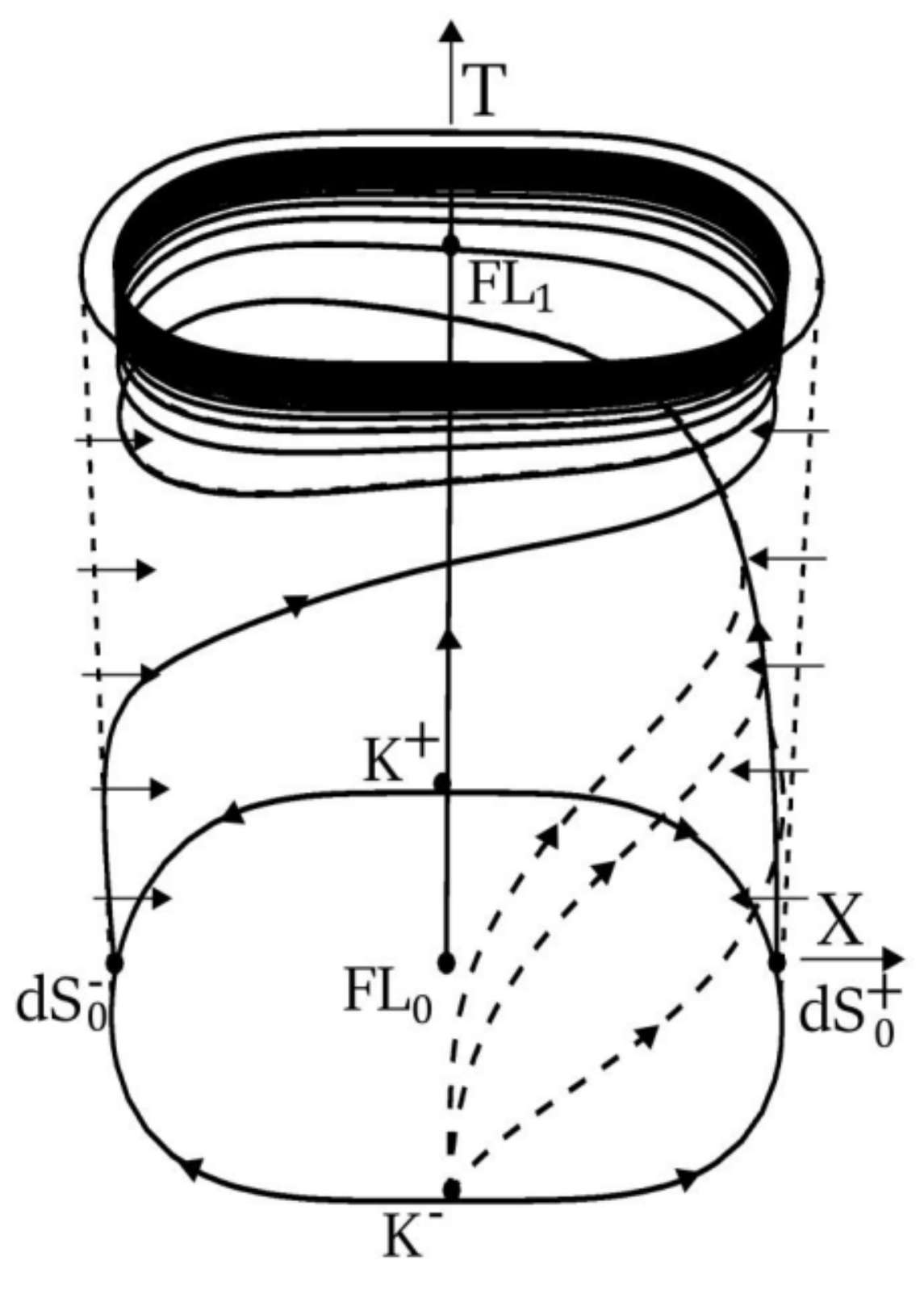}}
		\hspace{0.1 	cm}
		\subfigure[$\gamma_\mathrm{pf}-\langle\gamma_\phi\rangle=0$.]{\label{fig:1.3}
			\includegraphics[width=0.30\textwidth]{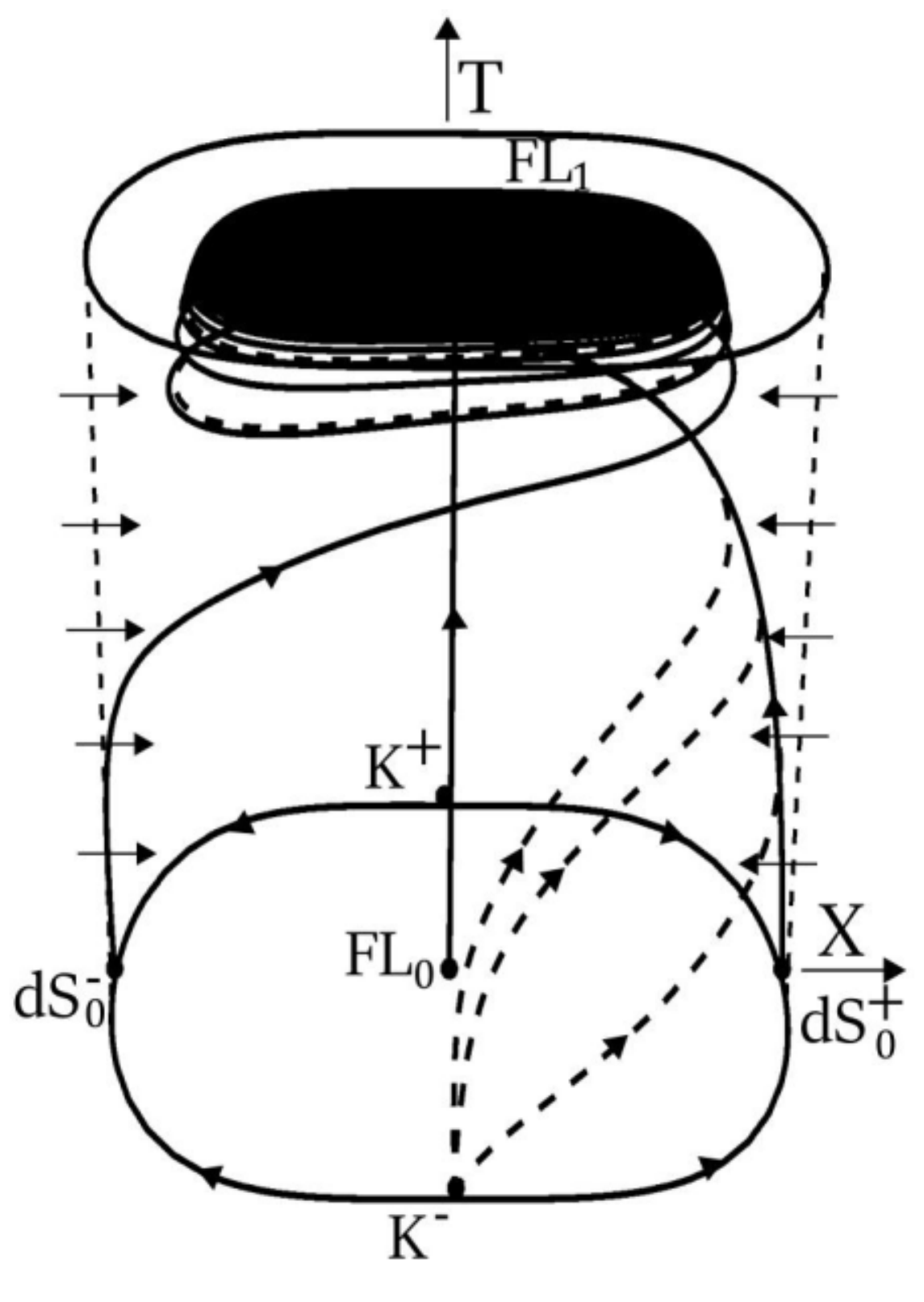}}
		\hspace{0.1cm}
		\subfigure[$\gamma_\mathrm{pf}-\langle\gamma_\phi\rangle<0$.]{\label{fig:1.5}
			\includegraphics[width=0.30\textwidth]{1.pdf}}
	\end{center}
	\vspace{-0.5cm}
	\caption{Global state space $\mathbf{S}$ when $p=\frac{n}{2}$.%$n=2$ and $\mu=1/2$. Figure (a) has $\gamma_{\mathrm{pf}}=\frac{3}{2}<\frac{4}{3}$, $\gamma_{\mathrm{pf}}=\frac{4}{3}$  and $\gamma_{\mathrm{pf}}=1>\frac{4}{3}$.
		}
	\label{fig:n=2p final}
\end{figure}
%
%
%%%%%%%%%%%%%%%%%%%%%%%%%%%%%%%%%%%%%%%%%%%%%%%%%%
\section{Dynamical systems' analysis when $p>\frac{n}{2}$}
\label{case3}
%%%%%%%%%%%%%%%%%%%%%%%%%%%%%%%%%%%%%%%%%%%%%%%%%%%%%
When $p>\frac{n}{2}$ the global dynamical system \eqref{globalsys} reduces to
\begin{subequations}\label{globalnleq2p}
	\begin{align}
	\frac{dX}{d\tau} &= \frac{1}{n}(1+q)T^{2p-n}(1-T)X + T^{2p-n+1}\Sigma_{\mathrm{\phi}}, \\
	\frac{d\Sigma_{\mathrm{\phi}}}{d\tau} &= -(2-q)T^{2p-n}(1-T)\Sigma_{\mathrm{\phi}}-n T^{2p-n+1}X^{2n-1} - \nu (1-T)^{2p-n+1} X^{2p}\Sigma_{\mathrm{\phi}}, \\
	\frac{dT}{d\tau} &= \frac{1}{n}(1+q)T^{2p-n+1}(1-T)^2,
	\end{align}
\end{subequations}
where we recall $q=-1+3\Sigma^2_\phi+\frac{3}{2}\gamma_\mathrm{pf}\Omega_\mathrm{pf}$ and the auxiliary equation for $\Omega_\phi$ becomes
\begin{equation}
\frac{d\Omega_{\phi}}{d\tau} = -3(1-T)\left[\left(\gamma_\phi-\gamma_{\mathrm{pf}}\right)T^{2p-n}\Omega_{\phi}(1-\Omega_{\phi}) + \nu\sigma_\phi (1-T)^{2p-n}\Omega^{1+\frac{p}{n}}_\phi\right]
\end{equation}
where now the \emph{effective interaction term} $\sigma_\phi$ is given by
\begin{equation}\label{NewInt}
\sigma_\phi := \frac{2\Sigma_{\mathrm{\phi}}^2X^{2p}}{3\Omega^{1+\frac{p}{n}}_\phi}.
\end{equation}

%%%%%%%%%%%%%%%%%%%%%%%%%%%%%%%%%%%%%%%
\subsection{Invariant boundary $T=0$}
When $p>\frac{n}{2}$ the induced flow on the $T=0$ invariant boundary reduces to
\begin{equation}\label{T0_2p>n}
\frac{dX}{d\tau}=0,\quad \frac{d\Sigma_\mathrm{\phi}}{d\tau}=-\nu X^{2p}\Sigma_\mathrm{\phi},
\end{equation}	
and $\Omega_\mathrm{pf}=1-\Sigma^2_\phi-X^{2n}$ satisfies
\begin{equation}
\frac{d\Omega_\mathrm{pf}}{d\tau}=2\nu X^{2p}\Sigma_\mathrm{\phi}^2.
\end{equation}
Thus, the subset $\Omega_\mathrm{pf}=0$ is not invariant but future invariant except at $\Sigma_\phi=0$ or $X=0$, which are the points of intersection of the subset $\Omega_\mathrm{pf}=0$ with the lines of fixed points
\begin{equation}
\mathrm{L}_1:\quad X=X_0,\quad \Sigma_\mathrm{\phi}=0,\quad T=0
\end{equation}
with $X_0 \in [-1,1]$ and 
\begin{equation}
\mathrm{L}_2:\quad X_0=0,\quad \Sigma_\mathrm{\phi}=\Sigma_\mathrm{\phi_0},\quad T=0,
\end{equation}
with $\Sigma_\mathrm{\phi_{0}}\in[-1,1]$. We shall refer to the non-isolated fixed point at the origin of the $T=0$ invariant set as $\mathrm{FL}_0=\mathrm{L}_1\cap\mathrm{L}_2$, the end points of $\mathrm{L}_1$ with $X=\pm 1$ as $\mathrm{dS}_0^{\pm}$, and the end points of $\mathrm{L}_2$ with $\Sigma_{\phi0}=\pm1$ as $\mathrm{K}^{\pm}$. The description of the induced flow on $T=0$ is given by the following lemma:
\begin{lemma}\label{T0:p>n/2}
	When $p>\frac{n}{2}$, the set $\{T=0\}\setminus \{\mathrm{L}_1\cup\mathrm{L}_2\}$ is foliated by invariant subsets $X=\text{const.}$ consisting of regular orbits which enter the region $\Omega_{\mathrm{pf}}>0$ by crossing the set $\Omega_\mathrm{pf}=0$ and converging to the line of fixed points $\mathrm{L}_\mathrm{1}$ as $\tau\rightarrow-\infty$, see Figure~\ref{figT0:n-2p<0}.
\end{lemma}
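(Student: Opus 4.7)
The plan is to integrate the reduced two-dimensional flow \eqref{T0_2p>n} explicitly, since the equations for $X$ and $\Sigma_\phi$ decouple in a particularly clean way on the invariant boundary $\{T=0\}$. First, $dX/d\tau = 0$ immediately produces the foliation of the $T=0$ flow by the invariant fibers $\{X = X_0\}$, with the distinguished fiber $X_0 = 0$ being precisely the axis $\mathrm{L}_2$ and the endpoints $X_0 = \pm 1$ coinciding with the fixed points $\mathrm{dS}^\pm_0$. On the complement of $\mathrm{L}_1 \cup \mathrm{L}_2$ we have $X_0 \neq 0$ with $|X_0| < 1$, so $X_0^{2p} > 0$, and the $\Sigma_\phi$-equation reduces to the linear scalar ODE
\[
\frac{d\Sigma_\phi}{d\tau} = -\nu X_0^{2p}\,\Sigma_\phi,
\]
whose exponential solution $\Sigma_\phi(\tau) = \Sigma_\phi(\tau_0)\exp(-\nu X_0^{2p}(\tau - \tau_0))$ controls the entire trajectory on each fiber.

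The next step is to couple this with the auxiliary relation
\[
\frac{d\Omega_\mathrm{pf}}{d\tau} = 2\nu X_0^{2p}\,\Sigma_\phi^2 \geq 0,
\]
which is strictly positive off the axis $\Sigma_\phi = 0$. Monotonicity of $\Omega_\mathrm{pf}$, together with its boundedness by $1 - X_0^{2n}$ (the value attained at the unique fixed point on the fiber), identifies the limit of each fiber orbit toward $\mathrm{L}_1$ as the point $(X_0, 0, 0)$, while in the opposite time direction the exponential growth of $|\Sigma_\phi|$ together with the monotone decrease of $\Omega_\mathrm{pf}$ force the orbit to reach the shell $\{\Omega_\mathrm{pf} = 0\}$ in finite time; the explicit formula makes both conclusions quantitative. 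The ``entering $\Omega_\mathrm{pf} > 0$ by crossing $\Omega_\mathrm{pf} = 0$'' assertion then follows from inspecting the vector field at the crossing point: since $d\Omega_\mathrm{pf}/d\tau = 2\nu X_0^{2p}\Sigma_\phi^2 > 0$ there, the orbit crosses the shell transversally into the physical region, and the set $\{\Omega_\mathrm{pf}=0\}$ is therefore not invariant except at the coordinate axes $\Sigma_\phi = 0$ and $X = 0$.

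No substantial dynamical obstacle is anticipated; the argument is essentially explicit integration of a linear scalar ODE combined with the monotonicity principle applied to $\Omega_\mathrm{pf}$. The only care required is the careful exclusion of $\mathrm{L}_2$ and of the endpoints $\mathrm{dS}^\pm_0$ from the foliation, where the $\Sigma_\phi$-equation degenerates ($X_0 = 0$) or where the fiber collapses to an endpoint of $\mathrm{L}_1$, and the consistent tracking of the sign so that the crossing of $\Omega_\mathrm{pf}=0$ and the approach to $\mathrm{L}_1$ are assigned to the correct ends of the time interval, in accord with $d\Omega_\mathrm{pf}/d\tau \geq 0$. Combining these elementary ingredients with the local analysis of $\mathrm{L}_1$ and $\mathrm{L}_2$ on $\{T=0\}$ yields the complete heteroclinic picture depicted in Figure~\ref{figT0:n-2p<0}.
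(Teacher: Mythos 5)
Your proposal is correct and follows essentially the same route as the paper's proof: the conserved quantity $X=\text{const.}$ gives the foliation, and the remaining claims follow from the monotonicity of $\Sigma_\phi$ and $\Omega_\mathrm{pf}$ along each fiber, which you simply make quantitative by integrating the linear $\Sigma_\phi$-equation explicitly. One remark: your bookkeeping of time directions is the dynamically consistent one --- since $d\Omega_\mathrm{pf}/d\tau=2\nu X_0^{2p}\Sigma_\phi^2>0$ off the axes, the fiber orbits converge to $\mathrm{L}_1$ as $\tau\to+\infty$ and cross $\Omega_\mathrm{pf}=0$ at a finite time in the past, so the ``as $\tau\rightarrow-\infty$'' in the statement (and the sign $d\Omega_\mathrm{pf}/d\tau<0$ asserted in the paper's own one-line proof) should be read with that in mind.
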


\begin{proof}
	When $p>n/2$ the system \eqref{T0_2p>n} admits the following conserved quantity
	\begin{equation}
	X=\text{const.}
	\end{equation}
	which determines the solution trajectories on the $\{T=0\}$ invariant boundary. The remaining properties of the flow follow from the fact that on the set $\{T=0\}\setminus \{\mathrm{L}_1\cup\mathrm{L}_2\}$, we have that $d\Sigma_\mathrm{\phi}/d\tau<0$ and $d\Omega_{\mathrm{pf}}/d\tau<0$.
\end{proof}
\begin{figure}[ht!]
	\begin{center}
		\includegraphics[width=0.30\textwidth]{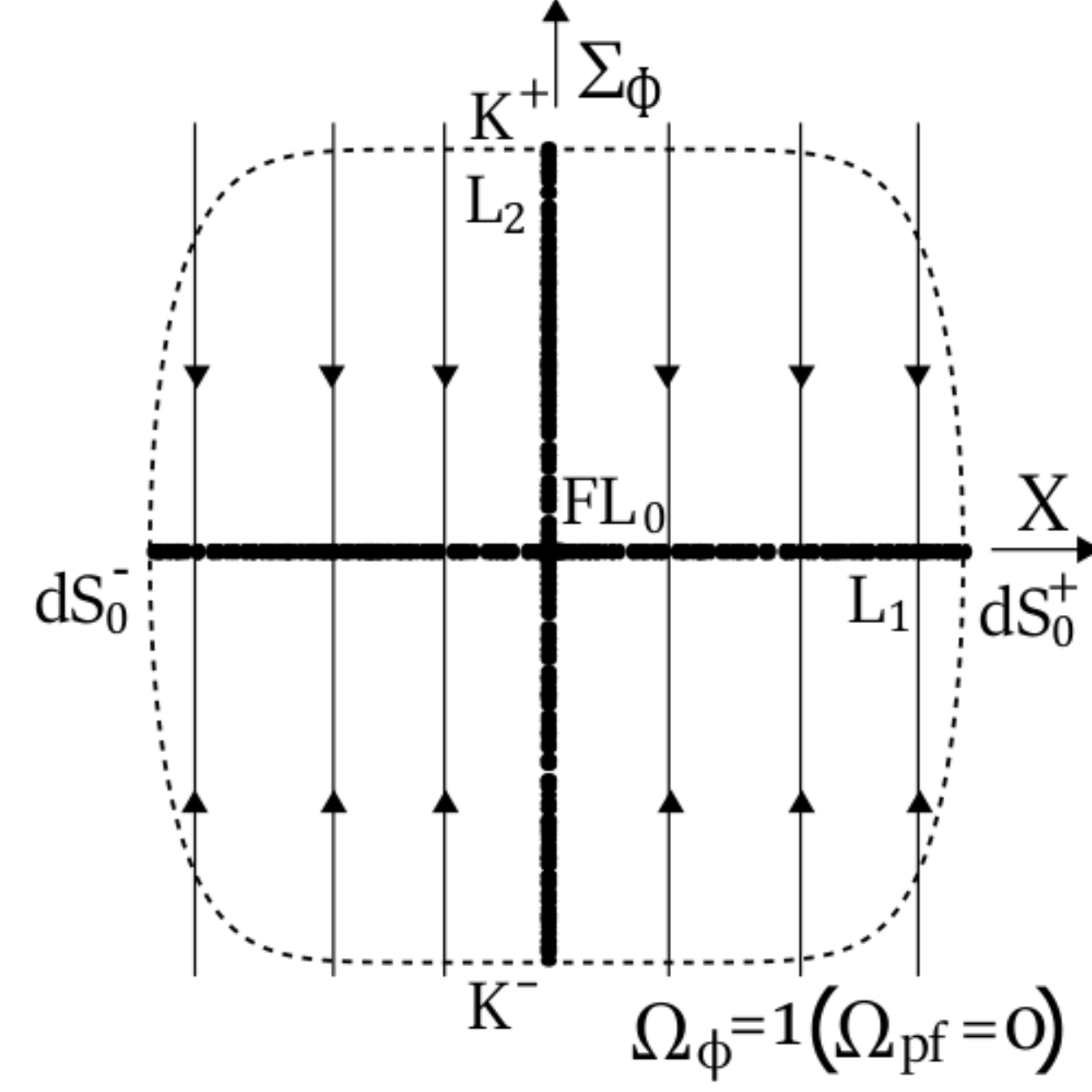}
	\end{center}
	\vspace{-0.5cm}
	\caption{Invariant boundary $\{T=0\}$ when $p>\frac{n}{2}$.}
	\label{figT0:n-2p<0}
\end{figure}

\begin{theorem}
	Let $p>\frac{n}{2}$. Then, the $\alpha$-limit set of all orbits in $\mathbf{S}$ is contained in the set  $\mathrm{dS}^{\pm}_0\cup\mathrm{FL}_0\cup\mathrm{K}^{\pm}$. In particular, as $\tau\rightarrow-\infty$, a 2-parameter set of orbits converges to each fixed point $\mathrm{K}^{\pm}$, a 1-parameter set of orbits converges to $\mathrm{FL}_0$ and a single orbit converges to each of the fixed points $\mathrm{dS}^{\pm}_0$.
\end{theorem}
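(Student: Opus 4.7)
The plan is to follow the three-step template established in Sections~\ref{case1} and~\ref{case2}: confine the $\alpha$-limit to the boundary $\{T=0\}$, identify the possible invariant subsets there, and then perform a local analysis at each candidate fixed point. By Lemma~\ref{lemma1} the $\alpha$-limit of every interior orbit of $\mathbf{S}$ lies in $\{T=0\}$, and by Lemma~\ref{T0:p>n/2} the only invariant subsets of $\{T=0\}$ are the lines of fixed points $\mathrm{L}_1$ and $\mathrm{L}_2$, together with the non-stationary heteroclinic orbits $X=\text{const.}$ that connect the outer shell $\{\Omega_\mathrm{pf}=0\}$ to $\mathrm{L}_1$. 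A compactness/connectedness argument then implies that the $\alpha$-limit set of any interior orbit is contained in $\mathrm{L}_1\cup\mathrm{L}_2$.

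The second and most delicate step is to rule out the generic points of these two lines. At a point $(X_0,0,0)\in\mathrm{L}_1$ with $X_0\neq 0,\pm 1$, the linearisation of~\eqref{globalnleq2p} has eigenvalues $(0,-\nu X_0^{2p},0)$, with the two zero directions tangent to $\mathrm{L}_1$ and to the $T$-axis. To exploit this I would construct the two-dimensional centre manifold $\Sigma_\phi=h(X,T)$ by a formal power-series ansatz in $T$ analogous to~\eqref{taylor}, obtaining at leading order $h(X,T)\sim -(n/\nu)X^{2n-2p-1}T^{2p-n+1}$. The reduced flow on this centre manifold satisfies $dX/dT=X/T+\cdots$, so $X(T)\sim CT$ on every interior branch and hence $X\to 0$ as $T\to 0^{+}$ along any such orbit. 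This forces the $\alpha$-limit to be $\mathrm{FL}_0$ rather than the generic point $(X_0,0,0)$; a parallel (though more degenerate, since the full Jacobian vanishes there) dominant-balance argument on $\mathrm{L}_2$ restricts the $\alpha$-limit on that line to $\mathrm{FL}_0\cup\mathrm{K}^{\pm}$.

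The third step is to count the interior orbits converging to each of $\mathrm{K}^{\pm}$, $\mathrm{FL}_0$ and $\mathrm{dS}^{\pm}_0$. Because the overall factor $T^{2p-n}$ in~\eqref{globalnleq2p} renders all three linearisations nilpotent, I would revert to the unbounded variable $\tilde T$ and the $\tilde N$-time of~\eqref{sistemalocal} and, at each fixed point, introduce a quasi-homogeneous blow-up in the spirit of Section~\ref{ApA}, with weights tuned so that the dominant monomial terms in $X$, $\Sigma_\phi$ (or $\Sigma_\phi\mp 1$) and $\tilde T$ become quasi-homogeneous of the same degree. After division by a suitable power of the blow-up parameter, the desingularised field should be (semi-)hyperbolic on the exceptional divisor; the standard eigenvalue and centre-manifold book-keeping then yields a 2-parameter incoming set at each $\mathrm{K}^{\pm}$, a 1-parameter incoming set at $\mathrm{FL}_0$, and a unique centre-manifold orbit at each $\mathrm{dS}^{\pm}_0$ approximated by a truncated power series as in~\eqref{taylor}.

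The main obstacle lies in choosing the blow-up weights correctly. Unlike the case $p\leq n/2$, the power $T^{2p-n}$ in~\eqref{globalnleq2p} scales against the competing nonlinearities $X^{2n}$ and $X^{2p}$ with an exponent that depends jointly on $n$ and $p$, so the weights have to be tuned individually at each of the five admissible fixed points, and at generic points of $\mathrm{L}_2$ (where the full Jacobian vanishes) even deciding which formal expansion to start from requires a separate scaling analysis. Once the correct weights are in place, the remaining bookkeeping, namely the enumeration of unstable directions on the exceptional divisor and the centre-manifold approximation at $\mathrm{dS}^{\pm}_0$, follows the pattern of Sections~\ref{T0,n-2p>1},~\ref{case2} and~\ref{ApA}.
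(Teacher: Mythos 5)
Your plan follows the paper's proof essentially step for step: Lemma~\ref{lemma1} plus Lemma~\ref{T0:p>n/2} confine the $\alpha$-limit to $\mathrm{L}_1\cup\mathrm{L}_2$; a $2$-dimensional centre-manifold reduction along the normally hyperbolic part of $\mathrm{L}_1$ (your dominant balance $\Sigma_\phi\sim-\tfrac{n}{\nu}X^{2n-2p-1}T^{2p-n+1}$ and the drift $dX/dT\sim X/T$ reproduce the paper's flow-box-times-$\bar T$ normal form with direction $\mathrm{sign}(b_{01})$) excludes the generic points and isolates $\mathrm{dS}^\pm_0$; and the quasi-homogeneous blow-up of $\mathrm{FL}_0$ with exactly the weights $(2p-n,n,2p)$, supplemented by a cylindrical blow-up along $\mathrm{L}_2$, yields the stated parameter counts. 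The only (immaterial) difference is that the paper does not blow up $\mathrm{dS}^\pm_0$ and $\mathrm{K}^\pm$ separately: the former are handled as semi-hyperbolic endpoints of $\mathrm{L}_1$ with a $1$-dimensional centre manifold, and the latter appear as hyperbolic sources in a chart of the cylindrical blow-up of $\mathrm{L}_2^{\pm}\cup\mathrm{R}^{\pm}$.
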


\begin{proof}
	By Lemma~\ref{lemma1}, the $\alpha$-limit set of all orbits in $\mathbf{S}$ is located at $T=0$ and the description of this boundary is given in Lemma~\ref{T0:p>n/2}. We start by analysing the line of fixed points $\mathrm{L}_1$. The linearised system around $\mathrm{L}_{1}$ has eigenvalues $0$, $-\nu X_{0}^{2p}$ and $0$, with associated eigenvectors $(1,0,0)$, $(0,1,0)$ and $(\frac{3(2p-n)}{2n}X_0(1-X_0^2)\delta^{2p-n}_1,0,1)$. On the $\{T=0\}$ invariant boundary, the line of fixed points $\mathrm{L}_{1}$ is normally hyperbolic, i.e. the linearisation yields one negative eigenvalue for all $X_0\in[-1,1]$, except at $\mathrm{FL}_0$ where the two lines intersect with $X_0=0$, and one zero eigenvalue with eigenvector tangent to the line itself, see e.g.~\cite{Aul84}. On the complement of $\mathbf{S}$, the set $\mathrm{L}_1\setminus\mathrm{FL}_0$ is said to be \emph{partially hyperbolic}, see e.g.~\cite{Tak71}. Each fixed point on this set has a 1-dimensional stable manifold and a 2-dimensional center manifold, while the fixed point $\mathrm{FL}_0$ with $X_0=0$ is non-hyperbolic. To analyse the 2-dimensional center manifold of each partially hyperbolic fixed point on the line, we make the change of coordinates given by
	\begin{equation}
	\bar{X}=X-X_0+\frac{3(2p-n)\gamma_{\mathrm{pf}}}{2n}X_0(1-X_0^{2})\bar{T}\delta^{2p-n}_1,\qquad \bar{\Sigma}_\phi=\Sigma_\phi ,\qquad \bar{T}=T,
	\end{equation}
	which takes a point in the line $\mathrm{L_1}$ to the origin $(\bar{X},\bar{\Sigma}_{\phi},\bar{T})=(0,0,0)$ with $\bar{T}\geq0$. The resulting system of equations takes the form
	\begin{equation}\label{centereq3}
	\frac{d\bar{X}}{d\tau} = F(\bar{X},\bar{\Sigma}_{\mathrm{\phi}},\bar{T}),\qquad 
	\frac{d\bar{\Sigma}_{\mathrm{\phi}}}{d\tau} = -\nu X_{\mathrm{0}}^{2p}\bar{\Sigma}_{\mathrm{\phi}}+G(\bar{X},\bar{\Sigma}_{\mathrm{\phi}},\bar{T}),\qquad
	\frac{d\bar{T}}{d\tau} =  N(\bar{X},\bar{\Sigma}_{\mathrm{\phi}},\bar{T}),
	\end{equation}
	where $F$, $G$ and $N$ are functions of higher-order. The center manifold reduction theorem yields that the above system is locally topological equivalent to a decoupled system on the 2-dimensional center manifold, which can be locally represented as the graph $h:\, E^{c}\rightarrow E^{s}$, i.e. $\bar{\Sigma}_{\mathrm{\phi}}=h(\bar{X},\bar{T})$, which solves the nonlinear partial differential equation
	\begin{equation}\label{FlowCMLine2}
	F(\bar{X},h(\bar{X},\bar{T}),\bar{T})\partial_{\bar{X}}  h(\bar{X},\bar{T})+N(\bar{X},h(\bar{X},\bar{T}),\bar{T}) \partial_{\bar{T}} h(\bar{X},\bar{T})= -\nu X_0^{2p} h(\bar{X},\bar{T}) + G(\bar{X},h(\bar{X},\bar{T}),\bar{T})
	\end{equation}
	subject to the fixed point and tangency conditions $h(0,0)=0$ and $\nabla h(0,0)=0$, respectively. A quick look at the nonlinear terms suggests that we approximate the center manifold at $(\bar{X},\bar{T})=(0,0)$, by making a formal multi-power series expansion for $h$ of the form
	\begin{equation}
	h(\bar{X},\bar{T})=\bar{T}^{2p-n+1}\sum^{N}_{i,j=0} \tilde{a}_{ij}\bar{X}^{i}\bar{T}^{j}, \qquad \tilde{a}_{ij}\in\mathbb{R}.
	\end{equation}
	Solving for the coefficients of the expansion one sees that all coefficients of type $\tilde{a}_{i0}$ are identically zero, so that $h$ can be written as a series expansion in $\bar{T}$ with coefficients depending on $\bar{X}$, i.e. 	
	\begin{equation}
	h(\bar{X},\bar{T})=\bar{T}^{2p-n+1}\sum^{N}_{j=1} \bar{a}_{j}(\bar{X})\bar{T}^{j}, \qquad \bar{a}_j(X)=\sum^{N}_{i=0}a_{ij}\bar{X}^{i}, \qquad a_{ij}\in\mathbb{R},
	\end{equation}
	where for example
	\begin{subequations}
		\begin{align}
		a_{01} &= 0,\qquad a_{11}=0, \qquad	a_{02} =-\frac{n}{\nu}X_0^{2(p-n)-1},\qquad
		a_{12}=-\frac{n(2p+1)}{\nu}X_0^{2(p+1)},  \nonumber \\
		a_{03} &=\frac{1}{2\nu^2}\left(6n+3(2p+1)\gamma_{\mathrm{pf}}\nu (1-X_0^{2n})X_0^{n}\right)\delta^{n-2p}_1.
		\end{align}
	\end{subequations}
	After a change of time $d/d\tau=\bar{T}^{2p-n}d/d\bar{\tau}$, the flow on the 2-dimensional center manifold is given by 
	\begin{subequations}
		\begin{align}
		\frac{d\bar{X}}{d\bar{\tau}} &= \sum^{N}_{j=1} \bar{b}_{j}(\bar{X})\bar{T}^{j}, \qquad \bar{b}_j(\bar{X})=\sum^{N}_{i=0}b_{ij}\bar{X}^{i},\qquad b_{ij}\in\mathbb{R}, \\
		\frac{d\bar{T}}{d\bar{\tau}} &=\bar{T}\sum^{N}_{j=1} \bar{c}_{j}(\bar{X})\bar{T}^{j}\qquad \bar{c}_j(\bar{X})=\sum^{N}_{i=0}c_{ij}\bar{X}^{i}, \qquad c_{ij}\in\mathbb{R},
		\end{align}
	\end{subequations}
	with
	\begin{subequations}
		\begin{align*}
		b_{01} &=\frac{3}{2}(1-X_0^{2n})X_0, \quad 	
		b_{11} =\frac{3\gamma_{\mathrm{pf}}}{2n}(1-(1+2n)X_0^{2n}), \quad 
		b_{21} =-\frac{3\gamma_{\mathrm{pf}}}{2}(2n+1)X_0^{2n}, \\
		b_{02} &=-\frac{n}{\nu}X_0^{2(p-n)}, \quad 
		b_{12} =-3\gamma_{\mathrm{pf}}X_0^{2n},\\
		c_{01}&=\frac{3\gamma_{\mathrm{pf}}}{2n}(1-X_0^{2n}),\qquad c_{11}=0, \\
		c_{02}&=-\frac{9\gamma_{\mathrm{pf}}^2}{2n}X_0^{2n}(1-X_0^{2n})\delta^{n-2p}_1,\qquad c_{12}= -3\gamma_{\mathrm{pf}}X_0^{2n}.
		\end{align*}
	\end{subequations}
	For $X\neq0$, the coefficient $b_{01}$ only vanishes at $X_0=\pm1$, being negative for $X_0 \in(-1,0)$ and positive for $X_0\in (0,1)$. When $b_{01}\neq0$ the origin $(0,0)$ is a nilpotent singularity. Since the coefficient $c_{01}(\bar{X})\neq 0$ for all $X_0$, then the normal formal form is zero with
	\begin{equation}
	\frac{d\bar{X}_{*}}{d\bar{\tau}_{*}}=\text{sign}(b_{01})\bar{T}_{*},\qquad \frac{d\bar{T}_{*}}{d\bar{\tau}_{*}}=\bar{T}^2_{*}\Phi(\bar{X}_{*},\bar{T}_{*}),
	\end{equation}
	and $\Phi$ an analytic function. The phase-space is the flow-box multiplied by the functional $\bar{T}_{*}$, with direction given by the sign of $b_{01}$, see Figure \ref{center1T0}. When $X_0=\pm 1$ we have that $b_{11}=-3\gamma_{\mathrm{pf}}<0$, $b_{02}=0$, $c_{01}=0$, $c_{02}=0$ and $c_{12}<-3\gamma_{\mathrm{pf}}$. After changing the time variable to $d/d\tilde{\tau}=T^{-1}d/d\bar{\tau}$, we obtain
	\begin{subequations}
		\begin{align}
		\frac{d\bar{X}}{d\tilde{\tau}}&=-3\gamma_{\mathrm{pf}}\bar{X}-\frac{n}{\nu}\bar{T}-\frac{3\gamma_{\mathrm{pf}}}{2}(2n+1)\bar{X}^2-3\gamma_{\mathrm{pf}}\bar{X} \bar{T}+\mathcal{O}\left(||(\bar{X},\bar{T})||^3\right), \\
		\frac{d\bar{T}}{d\tilde{\tau}}&=-3\gamma_{\mathrm{pf}}\bar{X}\bar{T}+\mathcal{O}\left(||(\bar{X},\bar{T})||^3\right),
		\end{align}
	\end{subequations}
	so that the origin is a semi-hyperbolic fixed point with eigenvalues $-3\gamma_{\mathrm{pf}}$, $0$, and associated eigenvectors $(1,0)$, and $(-\frac{n}{3\gamma_{\mathrm{pf}}\nu},1)$. To analyse the center manifold we introduce the adapted variable $\tilde{X}=\bar{X}+\frac{n}{3\gamma_\mathrm{pf}\nu}\bar{T}$. The $1$-dimensional center manifold $W^{c}$ at $(0,0)$ can then be locally represented as the graph $h: E^{c}\rightarrow E^{s}$, i.e. $\tilde{X}=h(\bar{T})$, satisfying the fixed point $h(0)=0$ and tangency $\frac{dh(0)}{d\bar{T}}=0$ conditions, using $\bar{T}$ as an independent variable. Approximating the solution by a formal truncated power series expansion and solving for the coefficients  yields to leading order on the center manifold
	\begin{equation}
	\frac{d\bar{T}}{d\tilde{\tau}}=\frac{n}{\nu}\bar{T}^2+\mathcal{O}(\bar{T}^3),\quad \text{as}\quad \bar{T}\rightarrow 0.
	\end{equation}
	So for $X_0=\pm 1$, the origin is the $\alpha$-limit set of a single orbit, the inflationary attractor solution, see Figure~\ref{center2T0}. Therefore on the set $\mathrm{L}_1\setminus\mathrm{FL}_0$ only the fixed points $\mathrm{dS}^{\pm}_0$ with $X_0=\pm1$ are $\alpha$-limit sets for Class A interior orbits in $\mathbf{S}$, being unique center manifolds. 
	
	The conclusions about the non-hyperbolic fixed point $\mathrm{FL}_0$ can be found in Section~\ref{ApB}, where the blow-up of $\mathrm{FL}_0$ is done as well as the blow-up of the line $\mathrm{L}_2$. In particular, Lemma~\ref{L2BU-2} in Section~\ref{FPInfty}, states that no interior orbit in $\mathbf{S}$ converges to the set $\mathrm{L}_2\setminus \{\mathrm{K}^\pm\cup\mathrm{FL}_0\}$ and that the fixed points $\mathrm{K}^{\pm}$ are sources. 
\end{proof}
\begin{figure}[ht!]
	\begin{center}
		\subfigure[$b_{01}>0$. For $b_{01}<0$ the direction of the flow is reversed.]{\label{center1T0}
			\includegraphics[width=0.30\textwidth]{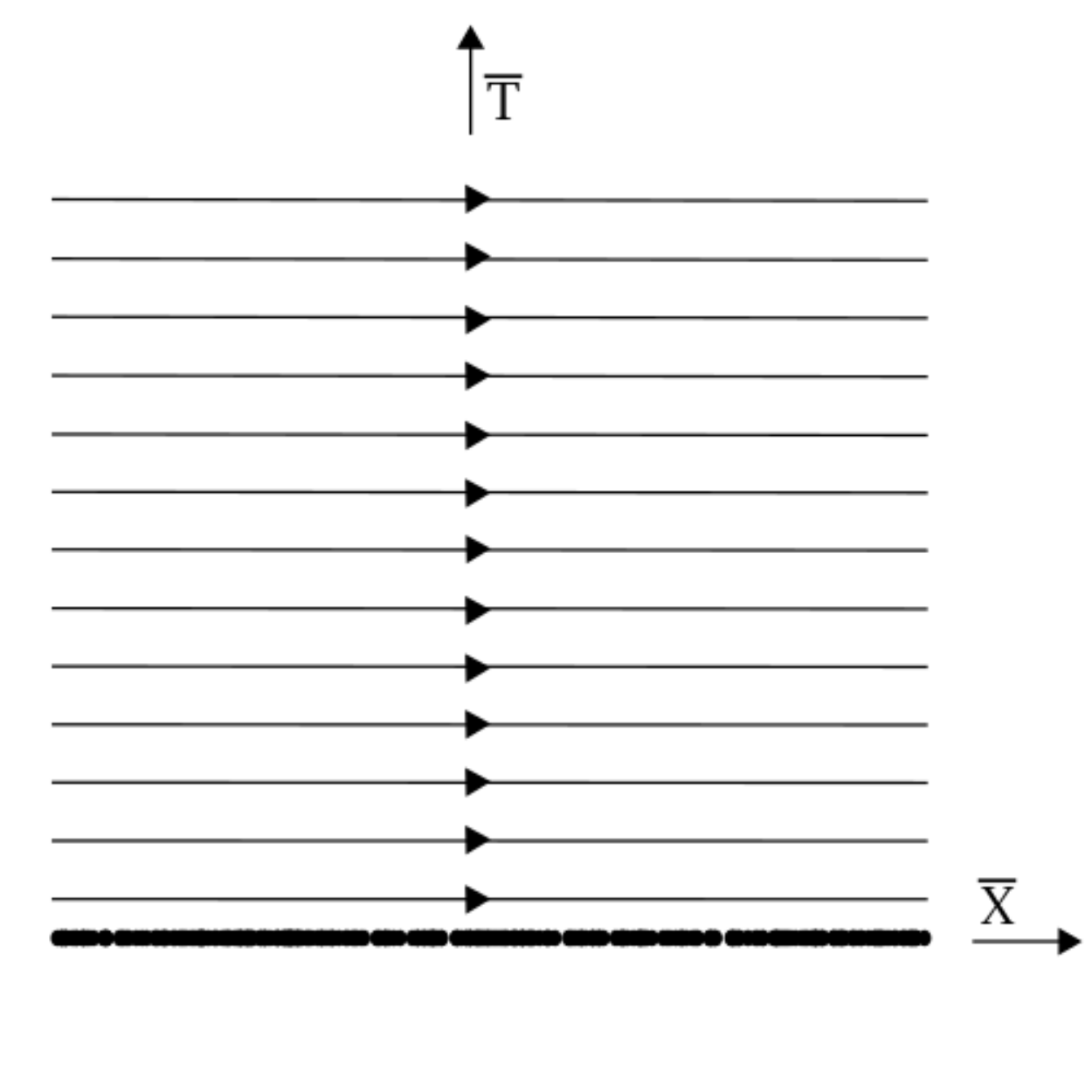}}
		%\subfigure[Negative $b_{01}$]{\label{center5}
		%	\includegraphics[width=0.30\textwidth]{negativeb01.pdf}}
		\hspace{2cm}
		\subfigure[$X_0=\pm 1$.]{\label{center2T0}
			\includegraphics[width=0.30\textwidth]{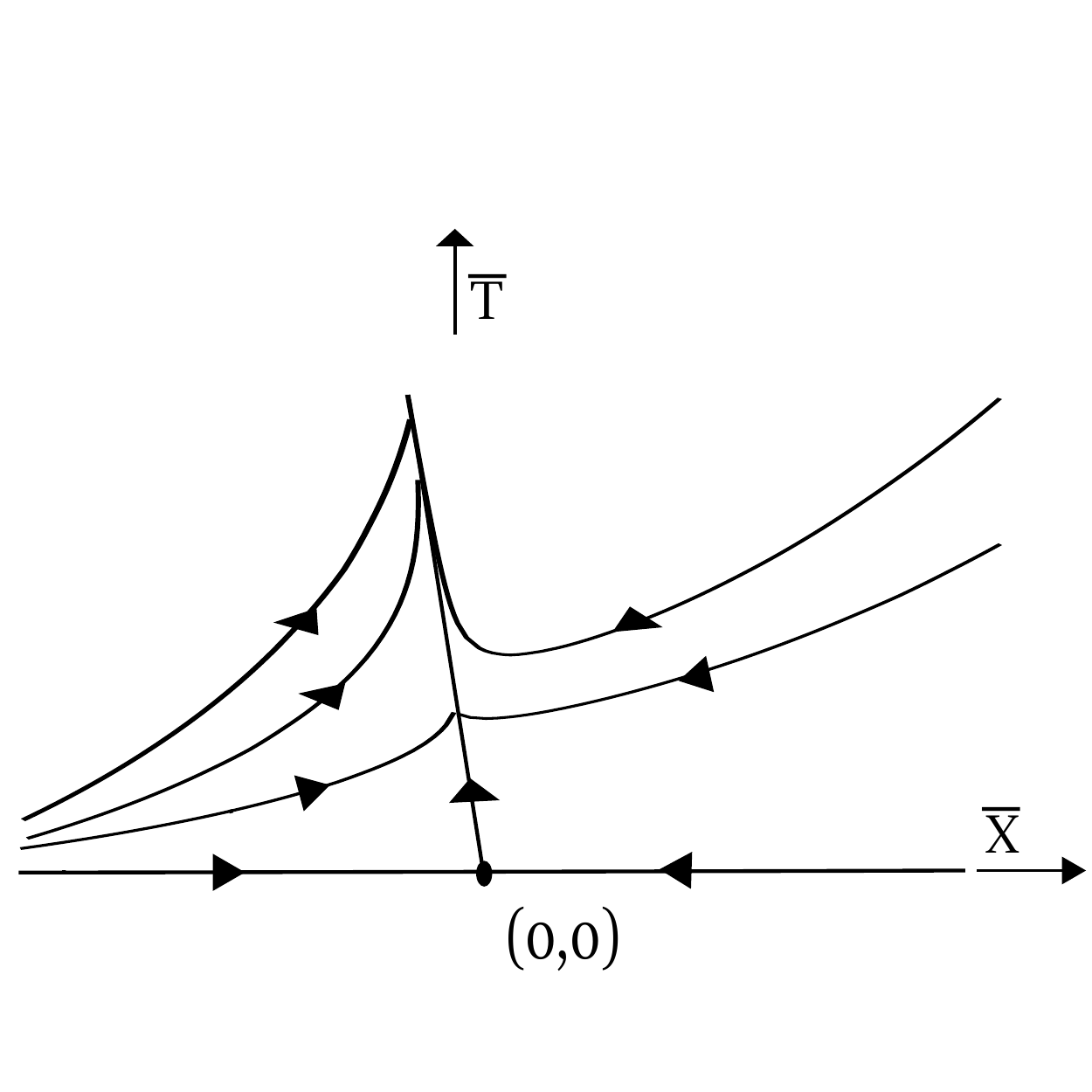}}
	\end{center}
	\vspace{-0.5cm}
	\caption{Flow on the 2-dimensional center manifold of each point on $\mathrm{L}_1\setminus\mathrm{FL}_0$.}
	\label{fig:2D_CM_L1_T0}
\end{figure}
\begin{remark}\label{PastAs3}
	When $p>n/2$, the asymptotics for the inflationary attractor solution originating from $\mathrm{dS}^\pm_0$ are given by %the past asymptotics for Class $\mathrm{A}$ solutions are similar to the case $p<\frac{n}{2}$, see Remark~\ref{PastAs1}. However for the inflationary attractor solution
	\begin{equation}
		 H\sim (-t)^{\frac{n}{2(2p-n)}}\quad,\quad \phi\sim (-t)^{\frac{1}{2(2p-n)}}\quad,\quad \rho_\mathrm{pf} \sim (-t)^{-\frac{n}{2p-n}}, \quad \quad\text{as}\quad t\rightarrow -\infty.
	\end{equation}
\end{remark}
%
%
%%%%%%%%%%%%%%%%%%%%%%%%%%%%%%%%%%%%%%%%%%%%%%%%%%%%%%%5
\subsection{Blow-up of $\mathrm{FL}_0$}
\label{ApB}
%%%%%%%%%%%%%%%%%%%%%%%%%%%%%%%%%%%%%%%%%%%%%%%%%%%%%%%
To analyse the non-hyperbolic fixed point $\mathrm{FL}_0$ we use the unbounded dynamical system~\eqref{sistemalocal}, which for $p>\frac{n}{2}$ gives
\begin{subequations}
	\begin{align}
	\frac{dX}{d\tilde{N}} &= \frac{1}{n}(1+q)\tilde{T}^{2p-n}X+\tilde{T}^{2p-n+1}\Sigma_\mathrm{\phi}\\
	\frac{d\Sigma_\mathrm{\phi}}{d\tilde{N}} &= -\left[(2-q)\tilde{T}^{2p-n}+\nu X^{2p}\right]\Sigma_\mathrm{\phi}-n \tilde{T}^{2p-n+1}X^{2n-1}\\
	\frac{d\tilde{T}}{d\tilde{N}} &= \frac{1}{n}(1+q)\tilde{T}^{1+2p-n}
	\end{align}
\end{subequations}
where we recall
\begin{equation}
q=-1+3\Sigma_\mathrm{\phi}^2+\frac{3\gamma_\mathrm{pf}}{2}\left(1-X^{2n}-\Sigma_\mathrm{\phi}^2\right).
\end{equation}
In order to understand the dynamics near the origin $(X,\Sigma_\phi,\tilde{T})=(0,0,0)$, which is a non-hyperbolic fixed point we employ the spherical blow-up method. I.e. we transform the fixed point at the origin to the unit 2-sphere $\mathbb{S}^2=\{(x,y,z)\in\mathbb{R}^3:x^2+y^2+z^2=1\}$ and define the blow-up space manifold as $\mathcal{B}:=\mathbb{S}^2\times[0,u_0]$ for some fixed $u_0>0$. We further define the quasi-homogeneous blow-up map
\begin{equation}\label{BUPMAP2}
\Psi\,: \mathcal{B}\rightarrow \mathbb{R}^3,\qquad \Psi(x,y,z,u)=(u^{2p-n}x, u^{n}y, u^{2p}z),
\end{equation}
which after cancelling a common factor $u^{2p(2p-n)}$ (i.e. by changing the time variable $d/d\tilde{N}=u^{2p(2p-n)}d/d\bar{\tau}$) leads to a desingularisation of the non-hyperbolic fixed point on the blow-up locus $\{u=0\}$. Just as in the blow-up of the fixed point $\mathrm{FL}_1$ studied in section~\ref{ApA}, one can simplify the computations if, instead of standard spherical coordinates on $\mathcal{B}$, one uses different local charts $\kappa_i:\mathcal{B}\rightarrow\mathbb{R}^3$ such that $\psi_i:\Psi\circ \kappa^{-1}_i$. We then choose six charts $\kappa_i$ such that
\begin{subequations}
	\begin{align}
	\psi_{1\pm}&=(\pm u_{1\pm}^{2p-n}, u_{1\pm}^{n}y_{1\pm},u_{1\pm}^{2p}z_{1\pm}),\\
	\psi_{2\pm} &= (u_{2\pm}^{2p-n}x_{2\pm},\pm u_{2\pm}^{n},u_{2\pm}^{2p}z_{2\pm}),\\
	\psi_{3\pm} &= (u_{3\pm}^{2p-n}x_{3\pm},u_{3\pm}^n y_{3\pm},\pm u_{3\pm}^{2p}),
	\end{align}
\end{subequations}
where $\psi_{1\pm}$, $\psi_{2\pm}$ and $\psi_{3\pm}$ are called the directional blows ups in the positive/negative $x$, $y$, and $z$-directions respectively.
It is easy to check that the different charts are given explicitly by
\begin{subequations}
	\begin{align}
	\kappa_{1\pm}&:\quad(u_{1\pm},y_{1\pm},z_{1\pm})=(\pm u x^{\frac{1}{2p-n}},\pm y x^{-n},\pm z x^{-2p})\\
	\kappa_{2\pm}&:\quad(x_{2\pm},u_{2\pm},z_{2\pm})=(\pm x y^{-\frac{2p-n}{n}}, \pm u y^{\frac{1}{n}},\pm z y^{-\frac{2p}{n}})\\
	\kappa_{3\pm}&:\quad (x_{3\pm},u_{3\pm},z_{3\pm})=(\pm x z^{-\frac{2p-n}{2p}},\pm y z^{-\frac{n}{2p}},\pm u z^{\frac{1}{2p}}).
	\end{align}
\end{subequations}
The transition maps $k_{ij}=\kappa_j \circ\kappa^{-1}_i$ will allow to identify special invariant manifolds and fixed points on different local charts, and to deduce all dynamics on the blow-up space. In this case we will need some of the following transition charts:
\begin{subequations}
	\begin{align}
	\kappa_{1+2+}\quad &:\quad(x_{2+},u_{2+},z_{2+})=(y_{1+}^{-\frac{2p-n}{n}},u_{1+}y_{1+}^{\frac{1}{n}},y_{1+}^{-\frac{2p}{n}}z_{1+}), \quad y_{1+}>0;\\
	\kappa_{2+1+}\quad &: \quad(u_{1+},y_{1+},z_{1+})=(u_{2+}x_{2+}^{\frac{1}{2p-n}},x_{2+}^{-n}, z_{2+} x_{2+}^{-2p}),\quad x_{2+}>0;
	\end{align}
\end{subequations}
\begin{subequations}
	\begin{align}
	\kappa_{1+3+}\quad &: \quad (x_{3+},y_{3+},u_{3+})=(z_{1+}^{-\frac{2p-n}{2p}},y_{1+}z_{1+}^{-\frac{n}{2p}},u_{1+}z_{1+}^{\frac{1}{2p}}),\quad 	z_{1+}>0;\\
	\kappa_{3+1+}\quad &: \quad (u_{1+},y_{1+},z_{1+})=(u_{3+}x_{3+}^{\frac{1}{2p-n}},y_{3+}, y_{3+}x_{3+}^{-n},x_{3+}^{-2p}),\quad x_{3+}>0;
	\end{align}
\end{subequations}
\begin{subequations}
	\begin{align}
	\kappa_{2+3+}\quad &: \quad (x_{3+},y_{3+},u_{3+})=(x_{2+} z_{2+}^{-\frac{2p-n}{2p}},z_{2+}^{-\frac{n}{2p}}, u_{2+}z_{2+}^{\frac{1}{2p}}),\quad z_{2+}>0;\\
	\kappa_{3+2+}\quad &:\quad (x_{2+},u_{2+},z_{2+})=(x_{3+}y_{3+}^{-\frac{2p-n}{n}},u_{3+}y_{3+}^{\frac{1}{n}},y_{3+}^{-\frac{2p}{n}}),\quad y_{3+}>0.
	\end{align}
\end{subequations}
Similarly to the blow-up of $\mathrm{FL}_1$, we are only interested in the region $\{z\geq0\}$, i.e. the union of the upper hemisphere of the unit sphere $\mathbb{S}^2$ and the  equator of the sphere $\{z=0\}$ which constitutes an invariant boundary subset. This motivates that we start the analysis by using chart $\kappa_{3+}$, i.e. the directional blow-up map in the positive $z$-direction, on which the the northern hemisphere is mapped into the plane $z=1$ and the equator of the sphere is at infinity, which is better analysed using the charts $\kappa_{1+}$ and $\kappa_{2+}$. Figure~\ref{fig:BUP2p} shows the blow-up space of $\mathrm{FL}_0$ when $p>\frac{n}{2}$.
\begin{figure}[ht!]
	\begin{center}
		\includegraphics[width=0.44\textwidth]{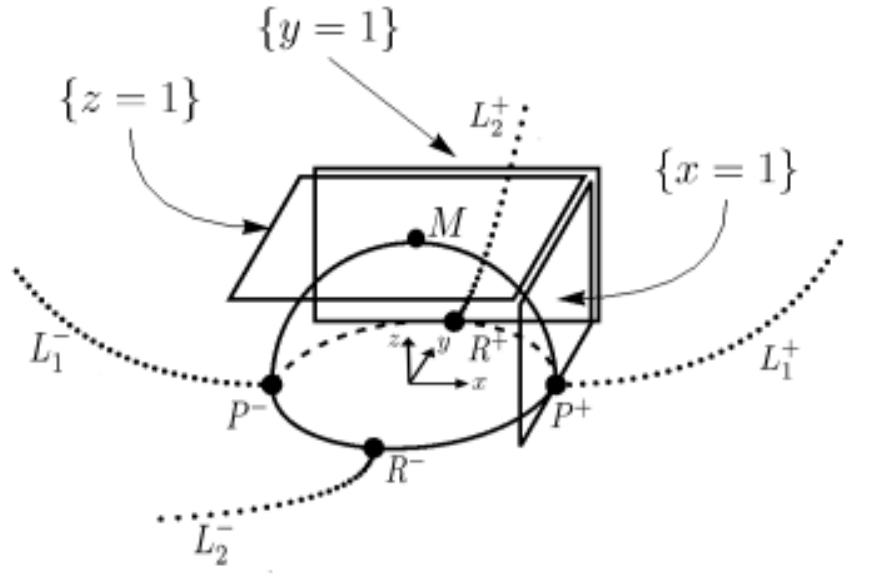}
	\end{center}
	\vspace{-0.5cm}
	\caption{Blow-up space $\mathcal{B}$ for $p>n/2$.}
	\label{fig:BUP2p}
\end{figure}
Later, instead of projecting the upper-half of the unit $2$-sphere on the $z=1$ plane, we shall project it into the open unit disk  $x^2+y^2< 1$ which can be joined with the equator (unit circle on $\{z=0\}$), thus obtaining a global understanding of the flow on the Poincar\'e-Lyapunov unit disk.
%
%%%%%%%%%%%%%%%%%%%%%%%%%%%%%%%%%%%
\subsubsection{Positive $z$-direction}
We start with the positive $z$-direction $\{z=1\}$ which after cancelling a common factor $u_3^{2p(2p-n)}$ (i.e. by changing the time variable $d/d\tilde{N}=u_3^{2p(2p-n)}d/d\bar{\tau}_{3}$) leads to the regular dynamical system
\begin{subequations}
\label{235}
	\begin{align}
	\frac{dx_3}{d\bar{\tau}_3}&= \frac{1}{2p}(1+q)x_3+u_3^{2n}y_3, \\
	\frac{dy_3}{d\bar{\tau}}&= -\left((2-q)+\frac{1+q}{2p}+\nu x_3^{2p}\right)y_3- n u_3^{2n(2p-n)}x_3^{2n-1}, \\
	\frac{du_3}{d\bar{\tau}_3}&=\frac{1}{2np}(1+q)u_3,
	\end{align}
\end{subequations}
where 
\begin{equation}
q=-1+\frac{3\gamma_\mathrm{pf}}{2}\left(1+\frac{2-\gamma_\mathrm{pf}}{\gamma_\mathrm{pf}}y_3^{2}u_3^{2n}-x_3^{2n}u_3^{2n(2p-n)}\right).
\end{equation}
All fixed points are located at the invariant subset $\{u_3=0\}$, where the induced flow is given by
\begin{equation}
\frac{dx_3}{d\bar{\tau}_3}=\frac{3}{(2p-1)(1-\tilde{K})}x_3,\qquad \frac{dy_3}{d\bar{\tau}_3}=\left(\frac{3\tilde{K}}{1-\tilde{K}}- \nu x_3^{2p}\right)y_3
\end{equation}
and where we introduced the notation
\begin{equation}
\tilde{K}=1-\frac{4p}{(2p-1)\gamma_\mathrm{pf}}<0.
\end{equation}
The system has only one fixed point at the origin
\begin{equation}
\mathrm{M}:\quad x_3=0,\quad y_3=0,
\end{equation}
whose linearisation yields the eigenvalues $\lambda_1=\frac{3}{(2p-1)(1-\tilde{K})}$, $\lambda_2=\frac{3\tilde{K}}{(1-\tilde{K})}$ and $\lambda_3=\frac{3}{n(2p-1)(1-\tilde{K})}$, with associated eigenvectors the canonical basis of $\mathbb{R}^3$. Hence, on $\{u_3=0\}$, $\mathrm{M}$ is a hyperbolic saddle.
%%%%%%%%%%%%%%%%%%%%%%%%%%%%%%%%%%%%
\subsubsection{Fixed points at infinity}\label{FPInfty}
To study the points at infinity, we notice that both directional blow-ups in the positive/negative $x$ and $y$ directions already tell how such local chart must be given. Moreover, since the system~\eqref{235} is invariant under the transformation $(x_3,y_3)\rightarrow-(x_3,y_3)$, it suffices to consider the positive $x$ and $y$ directions. The analysis in the remaining directions is then easily inferred from symmetry considerations.

To study the region where $x_3$ blows up, we use the chart
\begin{equation}
\left(y_1,z_1,u_1\right)=\left(y_3x_3^{-n},x_3^{-2p},u_3 x_3^{\frac{1}{2p-n}}\right)
\end{equation}
and change the time variable to $d/d\bar{\tau}_1=z_1d/d\bar{\tau}_3$, i.e. $d/d\tilde{N}=u^{2p(2p-n)}_1d/d\bar{\tau}_1$, which leads to the regular system of equations
\begin{subequations}
	\begin{align}
	\frac{dy_1}{d\bar{\tau}_1}&= -\left(\left(2-q+\frac{1+q}{2p-n}\right)z_1^{2p-n}+\nu\right)y_1-\left(n u_1^{2n(2p-n+1)}-\frac{n^2}{2p-n}y_1^2\right)z_1^{2p-n+1}u_1^{2n} , \\
	\frac{dz_1}{d\bar{\tau}_1}&=-\frac{1+q}{2p-n}z_1^{2p-n+1}-\frac{2p}{2p-n}y_1 z_1^{2p-n+2}u_1^{2n} , \\
	\frac{du_1}{d\bar{\tau}_1}&=\frac{1}{2p-n}\left(\frac{1+q}{n} z_1^{2p-n}+y_1 z_1^{2p-n+1} u_1^{2n}\right)u_1.
	\end{align}
\end{subequations}
The flow in the invariant subset $\{u_1=0\}$ is given by
\begin{equation}
\frac{dy_1}{d\bar{\tau}_1}=\left(\frac{3}{2}\left(2-\gamma_{\mathrm{pf}}+\frac{\gamma_{\mathrm{pf}}}{2p-n}\right)z_{1}^{2p-n}-\nu\right)y_{1},\qquad \frac{dz_1}{d\bar{\tau}_1}=-\frac{3\gamma_\mathrm{pf}}{2(2p-n)}z_1^{2p-n+1},
\end{equation}
which has a single fixed point
\begin{equation}
\mathrm{P}^{+}\quad:\quad y_1=0,\quad z_1=0,\quad u_1=0,
\end{equation}
whose linearisation yields the eigenvalues $\lambda_1=-\nu$, $\lambda_2=0$ and $\lambda_3=0$, with associated eigenvectors $v_1=(1,0,0)$, $v_2=(0,1,0)$ and $v_3=(0,0,1)$. The zero eigenvalue in the $u_{1}$-direction is associated with a line of fixed points, parameterized by constant values of $u_{1}=u_{0}>0$, which corresponds to the half of the line of fixed points $\mathrm{L}_1$ with $X_0>0$ and denoted by $\mathrm{L}^{+}_1$, see Figure~\ref{fig:BUP2p}. Thus, on the  $\{u_1=0\}$ invariant set, the fixed point $\mathrm{P}^{+}$ is semi-hyperbolic, with the center manifold being the invariant subset $\{y_1=0\}$, where the flow is simply
\begin{equation}
\frac{dz_{1}}{d\bar{\tau}_{1}}=-\frac{3\gamma_{\mathrm {pf}}}{2(2p-n)}{z}^{2p-n+1}_1,\qquad\text{as}\qquad {z}_1\rightarrow 0. 
\end{equation}
It follows that on $\{u_1=0\}$, the fixed point $\mathrm{P}^+$ is the $\omega$-limit set of a 1-parameter family of orbits, which converge to $\mathrm{P}^+$ tangentially to the $y_1=0$ axis. From the symmetry of the $(x_3,y_3)$ plane it follows that the blow-up on the negative $x$-direction yields an equivalent fixed point $\mathrm{P}^{-}$.

To study the region where $y_3$ blows up, we use the chart
\begin{equation}
\left(x_2,z_2,u_2\right)=\left(x_3y_3^{\frac{n-2p}{n}},y_3^{\frac{2p}{n}},u_3 y_3^{\frac{1}{n}}\right),
\end{equation}
together with the change of time variable $d/\bar{\tau}_2=z_2d/d\bar{\tau}_3$, i.e. $d/d\tilde{N}=u^{2p(2p-n)}_2d/\bar{\tau}_2$, which leads to the regular dynamical system
\begin{subequations}
	\begin{align}
	\frac{dx_2}{d\bar{\tau}_2}&=\nu\frac{2p-n}{n}x_{2}^{2p+1}+\frac{1}{n}\left(\left(1+q+(2p-n)(2-q)\right)x_{2}+ n u_{2}^{2n}z_{2}+n(2p-n)u_{2}^{2n(2p-n)}x_{2}^{2n}z_{2}\right)z_{2}^{2p-n},\\
	\frac{dz_2}{d\bar{\tau}_2}&= \frac{1}{n}\left((1+q+2p(2-q))z_{2}^{2p-n}+2p \nu x_{2}^{2p}+2pnu_{2}^{2n(2p-n)}x_2^{2n-1}z_{2}^{2p-n+1}\right)z_{2},\\
	\frac{du_2}{d\bar{\tau}_2}&=-\frac{1}{n}\left((2-q)z_{2}^{2p-n}+n u_{2}^{2n(2p-n)}x_{2}^{2n-1}z_{2}^{2p-n+1}+\nu x_{2}^{2p}\right)u_{2}.
	\end{align}
\end{subequations}
The induced flow on the invariant subset $\{u_2=0\}$ is given by
\begin{subequations}
	\begin{align*}
	\frac{dx_2}{d\bar{\tau}_2}&= \frac{1}{2n}\left(3\left(\gamma_\mathrm{pf}+(2p-n)(2-\gamma_\mathrm{pf})\right)z_{2}^{2p-n}+2\nu(2p-n)x_2^{2p}\right)x_{2} ,\\
	\frac{dz_{2}}{d\bar{\tau}_2}&= \frac{1}{2n}\left(3\left(\gamma_\mathrm{pf}+2p(2-\gamma_\mathrm{pf})\right)z^{2p-n}_2+4p \nu x_2^{2p}\right)z_{2},
	\end{align*}
\end{subequations}
which has only one fixed point 
\begin{equation}
\mathrm{R}^{+}:\quad x_2=0,\quad z_2=0,\quad u_2=0.
\end{equation}
The linearised system at $\mathrm{R}^+$ has all eigenvalues equal to zero. One of these zero eigenvalues is due to the line of fixed points $\mathrm{L}_{2}^{+}$ in the $u_2$ direction, which corresponds to the half of the line $\mathrm{L}_2$ with $\Sigma_{\phi0}>0$, see Figures~\ref{fig:BUP2p} and~\ref{fig:T0L2}. 

To blow-up the non-isolated set of fixed points $\mathrm{L}_{2}^{+}\cup\mathrm{R}^{+}$, we perform a cylindrical blow-up, i.e.  we transform each point on this set to a circle $\mathbb{S}^{1}=\{(v,w)\in \mathbf{R}^2:v^2+w^2=1\}$. The blow-up space is $\bar{\mathcal{B}}=\mathbb{S}^1\times[0,u_{20})\times[0,s_0)$, and we further define the quasi-homogeneous blow-up map%
\begin{equation*}
\bar{\Psi}:\bar{\mathcal{B}}\rightarrow\mathbb{R}^3,\quad \bar{\Psi}(v,w,u_2,s)=(s^{2p-n}v,s^{2p}w,u_2).
\end{equation*}
We choose four charts such that
\begin{subequations}
	\begin{align}
	\bar{\psi}_{1\pm}&=\left(\pm s_{1\pm}^{2p-n},s_{1\pm}^{2p}w_{1\pm},u_{2}\right),\\
	\bar{\psi}_{2\pm} &= \left(s_{2\pm}^{2p-n}v_{2\pm},\pm s_{2\pm}^{2p},u_{2}\right),
	\end{align}
\end{subequations}
and recall that only the half circle with $w\geq0$ is of interest since $z_2\geq0$ and, therefore, only the blow-up map in the positive $w$-direction $\bar{\psi}_{2+}$ needs to be considered. 

We start with the $v_1$-direction $\{v_1=\pm 1\}$ which, after cancelling the common factor $s_{1\pm}^{2p(2p-n)}$ (i.e. by changing the time variable $d/d\bar{\tau}_2=s_{1\pm}^{2p(2p-n)}d/d\tilde{\tau}_{1\pm}$), leads to the regular dynamical system
\begin{subequations}
	\begin{align}
	\frac{ds_{1_\pm}}{d\tilde{\tau}_{1\pm}}=& \frac{s_{1\pm}}{n(2p-n)}\Big(\left(1+q+(2p-n)(2-q)\right)w_{1\pm}^{2p-n}\\
	\pm& \left(n+(2p-n)s_{1\pm}^{2(2p-n)}u_{2}^{2n(2p-n-1)}\right)s_{1\pm}^{n}u_{2}^{2n}w_{1\pm}^{2p-n+1}+\nu(2p-n)\Big),\nonumber\\
	\frac{dw_{1\pm}}{d\tilde{\tau}_{1\pm}}=&-\frac{w_{1\pm}^{2p-n+1}}{2p-n}\left(1+q\pm 2p w_{1\pm}s_{1\pm}^{n}u_{2\pm}^{2n} \right),\\
	\frac{d{u}_{2}}{d\tilde{\tau}_{1\pm}}=&-\frac{1}{n}\left((2-q)w_{1\pm}^{2p-n}\pm n s_{1\pm}^{n(2(2p-n)+1)}u_{2}^{2n(2p-n)}w_{1\pm}^{2p-n+1}+\nu\right)u_2,
	\end{align}
\end{subequations}
where
\begin{equation}
q=-1+\frac{3\gamma_\mathrm{pf}}{2}\left(1+\frac{2-\gamma_\mathrm{pf}}{\gamma_\mathrm{pf}}u_{2}^{2n}-u_{2}^{2n(2p-n)}s_{1\pm}^{2n(2p-n)}\right).
\end{equation}
In the physical region $w\geq0$, the above two systems have only one fixed point each, given by
\begin{equation}
\mathrm{T}^{+}_{\pm}:\quad s_{1\pm}=0, \quad w_{1\pm}=0, \quad   u_{2}=0,
\end{equation}
and whose linearisation gives the eigenvalues $\frac{\nu}{n}$, $0$ and $-\frac{\nu}{n}$, with associated eigenvectors the canonical basis of $\mathbb{R}^3$. 
The fixed points $\mathrm{T}^{+}_{\pm}$ have the center manifold $s_{1\pm}=0$, i.e. the $w_{1\pm}$-axis. On the center manifold, the flow is just
\begin{equation}
\frac{dw_{1\pm}}{d\tilde{\tau}_{1\pm}}=-\frac{3\gamma_\mathrm{pf}}{2(2p-n)}w_{1\pm}^{2p-n+1},\quad \text{as}\quad w_{1\pm}\rightarrow 0,
\end{equation}
so that $\mathrm{T}^{+}_{\pm}$ are center-saddles.

In the positive $w$-direction and after cancelling the common factor $s_{2+}^{2p(2p-n)}$ (i.e. by changing the time variable $d/d\bar{\tau}_2=s_{2+}^{2p(2p-n)}d/d\tilde{\tau}_{2+}$), leads to the regular dynamical system
\begin{subequations}
	\begin{align}
	\frac{dv_{2+}}{d\tilde{\tau}_{2+}}=& \frac{1+q}{2p}v_{2+}+s_{2+}^{n}u_{2}^{2n}, \\
	\frac{ds_{2+}}{d\tilde{\tau}_{2+}}=&\frac{1}{2np}\left(1+q+2p(2-q)+2p\nu v_{2+}^{2p}\right)s_{2+}+s_{2+}^{2n(2p-n)+n+1}u_{2}^{2n(2p-n)}, \\
	\frac{du_{2}}{d\tilde{\tau}_{2+}}=& -\frac{1}{n}\left(2-q+n s_{2+}^{n(2(2p-n)+1)}u_{2}^{2n(2p-n)v_{2+}^{2n-1}}-\nu v_{2+}^{2p}\right)u_{2}, 
	\end{align}
\end{subequations}
where
\begin{equation}
	q=-1+\frac{3\gamma_\mathrm{pf}}{2}\left(1+\frac{2-\gamma_\mathrm{pf}}{\gamma_\mathrm{pf}}u_{2}^{2n}-s_{2+}^{2n(2p-n)}u_{2}^{2n(2p-n)}v_{2+}^{2n}\right).
\end{equation}
The above system has only the fixed point
\begin{equation}
\mathrm{Q}^{+}:\quad v_{2+}=0,\quad s_{2+}=0,\quad u_{2}=0,
\end{equation}
whose linearisation gives the eigenvalues $\frac{3\gamma_\mathrm{pf}}{4p}$, $\frac{3(2p(2-\gamma_\mathrm{pf})+\gamma_\mathrm{pf})}{4np}$ and $-\frac{3(2-\gamma_\mathrm{pf})}{2n}$, with associated eigenvectors the canonical basis of $\mathbb{R}^3$. So, on the $\{u_2=0\}$ subset, $\mathrm{Q}^{+}$  is a hyperbolic source. 

%The other fixed point only exist when $n>1$ and are
%%
%\begin{equation}
%\mathrm{Q}^{+}_\pm:\quad v_{2+}=\pm \left(\frac{3n\gamma_\mathrm{pf}}{4p(n-1)(2p-n)}\right)^{\frac{1}{2p}},\quad s_{2+}=0,\quad u_{2}=0.
%\end{equation}
%
Lastly in the positive $w$-direction we have one more fixed point
\begin{equation}
\mathrm{K}^{+}:\quad v_{2+}=0, \quad s_{2+}=0,\quad u_{2}=1.
\end{equation}
The eigenvalues of the linearised system around $\mathrm{K}^{+}$ are $\frac{3}{2p}$, $\frac{3}{2 n p}$ and $3(2-\gamma_{\mathrm{pf} })$, with  associated eigenvectors the canonical basis of $\mathbb{R}^3$. Since $0<\gamma_{\mathrm{pf}}<2$, all eigenvalues are real and positive so that $\mathrm{K}^{+}$ is a hyperbolic source. The blow-up of $\mathrm{L}^+_2\cup\mathrm{R}^{+}$ is shown in Figure~\ref{fig:T0L2}. Due to the symmetry of the system in the $(x_3,y_3)$ plane, the blow-up of the equivalent non-hyperbolic set $\mathrm{L}^{-}_{2}\cup\mathrm{R}^{-}$ follows by symmetry considerations yielding identical results, and therefore equivalent fixed points $\mathrm{Q}^{-}_{\pm}$ and $\mathrm{N}^{-}$. Hence we have the following result:
\begin{lemma}\label{L2BU-2}
	No interior class A orbit in $\mathbf{S}$ converges, for $\tau\rightarrow-\infty$, to the points on the set $\mathrm{L}_2\setminus\{{\mathrm{FL}}_0\cup\mathrm{K}^{\pm}\}$, while a two-parameter set of class A orbits in $\mathbf{S}$ converges to each $\mathrm{K}^{\pm}$.
\end{lemma}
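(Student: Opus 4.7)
The plan is to leverage the two-stage desingularisation already performed in Section~\ref{ApB}: the quasi-homogeneous spherical blow-up~\eqref{BUPMAP2} of $\mathrm{FL}_0$ together with the cylindrical blow-up of $\mathrm{L}_2^{+}\cup\mathrm{R}^{+}$ introduced in Section~\ref{FPInfty} (and its mirror image on $\mathrm{L}_2^{-}\cup\mathrm{R}^{-}$). Since both blow-up maps are diffeomorphisms off of the exceptional locus, $\alpha$-limits of interior class A orbits in $\mathbf{S}$ correspond bijectively to $\alpha$-limits of their lifts in the blown-up phase space, so it suffices to classify the invariant subsets of $\{T=0\}$ in the desingularised picture.

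For the first assertion, I would inspect the chart $\bar\psi_{1\pm}$ (which covers a neighbourhood of every point of $\mathrm{L}_2^{+}\setminus\{\mathrm{FL}_0\cup\mathrm{K}^{+}\}$ through the fibre coordinate $u_2\in(0,1)$). The already-computed equations show that after the desingularising time rescaling $d/d\bar\tau_{2}=s_{1\pm}^{2p(2p-n)}\,d/d\tilde\tau_{1\pm}$ one has, on $\{s_{1\pm}=0,\,w_{1\pm}=0\}$,
\begin{equation*}
\frac{du_2}{d\tilde\tau_{1\pm}}=-\frac{\nu}{n}u_2,
\end{equation*}
so $u_2$ is strictly monotonic along the fibre. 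Hence every point of $\mathrm{L}_2^{+}\setminus\{\mathrm{FL}_0\cup\mathrm{K}^{+}\}$ is a regular point of the lifted vector field; the only fixed points in the blow-up of $\mathrm{L}_2^{+}\cup\mathrm{R}^{+}$ are $\mathrm{T}^{+}_{\pm}$ and $\mathrm{Q}^{+}$ (all lying over $\mathrm{FL}_0$, i.e.\ at $u_2=0$) and $\mathrm{K}^{+}$ (at $u_2=1$). Combining this with Lemma~\ref{lemma1} (the $\alpha$-limit of a class A orbit is a non-empty invariant subset of $\{T=0\}$) and the fact that an $\alpha$-limit set cannot consist of a single regular point of the flow, the only admissible accumulation points on $\mathrm{L}_2^{+}$ are $\mathrm{FL}_0$ and $\mathrm{K}^{+}$. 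The mirror analysis on $\mathrm{L}_2^{-}$ yields the full claim.

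For the second assertion I would use the $\kappa_{2+}$ chart and its cylindrical blow-up $\bar\psi_{2+}$, where $\mathrm{K}^{+}$ lies at $(v_{2+},s_{2+},u_2)=(0,0,1)$. The local analysis already computed the Jacobian there, giving three real positive eigenvalues $\tfrac{3}{2p}$, $\tfrac{3}{2np}$, $3(2-\gamma_{\mathrm{pf}})$ with eigenvectors the canonical basis. The eigenvalue $3(2-\gamma_{\mathrm{pf}})>0$ associated with the $u_2$-direction means that linearised orbits take the form $u_2(\tilde\tau_{2+})=1+c\,e^{3(2-\gamma_{\mathrm{pf}})\tilde\tau_{2+}}$, which for $c<0$ enter the physical octant $\{s_{2+}>0,\,0<u_2<1\}$. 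By the unstable manifold theorem, $\mathrm{K}^{+}$ is a hyperbolic source and a full neighbourhood of directions into the physical region is populated by orbits, giving a two-parameter family (parameterised by direction on the local sphere) whose $\alpha$-limit is $\mathrm{K}^{+}$. Undoing both blow-ups furnishes the desired two-parameter family of class A orbits in $\mathbf{S}$, and $(X,\Sigma_\phi)\mapsto -(X,\Sigma_\phi)$ symmetry supplies the analogous family for $\mathrm{K}^{-}$.

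The delicate step, and the one where I would expect to spend the most care, is the first: verifying that after the cylindrical blow-up combined with the singular time rescaling by $s^{2p(2p-n)}$, the original $\bar\tau_2$-line of fixed points $\{x_2=z_2=0\}$ genuinely ceases to consist of fixed points in $\tilde\tau_{1\pm}$ for $u_2\in(0,1)$, so that it cannot host $\alpha$-limits of interior orbits. This rests on the precise matching between the order of vanishing of the vector field on $\{s_{1\pm}=0\}$ and the order of the time rescaling, and on the sign $-\nu/n<0$ of the resulting $u_2$-drift along the exceptional fibre over interior points of $\mathrm{L}_2^{\pm}$.
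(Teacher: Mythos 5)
Your proposal is correct and follows essentially the same route as the paper: there the lemma is stated as an immediate consequence of the preceding two-stage blow-up, in which the only fixed points over $\mathrm{L}_2^{+}\cup\mathrm{R}^{+}$ are $\mathrm{T}^{+}_{\pm}$ and $\mathrm{Q}^{+}$ at $u_2=0$ together with the hyperbolic source $\mathrm{K}^{+}$ at $u_2=1$ — exactly the facts you invoke. The one point worth tightening is that regularity of the exceptional fibre over an interior point of $\mathrm{L}_2^{+}$ should be verified on the whole half-circle and not only at its endpoints $w_{1\pm}=0$: the interior of the fibre lives in the $\bar\psi_{2+}$ chart, where at $s_{2+}=0$ one has $dv_{2+}/d\tilde\tau_{2+}=\tfrac{1+q}{2p}\,v_{2+}$ and, at $v_{2+}=0$, $du_2/d\tilde\tau_{2+}=-\tfrac{2-q}{n}\,u_2$, both nonvanishing for $u_2\in(0,1)$, which completes the check that the fibre carries no fixed points and no nontrivial invariant subsets.
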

\begin{figure}[ht!]
	\begin{center}
				\includegraphics[width=0.75\textwidth]{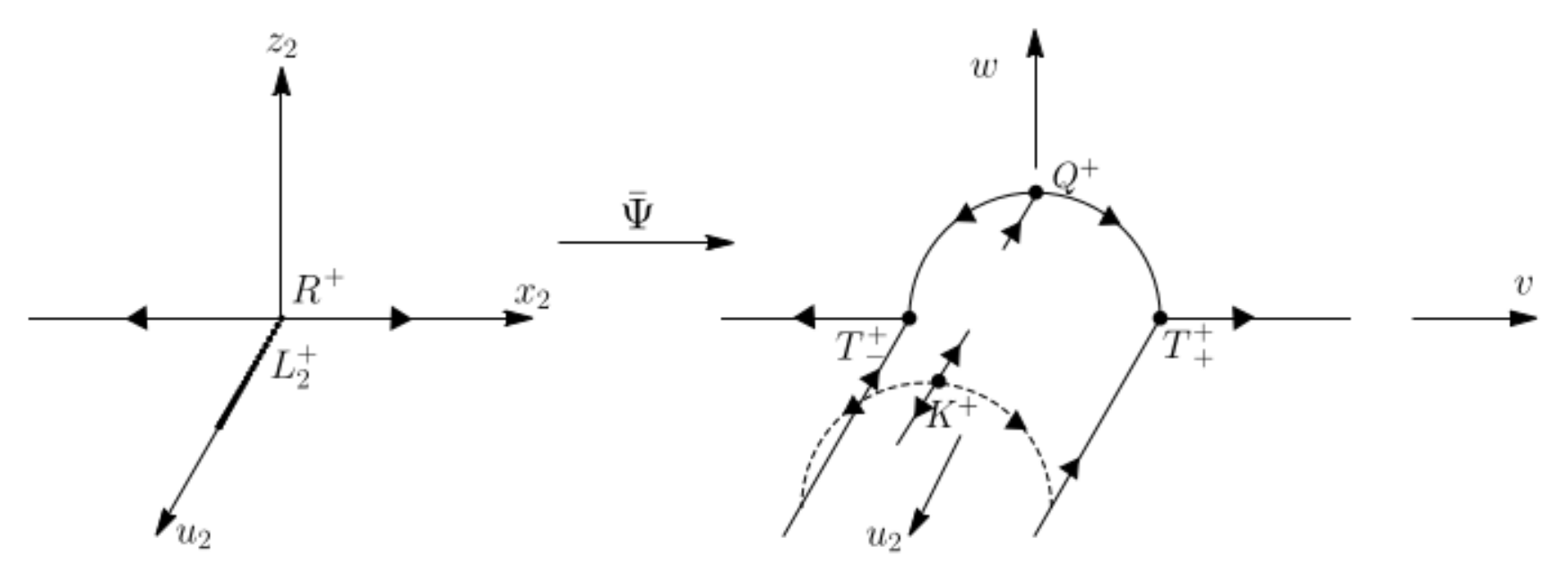}
	\end{center}
	\vspace{-0.5cm}
	\caption{Blow-up of the non-hyperbolic line of fixed points $\mathrm{L}^{+}_2\cup\mathrm{R}^{+}$.}
	\label{fig:T0L2}
\end{figure}
%
%%%%%%%%%%%%%%%%%%%%%%%%%%%%%%%%%%%%%
\subsubsection{Global phase-space on the Poincar\'e-Lyapunov disk}
In this section we introduce new polar variables on the $\{u_3=0\}$ subset to obtain a global phase-space picture on the unit disk $\mathbb{D}^2$. Similarly to the blow-up of $\mathrm{FL}_1$ done in section~\ref{ApA}, we introduce the variables
\begin{equation}\label{PLVAR}
\left(x_3,y_3,u_3\right)=\left(\left(\frac{r}{1-r}\right)^{\frac{1}{2p}}\cos \theta ,\left(\frac{r}{1-r}\right)^{\frac{2p+1}{2p}}F(\theta)\sin\theta,(1-r)r^{\frac{2p-n+1}{2n(2p-n)}}\bar{u}\right),
\end{equation}
where
$$
F(\theta)=\sqrt{\frac{1-\cos^{2(2p+1)}\theta}{1-\cos^2 \theta}}.
$$
The above transformation leads to
\begin{equation}
x_3^{2(2p+1)}+y_3^{2}=\left(\frac{r}{1-r}\right)^{\frac{2p+1}{p}}.
\end{equation}
By making a further change of time variable 
\begin{equation}
\frac{d}{d\xi}=(1-r)\frac{d}{d\bar{\tau}_3},
\end{equation}
we get the dynamical system on the Poincar\'e-Lyapunov cylinder
\begin{subequations}
	\begin{align}
	\frac{dr}{d\xi}=& \frac{2F(\theta)\sin \theta}{2p+1}\left(n(1-r)^{3+2n(2p-n)+\frac{n-1}{p}}r^{1+2p-n+\frac{n-1}{p}}\bar{u}^{2n(2p-n)}\cos^{2n-1}\theta+(2p+1)(1-r)^{2n-1}r^{3+\frac{1}{2p-n}}\bar{u}^{2n}	\right) \nonumber \\
	&+p r (1-r)\left(\frac{6(1-r)}{(2p-1)(1-\tilde{K})}\cos^{2(2p+1)}\theta+\frac{\nu}{2(2p+1)}rF^{2}(\theta)\cos^{2(p-1)}\sin^22\theta\right) \nonumber\\
	\frac{d\theta}{d\xi} =& -\frac{(1-r)^{2n}r^{2+\frac{1}{2p-n}}}{2p+1}F(\theta)\left((2p+1)F^{2}(\theta)\sin^{2}\theta-n (1-r)^{2n(2p-n)}r^{2p-n-2+\frac{n-1}{p}}\bar{u}^{2n(2p-n-1)}\right)\bar{u}^{2n} \nonumber \\
	&-F^{2}(\theta)\sin\theta\left(\frac{3(1-r)}{(2p-1)(1-\tilde{K})}\left(1-\tilde{K}\frac{2p-1}{2p+1}\right)-\nu \frac{r}{2p+1}\cos^{2p}\theta\right), \nonumber\\
	\frac{d\bar{u}}{d\xi}=&-\frac{1}{2nq}(1+q)\bar{u}+\frac{(2p-n+1)(1-r)-2 n p (2p-n)}{2n(2p-n)(2p+1)}\bar{u}\Bigg[ \Bigg.\nonumber \\
	&F(\theta)\left(2n p (1-r)^{1+\frac{(2p+1)(2p-n)}{p}}r \bar{u}^{2n(2p-n)}\cos^{2n-1}\theta+(1+q+2p(2-q))(1-r)F(\theta)\sin \theta\right)\sin \theta\nonumber\\
	&\Bigg.+(2p+1)\cos^{4p+1}\theta\left((1+q)r^{\frac{2p+1}{2p-n}}(1-r)\cos \theta + (1-r)^{2n}r^{\frac{(2p-n)(p+1)-1}{(2p-n)p}}\bar{u}^{2n}F(\theta)\sin \theta\right)\Bigg]\nonumber
	\end{align}
\end{subequations}
where
\begin{equation}
\begin{split}
q=&-1+\frac{6p}{(2p-1)(1-\tilde{K})}\Bigg(1-(1-r)^{2n(2p-n)}r^{\frac{2p+1}{p}}\bar{u}^{2n(2p-n)}	\cos^{2n}\theta\\
&\qquad\qquad-\frac{(2p-1)\tilde{K}-1}{2p}(1-r)^{2n-\frac{1}{p}}r^{3+\frac{1}{p}+\frac{1}{2p-n}}\bar{u}^{2n}F^{2}(\theta)\sin^2\theta\Bigg).
\end{split}
\end{equation}
The induced flow on the invariant boundary $\{\bar{u}=0\}$ can be regularly extended to the invariant boundaries $\{r=0\}$ and $\{r=1\}$, while for $\bar{u}>0$ it can be extended to these boundaries at least in a $C^1$ manner. The general structure of the Poincar\'e-Lyapunov cylinder is shown in Figure~\ref{fig:BUP_PL2}. In particular the non-hyperbolic lines $\mathrm{L}^{\pm}_2$ have been desingularised in these coordinates.
\begin{figure}[H]
	\begin{center}
		\includegraphics[width=0.35\textwidth]{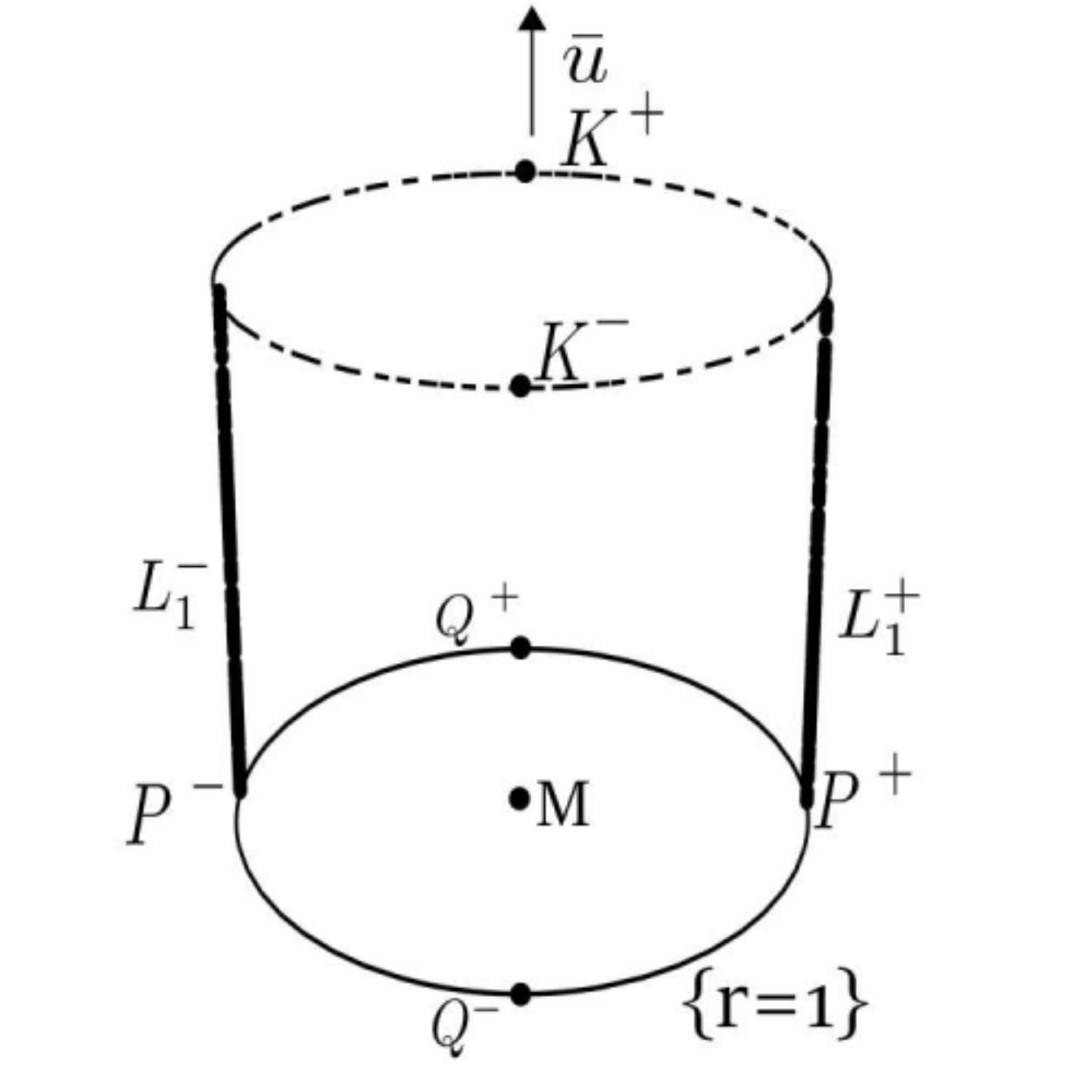}
	\end{center}
	\vspace{-0.5cm}
	\caption{Blow-up space $\mathcal{B}$ of $\mathrm{FL}_0$ on the Poincar\'e-Lyapunov cylinder.}
	\label{fig:BUP_PL2}
\end{figure}
Restricting to the invariant unit disk at $\{\bar{u}=0\}$, results in the regular dynamical system
\begin{subequations}
	\begin{align}
	\frac{dr}{d\xi}&=p r (1-r)\left(\frac{6(1-r)}{(2p-1)(1-\tilde{K})}\cos^{2(2p+1)}\theta+\frac{\nu r}{2(2p+1)}F^{2}(\theta)\cos^{2(p-1)}\theta\sin^22\theta\right),\\
	\frac{d\theta}{d\xi}&=-F^{2}(\theta)\sin\theta\left(\frac{3(1-r)}{(2p-1)(1-\tilde{K})}\left(1-\tilde{K}\frac{2p-1}{2p+1}\right)- \frac{\nu r}{2p+1}\cos^{2p}\theta\right).
	\end{align}
\end{subequations}
The boundary $\{r=0\}$ consists of a fixed point $\mathrm{M}$ which corresponds to the origin of the ($x_3,y_3$) plane, and which the previous analysis showed to be a hyperbolic saddle since $\tilde{K}<0$.
The hyperbolic fixed points $\mathrm{P}^{\pm}$ and $\mathrm{Q}^{\pm}$ at infinity in the $(x_3,y_3)$ plane are now located at the $\{r=1\}$ invariant boundary, and are given by
\begin{equation}
\theta_{\mathrm{P}^{+}}=0 ,\qquad \theta_{\mathrm{P}^{-}}=\pi,\qquad \theta_{\mathrm{Q}^{+}}=\frac{\pi}{2},\qquad  \theta_{\mathrm{Q}^{-}}=\frac{3\pi}{2}.
\end{equation}

\begin{theorem}
	Let $p>\frac{n}{2}$ . Then for all $\gamma_\mathrm{pf}\in(0,2)$ and $\nu>0$, the Poincar\'e-Lyapunov disk consists of heteroclinic orbits connecting the fixed points $\mathrm{M}$, $\mathrm{P}^\pm$ and $\mathrm{Q}^\pm$ as depicted in Figure~\ref{fig:PP3}.
\end{theorem}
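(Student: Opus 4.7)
The strategy mirrors the Poincar\'e--Bendixson arguments employed in Section~\ref{BUP1st} for the case $p<\frac{1}{2}(n-1)$. The essential ingredients are (i) the two coordinate axes $\{y_3=0\}$ and $\{x_3=0\}$ are invariant under the induced flow on the Poincar\'e-Lyapunov disk $\{\bar u=0\}$, partitioning it into four open invariant quadrants, and (ii) the only fixed points in the disk are $\mathrm{M}$ at $r=0$ together with $\mathrm{P}^\pm$ and $\mathrm{Q}^\pm$ on the equator $\{r=1\}$.

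First, I would verify the invariance of the axes directly from~\eqref{235} restricted to $\{u_3=0\}$: there $\dot x_3$ is proportional to $x_3$ and $\dot y_3$ is proportional to $y_3$, the coupling term $-nu_3^{2n(2p-n)}x_3^{2n-1}$ vanishing. On $\{y_3=0\}$ the reduced flow $\dot x_3=\tfrac{3}{(2p-1)(1-\tilde K)}x_3$ produces the two heteroclinic orbits $\mathrm{M}\to\mathrm{P}^\pm$ (recall $\tilde K<0$), while on $\{x_3=0\}$ the reduced flow $\dot y_3=\tfrac{3\tilde K}{1-\tilde K}y_3$ produces the two heteroclinic orbits $\mathrm{Q}^\pm\to\mathrm{M}$. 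Next I would describe the flow on the equator: setting $r=1$ and $\bar u=0$ in the $\theta$-equation leaves $d\theta/d\xi=\tfrac{\nu}{2p+1}F^2(\theta)\sin\theta\cos^{2p}\theta$, which vanishes precisely at $\theta\in\{0,\pi/2,\pi,3\pi/2\}$ and whose sign on each interval in between is dictated by $\sin\theta\cos^{2p}\theta$, yielding the four heteroclinic arcs $\mathrm{P}^\pm\to\mathrm{Q}^\pm$ on $\{r=1\}$.

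Each of the four open quadrants is then bounded by one axial orbit $\mathrm{M}\to\mathrm{P}^\pm$, one axial orbit $\mathrm{Q}^\pm\to\mathrm{M}$, and one equatorial arc $\mathrm{P}^\pm\to\mathrm{Q}^\pm$; none contains an interior fixed point, so the index theorem excludes periodic orbits in them. A closed saddle connection is also impossible, since $\mathrm{M}$ possesses a single one-dimensional unstable manifold (lying on $\{y_3=0\}$) and a single one-dimensional stable manifold (lying on $\{x_3=0\}$), whose two branches lie in different quadrants. Poincar\'e--Bendixson then forces the $\alpha$- and $\omega$-limit sets of every interior orbit to be among $\{\mathrm{M},\mathrm{P}^\pm,\mathrm{Q}^\pm\}$, and combining with the local stability established in the auxiliary charts ($\mathrm{M}$ saddle, $\mathrm{Q}^\pm$ repellers into the disk, $\mathrm{P}^\pm$ attractors from the disk) yields the separatrix skeleton of Figure~\ref{fig:PP3}.

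The principal obstacle I foresee is transferring the stability information for $\mathrm{P}^\pm$ and $\mathrm{Q}^\pm$ from the cylindrical blow-up charts $\bar\psi_{i\pm}$ into definite statements on the Poincar\'e-Lyapunov disk itself, so as to exclude interior orbits from accumulating at any of the auxiliary fixed points ($\mathrm{T}^+_\pm$, $\mathrm{K}^+$, and their negative counterparts) that live on the exceptional divisor of the cylindrical blow-up but are invisible from the disk; this amounts to checking that the unstable/stable manifolds of those auxiliary points intersect the disk only at $\mathrm{P}^\pm$ or $\mathrm{Q}^\pm$.
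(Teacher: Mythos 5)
Your proposal follows essentially the same route as the paper's proof: invariance of the axes $\{y_3=0\}$ and $\{x_3=0\}$ giving the heteroclinics $\mathrm{M}\to\mathrm{P}^{\pm}$ and $\mathrm{Q}^{\pm}\to\mathrm{M}$, the resulting four invariant quadrants with no interior fixed points (hence no periodic orbits), and Poincar\'e--Bendixson; the extra worry in your last paragraph about the auxiliary points $\mathrm{T}^{+}_{\pm}$, $\mathrm{K}^{+}$ is not an obstacle here, since those live on the exceptional divisor of the cylindrical blow-up in the $u_2$-direction and do not appear on the two-dimensional $\{\bar u=0\}$ disk. One small slip: your blanket description of the equatorial arcs as $\mathrm{P}^{\pm}\to\mathrm{Q}^{\pm}$ is inconsistent with your own formula for $d\theta/d\xi$ (whose sign alternates with $\sin\theta$, so adjacent arcs run in opposite senses) and with the local analysis showing $\mathrm{P}^{\pm}$ attracting and $\mathrm{Q}^{\pm}$ repelling on the disk, but this does not affect the Poincar\'e--Bendixson argument.
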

\begin{proof}
	First notice that $\{y_3=0\}$ and $\{x_3=0\}$ are invariant subsets consisting of heteroclinic orbits $\mathrm{M}\rightarrow \mathrm{P}^\mathrm{\pm}$ and $\mathrm{Q}^\pm\rightarrow \mathrm{M}$ respectively. These separatrices split the phase-space into four  invariant subsets (corresponding to the four  quadrants). Since on these quadrants there are no fixed points, then there are also no periodic orbits and the result follows by the Bendixson-Poincar\'e theorem.
%, the Poincar\'e-Lyapunov disk consists of heteroclinic orbits connecting the fixed points.
\end{proof}
%
%Figure~\ref{fig:PP3} shows the Poincar\'e-Lyapunov unit disk for $p>\frac{n}{2}$.
%
%
\begin{figure}[H]
	\begin{center}
			\includegraphics[width=0.30\textwidth]{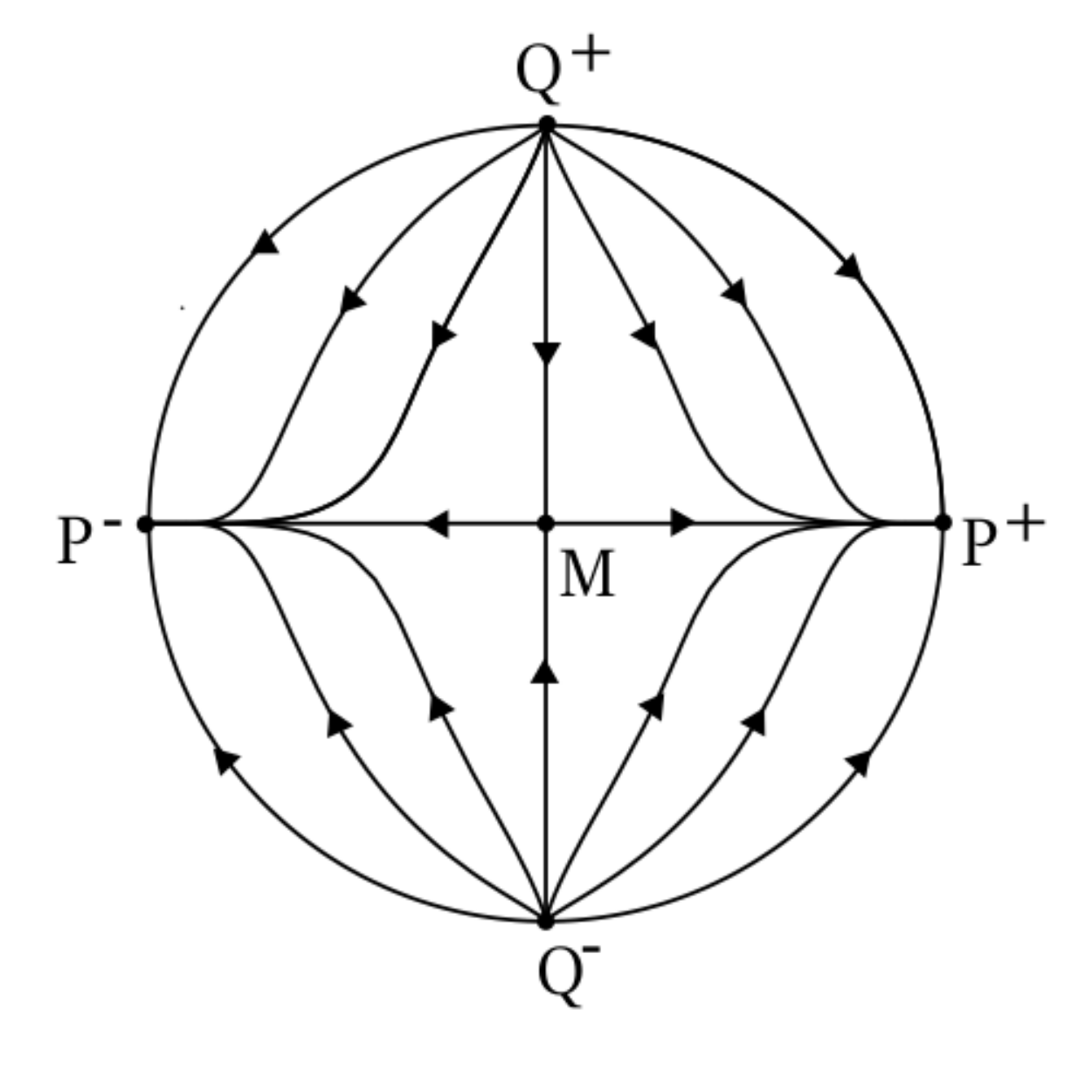}
	\end{center}
	\vspace{-0.5cm}
	\caption{Poincar\'e-Lyapunov unit disk when $p>\frac{n}{2}$.}
	\label{fig:PP3}
\end{figure}
%
%\begin{remark}
%	It is interesting to obtain the asymptotics for the orbits on the cylinder $\mathbf{S}$ towards $\mathrm{FL}_0$, there exists a one parameter family of orbits in $\mathbf{S}$ with the following asymptotics towards $\mathrm{FL}_0$ as $\tau\rightarrow \infty$
%	
%	\begin{subequations}
%		\begin{align*}
%		X(\tau) &= C_{X}C_{T}\left(C+\frac{3\gamma_{\mathrm {pf}}}{2n}(2p-n)\tau\right)^{\frac{1}{2p-n}} \\
%		\Sigma_\phi(\tau) &= C_{\Sigma}C_{T}\left(C+\frac{3\gamma_{\mathrm {pf}}}{2n}(2p-n)\right)^{\frac{n}{2p(2p-n)}\left(1+(2p-1)K\right)}  \\
%		T(\tau) &=C_{T}\left(C+\frac{3\gamma_{\mathrm {pf}}}{2n}(2p-n)\tau\right)^{2p-n}
%		\end{align*}
%	\end{subequations}
%	%
%	with $C_T>0$, $C_\Sigma$ constants, which is obtained via the linearised solution at $\mathrm{M}$.
%\end{remark}
%
%%%%%%%%%%%%%%%%%%%%%%%%%%%%%%%%%%%%%%%%%
\subsection{Invariant boundary $T=1$}
The induced flow on the $T=1$ invariant boundary is given by
\begin{equation}
	\frac{dX}{d\tau} = \Sigma_{\phi},\qquad\frac{d\Sigma_{\mathrm{\phi}}}{d\tau} = -n X^{2n-1}
\end{equation}
and this system has only one fixed point at $\Omega_\mathrm{pf}=1$ given by
\begin{equation}
\mathrm{FL_1}: \quad X=0, \quad\quad \Sigma_{\mathrm{\phi}}=0, \quad\quad T=1.
\end{equation}
\begin{lemma}
	Let $p>\frac{n}{2}$. Then the $\{T=1\}$ invariant boundary is foliated by periodic orbits $\mathcal{P}_{\Omega_\phi}$ characterized by constant values of $\Omega_\phi$, and centered at $\mathrm{FL}_1$, as depicted in Figure~\ref{fig:T1_2p-n}.
\end{lemma}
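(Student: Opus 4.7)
The plan is to exhibit a first integral on the $\{T=1\}$ invariant boundary and then verify that its non-trivial level sets are genuine periodic orbits, exactly as in the $p=n/2$ case.

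First I would compute $\frac{d\Omega_\phi}{d\tau}$ along the induced flow on $\{T=1\}$. Using $\Omega_\phi=\Sigma_\phi^{2}+X^{2n}$ together with the planar system $\dot{X}=\Sigma_\phi$, $\dot\Sigma_\phi=-nX^{2n-1}$, a direct differentiation gives
\begin{equation*}
\frac{d\Omega_\phi}{d\tau}=2\Sigma_\phi\dot{\Sigma}_\phi+2nX^{2n-1}\dot{X}=-2n\Sigma_\phi X^{2n-1}+2nX^{2n-1}\Sigma_\phi=0.
\end{equation*}
Equivalently, the restricted system is Hamiltonian with Hamiltonian $\mathcal{H}=\tfrac{1}{2}\Sigma_\phi^{2}+\tfrac{1}{2n}X^{2n}\cdot n=\tfrac{1}{2}(\Sigma_\phi^{2}+X^{2n})$, so $\Omega_\phi$ is a conserved quantity on $\{T=1\}$.

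Next I would analyse the level sets $\mathcal{P}_{\Omega_\phi}=\{\Sigma_\phi^{2}+X^{2n}=c\}$ for each $c\in(0,1)$. The function $(X,\Sigma_\phi)\mapsto \Sigma_\phi^{2}+X^{2n}$ is a proper, real-analytic, strictly convex-like Lyapunov-type function with a unique minimum at the origin, and its gradient vanishes only at $(0,0)=\mathrm{FL}_1$. Hence for each $c\in(0,1)$ the level set $\mathcal{P}_{\Omega_\phi}$ is a smooth, compact, simple closed curve enclosing $\mathrm{FL}_1$ (a deformed circle, reducing to a true circle when $n=1$).

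Finally, to conclude that each $\mathcal{P}_{\Omega_\phi}$ is a periodic orbit rather than a union of heteroclinic saddle connections, I would note that the only fixed point of the induced flow on $\{T=1\}$ is $\mathrm{FL}_1$ (since $\Sigma_\phi=0$ and $X^{2n-1}=0$ force $X=\Sigma_\phi=0$), and $\mathrm{FL}_1$ lies off every $\mathcal{P}_{\Omega_\phi}$ with $c>0$. Thus the vector field is nowhere zero on $\mathcal{P}_{\Omega_\phi}$, and the orbit through any of its points is confined to $\mathcal{P}_{\Omega_\phi}$ (by conservation of $\Omega_\phi$) and must traverse the whole compact curve in finite time, giving a periodic orbit. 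This yields the claimed foliation of $\{T=1\}\setminus\{\mathrm{FL}_1\}$ by the periodic orbits $\mathcal{P}_{\Omega_\phi}$. I do not anticipate a serious obstacle here: the result is essentially the same Hamiltonian structure that appeared in the $p=n/2$ analysis, and the mild degeneracy at the origin when $n>1$ is irrelevant because we only consider $c>0$.
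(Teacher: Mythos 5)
Your proposal is correct and follows essentially the same route as the paper: the paper's proof simply observes that the auxiliary equation for $\Omega_\phi$ on $\{T=1\}$ gives $d\Omega_\phi/d\tau=0$, so that $\Omega_\phi=X^{2n}+\Sigma_\phi^2$ is conserved, and concludes. You merely make explicit the (correct) remaining details — that each positive level set is a compact simple closed curve avoiding the unique fixed point $\mathrm{FL}_1$, hence a periodic orbit — which the paper leaves implicit.
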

\begin{proof}
	On $\{T=1\}$ the auxiliary equation for $\Omega_\phi$ reads
	\begin{equation}
	\frac{d\Omega_\phi}{d\tau}=0\quad\Rightarrow \quad \Omega_\phi=X^{2n}+\Sigma_{\mathrm{\phi}}^2=\text{const.}
	\end{equation}
	from which the result follows.
	%Hence the $\{T=1\}$ invariant boundary is foliated by periodic orbits characterized by $\Omega_\phi=\text{const.}$, where $\Omega_\phi=0$
	 %corresponds to the fixed point $\mathrm{FL}_1$ which is a center.
	%	
\end{proof}
\begin{figure}[ht!]
	\begin{center}
		\includegraphics[width=0.30\textwidth]{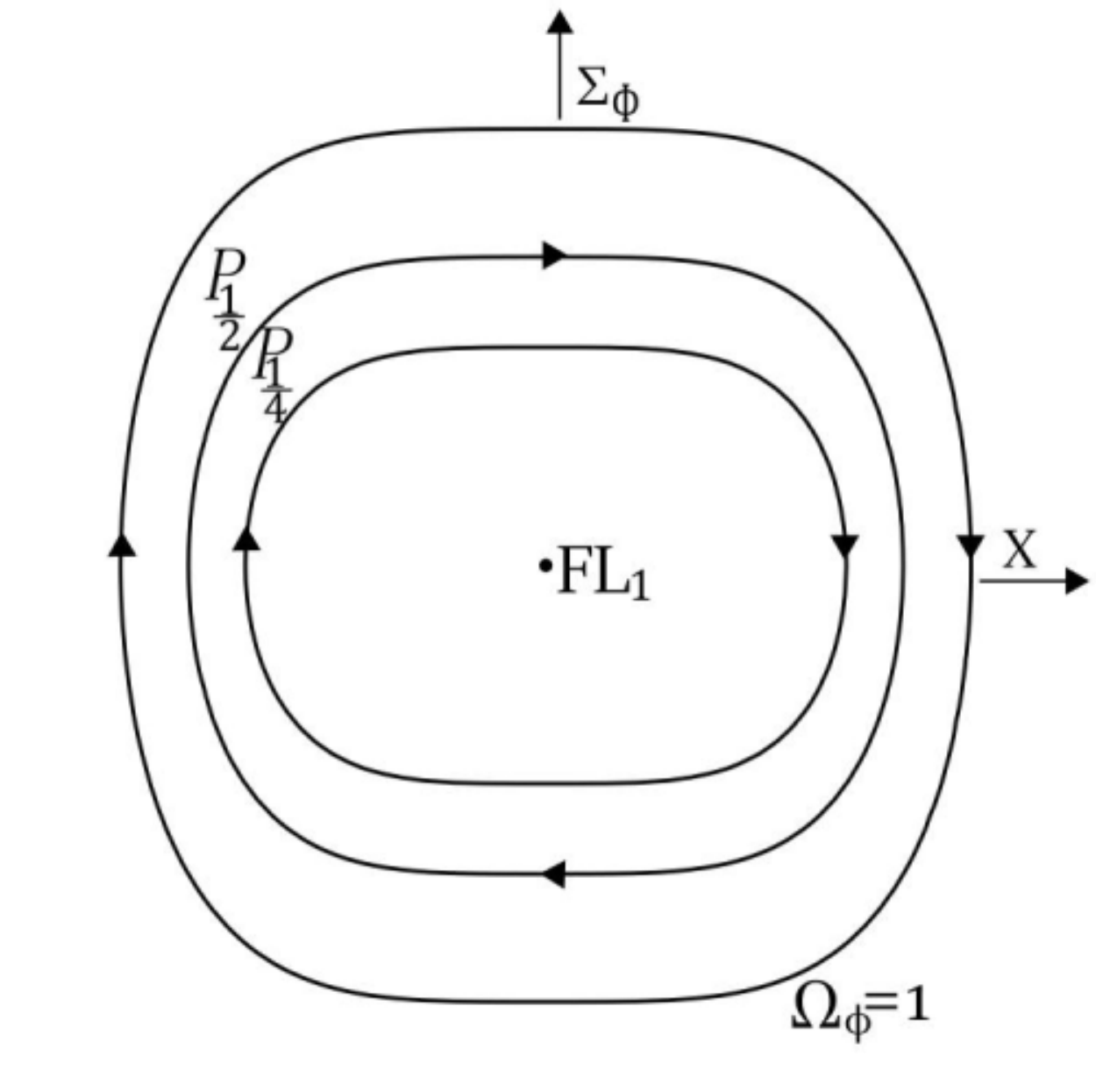}
	\end{center}
	\vspace{-0.5cm}
	\caption{Invariant boundary $\{T=1\}$ when $p>\frac{n}{2}$.}
	\label{fig:T1_2p-n}
\end{figure}

\begin{theorem} \label{teorema2}
	Consider the system \eqref{globalnleq2p} with $0<\Omega_{\phi}<1$ and $0<\gamma_{\mathrm{pf}}<2$:
	\begin{itemize}
		\item[(i)] \label{menor1} If $\gamma_\mathrm{pf} < \frac{2n}{n+1}$, then all solutions converge, for $\tau\rightarrow + \infty$, to the fixed point $\mathrm{FL_1}$ with $\Omega_{\phi}=0$ $(\Omega_{\mathrm{pf}}=1)$;
		\item[(ii)] \label{maior1} If $\gamma_\mathrm{pf}>\frac{2n}{n+1}$, then all solutions converge, for $\tau \rightarrow + \infty$, to  $\Omega_\mathrm{\phi}=1$ $(\Omega_{\mathrm{pf}}=0)$; 
		\item[(iii)] \label{igual1}  If $\gamma_\mathrm{pf}=\frac{2n}{n+1}$, then all solutions converge, for $\tau \rightarrow + \infty$, to a inner periodic orbit $P_{\Omega_\mathrm{\phi}}$. 
	\end{itemize}
\end{theorem}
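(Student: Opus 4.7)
The plan is to adapt the averaging strategy used in the proof of Theorem~\ref{teorema1}, again taking the perturbation parameter to be $\epsilon(\tau) := 1 - T(\tau)$ and using the generalised polar coordinates $(X, \Sigma_\phi) = (r^{1/n}\cos\theta, r G(\theta)\sin\theta)$ from \eqref{Angular}, which resolve the constraint $\Omega_\phi = r^2$. Substituting into the auxiliary equation for $\Omega_\phi$ and using $dr/d\tau = (2r)^{-1} d\Omega_\phi/d\tau$, one obtains
\begin{equation*}
\frac{dr}{d\tau} = \frac{3}{2}\epsilon\,T^{2p-n}\,(\gamma_\mathrm{pf} - \gamma_\phi)\,r(1-r^2) - \frac{3\nu}{2}\,\epsilon^{2p-n+1}\,\sigma_\phi\, r^{\,1+2p/n},
\end{equation*}
where $\gamma_\phi = 2G^2(\theta)\sin^2\theta$ and $\sigma_\phi = \tfrac{2}{3}G^2(\theta)\sin^2\theta\cos^{2p}\theta$ are purely $\theta$-periodic, while $\theta$ rotates at the fast rate $d\theta/d\tau = G(\theta) r^{(n-1)/n} T^{2p-n+1} + O(\epsilon)$. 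The decisive new feature relative to Theorem~\ref{teorema1} is that $2p-n+1 \geq 2$ forces the interaction term to be strictly higher order in $\epsilon$ than the non-interacting one, so the friction drops out of the leading averaged dynamics.

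After the near-identity substitution $r = y + \epsilon g(y,\tau,\epsilon)$ with $g$ chosen to cancel the oscillatory part of the $O(\epsilon)$ driving (exactly as in \eqref{gbound}), and the time rescaling $d\bar\tau := \epsilon\, d\tau$ needed to desingularise the line of equilibria at $\epsilon=0$, the truncated averaged system becomes
\begin{equation*}
\frac{d\bar{y}}{d\bar\tau} = \frac{3}{2}\bigl(\gamma_\mathrm{pf} - \langle\gamma_\phi\rangle\bigr)\bar{y}(1-\bar{y}^2),\qquad \frac{d\epsilon}{d\bar\tau} = -\frac{1+q}{n}\epsilon(1-\epsilon),
\end{equation*}
with $\langle\gamma_\phi\rangle = \tfrac{2n}{n+1}$. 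Cases (i) and (ii) then follow immediately: the coefficient is negative in (i), so $\bar y = 0$ is the unique attractor on $[0,1]\times\{0\}$; it is positive in (ii), so $\bar y = 1$ is the attractor. The Gronwall argument on successive intervals $[\tau_n,\tau_{n+1}]$ with $\tau_{n+1}-\tau_n = 1/\epsilon_n$, carried out verbatim as in the proof of Theorem~\ref{teorema1}, then transfers these limits from $\bar y$ back to $r$, yielding $\Omega_\phi\to 0$ (case (i)) and $\Omega_\phi\to 1$ (case (ii)).

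The main obstacle is case (iii): here the leading truncated equation degenerates to $d\bar{y}/d\bar\tau = 0$, so the entire segment $\{(\bar y,0):\bar y\in[0,1]\}$ consists of non-hyperbolic fixed points and first-order averaging merely says that $\bar y$ is slowly varying. To identify the limit I would retain the next averaged contribution coming from the interaction,
\begin{equation*}
\frac{d\bar{y}}{d\tau} = -\frac{3\nu}{2}\,\langle\sigma_\phi\rangle\,\epsilon^{2p-n+1}\,\bar{y}^{\,1+2p/n} + O(\epsilon^{2p-n+2}),
\end{equation*}
with $\langle\sigma_\phi\rangle>0$, which shows that $\bar y$ is strictly monotonically decreasing in the averaged sense. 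The key estimate is that $d\epsilon/d\tau = -\frac{1+q}{n}\epsilon^2 + O(\epsilon^3)$ gives the algebraic decay $\epsilon(\tau)\sim \frac{2(n+1)}{3n\,\tau}$ in this case (using $1+q_\infty = \frac{3}{2}\gamma_\mathrm{pf} = \frac{3n}{n+1}$), so $\int^\infty \epsilon^{2p-n+1}\,d\tau<\infty$ precisely because $2p-n+1\geq 2 > 1$. Consequently the total monotone decrement of $\bar y$ is finite; for any interior initial data Lemma~\ref{lemma1} provides a time $\tau_*$ at which $\epsilon(\tau_*)$ is as small as desired and $\bar y(\tau_*)\in(0,1)$, so the tail integral can be made strictly smaller than $\bar y(\tau_*)$. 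Hence $\bar y(\tau)\to \bar y_\infty\in(0,1)$, and the same Gronwall estimate, adapted to the slower $O(\epsilon^{2p-n+1})$ secular rate, yields $\Omega_\phi(\tau)\to \bar y_\infty^2\in(0,1)$, identifying a specific inner periodic orbit $\mathcal{P}_{\Omega_\phi}$ of the $T=1$ boundary as the $\omega$-limit of each interior orbit.
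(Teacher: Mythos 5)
Your treatment of cases (i) and (ii) coincides with the paper's: the same polar variables, the same near-identity transformation with $g$ fixed by \eqref{gbound}, the same observation that $2p-n\geq 1$ pushes the interaction to higher order so that the truncated averaged equation is $\frac{d\bar y}{d\bar\tau}=\frac{3}{2}(\gamma_\mathrm{pf}-\langle\gamma_\phi\rangle)\bar y(1-\bar y^2)$, and the same Gronwall estimate over the intervals $[\tau_n,\tau_{n+1}]$. Those parts are fine.

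Case (iii) is where you diverge from the paper, and there is a genuine gap. You assert that the leading secular term is $-\frac{3\nu}{2}\langle\sigma_\phi\rangle\epsilon^{2p-n+1}\bar y^{1+2p/n}$, i.e.\ that the drift of $\bar y$ is due entirely to the interaction. That is not what the averaging produces: the near-identity transformation itself generates a secular term at order $\epsilon^{2}$ that is independent of $\nu$, coming from the $-g\,\frac{d\epsilon}{d\tau}$ piece of the transformed equation; in the paper's notation the second-order average is $\langle h\rangle(y)=\frac{3}{n+1}\langle g\rangle(y)-\frac{3}{2}\nu\langle\sigma_\phi\rangle y^{2}\delta^{2p-n}_{1}$. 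Since $2p-n+1\geq 2$ with equality only when $2p-n=1$, your interaction term is at best comparable to, and for $2p-n\geq 2$ strictly dominated by, this $O(\epsilon^{2})$ contribution (unless you normalize $\langle g\rangle=0$, which you do not address, and even then the $O(\epsilon^{3})$ remainder would dominate $\epsilon^{2p-n+1}$ for $2p-n\geq 3$). Consequently neither the claimed strict monotone decrease of $\bar y$ nor the specific power $\bar y^{1+2p/n}$ of the drift is established. The integrability idea itself survives, because the full secular drift is still $O(\epsilon^{2})$ and $\epsilon\sim c/\tau$ gives $\int^{\infty}\epsilon^{2}\,d\tau<\infty$, so $\bar y$ converges; but your argument that the limit lies strictly in $(0,1)$ leans on the drift vanishing like $\bar y^{1+2p/n}$ at $\bar y=0$, and is circular as written (you need the tail integral from $\tau_*$ to beat $\bar y(\tau_*)$, and both depend on $\tau_*$). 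To close it one needs a multiplicative bound of the form $|d\log\bar y/d\tau|\lesssim\epsilon^{2}$ and its counterpart near $\bar y=1$, which requires checking that \emph{all} the $O(\epsilon^{2})$ secular terms, including $\langle g\rangle(y)$ and the transformation remainder, vanish at the endpoints. The paper avoids this by writing down the full second-order truncated system, observing that $\{\epsilon=0\}$ is a normally hyperbolic line of fixed points parameterized by $\bar z_0\in[0,1]$, and then using a pseudo-trajectory/shadowing argument to match interior orbits with points on that line.
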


\begin{proof}
	The proof uses the same methods as in the proof of Theorem~\ref{teorema1}. 
	Making use of \eqref{eps-eq} and the variable transformation \eqref{Angular} with $r=\sqrt{\Omega_\phi}$, the system~\eqref{globalnleq2p} leads to
	\begin{subequations}\label{omega evonew}
		\begin{align}
		\frac{dr}{d\tau}&=\frac{3}{2}\epsilon\left((1-\epsilon)^{2p-n}\left(\gamma_{\mathrm{pf}}-\gamma_{\mathrm{\phi}}\right)r(1-r^2)-\nu\sigma_{\mathrm{\phi}}\epsilon^{2p-n} r^{1+\frac{2p}{n}}\right):=\epsilon f(r,\tau,\epsilon) ,\label{omega evonew2}\\
		\frac{d\theta}{d\tau}&= 	-\frac{\epsilon}{2n}\left(3(1-\epsilon^{2p-n})+\epsilon^{2p-n}\nu r^{\frac{2p}{n}}\cos^{2p}{\theta}\right)G(\theta)^2\sin{2\theta}+(1-\epsilon)^{2p-n+1}r^{\frac{n-1}{n}}G(\theta),\\
		\frac{d\epsilon}{d\tau}&=-\frac{1}{n}(1+q)\epsilon^2(1-\epsilon),
		\end{align}
	\end{subequations}
	where
	\begin{equation}
	q+1=\frac{3}{2}\left(2rG(\theta)^2\sin^2\theta+\gamma_{\mathrm {pf}}(1-r^2)\right).
	\end{equation}
	Starting with the near identity transformation
	\begin{equation}
	r(\tau)=y(\tau)+\epsilon(\tau)g(y,\tau,\epsilon)
	\end{equation}
	the evolution equation for $y$ can be obtained using equations \eqref{omega evonew} and the expansions
	\begin{subequations}
		\begin{align}
		(1-\epsilon)^{2p-n} &\approx 1-(2p-n)\epsilon+\frac{1}{2}\zeta(\zeta-1)\epsilon^2-\frac{1}{6}\zeta(\zeta-1)(\zeta-2)\epsilon^3 +\mathcal{O}(\epsilon^3) \\
		\epsilon^{2p-n} &= \epsilon \delta_{1}^{2p-n}+\epsilon^2 \delta_{2}^{2p-n} +\mathcal{O}(\epsilon^3),
		\end{align}
	\end{subequations}
	where $\zeta=2p-n$, yielding
	\begin{subequations}
		\begin{align}
		\frac{dy}{d\tau} =& \left(1+\epsilon\frac{\partial g}{\partial y}\right)^{-1}\left[\frac{dr}{d\tau}-\left(g+\epsilon\frac{\partial g}{\partial \epsilon}\right)\frac{d\epsilon}{d\tau}-\epsilon\frac{\partial g}{\partial \tau}\right]\nonumber\\
		=&\left(1+\epsilon\frac{\partial g}{\partial y}\right)^{-1}\Bigg[\epsilon\left(\frac{3}{2}(\gamma_{\mathrm{pf}}-\langle\gamma_{\mathrm{\phi}}\rangle)\right)y(1-y^2)+\frac{3}{2}(\langle\gamma_{\mathrm{\phi}}\rangle-\gamma_{\mathrm{\phi}})y(1-y^2)-\frac{\partial g}{\partial \tau} \nonumber\\
		+&\frac{3}{2}\epsilon^2\left((\gamma_{\mathrm{pf}}-\gamma_{\mathrm{\phi}})[(1-3y^2)g+\zeta y(1-y^2)]- \nu\sigma_{\mathrm{\phi}} y^2\delta^{2p-n}_1 +\frac{1}{n}\left(\gamma_\mathrm{pf}+(\gamma_\phi-\gamma_\mathrm{pf})y^2\right)g\right)\Bigg]
		+\mathcal{O}(\epsilon^3)\nonumber
		\end{align}
	\end{subequations}
	Setting
	\begin{equation}\label{boundedw}
	\begin{split}
	\frac{\partial g}{\partial \tau}
	=\frac{3}{2}\left(\langle\gamma_{\mathrm{\phi}}\rangle -\gamma_{\mathrm{\phi}}\right)y(1-y^2),
	\end{split}
	\end{equation}
	where the right-hand-side of \eqref{boundedw} is, for large times, almost periodic with $\langle\gamma_{\mathrm{\phi}}\rangle-\gamma_\phi\approx0$ which implies that $g$ is bounded. Using $(1+\epsilon\frac{\partial g}{\partial y})^{-1}\approx 1-\epsilon\frac{\partial g}{\partial y}+\mathcal{O}(\epsilon^2)$ we get
	\begin{equation}
	\frac{dy}{d\tau}=\epsilon\langle f \rangle (y)+\epsilon^2 h(y,g,\tau,\epsilon)+\mathcal{O}(\epsilon^3),
	\end{equation}
	where
	\begin{subequations}
		\begin{align}
		\langle f \rangle (y)=&\langle f  (y,.,0)\rangle=\frac{3}{2}(\gamma_{\mathrm{pf}}-\langle\gamma_{\mathrm{\phi}}\rangle)y(1-y^2), \nonumber\\
		h(y,g,,\tau,\epsilon)=&\frac{3}{2}(\gamma_{\mathrm{pf}}-\langle\gamma_{\mathrm{\phi}}\rangle)y(1-y^2)\frac{\partial g}{\partial y}+\frac{3}{2n}\left(\gamma_\mathrm{pf}+(\gamma_\phi-\gamma_\mathrm{pf})y^2\right)g \nonumber \\
		+&\frac{3}{2}\left((\gamma_{\mathrm{pf}}-\gamma_{\mathrm{\phi}})(1-3y^2)g+(\gamma_{\mathrm{pf}}-\gamma_{\mathrm{\phi}})(2p-n) y(1-y^2)-\nu \sigma_{\mathrm{\phi}}y^2 \delta^{2p-n}_1 \right).\nonumber
		\end{align}
	\end{subequations}
	 Now, we study the truncated averaged system which, after changing the time variable as $d\bar{\tau}=\epsilon d\tau$, reads 
	\begin{subequations} \label{y avr system2}
		\begin{align}
		\frac{d\bar{y}}{d\bar{\tau}} &= 3\Big(\gamma_{\mathrm{pf}}-\langle\gamma_{\mathrm{\phi}}\rangle\Big)\bar{y}\Big(1-\bar{y}\Big), \label{y avr system 1.2}\\
		\frac{d\epsilon}{d\bar{\tau}} &= -\frac{1}{n}\epsilon(1-\epsilon)(1+q). \label{y avr system 2.2}
		\end{align}
	\end{subequations}
	This system is similar to the one studied in \cite{Alho2} in the absence of interactions, i.e. for $\nu=0$. For $\gamma_{\mathrm{pf}}-\langle \gamma_{\mathrm{\phi}}\rangle \neq 0$, there are two fixed points
	\begin{subequations}
		\begin{align}
		\mathrm{F_1}&: \quad \bar{y}=1,\quad\quad \epsilon=0 \\
		\mathrm{F_2}&: \quad \bar{y}=0,\quad\quad \epsilon=0.
		\end{align}
	\end{subequations}
	The linearisation around $\mathrm{F}_1$ yields the eigenvalues $-3(\gamma_{\mathrm{pf}}-\langle\gamma_{\mathrm{\phi}}\rangle )$ and $-\frac{3\gamma_{\mathrm{pf}}}{2n}$, with associated eigenvectors the canonical basis of $\mathbb{R}^2$, while for $\mathrm{F}_2$ the eigenvalues are  $3(\gamma_{\mathrm{pf}}-\langle \gamma_{\mathrm{\phi}}\rangle)$, $-\frac{3\gamma_{\mathrm{pf}}}{2n}$ and the eigenvectors are again the canonical basis of $\mathbb{R}^2$. The stability of each fixed point depends on the sign of $\gamma_{\mathrm{pf}}-\langle\gamma_{\mathrm{\phi}}\rangle$. If $\gamma_{\mathrm{pf}} >\langle\gamma_{\mathrm{\phi}}\rangle$ then $\mathrm{F_1}$ is a sink and $\mathrm{F_2}$ is a saddle, whereas if $\gamma_{\mathrm{pf}}<\langle \gamma_{\mathrm{\phi}}\rangle$ then $\mathrm{F_1}$ is a saddle and $\mathrm{F_2}$ is a sink. We can then estimate $|\eta|=|y-\bar{y}|$ for solutions $y$ and $\bar{y}$ with the same initial conditions and, using Gronwall's inequality, show that $y$ (and so also $r$ and $\Omega_{\mathrm{\phi}}$) has the same limit as the solutions $\bar{y}$ of the truncated averaged equation for $\tau \rightarrow +\infty$, proving statements $(i)$ and $(ii)$ of the Theorem. 
	
	Regading point $(iii)$, when $\gamma_{\mathrm{pf}}=\langle \gamma_{\mathrm{\phi}} \rangle=\frac{2n}{n+1}$, the equation for $y$ is given by
	\begin{equation}
	\begin{split}
	\frac{dy}{d\tau}=\frac{3}{2}\epsilon^2\Bigg((\gamma_{\mathrm{pf}}-\gamma_{\mathrm{\phi}})\left((1-3y^2)g+(2p-n) y(1-y^2)-\frac{g}{n}y^2\right)+\frac{\gamma_\mathrm{pf}}{n}g -\nu \sigma_\mathrm{\phi}y^2\delta^{2p-n}_1  \Bigg).
	\end{split}
	\end{equation}
	Taking the average of $h$ at $\epsilon=0$, we get
	\begin{equation}
	\langle h \rangle(y) =\langle h(y,.,0) \rangle=\frac{3}{n+1}\langle g \rangle (y)-\frac{3}{2}\nu\langle\sigma_{\mathrm{\phi}}\rangle y^2 \delta^{2p-n}_{1},	
	\end{equation}
	where integration by parts has been used and the average of the interaction term $\sigma_\phi$ defined in~\eqref{NewInt} is now explicitly given by
	\begin{equation}
	\langle\sigma_{\mathrm{\phi}}\rangle=\frac{\Gamma\left[\frac{1}{2}+\frac{1}{2n}\right]\Gamma\left[\frac{1}{2n}+\frac{p}{2n}\right]}{3\Gamma[1+\frac{1}{2n}]\Gamma[\frac{1}{2n}+\frac{p}{n}]}.
	\end{equation}
	After changing the time variable as $\epsilon d/d\bar{\tau}=d/d\tau$, yields the truncated averaged system
	\begin{subequations}
		\begin{align}
		\frac{d\bar{z}}{d\bar{\tau}} &=\epsilon \frac{3}{n+1}\langle g \rangle (\bar{z})-\epsilon\frac{3}{2}\langle\sigma_{\mathrm{\phi}}\rangle\nu \bar{z}^2 \delta^{2p-n}_{1}, \\
		\frac{d\epsilon}{d\bar{\tau}}&=-\frac{3}{n(n+1)}\epsilon(1-\epsilon),
		\end{align}
	\end{subequations}
	which has a line of fixed points at $\epsilon=0$  parameterized by $\bar{z}_0 \in [0,1]$. The linearisation around the line yields the eigenvalues $0$ and $-\frac{3}{n(n+1)}$ with associated eigenvectors 
	\begin{equation*}
	v_1=(1,0),\qquad v_2=\left(-n\langle g\rangle \bar{z}_0+\frac{n(n+1)}{2}\nu \langle\sigma_{\mathrm{\phi}}\rangle\bar{z}_0^2 \delta^{2p-n}_1,1\right).
	\end{equation*}
	Therefore the line is normally hyperbolic and each point on the line is exactly the $\omega$-limit point of a unique orbit with $\epsilon>0$ initially. Similarly to the proofs of cases $(i)$ and $(ii)$,  we can estimate the term $\mathcal{O}(\varepsilon^3)$ that provides bootstraping sequences. This defines a pseudo-trajectory $r^n(\bar{\tau}_n)=\bar{z}(\bar{\tau}_n)$ of the system \eqref{omega evonew}, with
	\begin{equation}
	|r^n(\bar{\tau})-\bar{z}(\bar{\tau})|\leq K\varepsilon_n^2 \,,
	\end{equation}
	where $\bar{\tau} \in [\bar{\tau}_n,\bar{\tau}_{n+1}]$ and $K$ is a positive constant. The compactness of the state space and the regularity of the flow, implies that there exists a set of initial values whose solution trajectory $r(\bar{\tau})$ shadows the pseudo-trajectory $r^n(\bar{\tau})$, in the sense that
	\begin{equation}
	\forall n\in \mathbb{N},~~\forall\bar{\tau}\in[\bar{\tau}_n,\bar{\tau}_{n+1}]:~~|r^n(\bar{\tau})-r(\bar{\tau})|\leq K\varepsilon_n^2 \,.
	\end{equation}
	Finally, using the triangle inequality, we get
	\begin{eqnarray}
	|r(\bar{\tau})-\bar{z}(\bar{\tau})|&=&|r(\bar{\tau})-r^n(\bar{\tau})+r^n(\bar{\tau})-\bar{z}(\bar{\tau})| \nonumber \\
	&\leq&\underbrace{|r^n(\bar{\tau})-r(\bar{\tau})|}_{\leq K\varepsilon_n^2}+\underbrace{|r^n(\bar{\tau})-\bar{z}(\bar{\tau})|}_{\leq K\varepsilon_n^2} \nonumber \\
	&\leq& 2K\varepsilon_n^2 \underbrace{\rightarrow}_{\bar{\tau}_n \rightarrow \infty} 0 \, ,
	\end{eqnarray}
	and, therefore, for each $\bar{z}_0 \in [0,1]$, there exists a solution trajectory $r(\bar{\tau})$ that converges to a  periodic orbit at $\varepsilon=0$ i.e. $T=1$, characterized by $r=\bar{z}_0$, which concludes the proof of $(iii)$.
\end{proof}	
The global state space $\mathbf{S}$ for $p>n/2$ can be found in Figure~\ref{fig:2p-n final}, where the solid numerical curves correspond to the center manifold of each $\mathrm{dS}^{\pm}_0$ and the dashed numerical curves to solutions originating from one of the equivalent hyperbolic sources $\mathrm{K}^{-}$. All solutions converge to the global future attractor which is either a limit cycle for $\gamma_\mathrm{pf}\geq \frac{2n}{n+1}$ or the Friedmann-Lema\^itre fixed point $\mathrm{FL}_1$ when $\gamma_\mathrm{pf}< \frac{2n}{n+1}$.
\begin{figure}[ht!]
	\begin{center}
		\subfigure[$\gamma_\mathrm{pf}-\langle\gamma_\phi\rangle>0$.]{\label{fig:2p-n_1.5}
			\includegraphics[width=0.30\textwidth]{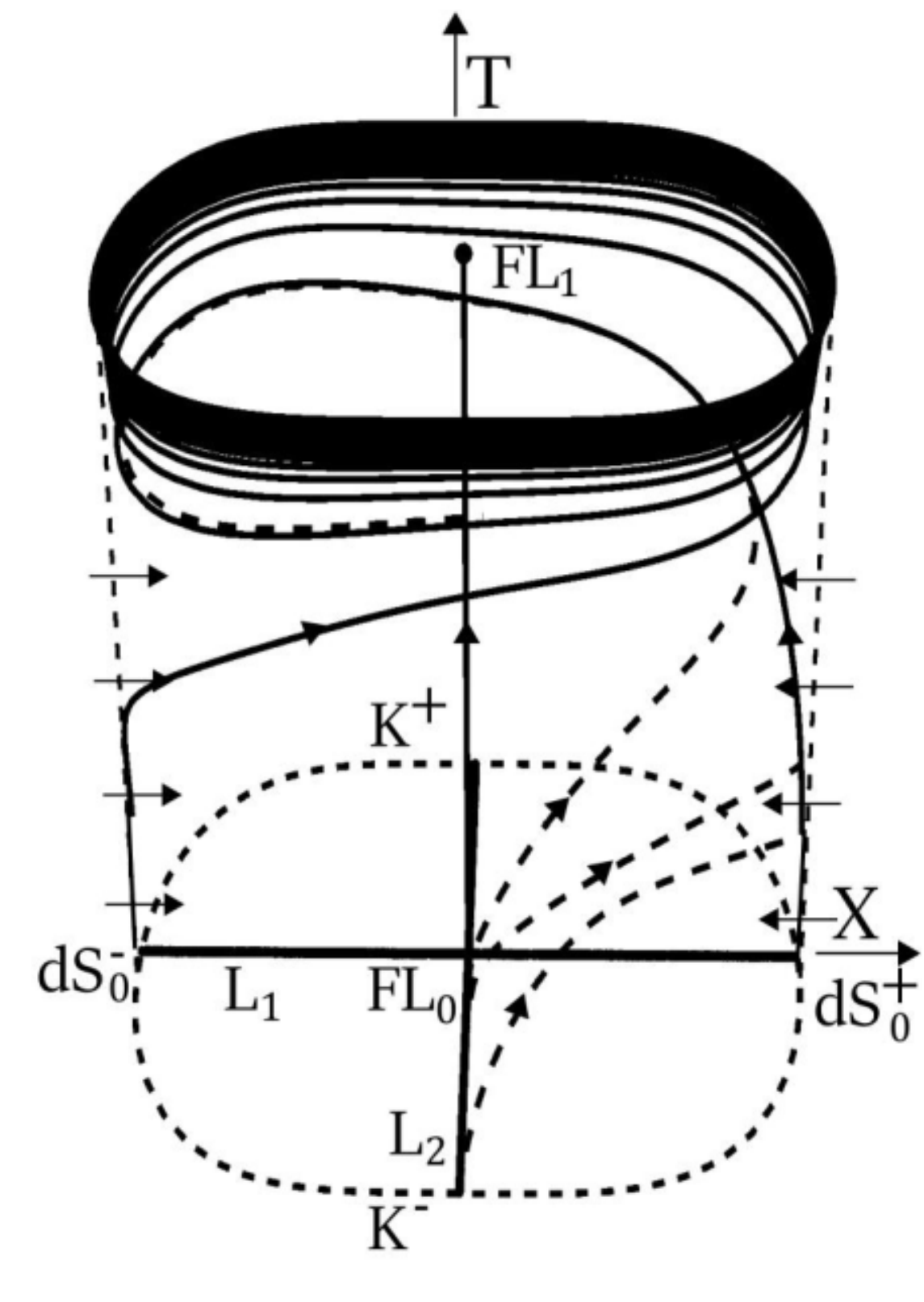}}
		\hspace{0.1 	cm}
		\subfigure[$\gamma_\mathrm{pf}-\langle\gamma_\phi\rangle=0$.]{\label{fig:2p-n_1.3}
			\includegraphics[width=0.30\textwidth]{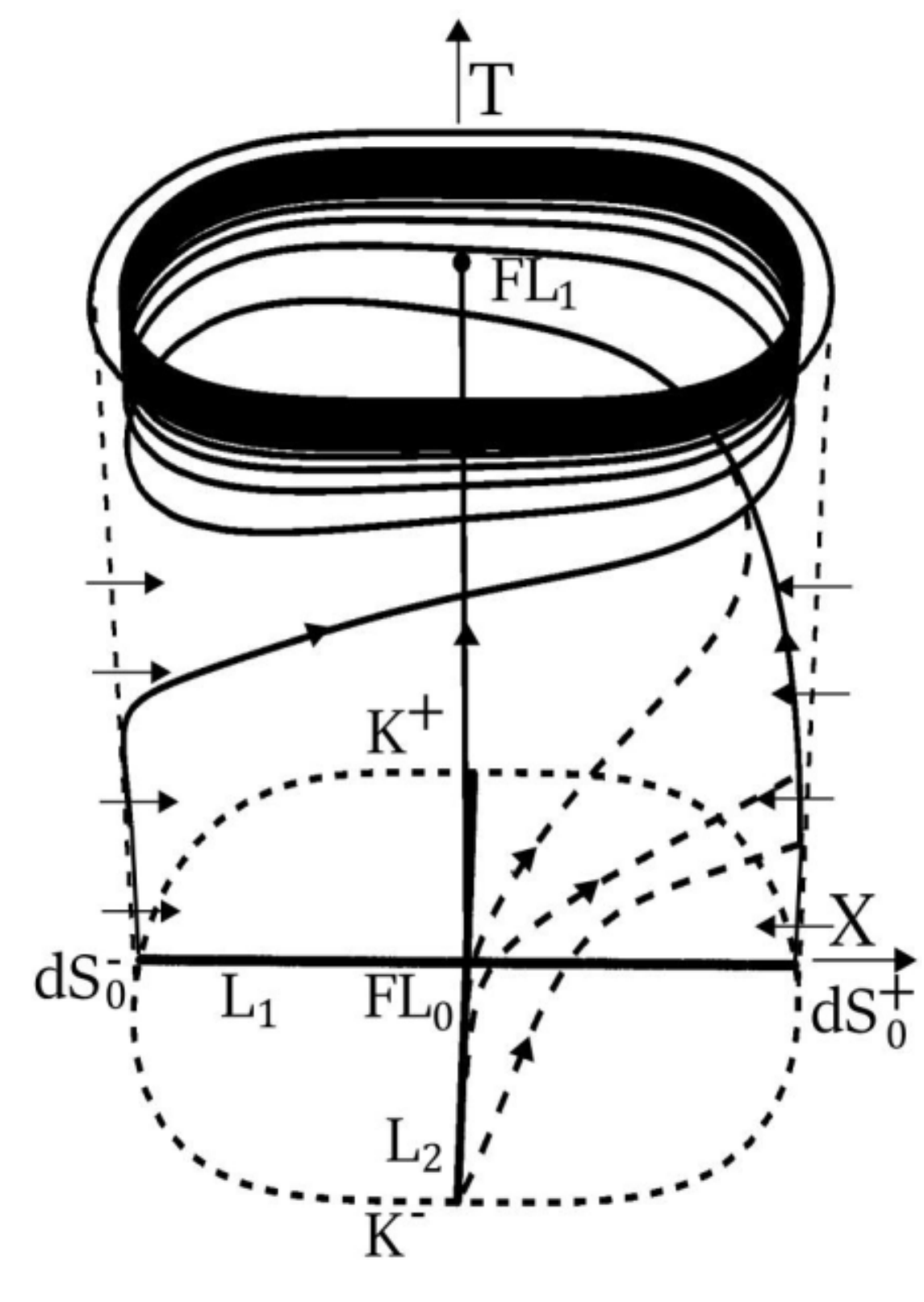}}
		\hspace{0.1 	cm}
		\subfigure[$\gamma_\mathrm{pf}-\langle\gamma_\phi\rangle<0$.]{\label{fig:2p-n_1}
			\includegraphics[width=0.30\textwidth]{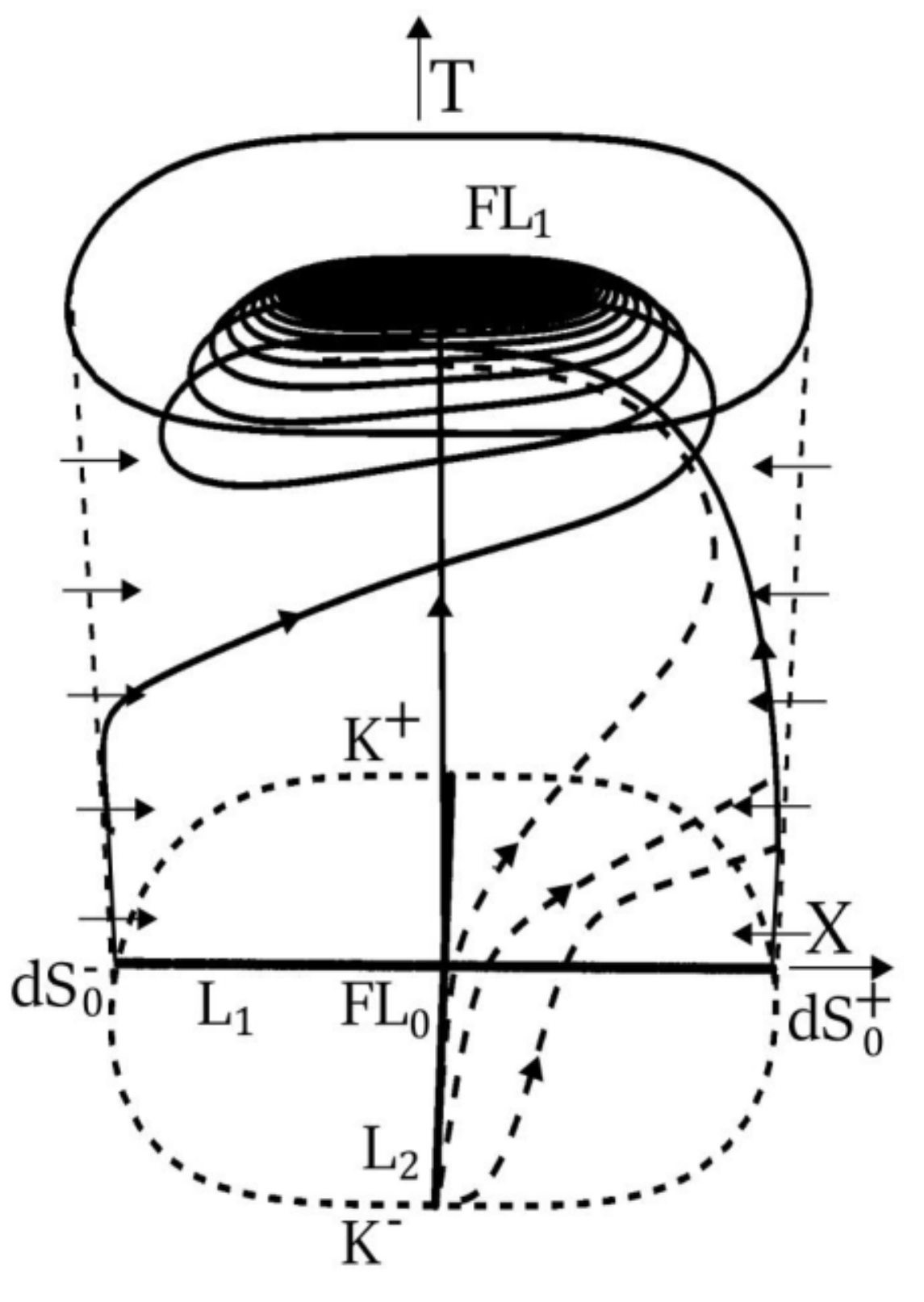}}
	\end{center}
	\vspace{-0.5cm}
	\caption{Global state space $\mathbf{S}$ when $p>\frac{n}{2}$.}
	\label{fig:2p-n final}
\end{figure}
%
%%%%%%%%%%%%%%%%%%%%%%%%%%%%%%%%%%%%%%%%%%%
\section{Concluding remarks}
\label{conclusion}
%%%%%%%%%%%%%%%%%%%%%%%%%%%%%%%%%%%%%%%%%%%%
Cosmological models are commonly modeled by flat Robertson-Walker metrics which contain either a perfect fluid with linear equation of state $p_\mathrm{pf}=(\gamma_\mathrm{pf}-1)\rho_\mathrm{pf}$, $\gamma_\mathrm{pf}\in(0,2)$, or a scalar field $\phi$ with a particular form of the potential $V(\phi)$. Here, motivated by warm inflation models of the early universe, we considered that both matter models are present, interact with each other and furthermore allow for a large family of potentials $V(\phi)= \frac{(\lambda\phi)^{2n}}{2n}$, $n\in\mathbb{N}$, $\lambda>0$, with a friction-like interaction term of the form $\Gamma(\phi)=\mu\phi^{2p}$, $p\in\mathbb{N}\cup\{0\}$, $\mu>0$. This brings new mathematical challenges as new non-linearities arise in the resulting ODE system comparing to the non-interacting case with $\mu=0$, discussed in detail in~\cite{Alho2}. 

Our analysis relied on reformulating the Einstein equations as a new regular 3-dimensional dynamical system on a compact state-space. This construction showed that a significant dynamical bifurcation occurs for $p=n/2$, and that the positive dimensionless parameter
\begin{equation}
\nu=\sqrt{n 6^{2p-(n-1)}}\frac{\mu}{\lambda^{n}},
\end{equation}
with $\nu=0$ when $\mu=0$, plays a fundamental role on the qualitative description of the models. The analysis was then split into the three distinct cases $p<n/2$, $p=n/2$, and $p>n/2$. For all these models we were able to prove rigorously their main global dynamical properties which we complemented with numerical pictures.

Our results also have interesting physical consequences: First, as is common in interacting cosmologies where there is an energy exchange between scalar fields and fluids, see e.g.~\cite{BC00,Oliveira,Alho4}, there are solutions for which the perfect fluid energy density is generated via energy transfer from the scalar field, thereby starting its evolution at some finite time $t_*$, where  $\rho_\mathrm{pf}(t_*)=0$ with $H(t_*)>0$ finite, and $\rho_\mathrm{pf}(t)>0$ for all time of existence $t>t_*$. We called these solutions Class B, while Class A solutions have past asymptotics associated with the limit $H\rightarrow+\infty$.

For Class A solutions, and towards the past, we showed that irrespective of the model parameters an inflationary attractor solution always exist and that it is associated with a 1-dimensional unstable center manifold of a quasi-de-Sitter fixed point, while generically the asymptotics are governed by the massless scalar-field or kinaton solution. However, the influence of the interaction term on the inflationary leading order dynamics differs for the three different cases $p< n/2$, $p=\frac{n}{2}$ or $p>\frac{n}{2}$, see Remarks~\ref{PastAs1}, \ref{PastAs2}, and \ref{PastAs3}. 

For Class A and B solutions, and towards the future, the asymptotics are more intricate: When $p<\frac{1}{2}(n-2)$ the generic future attractor is the de-Sitter solution, a results that seems unknown in the literature and that might offer a new physical model of quintessential inflation. When $p=\frac{1}{2}(n-2)$ the generic future attractor corresponds to new scaling solutions where the deceleration parameter
\begin{equation}
q_{\mathrm{S}^{\pm}}=-1+\frac{3\gamma_\mathrm{pf}}{2}\left(1-  \left(\frac{n^2}{3\gamma_{\mathrm{pf}}\nu}\right)^2\left(-1+\sqrt{1+\left(\frac{3\gamma_{\mathrm{pf}}\nu}{n^2}\right)^2}\right)^{2}\right),
\end{equation}
satisfies $-1<q_{\mathrm{S}^{\pm}}<-1+\frac{3\gamma_\mathrm{pf}}{2}$, which interestingly, for some parameter values can also lead to late time accelerated expansion $(q_{\mathrm{S}^{\pm}}<0)$. 
When $p=\frac{1}{2}(n-1)$ the scale factor behaves as the Friedman-Lema\^itre (FL) solution as $t\rightarrow+\infty$, while the matter fields have simple asymptotics when $(p,n)=(0,1)$ depending on if $\nu<2$, $\nu=2$ or $\nu>2$, while for $p>0$ there are various kinds of asymptotic behaviour which can be obtained by the analysis of generalised Li\'enard type systems. When $p=\frac{n}{2}$ the future asymptotics are either fluid dominated if $\gamma_\mathrm{pf}\leq \frac{2n}{n+1}$ in which case the scale-factor behaves as the FL solution, or has an oscillatory behaviour, if $\gamma_\mathrm{pf}> \frac{2n}{n+1}$, where neither the fluid nor the scalar field dominates. Finally when $p>\frac{n}{2}$ the future asymptotics are similar to the non-interacting case with $\nu=0$.

Overall, the interaction term has a larger impact on the future asymptotics when $p< n/2$, and on the past asymptotics when $p>\frac{n}{2}$.
 
%%%%%%%%%%%%%%%%%%%%%%%%%%%%%%%%%%%%%%%%%%%%%%%%%%%%
\section*{Acknowledgments}
%%%%%%%%%%%%%%%%%%%%%%%%%%%%%%%%%%%%%%%%%%%%%%%%%%%%%
The authors were supported by FCT/Portugal through CAMGSD, IST-
ID, projects UIDB/04459/2020 and UIDP/04459/2020. VB thanks FCT for the Ph.D. grant PD/BD/142891/2018. FCM and VB thank CMAT, Univ. Minho, through FCT project Est-OE/MAT/UIDB/00013/2020 and FEDER Funds COMPETE.
%%%%%%%%%%%%%%%%%%%%%%%%%%%%%%%%%%%%%%%%%%%%%%%%%%%%%%%

\appendix

%%%%%%%%%%%%%%%%%%%%%%%%%%%%%%%


\begin{thebibliography}{10}

\bibitem{Bre17} B. Brehm, 
\newblock Bianchi VIII and IX vacuum cosmologies: Almost every solution
forms particle horizons and converges to the Mixmaster attractor.  
\newblock Doctoral Thesis, Freie Universitat Berlin, 2016 (arXiv:1606.08058)

\bibitem{HLU20} J. Hell, P. Lappicy, and C. Uggla,
\newblock Bifurcation and  Chaos in Horava-Lifshitz Cosmology,
\newblock to appear in Adv. Theor. Math. Phys., arXiv:2012.07614.

\bibitem{Ringstrom-book} H. Ringstr\"om, On the Topology and Future Stability of the Universe, Oxford Mathematical Monographs (2013).

\bibitem{WainElis} J. Wainwright and G. F. R. Ellis, Dynamical Systems in Cosmology, Cambridge University Press (1997).

\bibitem{Col03} A. A. Coley, \emph{Dynamical Systems and Cosmology}, Kluwer Academic Publishers, Dordrecht (2003).

\bibitem{Boehmer} S. Bahamonde, C. G. Boehmer, S. Carloni, E. Copeland, W. Fang, N. Tamanini,
\newblock Dynamical systems applied to cosmology: dark energy and modified gravity,
\newblock  Phys. Rept. {\bf 775-777} (2018) 1--122.

\bibitem{Guth} A. Guth,
\newblock Inflationary universe: A possible solution to the horizon and
flatness problems,
\newblock Phys. Rev. D {\bf 23} (1981) 347-356.

\bibitem{Linde} A. Linde,
\newblock A new inflationary universe scenario: A possible solution of the
horizon, flatness, homogeneity, isotropy and primordial monopole problems,
\newblock Phys. Lett. B {\bf 108} (1982) 389--393.

\bibitem{Alexei} A. A. Starobinsky,
\newblock Dynamics of phase transition in the new inflationary universe
scenario and generation of perturbations,
\newblock Phys. Lett. B {\bf 117} (1982) 175--178.

\bibitem{Muk05} V. Mukhanov,
\newblock Physical foundations of cosmology,
\newblock Cambridge University Press (2005).				

\bibitem{Berera1} A. Berera,
\newblock Warm Inflation,
\newblock Phys. Rev. Lett. {\bf 75} (1995) 3218--3221.

\bibitem{Bastero} M. Bastero-Gil, A. Barera, R. O. Oliveira and J. G. Rosa,
\newblock General dissipation coefficient in low-temperature warm inflation,
\newblock JCAP {\bf 01} (2013) 016.

%\bibitem{Yoko} J. Yokoyama, A. Linde,
%\newblock Is Warm Inflation Possible?,
%\newblock Phys. Rev. D. {\bf 60} (1999) 083509.
%
%\bibitem{Barera2} A. Barera,
%\newblock The Warm inflation early universe,
%\newblock 6th International Symposium on Particles, Strings and Cosmology, (1998).

%\bibitem{Shek} H. Sheikhahmadi et al.
%\newblock Constraining chameleon field driven warm inflation with Planck 2018 data,
%\newblock Eur. Phys. J. C {\bf 79} (2019) 1038.
%
%\bibitem{Bellini} M. Bellini,
%\newblock Fresh inflation: A Warm inflationary model from a zero temperature initial state,
%\newblock Phys. Rev. D {\bf 63} (2001) 123510.
%
%\bibitem{Kai} K. Li, X. Zhang, H. Ma, J. Zhu,
%\newblock Noncanonical warm inflation: A model with a general Lagrangian density,
%\newblock Phys. Rev D {\bf 98} (2018) 123528.

\bibitem {Berera3} A. Berera, M. Gleiser, R. O. Ramos,
\newblock A First Principles Warm Inflation Model that Solves the Cosmological Horizon/Flatness Problems,
\newblock Phys. Rev. Lett. {\bf 83} (1999) 264--267.

\bibitem{Moss} I. G. Moss, C. Xiong,
\newblock On consistency of warm inflation,
\newblock JCAP {\bf 11} (2008) 023.

\bibitem{BC00} A. P. Billyard, A. Coley,
\newblock Interactions in scalar field cosmology,
\newblock Phys. Rev. D {\bf 61} (2000) 083503. 

\bibitem{Alho4} A. Alho, S. Calogero, M. P. M. Ramos, A. J. Soares,
\newblock Dynamics of Robertson–Walker spacetimes with diffusion,
\newblock Ann. Phys. {\bf 354} (2015) 475--488. 

\bibitem{Ventura} J. G. Rosa, L. B. Ventura,
\newblock Warm little inflaton becomes dark energy,
\newblock Phys. Rev. Lett. {\bf 112} (2019) 134984.

\bibitem{Berg} K. V. Berghaus, P. W. Graham, D. E. Kaplan,
\newblock Minimal warm inflation,
\newblock JCAP {\bf 03} (2020) 034.

\bibitem{Setare} M. R. Setare, V. Kamali,
\newblock Warm vector inflation,
\newblock Phys. Lett. B {\bf 726} (2013) 56--65.
%		
\bibitem{Oliveira} H. P. de Oliveira, R. O. Ramos,
\newblock Dynamical system analysis for inflation with dissipation,
\newblock Phys. Rev. D {\bf 57} (1998) 741--749.

\bibitem{Bin} X. B. Li, Y. Y. Wang, H. Wang, J. Y. Zhu,
\newblock Dynamic analysis of noncanonical warm inflation,
\newblock Phys. Rev. D {\bf 98} (2018) 043510.

\bibitem{Visinelli} L. Visinelli,
\newblock Observational constraints on monomial warm inflation,
\newblock JCAP {\bf 07} (2016) 054.

\bibitem{Alho} A. Alho, C. Uggla,
\newblock Global dynamics and inflationary center manifold and slow-roll
approximants,
\newblock J. Math. Phys. {\bf 56} (2015) 012502.

\bibitem{Alho2} A. Alho, J. Hell, C. Uggla.
\newblock Global dynamics and asymptotics for monomial scalar field
potentials and perfect fluids,
\newblock Class. Quant. Grav. {\bf 32} (2015) 145005.

\bibitem{Alho3} A. Alho, C. Uggla,
\newblock Inflationary $\alpha$-attractor cosmology: A global dynamical systems perspective,
\newblock  Phys. Rev. D  {\bf 95} (2017) 083517.

\bibitem{Bessa}	A. Alho. V. Bessa, F. Mena,
\newblock Global dynamics of Yang-Mills field and
perfect fluid Robertson-Walker cosmologies,
\newblock J. Math. Phys. {\bf 61} (2020) 032502.

\bibitem{Perko} L. Perko,
\newblock Differential Equations and Dynamical Systems,
\newblock Third Edition. Springer Berlin (2001).

\bibitem{Carr} J. Carr,
\newblock Applications of Centre Manifold Theory,
\newblock Springer Berlin (1981).

\bibitem{Aul84} B. Aulbach,
\newblock Continuous and Discrete Dynamics near Manifolds of Equilibria,
\newblock  Lecture Notes in Mathematics, vol. 1058, Springer Berlin (1984).

\bibitem{HPS}	M. Hirsch, C. Pugh, M. Shub,
\newblock Invariant manifolds,
\newblock Lecture Notes in Mathematics, vol.  583, Springer Berlin (1977).

\bibitem{Tak71} F. Takens,
\newblock Partially hyperbolic fixed points,
\newblock Topology {\bf 10} (1971) 133--147.

\bibitem{Dum93} F. Dumortier,	
\newblock Techniques in the Theory of Local Bifurcations: Blow-Up, Normal Forms, Nilpotent Bifurcations, Singular Perturbations,
\newblock In: Schlomiuk, D. (eds) Bifurcations and Periodic Orbits of Vector Fields. NATO ASI Series, vol 408. Springer Dordrecht (1993).

\bibitem{BookPlanar} F. Dumortier, J. Llibre, J. Artés,
\newblock Qualitative Theory of Planar Differential Systems,
\newblock Springer Netherlands (2006).

\bibitem{BRU90} M. Brunella, M. Miari,
\newblock Topological equivalence of a plane vector field with its principal part defined through Newton Polyhedra,
\newblock J. Differ. Equ. {\bf 85} (1990) 338-366.

\bibitem{Lia66} A. M. Lyapunov,
\newblock Stability of motion,
\newblock Math. Sci. Eng. {\bf 3}, Academic Press, San Diego (1966).

\bibitem{LMP76} A. Lins, W. de Melo, C. Pugh,
\newblock On Li\'enard's equation,
\newblock In: Palis, J., do Carmo, M. (eds) Geometry and Topology. Lecture Notes in Mathematics, vol 597, Springer Berlin (1977).

\bibitem{DH99} F. Dumortier, C. Herssens,
\newblock Polynomial Li\'enards equations near infinity,
\newblock J. Differ. Equ. {\bf 153} (1999) 1--29.

\bibitem{SV11} M. Sabatini, G. Villari,
\newblock On the uniqueness of limit cycles for Li\'enard equation: the legacy of G. Sansone,
\newblock Le Matematiche {\bf 65} (2010) 201--214.

\bibitem{GG21} A.~Gasull, H.~Giacomini,
\newblock Effectiveness of the Bendixson-Dulac theorem,
\newblock J. Differ. Equ. {\bf 305} (2021) 347--367.

\bibitem{Faj21} D. Fajman, G. Heißel, J. W. Jang,	
\newblock Averaging with a time-dependent perturbation parameter,
\newblock Class. Quant. Grav. {\bf 38} (2021) 085005.

\bibitem{Letal21} G.~Leon, S.~Cuellar, E. Gonzalez, S. Lepe, C.~Michea, and A.D.~Millano,
\newblock Averaging generalized scalar field cosmologies II: locally rotationally symmetric Bianchi I and flat Friedmann-Lema\^itre-Robertson-Walker models,
\newblock Eur. Phys. J. C {\bf 81}(6)  (2021) 1434-6052.

\bibitem{GHbook} J. Guckenheimer, P. Holmes,
\newblock Nonlinear Oscillations Dynamical Systems and Bifurcations of Vector Fields,
\newblock Spinger New York (1983).

\bibitem{Sandersbook}  J. Sanders, F. Verhulst, J. Murdock,	
\newblock Averaging Methods in Nonlinear Dynamical Systems,
\newblock Spinger New York (2007).


\end{thebibliography}
\end{document}